\documentclass[10pt]{amsart}
\usepackage[utf8]{inputenc}
\usepackage[a4paper,width=170mm,top=7.5mm,bottom=15mm,left=13mm, right=13mm]{geometry}
\usepackage{etex}
\usepackage[all]{xy} 
\usepackage{cite}
\usepackage{etoolbox} 
\usepackage{lmodern}
\usepackage{array}
\usepackage{xstring}
\usepackage{longtable}
\usepackage[listings]{tcolorbox} 
\tcbuselibrary{breakable}
\tcbuselibrary{skins}
\usepackage{amsmath}
\usepackage{tikz,tikz-cd,stmaryrd}
\usepackage{fancyhdr}
\usepackage{amssymb}
\usepackage{amsthm}
\usepackage{amsmath} 
\usepackage{amsfonts}
\usepackage{amsthm}
\usepackage{pifont}
\usepackage{amsmath}
\usepackage{soul}
\usepackage{epsfig}
\usepackage{latexsym}
\usepackage{amssymb}
\usepackage{color}
\usepackage{amscd}
\usepackage{multirow}
\usepackage{graphicx}
\usepackage{lscape}
\usepackage{varioref}
 \usepackage{tabularx}
\usepackage{float}
\usepackage{xcolor}
\usepackage{bm}
\usepackage{tikz-cd}
\pagestyle{fancy}

 \makeatletter
    \let\stdchapter\section
    \renewcommand*\section{%
    \@ifstar{\starchapter}{\@dblarg\nostarchapter}}
    \newcommand*\starchapter[1]{%
        \stdchapter*{#1}
        \thispagestyle{fancy}
        \markboth{\MakeUppercase{#1}}{}
    }
    \def\nostarchapter[#1]#2{%
        \stdchapter[{#1}]{#2}
        \thispagestyle{fancy}
    }
\makeatother

\setlength{\headheight}{15pt} 

\newtheorem{theorem}{Theorem}[section]
\newtheorem*{theorem*}{Theorem}
\newtheorem{lemma}[theorem]{Lemma}
\newtheorem{proposition}[theorem]{Proposition}
\theoremstyle{definition}
\newtheorem{definition}[theorem]{Definition}
\theoremstyle{corollary}

\theoremstyle{remark}
\newtheorem{remark}[theorem]{Remark}

\theoremstyle{conclusion}

 \fancyhf{}

\fancyfoot[LE]{\thepage}

\pagestyle{fancy}
\fancyhead{}
\fancyfoot{}
\fancyfoot[LE,RO]{\thepage}
 \usepackage{acronym}
\usepackage{amsmath}
\usepackage{amssymb}
\usepackage{amsfonts}
\usepackage{esvect}
\usepackage{mathrsfs}
\usepackage{authblk}
\usepackage{geometry}
\usepackage{abstract}
\usepackage{xcolor}

\makeatletter 
\@addtoreset{equation}{section} 
\makeatother

\newcommand{\figref}[1]{\figurename~\ref{#1}}
\usepackage[colorlinks=true,urlcolor=blue]{hyperref}


\begin{document}
\thispagestyle{empty}

\begin{center}
	\Large{{\bf Polynomial algebra from the Lie algebra reduction chain $\mathfrak{su}(4) \supset \mathfrak{su}(2)  \times \mathfrak{su}(2)$: \\The supermultiplet model }}
\end{center}
\vskip 0.5cm
\begin{center}
	\textsc{Rutwig Campoamor-Stursberg$^{1,\star}$, Danilo Latini$^{2,*}$, Ian Marquette$^{3,\bullet}$,\\ Junze Zhang$^{4,\dagger}$ and Yao-Zhong Zhang$^{4,\ddagger}$}
\end{center}
\vskip 0.2cm
\begin{center}
	$^1$ Instituto de Matem\'{a}tica Interdisciplinar and Dpto. Geometr\'{i}a y Topolog\'{i}a, UCM, E-28040 Madrid, Spain
\end{center}

\begin{center}
	$^2$ Dipartimento di Matematica “Federigo Enriques”, Università degli Studi di Milano, Via C. Saldini 50, 20133 Milano, Italy \& INFN Sezione di Milano, Via G. Celoria 16, 20133 Milano, Italy
\end{center}

\begin{center}
	$^3$ Department of Mathematical and Physical Sciences, La Trobe University, Bendigo, VIC 3552, Australia
\end{center}

\begin{center}
	$^4$ School of Mathematics and Physics, The University of Queensland, Brisbane, QLD 4072, Australia
\end{center}
\begin{center}
	\footnotesize{$^\star$\textsf{rutwig@ucm.es} \hskip 0.25cm$^*$\textsf{danilo.latini@unimi.it} \hskip 0.25cm $^\bullet$\textsf{i.marquette@uq.edu.au} \hskip 0.25cm 
 $^\dagger$\textsf{junze.zhang@uq.net.au} \hskip 0.25cm 
 $^\ddagger$\textsf{yzz@maths.uq.edu.au}}
\end{center}
\vskip  1cm
\hypersetup{colorlinks=true, linkcolor=black}

\begin{abstract}
\vskip 0.15cm
\noindent  The supermultiplet model, based on the reduction chain $\mathfrak{su}(4) \supset \mathfrak{su}(2) \times \mathfrak{su}(2)$, is revisited through the lens of commutants within universal enveloping algebras of Lie algebras. From this analysis, a collection of twenty polynomials up to degree nine emerges from the commutant associated with the $\mathfrak{su}(2) \times \mathfrak{su}(2)$ subalgebra. This study is conducted in the Poisson (commutative) framework using the Lie-Poisson bracket associated with the dual of the Lie algebra under consideration.  As the main result, we obtain the polynomial Poisson algebra generated by these twenty linearly independent and indecomposable polynomials, with five elements being central. This incorporates polynomial expansions up to degree seventeen in the Lie algebra generators. We further discuss additional algebraic relations among these polynomials, explicitly detailing some of the lower-order ones.  As a byproduct of these results, we also show that the recently introduced ‘grading method' turns out to be essential for deriving the Poisson bracket relations when the degree of the expansions becomes so high that standard approaches are no longer applicable, due to computational limitations.  These findings represent a further step toward the systematic exploration of polynomial algebras relevant to nuclear models.
\end{abstract}
\vskip 0.35cm
\hrule

\tableofcontents

\section{Introduction}
\label{int}

Lie algebras, quadratic algebras and, more generally, higher-order polynomial algebras have been shown to be central objects in many mathematical and physical problems, specifically in topics where techniques of symplectic geometry, (super-)integrable systems and quantum mechanics intersect\cite{MR0365629,MR3493688,MR1806263}. In this context, and specifically referring to the quantum mechanical framework, both generalized Casimir operators and representations are crucial to determine and interpret the spectra of quantum systems \cite{MR2226333,MR2804582,MR2337668, MR1306244,MR2804560}. The structure of polynomial Poisson algebras, which extends classical Poisson algebras by incorporating polynomial functions, plays a significant role in describing the algebraic and geometric properties of various physical models. Their natural connection to Lie algebras, through embedding chains and representations, provides a powerful framework for understanding symmetry, conservation laws, and the dynamics of physical systems.

The mathematical framework provided by embedding chains of Lie algebras into polynomial Poisson algebras is also of particular interest in the study of super-integrable systems and field theories \cite{CampoamorStursberg2021,MR4660510}. In these models, the hierarchical embedding of Lie algebras helps to organize the algebraic structure of conserved quantities and symmetries, leading to a better understanding of the (exact) solvability and stability of the system. Moreover,  such embeddings also contribute to the broader field of mathematical physics by providing insight into the classification of solutions to differential equations and the quantization of classical systems.

 A systematic investigation of polynomial algebras appeared in the study of Lie algebra reduction chains associated with nuclear models in \cite{MR4660510}. In that paper, two relevant chains, namely the chain $\mathfrak{su}(3) \supset \mathfrak{so}(3)$, associated with the Elliott model \cite{elliott1,elliott2}, and the chain $\mathfrak{so}(5) \supset \mathfrak{su}(2) \times \mathfrak{u}(1)$, related to the Seniority model \cite{Helmers61, Racah1965,Hecht65, HS70}, were analyzed from this perspective. Remarkably, the (cubic) polynomial algebras associated with these models were shown to be related to algebraic structures that naturally arise in the study of superintegrable systems \cite{MR1814439}, highlighting their fundamental role across different physical contexts. Developing further along these lines, in this paper, we explore the polynomial Poisson algebra associated with the embedding chain $\mathfrak{su}(4) \supset \mathfrak{su}(2) \times \mathfrak{su}(2)$, related to the supermultiplet model \cite{Wig, MR189602, Hecht, Brunet, Draayer, MR1296410}. By investigating how these  polynomial structures emerge in the classical (commutative) setting, we aim to highlight the deep interplay between this type of algebraic structures and the different physical models. The analysis of this connection not only enriches the theory of polynomial Poisson algebras, but also offers new perspectives on longstanding problems in mathematical physics.

This paper is divided into the following sections: In Section $\ref{sec2}$, we review fundamental aspects of Poisson centralizers and enveloping algebras, notably focusing on polynomial algebras that emerge from linearly independent, indecomposable elements in the commutants.  The discussion extends to detailing the construction of polynomial algebras in dual spaces, the latter being endowed with the Lie-Poisson (or Berezin) brackets. We describe the methods for obtaining polynomial invariants which ultimately define symmetry algebras derived from the analysis of the commutants associated with suitable subalgebras,  also examining the computational efficiency of determining the commutants analytically.  The narrative highlights the challenge of managing polynomial decompositions while maintaining algebraic independence up to a specified degree. The section concludes with a detailed discussion of the role of the grading of generators, in the context of structural brackets,  as introduced in the paper \cite{campoamor2025On}. Emphasis is placed on the importance of implementing such a grading method in the analysis of high-degree expansions, as it significantly facilitates the computations.

In Section $\ref{sec3}$, after introducing the reduction chain of the Lie algebra $\mathfrak{su}(4) \supset \mathfrak{su}(2) \times \mathfrak{su}(2)$ associated with the supermultiplet model, we investigate the construction of homogeneous polynomials derived from the study of the commutant associated with the $\mathfrak{su}(2) \times \mathfrak{su}(2)$ subalgebra. The analysis focuses on the explicit construction of the generators of the polynomial algebra, with a detailed discussion on the methods used to obtain them in order to overcome computational limitations. These limitations arise when finding high-degree polynomial solutions to the system of linear partial differential equations (PDEs) associated with the analysis of the commutant.  Therefore,  key aspects of this inspection include identifying the linearly independent and indecomposable polynomials that are instrumental for closing an algebra.  It turns out that, for this specific problem, it will be sufficient to consider polynomials up to degree nine. Specifically, these steps involve solving the commutant constraints and exploring the Berezin brackets degree by degree. The process culminates in the definition of new polynomials,  defined through Berezin brackets, to ensure the closure of the polynomial algebra. The obtained results are then compared with those in the existing related literature, e.g. \cite{nuclear}, where other methods to obtain the homogeneous polynomials were implemented.

Building upon the generators obtained in Section $\ref{sec3}$, in Section $\ref{sec4}$ we discuss the derivation of the non-trivial Poisson brackets of the polynomial algebra degree by degree. The analysis is divided into detailed subsections, each specific to a particular degree, with particular emphasis placed on the role of gradings in these derivations. Through an organized sequence of propositions and assumptions related to subalgebras and the structural  properties of polynomial algebras,  we meticulously identify the  admissible monomials and elucidate the intricate  relations so obtained. Using the grading  method discussed in Section $\ref{sec3}$, we provide a detailed exposition of the grading process  that is necessary to reduce the number of terms appearing in the expansions of the Poisson brackets, offering comprehensive formulae,  including a comparison of the actual number of terms appearing in the expansions before and after the application of the grading, and  ultimately leading us to the explicit polynomial  relations for different degrees.  We remark once again that this methodical analysis enhances the simplification of calculations through the consistent application of homogeneous gradings, thus leading to the derivation of polynomial  expansions involving the allowable generator sets specifically tailored for designated degree brackets. The proposed grading technique is shown to effectively streamline computational efforts while clarifying potential polynomials appearing in the expansions of the Poisson brackets.  The main result obtained in this section is therefore represented by the complete set of relations defining the polynomial algebra, spanned by the linearly independent and indecomposable polynomials, derived from the analysis of the commutant. Furthermore, existing algebraic relations among these polynomials are explicitly reported up to degree seventeen, again with a consistent comparison to the results already known in the literature.

We conclude in Section $\ref{sec5}$ with a comprehensive summary of our research work, and delineate potential directions for future developments and iterations of this project.

\section{Polynomial algebras from commutants of Lie subalgebras}
\label{sec2}
In this Section \ref{sec2}, we review some fundamental aspects of Poisson centralizers and enveloping algebras (see \cite{CampoamorStursberg2021,moeglin1976factorialite,campoamor2022some,campoamor2024superintegrable,MR1451138}   for additional details). Moreover, we discuss the type of polynomial algebras arising from the linearly independent and indecomposable elements in  the commutants.
\subsection{Commutants in enveloping algebras}
\label{subsec2.1}
 Let $\mathfrak{g}:=\left\{X_1, \dots, X_n \, :\, [X_i,X_j]= \sum_{k=1}^nC_{ij}^k X_k\right\}$ be a $n$ dimensional semisimple or reductive Lie algebra over a field $\mathbb{F}$, and let us denote its universal enveloping algebra by $U(\mathfrak{g})$. Here $[\cdot,\cdot]: \mathfrak{g} \times \mathfrak{g} \rightarrow \mathfrak{g}$ is a commutator and $C_{ij}^k$  are the structure constants of $\mathfrak{g}$. From the Poincar\'e-Birkhoff-Witt (PBW in short) theorem, any element of $U(\mathfrak{g})$ is spanned by the ordered basis elements
\begin{equation}
	\left\{X_1^{a_1}X_2^{a_2}\cdots X_n^{a_n} \, : \, a_1, a_2, \dots, a_n \in \mathbb{N}_{0} : = \mathbb{N} \cup \{0\}\right\} \, .
\end{equation}
If $h$ is a positive integer, let $U_h(\mathfrak{g})$ be the linear space generated by all monomials $X_1^{a_1}\cdots X_n^{a_n}$ such that the inequality $a_1+a_2+\ldots +a_n\leq h$ is satisfied. This allows us to define the degree $\delta$ of an arbitrary element $P\in U(\mathfrak{g})$ as $\delta={\rm inf}\left\{ h : P\in U_h(\mathfrak{g})\right\}$. The subspaces $U_h(\mathfrak{g})$ determine a natural filtration in the enveloping algebra $U(\mathfrak{g})$, i.e., for integers $h,l\geq 0$ the relations  
\begin{equation}
	U_0(\mathfrak{g})=\mathbb{F},\quad U_{h}(\mathfrak{g})U_l(\mathfrak{g})\subset U_{h+l}(\mathfrak{g}),\quad U_{h}(\mathfrak{g})\subset U_{h+k}(\mathfrak{g}),\; k\geq 1 \label{fil1}
\end{equation}
are satisfied, the base field being $\mathbb{F}=\mathbb{R,C}$.  

\medskip
\noindent By virtue of the universal property of the enveloping algebra, $U(\mathfrak{g})$ possesses the same commutator as those in $\mathfrak{g}$. Essentially, the adjoint action of $\mathfrak{g}$ on the enveloping algebra $U(\mathfrak{g})$ is denoted by $\mathrm{ad}:\mathfrak{g} \rightarrow \mathrm{End}\left(U(\mathfrak{g})\right)$ and is defined as 
\begin{equation}\label{adja}
	\begin{array}[c]{rl}
		P\in U(\mathfrak{g}) \mapsto  \mathrm{ad}_{X_i}(P):= \left[X_i,P\right]= X_i P-P X_i\in U(\mathfrak{g})  .
	\end{array}
\end{equation} Let us now consider the reduction chain $\mathfrak{g} \supset \mathfrak{g}'$, where $\mathfrak{g}^\prime$ constitutes an $s$-dimensional subalgebra of $\mathfrak{g}$. In order to facilitate the analysis, without introducing unnecessary complexity, we label the generating elements of $\mathfrak{g}^\prime$ as $X_1,\ldots,X_s$. Here, $s :=\dim \mathfrak{g}^\prime$. Utilizing Equation $\eqref{adja}$, we can observe that the kernel of the adjoint representation corresponding to the subalgebra $\mathfrak{g}^\prime$ on the universal enveloping algebra $U(\mathfrak{g})$ provides a framework for formulating the ensuing definition. 

\medskip
\begin{definition}
     Let $\mathfrak{g}' \subset \mathfrak{g}$ be a $s$-dimensional subalgebra of $\mathfrak{g}.$ The commutant $C_{U(\mathfrak{g})}(\mathfrak{g}^{\prime})$ of an $s$-dimensional subalgebra $\mathfrak{g}'$ is given by means of the condition  
\begin{equation}
	C_{U(\mathfrak{g})}(\mathfrak{g}^{\prime})=\left\{ P\in U(\mathfrak{g}): [X,P]=0,\quad \forall X\in\mathfrak{g}^{\prime}\right\} .\label{comm}
\end{equation} For simplicity, in the following, we denote the commutant $C_{U(\mathfrak{g})}(\mathfrak{g}^{\prime})$ by $U(\mathfrak{g})^{\mathfrak{g}^{\prime}}.$
\end{definition}

\begin{remark}
   The center of the enveloping algebra $U(\mathfrak{g})$
\begin{equation}\label{INVS1}
	Z\left(U(\mathfrak{g})\right) := \left\{ P\in U(\mathfrak{g}) :\left[\mathfrak{g},P\right]=\{0\}\right\}= U(\mathfrak{g})^\mathfrak{g}
\end{equation} consists of $\mathfrak{g}$-invariant polynomials.
\end{remark}
\noindent Determining the commutant directly in the enveloping algebra is generally a cumbersome task \cite{CampoamorStursberg2021,campoamor2022some}, for which reason it is computationally more efficient to consider the equivalent analytical formulation. 

\subsection{Polynomial Poisson algebras from polynomials in $S(\mathsf{\mathfrak{g}})\cong \mathsf{Pol}(\mathfrak{g^*})$}
\label{subsec2.2}

In this Section \ref{subsec2.2}, we will provide more details on the construction of polynomial algebras in the dual space. Let $\mathfrak{g}^*$ be the dual space of $\mathfrak{g}$ with an ordered basis $\beta_{\mathfrak{g}^*} := \{e_1,\ldots,e_n\}$ such that $e_i(X_j) =\delta_{ij},$ where $\delta_{ij}$ is the Kronecker delta. In what follows, assume that $\boldsymbol{x}=\{x_1, \ldots, x_n\}$ are the linear coordinates in $\mathfrak{g}^*$.  Recall that one can define a Poisson structure on the algebra of functions on $\mathfrak{g}^*.$  For any $f,g \in C^\infty(\mathfrak{g}^*)$, 
\begin{equation}
	\{f,g\}=\sum_{i,j,k=1}^n C_{ij}^k x_k \partial_{x_i} f \partial_{x_j} g  \, ,
	\label{FPB}
\end{equation} where $\partial_{x_i} := \partial/\partial x_i.$ See, for instance, \cite{MR2906391}.  The algebra of smooth functions inherits the structure of a Poisson algebra that contains a subalgebra isomorphic to $\frak{g}$.  From this, one can define the linear Poisson bracket (a.k.a. Poisson-Lie structure) relations on $\mathfrak{g}^*$ as follows:
\begin{equation}
	\{x_i,x_j\}= \sum_{k=1}^nC_{ij}^k x_k  \, .
	\label{xcoord}
\end{equation} 
Let $S(\mathfrak{g}) \cong \textsf{Pol}(\mathfrak{g}^*):=\mathbb{F}\left[ x_{1},\ldots ,x_{n}\right]$ be the symmetric algebra of $\mathfrak{g}.$  By means of the Lie-Poisson (or Berezin) bracket defined in $\eqref{FPB}$, we deduce that $\{\cdot,\cdot\} :S(\mathfrak{g}) \times S(\mathfrak{g}) \rightarrow S(\mathfrak{g})$ is given by
\begin{equation}
	\{p,q\}= \sum_{i,j,k =1}^n C_{ij}^k x_k \partial_{x_i} p\partial_{x_j} q  \, , \quad \text{ for any } p,q \in S(\mathfrak{g}).
	\label{LPB}
\end{equation}

 We now construct the polynomial algebras from the subalgebras $\mathfrak{g}^\prime$ of $\mathfrak{g}$. As mentioned in Section $\ref{subsec2.1}$, the Lie subalgebra $\mathfrak{g}'$ is spanned by a subset $\boldsymbol{X}' \subset \boldsymbol{X}$ composed by $s<n$ generators, each of which has a corresponding linear coordinate on $\mathfrak{g^*}$, the whole subset being $\boldsymbol{x}' \subset \boldsymbol{x}$. We now look at the ad-action of the subalgebra $\mathfrak{g}^\prime$ on the symmetric algebra, respectively. For the symmetric algebra, the derivation of the coadjoint action $\mathrm{Ad}^*$ from the subgroup to the automorphism of $S(\mathfrak{g})$ will induce a Poisson-Lie bracket. This can be seen as follows: For any $X \in \mathfrak{g}^\prime$ and $p \in S(\mathfrak{g})$,  the derivative along the vector fields $\hat{X}$ is given by
 
 \begin{align}
   X \cdot p (\boldsymbol{x}')=   \hat{X}(p)(\boldsymbol{x}') = \left.\dfrac{d}{dt} \right|_{t = 0} p\left((\mathrm{Ad}_{\exp(tX)}^*(\boldsymbol{x}')\right) := \mathrm{ad}_X^*  (p) (\boldsymbol{x}'), \label{eq:action}
 \end{align} where $\mathrm{Ad}_{\exp(tX)}^*$ is the induced action (coadjoint action) of the Lie group $G'$. By the definition of coadjoint action $\mathrm{ad}^*$ and $\eqref{eq:action}$, we immediately obtain the following  

 \begin{proposition}
 \label{2.1}
     The coadjoint action $\mathrm{ad}^* : \mathfrak{g}' \rightarrow \mathrm{Aut} \left(S(\mathfrak{g})\right)$ induces a Poisson-Lie bracket $\{\cdot,\cdot\}.$  
 \end{proposition}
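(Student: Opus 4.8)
The plan is to realize the infinitesimal coadjoint action as a family of Hamiltonian derivations for the bracket \eqref{LPB}, and thereby show that this action reproduces exactly the linear Poisson structure \eqref{xcoord}. The identity to establish is that, for each $X\in\mathfrak{g}'$, the operator $\mathrm{ad}^*_X$ on $\mathsf{Pol}(\mathfrak{g}^*)$ coincides, up to the sign fixed by the convention in \eqref{eq:action}, with the Hamiltonian derivation $\{\ell_X,\,\cdot\,\}$, where $\ell_X=\sum_i a_i x_i$ is the linear coordinate function associated with $X=\sum_i a_i X_i$.

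First I would verify that $\mathrm{ad}^*_X$ is a derivation of the commutative algebra $S(\mathfrak{g})$. Since, by \eqref{eq:action}, $\mathrm{ad}^*_X$ is obtained by differentiating at $t=0$ the one-parameter family of algebra automorphisms $p\mapsto p\circ\mathrm{Ad}^*_{\exp(tX)}$, the Leibniz rule $\mathrm{ad}^*_X(pq)=\mathrm{ad}^*_X(p)\,q+p\,\mathrm{ad}^*_X(q)$ follows immediately from the product rule. Consequently $\mathrm{ad}^*_X$ is completely determined by its values on the generators $x_1,\dots,x_n$ of $S(\mathfrak{g})$, and the same is true of the derivation $\{\ell_X,\,\cdot\,\}$.

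Next I would evaluate both derivations on these generators. On one side, the defining relation of the coadjoint action $\langle\mathrm{ad}^*_X\xi,Y\rangle=-\langle\xi,[X,Y]\rangle$ gives, for $X=X_i$, the coordinate expression $\mathrm{ad}^*_{X_i}(x_j)=-\sum_k C_{ij}^k x_k$. On the other side, inserting the pair $(x_i,x_j)$ into \eqref{LPB} returns $\{x_i,x_j\}=\sum_k C_{ij}^k x_k$, namely the relation \eqref{xcoord}. The two agree on generators up to the global sign, and since both are derivations they therefore agree on all of $S(\mathfrak{g})$. This identifies the fundamental vector fields of the coadjoint action with the Hamiltonian vector fields of the linear functions, so that the linear coordinates act as the Hamiltonians generating the coadjoint flow; equivalently, the action induces precisely the Lie-Poisson bracket \eqref{LPB}.

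I expect the main care to lie not in any deep computation but in two bookkeeping points. The first is pinning down the sign convention in \eqref{eq:action} consistently, so that the correspondence $\mathrm{ad}^*_X=\pm\{\ell_X,\,\cdot\,\}$ is stated with a definite sign throughout. The second is confirming that \eqref{LPB} is genuinely a Poisson bracket: antisymmetry is immediate from $C_{ij}^k=-C_{ji}^k$, whereas the Jacobi identity reduces, through the Leibniz rule, to the Jacobi identity satisfied by the structure constants of $\mathfrak{g}$. This last reduction is the only step demanding a short but careful check, and it is exactly the point at which the hypothesis that $\mathfrak{g}$ is a Lie algebra, rather than an arbitrary antisymmetric bracket, is used.
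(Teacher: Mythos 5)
Your argument is correct and is essentially the same as what the paper intends: the paper treats the proposition as immediate from the definition \eqref{eq:action} and records precisely your key identification $\mathrm{ad}^*_{X_u}(P)=\{x_u,P\}=\hat{X}_u(P)$ in the remark that follows, so your proposal simply fills in the routine verifications (Leibniz rule, agreement on generators up to sign, antisymmetry and Jacobi from the structure constants) that the paper leaves unstated.
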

 \begin{remark}
     For computational reasons, the action in $\eqref{eq:action}$ can be written   as follows: for any $X_u \in \mathfrak{g}^\prime$ with $1 \leq u \leq s,$ \begin{align}
   \mathrm{ad}_{X_u}^*:     P(x_1,\ldots,x_n) \in S(\mathfrak{g}) & \mapsto \{x_u,P  \} = \hat{X}_u(P ) = \sum_{l,k=1}^n C_{uk}^lx_l \dfrac{\partial P}{\partial x_k} \in S(\mathfrak{g}), \label{eq:dual}
  \end{align} where $\hat{X}_u = \sum_{l,k=1}^n C_{uk}^l x_l \dfrac{\partial}{\partial x_k}$ corresponds to the infinitesimal generators of the one-parameter subgroup determined by the generators $X_u$ by means of the coadjoint representation and $p(x_1,\ldots,x_n)$ is a polynomial in $S(\mathfrak{g})$ with the generic form 
  \begin{align}
    p(x_1,\ldots,x_n) = \sum_{a_1 + \ldots + a_n   \leq h}  \Gamma_{i_1,\ldots,i_n}\, x_1^{a_1} \cdots x_n^{a_n} . \label{eq:ci}
\end{align}  
For instance, see \cite{gtp,MR0760556}.
 \end{remark}

  Inspired from Proposition $\ref{2.1}$, we have the following definition.

\begin{definition}
\label{2.3} 
The kernel of the coadjoint action of $\mathfrak{g}^\prime$ on the  symmetric algebra $S(\mathfrak{g})$ is given by \begin{equation*}
	C_{S(\frak{g})}\left( \frak{g}^{\prime}\right) =\left\{ p\in S\left( 
	\frak{g}\right) :\left\{x ,p\right\} =0,\; x\in \frak{g}^{\prime *}\right\} .
\end{equation*} For  convenience, we denote the Poisson centralizer  $C_{S(\frak{g})}\left( \frak{g}^{\prime}\right)$ by $S(\mathfrak{g})^{\mathfrak{g}^\prime}.$
 \end{definition}

\begin{remark}
\label{2.6}
   (i)  Taking into account the noncommutative framework (see (\ref{INVS1})), it can be easily seen that the center of the symmetric algebra 
\begin{equation*}
	Z\left(S\left( \frak{g}\right)\right)=\left\{ p\in S\left( \frak{g}\right) :\left\{ x,p\right\} =0,\;x\in
	 \frak{g}^*  \right\}  = S(\mathfrak{g})^\mathfrak{g},
\end{equation*} which contains all the Casimir  elements in $S(\mathfrak{g}).$ From Chevalley's theorem, if $\mathfrak{g}$ is semisimple or reductive, then there exist finitely many homogeneous algebraically independent invariants $p_1,\ldots,p_r$ such that $S(\mathfrak{g})^{\mathfrak{g}} = \mathbb{F}[p_1,\ldots,p_r]$, where $r = \mathrm{rank} (\mathfrak{g}).$

 (ii) The polynomials  commuting with $\mathfrak{g}^{\prime}$ are determined as the solutions of the equations in the following system of PDEs corresponding to the subalgebra generators:
\begin{equation}
	\widehat{X}_u\left( p\right) :=\left\{ x_u,p\right\} = \sum_{ l,k=1}^n C_{uk}^lx_l%
	\partial_{x_k} p=0,\; \text{ } 1\leq u\leq s=\dim \frak{g}^{\prime},
	\label{mlpa}
\end{equation}
where $(x_1, \dots, x_s)$ are coordinates in a dual basis of $\mathfrak{g}'^*$.
 
 (iii) In a finite-dimensional Lie algebra, a finite generating set within $S(\mathfrak{g})^{\mathfrak{g}'}$ may not exist. However, if $\mathfrak{g}$ is semisimple or reductive, it is shown that $S(\mathfrak{g})$ is Noetherian, allowing us to deduce that the centralizer $S(\mathfrak{g})^{\mathfrak{g}'}$ is finitely-generated (see e.g. \cite[Chapter 2]{MR1451138}). Hence, once a maximal set of indecomposable polynomials $\left\{p_{k_1},\dots ,p_{k_m}\right\}$ has been determined, there exists an integer $\zeta \in \mathbb{N}$ such that for all $j \geq 1$, the polynomial $p_{\zeta+k_j}$ is decomposable. A polynomial $p \in S(\mathfrak{g})$ is considered decomposable if there exists another polynomial $p' \in S(\mathfrak{g})$ of lower degree such that $p \equiv 0 \mod p'$, indicating that $p'$ divides $p$. It is also crucial to recognize that elements in the generating set of a centralizer subalgebra do not automatically reflect their algebraic independence.
\end{remark}

Within the framework of the labeling problem, as described in \cite{MR2749089}, there exists a class of functions which are not restricted to polynomials and which satisfy the prescribed system of PDEs given in (\ref{mlpa}). Such functions are identified with labeling operators, which means that they are used to distinguish different states in the case where a representation of the Lie algebra $\mathfrak{g}$ is broken down to a subalgebra $\mathfrak{g}^{\prime}$ (for further details, refer to sources such as \cite{MR170975}). In particular, the system described by (\ref{mlpa}) precisely admits $\mathcal{M}_0=\dim \mathfrak{g}-\dim\mathfrak{g}^{\prime}+\ell_0$ solutions that are functionally independent. Here, $\ell_0$ represents the total count of invariants related to $\mathfrak{g}$, which are characterized solely by their dependence on the variables pertaining to the subalgebra $\mathfrak{g}^{\prime}$. As shown in \cite{MR0411412}, the identity 
\begin{equation}
	\mathcal{M}_0=2n_0+N\left(\mathfrak{g}\right)+N\left(\mathfrak{g}^{\prime}\right)
\end{equation}
holds, where $n_0$ is the number of required (internal) labeling operators. Based on the analysis, we infer that there exist three distinct categories of elements which possess the ability to commute with the subalgebra. These are specifically the Casimir operators associated with $\mathfrak{g}$, the Casimir operators corresponding to $\mathfrak{g}^{\prime}$, and an additional set consisting of $2n$ labeling operators that exhibit dependence on the generators of both $\mathfrak{g}$ and $\mathfrak{g}^{\prime}$. As demonstrated in sources such as \cite{MR0411412}, it is evident that within this collection of $2n$ operators, a maximum of $n$ can mutually commute. This observation provides significant insight, indicating that the set of solutions to equation (\ref{mlpa}) effectively constructs a functional algebra when considered within the context of the Poisson bracket. When one narrows the scope exclusively to polynomial cases, i.e., within the framework of the symmetric algebra, these solutions give rise to a polynomial algebra. Notably, this algebra is typically non-Abelian, signifying complexity beyond mere commutativity. 

We now recall a map that allows us to compare the $\mathfrak{g}$-module structure of $U(\mathfrak{g})$ and $S(\mathfrak{g}).$ This gives rise to more algebraic structures in $S(\mathfrak{g}).$ Define a $\mathfrak{g}$-invariant isomorphism of vector spaces $\Lambda: S(\mathfrak{g}) \rightarrow U(\mathfrak{g})$ given on the monomials basis by   \begin{equation}
 	\Lambda\left(x_{i_1} \cdots x_{i_n}\right) =\frac{1}{n!} \sum_{\sigma \in \Sigma_n}X_{i_{\sigma(1)}} \cdots X_{i_{\sigma(n)}} 
 	\label{symma}
 \end{equation}
with $\Sigma_n$ being the permutation group in $\{1, 2, \dots, n\}$ that commutes with the adjoint action. The map $\Lambda$ is called the \textit{symmetrization map}  (see, for example, \cite{gtp}). In this sense, the symmetric algebra is embedded in $U(\mathfrak{g})$ as the symmetric tensors, with $U(\mathfrak{g})$ serving as a non-commutative deformation of $S(\mathfrak{g}).$ Note that $U(\mathfrak{g})$ and $S(\mathfrak{g})$ are clearly not isomorphic as algebras as one being commutative, the other generally noncommutative, it could happen that the $\mathfrak{g}$-invariant centralizer $S(\mathfrak{g})^\mathfrak{g}$ and $ U(\mathfrak{g})^\mathfrak{g}$ are isomorphic. See, for example, \cite{MR2816610}. Furthermore, recall that $S(\mathfrak{g})$ is a graded algebra. That is, $S(\mathfrak{g})  = \bigoplus_{k \geq 0} S^k (\mathfrak{g}),$ where \begin{align*}
     S^k(\mathfrak{g}) : = \mathrm{span} \left\{x_1^{a_1} \cdots x_n^{a_n}: a_1 + \ldots + a_n = k, \quad a_j \in \mathbb{N}_0\right\} 
 \end{align*} is a subalgebra of $S(\mathfrak{g})$ consisting of all the degree $k$ polynomials and $S^0(\mathfrak{g}) = \mathbb{F}$. Define the quotient vector subspace of $U(\mathfrak{g})$ by $U^k(\mathfrak{g}) = U_k(\mathfrak{g})/U_{k-1}(\mathfrak{g})$. We then have a associative algebra \begin{align}
     \mathrm{gr} \, U(\mathfrak{g}) = \bigoplus_{k \geq 0} U^k(\mathfrak{g}).
 \end{align} Using PBW theorem, $\Lambda$ induces an algebra isomorphism $\tilde{\Lambda}$ between $S(\mathfrak{g})$ and $\mathrm{gr} \, U(\mathfrak{g})$.  Through the isomorphsim $\tilde{\Lambda}$, we can define $U^l(\mathfrak{g})=\tilde{\Lambda}\left(S^l(\mathfrak{g})\right)$ as an algebra isomorphism that induces decomposition $U_{l}(\mathfrak{g})=\bigoplus_{k=0}^{l} U^{k}(\mathfrak{g})$. This, in particular, implies the relation 
\begin{equation*}
	\left[ P,Q\right]\in U_{l+\ell-1}(\mathfrak{g}),\quad P\in U_{l}(\mathfrak{g}), \quad Q\in U_{\ell}(\mathfrak{g}).
\end{equation*}


\subsection{Construction of indecomposable generators of Poisson centralizers}
\label{subsec2.3}

\medskip

 In this Subsection \ref{subsec2.3}, we develop a comprehensive methodology for determining the polynomial solutions of equation $\eqref{mlpa}$. 
 There are essentially two methods for computing a polynomial that sits in the centralizer of a given subalgebra: one involves engaging directly with systems of PDEs via the method of characteristics, while the other involves formulating an Ansatz specific to polynomials. Regarding reductive Lie algebras, the elements in the commutant can be interpreted as polynomials that incorporate variables from the dual space. Therefore, we will determine the possible solution of $\eqref{mlpa}$ by adopting a polynomial Ansatz. However, it is important to note that this polynomial representation is not universally applicable for all types of Lie algebras. For non-semisimple Lie algebras, solutions may involve rational or transcendental functions.

Define the vector space of $\mathfrak{g}'$-invariant  $k$-homogeneous polynomials as 
\begin{align*}
S^k(\mathfrak{g})^{\mathfrak{g}^\prime}  = \left\{ p^{(k)} \in S^k(\mathfrak{g}): \left\{x,p^{(k)}(\boldsymbol{x})\right\} = 0  \quad \forall x \in {\mathfrak{g}'}^*\right\} .
\end{align*}  This provides a procedure that allows us to concentrate, in the subsequent discussion, on homogeneous polynomials of degree $k \geq 1$, as well as on their generic structure
\begin{equation}
	p^{(k)}(\boldsymbol{x})=\sum_{a_1+\dots + a_n = k} \Gamma_{a_1, \dots, a_n} \,x_1^{a_1} \dots x_n^{a_n}  \in S^k\left( \frak{g}\right) \,,
	\label{polynomials}
\end{equation} where $\Gamma_{a_1,\ldots,a_n}$  are constants.  For embedding chains, elements in the Poisson centralizer $S(\mathfrak{g})^{\mathfrak{g}^\prime}$ correspond to polynomials that Poisson commute with the corresponding linear coordinates associated to the generators of the subalgebra $\mathfrak{g}'$ w.r.t. the Lie-Poisson bracket \eqref{LPB}. With the assumed general expansion \eqref{polynomials}, this condition is expressed as 
\begin{equation}
	 \left\{x_u, p^{(k)}(\boldsymbol{x}) \right\}  =\sum_{ l,k=1}^n C_{uk}^lx_l%
	\partial_{x_k} p^{(k)}=0\quad 1\leq u \leq s.
	\label{dbd}
\end{equation}
These linear PDEs can be addressed sequentially by investigating homogeneous solutions of various degrees ($1 \leq k \leq \zeta$).  Here, $\zeta$ represents the highest degree of indecomposable polynomials that solve $\eqref{dbd}$. Note that, as discussed in Remark $\ref{2.6}$ (iii), the existence of $\zeta$ is ensured by the assumption that $\mathfrak{g}$ is reductive or semisimple. Beyond this point, we only find additional linearly dependent polynomials, which can be all expressed in terms of polynomials of the lower-degree solutions. According to its definition, at degree $k=1$, all generators are contained within the centralizer subalgebra $\mathfrak{t}:=\mathfrak{g}^{\mathfrak{g}'}$. Let $m_1 \geq 1$ denote the number of linearly independent degree-one polynomials. Without loss of generality, all degree one $\mathfrak{g}'$-invariant polynomials are  $$\boldsymbol{p}^{(1)}:=\left\{p_1^{(1)}(\boldsymbol{x}), \ldots, p_{m_1}^{(1)}(\boldsymbol{x})\right\}.$$ We then proceed with building degree-two polynomials. By the definition of indecomposability, any quadratic solution to \eqref{dbd} formed by products such as $p_a^{(1)}(\boldsymbol{x})p_b^{(1)}(\boldsymbol{x})$ for $1\leq a,b\leq m_1$ must be discarded. Let $$\boldsymbol{p}^{(2)}:=\left\{p_1^{(2)}(\boldsymbol{x}), \ldots, p_{m_2}^{(2)}(\boldsymbol{x})\right\} \, $$ be the set of all quadratic linearly independent and indecomposable $\mathfrak{g}'$-invariant polynomial solutions.  Applying recursively the method, excluding polynomials that can be obtained from those of lower degrees, a finite set of linearly independent solutions can be found. This set comprises precisely $m_{[\zeta]}:=\text{Card}(\boldsymbol{Q}^{[\zeta]})=m_1+m_2+\ldots +m_\zeta < \infty$ distinctive polynomials that are both indecomposable and linearly independent, each within a predetermined maximum degree of $\zeta$:
\begin{equation}
	\boldsymbol{Q}^{[\zeta]}:=\bigsqcup_{i=1}^\zeta \boldsymbol{p}^{(i)} = \boldsymbol{p}^{(1)} \sqcup \dots  \sqcup \boldsymbol{p}^{(i)}  \sqcup \dots \sqcup \boldsymbol{p}^{(\zeta)}   \, ,
	\label{listN}
\end{equation}
where the  $i$-th set $\boldsymbol{p}^{(i)}:=\left\{p_1^{(i)}(\boldsymbol{x}), \dots,p_{m_i}^{(i)}(\boldsymbol{x})\right\}$ corresponds to $i$ ranging from 1  up to the highest degree $\zeta$. This method guarantees that the collection of polynomials denoted as $\boldsymbol{Q}^{[\zeta]}$  encompass all indecomposable and linearly independent solutions, up to a given degree limit. It should be reiterated that the components of the set referred to in equation (\ref{listN}) do not, in general, exhibit algebraic independence. However, it is important to note that the  dependence relations among these elements involve rational functions. Consequently, these dependencies are situated beyond the domain of the polynomial algebra $\mathsf{Pol}(\mathfrak{g^*})$. For an expanded discussion and further elaboration on this topic, we refer to \cite{campoamor2023algebraic}.

\medskip
Our objective is to construct the finitely generated polynomial (Poisson) algebras obtained from the set $\boldsymbol{Q}^{[\zeta]}$. Let $\textbf{Alg} \left\langle \boldsymbol{Q}^{[\zeta]} \right\rangle $ denote the algebra generated by the set $\boldsymbol{Q}^{[\zeta]}.$ Furthermore, this polynomial algebra  admits  the following filtration structure: \begin{equation*}
\mathcal{Q}_0 := \mathbb{F} \subset \textbf{Alg}\langle \boldsymbol{Q}^{[1]} \rangle:= \mathfrak{t} \subset \dots \subset \textbf{Alg}\langle \boldsymbol{Q}^{[\zeta]} \rangle .
\end{equation*} Note that $\textbf{Alg} \left\langle \boldsymbol{Q}^{[\zeta]} \right \rangle$ is indeed an infinite-dimensional vector space. However, as a finitely generated algebra, we further denote \begin{equation*}
    \dim_{FL}\textbf{Alg}\langle \boldsymbol{Q}^{[\zeta]} \rangle=  m_{[\zeta]}.
\end{equation*} Here, the notation $\dim_{FL}$ represents the count of indecomposable monomials that constitute  a set of generators of $\boldsymbol{Q}^{[\zeta]} $. It is crucial to recognize $\dim_{FL}$ as a theoretical upper limit for the rank of this finitely generated algebra, bearing in mind that the basis elements of $\boldsymbol{Q}^{[\zeta]} $ might not be  functionally independent. It follows that $\dim_{FL} \textbf{Alg}\langle \boldsymbol{Q}^{[\zeta]} \rangle \geq \dim_{KL} \textbf{Alg}\langle \boldsymbol{Q}^{[\zeta]} \rangle$, where $\dim_{KL}$ refers to the Krull dimension associated with the algebra under consideration. See, for instance, \cite{MR1322960}. In what follows, we aim to close the Poisson-Lie bracket in $\textbf{Alg} \left\langle \boldsymbol{Q}^{[\zeta]} \right\rangle $ by 

\begin{equation}
	\left\{p_{i_1}^{(h_{i_1})} , p_{i_2}^{(h_{i_2})}\right\}  =  \sum_{k_1+\ldots + k_r =h_{i_1} + h_{i_2}-1} \Gamma^{s_1,...,s_r}_{i_1,i_2} p^{(k_1)}_{s_1} \cdots p^{(k_r)}_{s_r}  . \, 
	\label{polrel}
\end{equation} Here $h_{i_1},h_{i_2},k_1,\ldots,k_r \leq \zeta$ and $ 1  \leq s_1,\ldots,s_r \leq m_{[\zeta]}$.     Since the generators in $\boldsymbol{Q}^{[\zeta]}$ are not functionally independent, to ensure $\eqref{polrel}$ is indeed a Poisson bracket, further polynomial relations between these generators need to be included. In the framework of constructing finitely generated polynomial Poisson algebras, if one encounters a relation among the generators $\boldsymbol{p}^{(M)}$ (for index values ranging from $2$ to $\zeta$) characterized by
\begin{equation}
	\sum_{i=1}^{m_M} p_i^{(M)}=P^{[M]}(\boldsymbol{p}^{(1)}, \dots, \boldsymbol{p}^{(M-1)})
	\label{pol}
\end{equation}
we proceed by omitting one polynomial from the collection $\boldsymbol{p}^{(M)}$. Subsequent to this omission, all further relations defined by \eqref{polrel} are evaluated using the reduced set. This omitted polynomial, specifically the one corresponding to $i = i^*$ in the above relation \eqref{pol}, is deduced based on the others as follows:
\begin{equation}
	p_{i^*}^{(M)} = P^{[M]}(\boldsymbol{p}^{(1)}, \dots, \boldsymbol{p}^{(M-1)}) - p_1^{(M)} - \dots - p_{i^*-1}^{(M)} - p_{i^*+1}^{(M)} - \dots - p_{m_M}^{(M)} \, .
	\label{eq:polel}
\end{equation}

Let \[ d = \max_{1 \leq k_j \leq \zeta} \sum_{\begin{matrix}
    j \in I \\
    I = \left\{1,\ldots,r: p_{s_j}^{(k_j)} \notin \mathcal{Z}\right\}
\end{matrix}} k_j \] represents the degree of the given polynomial algebra $\textbf{Alg} \left\langle \boldsymbol{Q}^{[\zeta]} \right\rangle$. Here $\mathcal{Z} := \left\{p \in \textbf{Alg}\left\langle \boldsymbol{Q}^{[\zeta]}\right\rangle :  \{p,q\} = 0, \text{ } \forall q \in \textbf{Alg}\left\langle \boldsymbol{Q}^{[\zeta]}\right\rangle \right\}$ is the center of $\textbf{Alg}\left\langle \boldsymbol{Q}^{[\zeta]}\right\rangle.$   For convenience in future discussions, we introduce the notation $\mathcal{Q}_\mathfrak{g}(d) := \left(\textbf{Alg} \left\langle \boldsymbol{Q}^{[\zeta]} \right\rangle,\{\cdot,\cdot\} \right)$.  By the way it is constructed, we find that \begin{align*}
    \mathcal{Q}_\mathfrak{g}(d) &= \mathfrak{t} \oplus \bigoplus_{k \in \Omega} \mathcal{Q}_k,
\end{align*} indicating that it is, indeed, a graded polynomial algebra.  The symbol $\Omega \subset \mathbb{N}$ denotes an ordered index set, while $\mathcal{Q}_k$ refers to the vector space that encompasses $\mathfrak{g}^\prime$-invariant polynomials, each characterized by a degree of $k$.

\medskip
 
Deriving the explicit structure of the polynomial relations indicated in $\eqref{polrel}$ presents significant computational challenges. Let us now recall some  terminologies given in \cite{MR4660510}.   We will use the following notational convention to indicate polynomials of a given degree. Once the representatives for each subset $\boldsymbol{p}^{(k)}$ ($k=1, \dots, \zeta$) have been found, taking into account the condition \eqref{eq:polel} to eliminate unnecessary polynomials, elements of degree one will be indicated with the uppercase letter $A_i$, forming the set $\textbf{A}_1$, while elements of degree two with $B_j$,  forming the set $\textbf{A}_2$,  and so on, following alphabetical order. Eventually, all the representative of each $\boldsymbol{p}^{(k)}$ is reformulated in $\textbf{A}_k.$ Furthermore, central elements will be denoted with lowercase letters, again following alphabetical order to keep track of the degree of homogeneous polynomials.  The concept of decomposing polynomials up to a specific degree inherently suggests that the Poisson bracket $\{\cdot,\cdot\}$ found in $\eqref{polrel}$ produces polynomials of higher degrees, eventually culminating in a result that splits into polynomials that are functionally independent.  Considering a polynomial of degree $\zeta$, the Poisson bracket $\{\cdot,\cdot\}$ applied to these \textit{compact forms} results in certain configurations when evaluated up to degree $\zeta$ \begin{align}
\nonumber
    \{\textbf{A}_1,\textbf{A}_2\} \sim &  \textbf{A}_2 + \textbf{A}_1^2; \\
    \nonumber
    \{\textbf{A}_2,\textbf{A}_2\} \sim & \text{ } \textbf{A}_3 + \textbf{A}_1 \{\textbf{A}_1,\textbf{A}_2\}; \\
      \{\textbf{A}_2,\textbf{A}_3\} \sim & \text{ } \textbf{A}_4 +  \textbf{A}_2^2 + \textbf{A}_1 \{\textbf{A}_2,\textbf{A}_2\}; \label{eq:compact} \\
      \nonumber
       \{\textbf{A}_2,\textbf{A}_4\} \sim & \text{ } \textbf{A}_5 +  \textbf{A}_2 \textbf{A}_3 + \textbf{A}_1 \{\textbf{A}_2,\textbf{A}_3\}\\ 
       \nonumber
       \{\textbf{A}_2,\textbf{A}_5\} \sim & \text{ } \textbf{A}_6 + \textbf{A}_2 \textbf{A}_4 +  \textbf{A}_2^3 +  \textbf{A}_3^2 + \textbf{A}_1 \{\textbf{A}_2,\textbf{A}_4\} \\
       & \vdots  \nonumber
\end{align} For instance, in the relation of degree two above, the relations that we look for can only adopt the following form: \begin{align}
    \{A_k,B_l\} =   \sum_{u=1}^{m_1} \sum_{v=1}^{m_1} \Gamma_{kl}^{uv}A_u A_v + \sum_{w=1}^{m_2} \Gamma_{kl}^w B_w. 
\end{align} Here $\Gamma_{kl}^{uv},\Gamma_{kl}^w$ are some constants.  Inductively, for any $\textbf{A}_k,\textbf{A}_l$ with $1 \leq k,l \leq \zeta$, the Poisson bracket in terms of the compact terms is given as follows \begin{align}
    \{\textbf{A}_k,\textbf{A}_l\} \sim \{\textbf{A}_g,\textbf{A}_h\}\sim &\, \textbf{A}_{k+l-1} + \textbf{A}_2 \textbf{A}_{k+l-3} + \textbf{A}_3 \textbf{A}_{k+l-4} + \ldots + \prod_{ j_1+ \ldots + j_\zeta = k+ l -1} \textbf{A}_{j_1} \cdots \textbf{A}_{j_\zeta} + \ldots + \textbf{A}_1 \{\textbf{A}_k,\textbf{A}_{l-1}\} \nonumber \\
    \sim &\, \sum_{m_1 a_1 + m_2 a_2 + \ldots + m_\zeta a_\zeta = k + l-1 } \textbf{A}_1^{a_1} \textbf{A}_2^{a_2} \cdots \textbf{A}_\zeta^{a_\zeta}. \label{eq:cp} 
\end{align} Here $\textbf{A}_j^{a_j} = \left(p_1^{(j)}\right)^{w_1} \cdots \left(p_{m_j}^{(j)}\right)^{w_{m_j}} $ with $w_1 + \ldots + w_{m_j} = a_j $ and $k + l = g + h \leq \zeta.$ In Equation \eqref{eq:cp}, Poisson brackets with identical degrees yield the same compact expansions. For instance, when $\deg \{\textbf{A}_2, \textbf{A}_2\} = \deg \{\textbf{A}_1, \textbf{A}_3\}$, the expansion of $\{\textbf{A}_2, \textbf{A}_2\}$ aligns with that of $\{\textbf{A}_1, \textbf{A}_3\}$, which manifests as $\textbf{A}_3 + \textbf{A}_1 \{\textbf{A}_1, \textbf{A}_2\}$. Moreover, we further consider that $\textbf{A}_t = \{0\}$ for any $t \leq 0$. It can easily be shown that the generators allowed in compact form $\textbf{A}_1^{a_1} \cdots \textbf{A}_\zeta^{a_\zeta} $ are \begin{align}
    \binom{m_1+a_1 -1}{a_1  } \times \cdots \times \binom{m_\zeta+a_\zeta -1}{a_\zeta  }. \label{eq:counting}
\end{align} Here $m_j$ is the number of elements in $\textbf{A}_j$ for all $1 \leq j \leq \zeta$. 

An in-depth analysis of the embedding chain of Lie algebras reveals that certain coefficients previously discussed could be zero. The lack of a deeper understanding of the inherent structure of polynomial generators leads to each polynomial $p_l^{(k)}$ containing numerous monomials of any degree. This makes it particularly difficult to determine the coefficients $\Gamma^{s_1,...,s_r}_{i_1,i_2}$ appearing in equation \eqref{polrel} and to identify the permissible monomials in formulations like $\textbf{A}_1^{a_1} \textbf{A}_2^{a_2} \cdots \textbf{A}_\zeta^{a_\zeta}$. The subsequent section will delineate a methodology designed to facilitate the identification of a polynomial appearing in the expansions of the Poisson brackets and to systematically construct the associated polynomial algebra, particularly when imposed constraints are present.

\subsection{The grading of the generators of $\mathcal{Q}_\mathfrak{g}(d)$}
\label{2.4}

In this Subsection \ref{2.4}, we recall the terminology and the concept of grading in the context of polynomial algebras as outlined in \cite{campoamor2025On}.  In the following, let $\mathfrak{g} = \bigoplus_{r \in J} \mathfrak{g}_r$. This decomposition is subject to a constraint on the Lie bracket, such that for any elements $r, s, t$ within the finite set $J$, the bracket operation satisfies $[\mathfrak{g}_r,\mathfrak{g}_s] \subset \mathfrak{g}_t$.  From this linear decomposition we deduce that $S(\mathfrak{g}) \cong \bigotimes_{r \in J} S(\mathfrak{g}_r)$. Furthermore, consider the subalgebra $\mathfrak{g}^\prime$, defined as $\mathfrak{g}^\prime : = \bigoplus_{i \in I} \mathfrak{g}_i \subset \mathfrak{g}$, where $I\subset J \subset \mathbb{N}$. As described in Subsection \ref{subsec2.3}, this gives rise to a graded polynomial algebra, denoted $\mathcal{Q}_\mathfrak{g}(d) = \mathfrak{t} \oplus \bigoplus_{k \in \Omega} \mathcal{Q}_k$.  This refinement allows us to simplify the description of generators and relations in the polynomial algebra. 

\begin{definition}
\label{gradingd}
   For any non-zero monomial $p \in \mathcal{Q}_k,$ a $\textit{grading of a monomial}$ is defined by $\mathcal{G} : \mathcal{Q}_k/\{0\}\rightarrow \overbrace{\mathbb{N}_0 \times \ldots \times \mathbb{N}_0}^{\text{$m$-times}}$ given by $p   \mapsto (i_1,\ldots,i_m)$, where $i_j \in \mathbb{N}_0 $ is the number of generators of $\mathfrak{g}_j$ in a monomial. 
\end{definition}

\begin{remark}
\label{re2.4}
  (i) Note that when $p$ is constant, we denote $\mathcal{G}(p)$ as $(0, \ldots, 0)$. However, due to our formulation of polynomial algebras, we exclude $\mathcal{Q}_0 = \mathbb{F}$. Therefore, we shall disregard the case where $\mathcal{G}(p) = (0, \ldots, 0)$.

  (ii) For any non-zero monomials $p ,q \in \mathcal{Q}_k$,  $\mathcal{G}(pq) = \mathcal{G}(p) + \mathcal{G}(q)$.

 (iii) Consider that, if a given monomial $p$ belongs to $\mathcal{Q}_\mathfrak{g}(d)$ and is defined as decomposable, then by definition, we can express $p$ in the form $p = \prod_{j=1}^w p_j$, where each individual component $p_j$ is an indecomposable monomial also found within $\mathcal{Q}_\mathfrak{g}(d)$. Moreover, there exists the possibility of encountering an indecomposable monomial $p' \in \mathcal{Q}_\mathfrak{g}(d)$ such that the grading $\mathcal{G}(p)$ is equal to the sum of the gradings of its constituent parts, $\sum_{j=1}^w \mathcal{G}(p_j)$, which itself is equivalent to $\mathcal{G}(p')$. Consequently, to eliminate any potential ambiguity, we denote the grading of any decomposable polynomial as a discrete sum. Specifically, this is represented by $\mathcal{G}(p) = \sum_{j=1}^w \mathcal{G}(p_j)  $.
 
(iv) Suppose that $p$ is expressed as $p = \sum_{j=1}^w \gamma_j p_j$, where it constitutes an indecomposable polynomial $p_j$. Here, $\gamma_j \in \mathbb{F}$ are the coefficients for each index $j$, and each $p_j$ is also an indecomposable component. Subsequently, we define the operation $\tilde{+}$ to allow $\mathcal{G}(p)$ to be represented as $\mathcal{G}(p_1) \tilde{+} \ldots \tilde{+} \mathcal{G}(p_w)$, ensuring that associativity and commutativity are maintained in this framework. Specifically, in cases where elements $p_i$ and $p_j$ satisfy the condition $\mathcal{G}(p_i) = \mathcal{G}(p_j)$, it leads to a simplification where $\mathcal{G}(p)$ can be rewritten as $\mathcal{G}(p_1) \tilde{+} \ldots \tilde{+} \mathcal{G}(p_i) \tilde{+} \ldots \tilde{+} \mathcal{G}(p_{j-1}) \tilde{+} \mathcal{G}(p_{j+1})  \tilde{+} \ldots \tilde{+} \mathcal{G}(p_w)$. This implication highlights that, when equivalent elements occur, only one of them needs to be included in the final representation. From these observations, we define $\mathcal{G}(p)$ as \textit{homogeneous grading} under the condition that $\mathcal{G}(p) = \mathcal{G}(p_1) = \ldots = \mathcal{G}(p_w)$. Note that any grading obtained from a monomial is invariably homogeneous. This also allows us to simplify some of the relations in the indecomposable generators. See, for instance, $\eqref{suitgene}$ below. 
\end{remark}

In the following, we illustrate how the grading method helps us to find the allowed polynomials within the structure of the Poisson brackets. Consider generators $p, q \in \boldsymbol{Q}^{[\zeta]}$. For our purposes, assume, without loss of generality, that these elements are expressed as: 
\begin{align*}
p := & \, p^{(i_1 + \ldots + i_m)} = x_1^{a_1} \cdots x_{s_1}^{a_{s_1}} \cdots \underbrace{x_{{s_{t-1}}+1}^{a_{s_{t-1}+1}} \cdots x_{s_t}^{a_{s_t}}}_{\text{denoted as $Y_j$ below}} \cdots x_{s_{m-1}+1}^{a_{s_{m-1}+1}} \cdots x_{s_m}^{a_{s_m}}, \\
q := & \, q^{(j_1 + \ldots + j_m)} = x_1^{b_1} \cdots x_{s_1}^{b_{s_1}} \cdots \underbrace{x_{{s_{t-1}}+1}^{b_{s_{t-1}+1}} \cdots x_{s_t}^{b_{s_t}}}_{\text{denoted as $Z_j$ below}} \cdots x_{s_{m-1}+1}^{b_{s_{m-1}+1}} \cdots x_{s_m}^{b_{s_m}}.
\end{align*} Here $a_{s_{t-1} + 1} + \ldots + a_{s_t} = i_t$ and $b_{s_{t-1}+1} + \ldots +b_{ s_t} =  j_t. $
By definition, $\mathcal{G}(p) = (i_1, \ldots, i_m)$ and $\mathcal{G}(q) = (j_1, \ldots, j_m)$. In this context, elements $x_{s_{j-1}+1}, \ldots, x_{s_j}$ serve uniquely as coordinates in $\mathfrak{g}_j^*$. Employing the Leibniz rule, it becomes possible to show, using induction, that for any given products $ p = \prod_{j=1}^m Y_j $ and $ q = \prod_{k=1}^m Z_k$, the following identity is satisfied:
\begin{align} 
     \left\{\prod_{j=1}^m Y_j,\prod_{k=1}^m  Z_k\right\}   = \sum_{1 \leq i_s \leq m} \left\{Y_{i_s},\prod_{k=1}^m  Z_k\right\} \prod_{j \neq i_s} Y_j. \label{eq:muli} 
 \end{align} 
  From the preceding expression, it is straightforward to verify that the condition $\left\{Y_{i_s},\prod_{k=1}^m Z_k\right\} = 0$ is satisfied for each index $i_s$ within the set $I.$ Consequently, when attempting to determine the grading that corresponds to the bracket, our analysis can be refined by concentrating on the expression comprising the summation of the term $\left\{Y_{i_s},\prod_{k=1}^m Z_k\right\} \prod_{j \neq i_s} Y_j $, where $i_s$ is excluded from the set $I.$ Incorporating this with the established commutator relations intrinsic to the structure of the Lie algebra, we may deduce:
\begin{align}
    \mathcal{G} \left(\{p,q\}\right) = \left(i_1 + j_1 + g_{11}, \ldots, i_m + j_m + g_{mt}\right) \tilde{+} \ldots \tilde{+} \left(i_1 + j_1 + g_{1 \xi}, \ldots, i_m + j_m + g_{m \xi}\right), \label{eq:gradg}
\end{align} where $g_{11}, \ldots, g_{m \xi}$ belongs to the set $\{-2, -1, 0, 1, 2\}$ and $1 \leq t \leq \xi.$ All permissible terms, as indicated by the grading associated with $\{p,q\}$ in $\eqref{eq:gradg}$, are encompassed within the set
\begin{align*}
    \mathcal{A} := \{p_1,\ldots,p_r\} \sqcup \{p_{k_1}p_{l_1},\ldots,p_{k_r}p_{l_r}\} \sqcup  \ldots \sqcup \left\{p_{t_1}\cdots p_{t_s}, \ldots, p_{\ell_1}\cdots p_{\ell_s}\right\}.
\end{align*} 
For any element $p$ belonging to $\mathcal{A},$ we can assert that the grading $\mathcal{G}(f)$ either equals the grading $\mathcal{G}(\{p,q\})$ or  equals some particular homogeneous gradings within $\mathcal{G}(\{p,q\}).$

\section{The supermultiplet model}
\label{sec3}

 The Lie group $SU(4)$ was first considered in the context of nuclear physics in 1937, when Wigner established that the charge independence of nuclear forces was related to an observed approximate fourfold degeneracy in the energy levels \cite{Wig}. This motivated to introduce a nucleon that distinguished isospin quantum numbers in combination with ordinary spin, leading to the spin-isospin multiplet model. In the group-theoretical framework, this model corresponds to the state labeling problem for the reduction chain
$\mathfrak{su}(4) \supset \mathfrak{su}(2) \times \mathfrak{su}(2)$, which has been analyzed by various authors using different techniques (see \cite{MR189602,Hecht,Brunet,Draayer,nuclear} and references therein). In this Section \ref{sec3}, we focus on the construction of the polynomial algebra associated to the Wigner supermultiplet model, which in particular contains the labeling operators. For convenience, this section is divided into two parts. In the first, specifically subsection $\ref{subsec3.1}$, we elaborate on the generators along with the commutator  relations relevant to the Lie algebra $\mathfrak{su}(4)$ and its subalgebra $\mathfrak{su}(2) \times \mathfrak{su}(2)$. Following this, in subsection $\ref{subsec3.2},$ we determine all indecomposable polynomial entities that contribute fundamentally to the generation of a polynomial algebra through the use of the Poisson bracket $\{\cdot,\cdot\}$.
\vskip 0.5cm

\subsection{The reduction chain $\mathfrak{su}(4) \supset \mathfrak{su}(2) \times \mathfrak{su}(2)$}
\label{subsec3.1}
The fifteen-dimensional Lie algebra $\mathfrak{su}(4)$ is given in terms of the basis: 
\begin{equation}
\{S_1, S_2, S_3, T_1, T_2, T_3, Q_{11}, Q_{12}, Q_{13}, Q_{21}, Q_{22}, Q_{23}, Q_{31}, Q_{32}, Q_{33}\} 
\label{gen}
\end{equation}
and the $\mathfrak{su}(2) \times \mathfrak{su}(2)$ subalgebra is spanned by the six generators $\{S_1, S_2, S_3, T_1, T_2, T_3\}$. The commutation relations, for $1 \leq i,j,k,\alpha,\beta \leq 3$, are the following:
\begin{align}
[S_i, S_j]&={\rm i} \epsilon_{ijk}S_k \qquad [T_\alpha, T_\beta]={\rm i} \epsilon_{\alpha \beta \gamma}T_\gamma \qquad [S_i, T_\alpha]=0 \label{a} \\
[S_i, Q_{j\alpha}]&={\rm i} \epsilon_{ijk} Q_{k \alpha} \quad \,\, [T_\alpha, Q_{i\beta}]={\rm i} \epsilon_{\alpha \beta \gamma} Q_{i \gamma} \quad [Q_{i \alpha}, Q_{j \beta}]=\frac{{\rm i}}{4}(\delta_{\alpha \beta} \epsilon_{ijk}S_k+\delta_{ij}\epsilon_{\alpha \beta \gamma}T_\gamma) \, , \label{b}
\end{align}
where summation over repeated indices is understood and $\rm i = \sqrt{-1}$. Here $\delta_{ij}$ is the Kronecker delta and $\epsilon_{ijk}$ is the Levi-Civita symbol. Under the basis $\eqref{gen},$ the Lie algebra $\mathfrak{su}(4)$ admits the (vector space) decomposition $ \mathfrak{su}(4) = \mathfrak{g}_1 \oplus \mathfrak{g}_2 \oplus \mathfrak{g}_3$ with 
\begin{align*}
   \mathfrak{g}_1 = \mathrm{span} \{S_1,S_2,S_3\} , \text{ } \mathfrak{g}_2 = \mathrm{span} \{T_1,T_2,T_3\} \text{ and } \mathfrak{g}_3 = \mathrm{span} \{Q_{11},\ldots,Q_{33}\}.
\end{align*} In this case, we consider the following linear coordinates: 
\begin{align}
\boldsymbol{x}:&=\{x_1, x_2, x_3, x_4, x_5, x_6,x_7,x_8,x_9,x_{10},x_{11},x_{12}, x_{13}, x_{14}, x_{15}\} \nonumber \\ &=\{s_1,s_2,s_3,t_1,t_2,t_3,q_{11},q_{12},q_{13},q_{21},q_{22},q_{23},q_{31},q_{32},q_{33}\} \, ,
\label{lincoord}
\end{align}
 and restrict to consider the commutant related to the subset of elements: 
 \begin{equation}
 \boldsymbol{x}':= \{x_1, x_2, x_3, x_4, x_5, x_6\} = \{s_1, s_2, s_3, t_1, t_2, t_3\} 
 \label{subalgebcoordin}
 \end{equation}
with respect to the Berezin bracket  \eqref{FPB}. In the commutative setting, the brackets of coordinates \eqref{xcoord} explicitly read:
\begin{equation}
\begin{split}
\{s_i, s_j\}&={\rm i} \epsilon_{ijk}s_k \qquad \{t_\alpha, t_\beta\}={\rm i} \epsilon_{\alpha \beta \gamma}t_\gamma \qquad \{s_i, t_\alpha\}=0 \\
\{s_i,q_{j\alpha}\}&={\rm i} \epsilon_{ijk} q_{k \alpha} \quad \,\, \{t_\alpha, q_{i\beta}\}={\rm i} \epsilon_{\alpha \beta \gamma} q_{i \gamma} \quad \{q_{i \alpha}, q_{j \beta}\}=\frac{{\rm i}}{4}(\delta_{\alpha \beta} \epsilon_{ijk}s_k+\delta_{ij}\epsilon_{\alpha \beta \gamma}t_\gamma) \, .
\label{classicalrels}
\end{split}
\end{equation}
 These relations will serve as the building blocks for deriving the polynomial algebra relations.
\vskip 0.5cm

\subsection{Construction of the generators of the commutant}
\label{subsec3.2}

We now construct a set of linearly independent and indecomposable polynomials arising from the analysis of the commutant of the subalgebra $\mathfrak{su}(2) \times \mathfrak{su}(2)$ in the enveloping algebra of $\mathfrak{su}(4)$. This is done in the Poisson (commutative) setting. Let us remark that elements of degree one will be denoted by uppercase letters $A_i$, elements of degree two by $B_j$, and so on, following the alphabetical order. Similarly, central elements will be represented by lowercase letters, such as $a_k$, $b_l$, $c_m$, and so on, again adhering to the alphabetical order to indicate the degree of the corresponding homogeneous polynomials. This convention facilitates tracking both the degree of the generators and the central elements while ensuring a clear distinction between them. For the compact forms, we will instead indicate with ${\bf A} \equiv  {\bf A}_1$, ${\bf B} \equiv  {\bf A}_2$,   ${\bf C} \equiv  {\bf A}_3$, and so on, the corresponding sets containing homogeneous polynomials of degree $1, 2, 3$, and so on, respectively (see \eqref{eq:compact} for a direct comparison with the notation used in subsection \ref{subsec2.3}). We proceed with our systematic procedure as given in subsection $\ref{subsec2.3}$ to find the classical (unsymmetrized) elements defining $\mathcal{Q}_{\mathfrak{su}(4) }(d)$ corresponding to the coordinates $\boldsymbol{x}'$  reported in \eqref{subalgebcoordin}.
Taking into account the set of elements $\boldsymbol{x}$ as reported in \eqref{lincoord}, the general homogeneous polynomial at degree $k$ is given by 
\begin{equation}
p^{(k)}(\boldsymbol{x})=\sum_{a_1+\ldots + a_{15}= {k}} \Gamma_{a_1, \dots, a_{15}} \,x_1^{a_1} \cdots x_{15}^{a_{15}}  
\label{eq:exp1}
\end{equation}
with the commutant constraint:
\begin{equation}
\begin{split}
\{s_i, p^{(k)}(\boldsymbol{x})\}&=0 \, , \quad i=1,2,3 \\
 \{t_\alpha, p^{(k)}(\boldsymbol{x})\}&=0  \, ,\quad \alpha=1,2,3.
\label{pdes}
\end{split}
\end{equation}
If $ k \geq 1$ is the degree of the considered  polynomial, and $ n$ is the number of generators of the Lie algebra, then the expansion \eqref{eq:exp1} involves a total number of terms $N$ given by
\begin{equation}
N=\binom{ n+k-1}{k}=\frac{( n+k-1)!}{k!( n-1)!} \, .
\end{equation}
For $ n=15$ we obtain 
\begin{equation}
N=\binom{14+k}{k}=\frac{(14+k)!}{k!14!} \, .
\end{equation}
Thus, a first-degree ($k=1$) expansion consists of $N=15!/(1!14!)=15$ terms, while a second-degree ($k=2$) expansion includes $N=16!/(2!14!)=120$ terms. For third-degree ($k=3$), there are $N=17!/(3!14!)=680$ terms, and so forth.  It is easy to verify that there are no degree-one elements that Poisson commute simultaneously with $\{s_i, t_\alpha\}$. Moving to the degree $k=2$ expansion, the equations \eqref{pdes} yield three independent and indecomposable solutions, given by 
\begin{equation}
\begin{split}
p_1^{(2)}&:=s_1^2 + s_2^2 + s_3^2, \\
p_2^{(2)}&:=t_{1}^2 + t_{2}^2 + t_{3}^2 ,\\
p_3^{(2)}&:=q_{11}^2+q_{12}^2+q_{13}^2+q_{21}^2+q_{22}^2+q_{23}^2+q_{31}^2+q_{32}^2+q_{33}^2 .
\end{split}
\end{equation}
It is easily seen that the (unsymmetrized) quadratic Casimir of $\mathfrak{su}(4)$ corresponds to the combination:
\begin{equation}
c^{[2]}=p_1^{(2)}+p_2^{(2)}+4 p_3^{(2)} \, ,
\end{equation}
where $p_1^{(2)}$ and $p_2^{(2)}$ are the quadratic Casimir elements associated to the subalgebra $\mathfrak{su}(2) \times \mathfrak{su}(2)$. The third-order expansion leads, once solved the commutant constraint for the undetermined coefficients, to the following two cubic polynomials as solutions:
\begin{equation}
\begin{split}
p_1^{(3)}&:=(q_{11} s_1  + q_{21} s_2 + q_{31} s_3) t_1 + (q_{12} s_1  + q_{22} s_2  + 
q_{32} s_3) t_2 + (q_{13} s_1  + q_{23} s_2  + q_{33} s_3) t_3 ,\\
p_2^{(3)}&:=(q_{12} q_{23}  - q_{13} q_{22}) q_{31} + (q_{13} q_{21}  - q_{11} q_{23}) q_{32} + (q_{11} q_{22}  - 
q_{12} q_{21}) q_{33}  \, . 
\end{split}
\end{equation}
It is straightforward to verify that the (unsymmetrized) cubic Casimir of $\mathfrak{su}(4)$ corresponds to:
\begin{equation}
c^{[3]}=p_1^{(3)}-4 p_2^{(3)} \, .
\end{equation} 
Increasing the degree, for fourth-degree polynomials, ten solutions to the system of PDEs are identified. Among these, three are excluded as they result from the squares of $p_1^{(2)}$, $p_2^{(2)}$, and the product $p_1^{(2)} p_2^{(2)}$, respectively. The remaining indecomposable polynomial solutions are:

\begin{align}
p_1^{(4)}&= (q_{11}^2 + q_{12}^2  + q_{13}^2) s_1^2 + 2 (q_{11} q_{21}  + 
q_{12} q_{22}  +  q_{13} q_{23})s_1 s_2 + (q_{21}^2 + q_{22}^2  + 
q_{23}^2) s_2^2 \nonumber \\
&+ 
2 ((q_{11} q_{31}  + q_{12} q_{32}  + q_{13} q_{33})s_1 s_3  + (q_{21} q_{31}  + q_{22} q_{32}  + 
q_{23} q_{33})s_2 s_3)  + (q_{31}^2 + q_{32}^2 + q_{33}^2) s_3^2 \, , \nonumber \\
p_2^{(4)}&= (q_{21}^2  + q_{22}^2  + q_{23}^2  + q_{31}^2  + q_{32}^2  + 
q_{33}^2) s_1^2 - 2 (q_{11} q_{21}  + q_{12} q_{22}  + q_{13} q_{23} )s_1 s_2 \nonumber  \\
& + 
(q_{11}^2 + q_{12}^2 + q_{13}^2  + q_{31}^2  + q_{32}^2 + 
q_{33}^2 )s_2^2 - 2 ((q_{11} q_{31}  + q_{12} q_{32}  + q_{13} q_{33})s_1 \nonumber  \\
&+ (q_{21} q_{31} + q_{22} q_{32}  + 
q_{23} q_{33}) s_2) s_3 + (q_{11}^2 + q_{12}^2 + q_{13}^2 + q_{21}^2 + q_{22}^2 + 
q_{23}^2) s_3^2 \, ,\nonumber   \\
p_3^{(4)}&= (q_{12} q_{23}  - q_{13} q_{22} ) s_3 t_1 + (q_{13} q_{21} - q_{11} q_{23}) s_3 t_2 + (q_{11} q_{22} - 
q_{12} q_{21}) s_3 t_3\nonumber  \\
&+ ((q_{13} q_{32} - q_{12} q_{33}) t_1 + (q_{11} q_{33} - 
q_{13} q_{31}) t_2 + (q_{12} q_{31} - q_{11} q_{32}) t_3) s_2 \nonumber \\
&+ ((q_{22} q_{33} - 
q_{23} q_{32}) t_1 + (q_{23} q_{31} - q_{21} q_{33}) t_2 + (q_{21} q_{32} - 
q_{22} q_{31} ) t_3) s_1 \, , \nonumber  \\
p_4^{(4)}&=(q_{11}^2 + q_{21}^2 + q_{31}^2) t_1^2 + 
2 (q_{11} q_{12} + q_{21} q_{22} + q_{31} q_{32}) t_1 t_2 + (q_{12}^2 + q_{22}^2 + 
q_{32}^2) t_2^2\nonumber  \\
&+ 2 ((q_{11} q_{13} + q_{21} q_{23} + q_{31} q_{33}) t_1 + (q_{12} q_{13} + q_{22} q_{23} + 
q_{32} q_{33}) t_2) t_3 + (q_{13}^2 + q_{23}^2 + q_{33}^2) t_3^2 \, , \nonumber  \\
p_5^{(4)}&=(q_{12}^2 + q_{13}^2 + q_{22}^2 + q_{23}^2 + q_{32}^2 + q_{33}^2) t_1^2 - 
2 (q_{11} q_{12} + q_{21} q_{22} + q_{31} q_{32}) t_1 t_2 \nonumber \\
&+ (q_{11}^2 + q_{13}^2 + q_{21}^2 + 
q_{23}^2 + q_{31}^2 + q_{33}^2) t_2^2 - 
2 ((q_{11} q_{13} + q_{21} q_{23} + q_{31} q_{33}) t_1 \nonumber \\
& + (q_{12} q_{13} + q_{22} q_{23} + 
q_{32} q_{33}) t_2) t_3 + (q_{11}^2 + q_{12}^2 + q_{21}^2 + q_{22}^2 + q_{31}^2 + 
q_{32}^2) t_3 \, ,  \nonumber \\
p_6^{(4)}&=q_{11}^4 + 2 q_{11}^2 q_{12}^2 + q_{12}^4 + 2 q_{11}^2 q_{13}^2 + 
2 q_{12}^2 q_{13}^2 + q_{13}^4 + 2 q_{11}^2 q_{21}^2 + q_{21}^4 + 4 q_{11} q_{12} q_{21} q_{22}\nonumber \\
& + 
2 q_{12}^2 q_{22}^2 + 2 q_{21}^2 q_{22}^2 + q_{22}^4 + 4 q_{11} q_{13} q_{21} q_{23} + 
4 q_{12} q_{13} q_{22} q_{23} + 2 q_{13}^2 q_{23}^2 + 2 q_{21}^2 q_{23}^2 + 
2 q_{22}^2 q_{23}^2\nonumber  \\
&+ q_{23}^4 + 2 q_{11}^2 q_{31}^2 + 2 q_{21}^2 q_{31}^2 + q_{31}^4 + 4 q_{11} q_{12} q_{31} q_{32} + 4 q_{21} q_{22} q_{31} q_{32} + 2 q_{12}^2 q_{32}^2 \nonumber \\
&+ 
2 q_{22}^2 q_{32}^2 + 2 q_{31}^2 q_{32}^2 + q_{32}^4 + 4 q_{11} q_{13} q_{31} q_{33} + 4 q_{21} q_{23} q_{31} q_{33} + 4 q_{12} q_{13} q_{32} q_{33} + 4 q_{22} q_{23} q_{32} q_{33}\nonumber  \\
&+ 2 q_{13}^2 q_{33}^2 + 2 q_{23}^2 q_{33}^2 + 2 q_{31}^2 q_{33}^2 + 2 q_{32}^2 q_{33}^2 + q_{33}^4 \, ,\nonumber  \\
p_7^{(4)} &=q_{12}^2 q_{21}^2 + q_{13}^2 q_{21}^2 - 2 q_{11} q_{12} q_{21} q_{22} + q_{11}^2 q_{22}^2 + 
q_{13}^2 q_{22}^2 - 2 q_{11} q_{13} q_{21} q_{23} - 2 q_{12} q_{13} q_{22} q_{23} + q_{11}^2 q_{23}^2 \nonumber \\
&+ 
q_{12}^2 q_{23}^2 + q_{12}^2 q_{31}^2 + q_{13}^2 q_{31}^2 + q_{22}^2 q_{31}^2 + 
q_{23}^2 q_{31}^2 - 2 q_{11} q_{12} q_{31} q_{32} - 2 q_{21} q_{22} q_{31} q_{32} + q_{11}^2 q_{32}^2 + 
q_{13}^2 q_{32}^2 \nonumber \\
&+ q_{21}^2 q_{32}^2 + q_{23}^2 q_{32}^2 - 2 q_{11} q_{13} q_{31} q_{33} - 2 q_{21} q_{23} q_{31} q_{33} - 2 q_{12} q_{13} q_{32} q_{33} - 2 q_{22} q_{23} q_{32} q_{33} + q_{11}^2 q_{33}^2 \nonumber \\
&+ q_{12}^2 q_{33}^2 + q_{21}^2 q_{33}^2 + q_{22}^2 q_{33}^2 \, ,
\end{align}
to which the following three relations have to be added:
\begin{align}
p_1^{(4)}+p_2^{(4)}=p_1^{(2)}p_3^{(2)} \qquad p_4^{(4)}+p_5^{(4)}=p_2^{(2)}p_3^{(2)}  \qquad  p_6^{(4)}+2 p_7^{(4)}=\bigl(p_3^{(2)}\bigl)^2 \, ,
\end{align}
meaning that we can drop the unessential polynomials $p_2^{(4)}, p_5^{(4)}$ and $p_7^{(4)}$. We are thus left with four degree-four polynomials $p_1^{(4)}, p_3^{(4)}, p_4^{(4)}, p_6^{(4)}$. At this level, the (unsymmetrized) quartic Casimir of $\mathfrak{su}(4)$ is given by:
\begin{equation}
c^{[4]}=p_1^{(4)}-2 p_3^{(4)}+p_4^{(4)}-2p_6^{(4)}-p_1^{(2)}p_3^{(2)}-p_2^{(2)}p_3^{(2)}-\frac{1}{8}\bigl(\bigl(p_1^{(2)}\bigl)^2+\bigl(p_2^{(2)}\bigl)^2\bigl) \, .
\end{equation}
Thus, up to degree four, we are left with nine polynomials and we choose the following elements:
{\small \begin{equation}
\{b_1, b_2, b_3, c_1, C_2, d_1, D_2, D_3, D_4\}:=\{p_1^{(2)}, p_2^{(2)},p_3^{(2)},p_1^{(3)}-4 p_2^{(3)}, p_1^{(3)}, p_1^{(4)}-2 p_3^{(4)}+p_4^{(4)}-2p_6^{(4)}, p_3^{(4)}, p_4^{(4)}, p_6^{(4)} \} \, ,
\label{poldeg4}
\end{equation}}

\noindent where we have indicated with lowercase letters the central elements and with uppercase letters the generators. In this notation, the three (unsymmetrized) Casimir elements of $\mathfrak{su}(4)$ corresponds to the following combinations:
\begin{align}
c^{[2]}=b_1+b_2+4 b_3 , \qquad c^{[3]}=c_1 , \qquad c^{[4]}=d_1-(b_1 +b_2) b_3-\frac{1}{8}(b_1^2+b_2^2).
\end{align}
Among the nine polynomials derived, five serve as central elements while the remaining four are intended as generators to close a polynomial algebra. To achieve closure of the algebra, an additional degree-five element must be incorporated, specifically derived from the commutant constraint \eqref{pdes} with the expansion \eqref{eq:exp1} for $k=5$. Solving the equations, seven degree-five polynomials, namely $p_1^{(5)}, \dots, p_7^{(5)}$, are produced. Only one of them, i.e. $p_4^{(5)}$, unable to be expressed as a combination of the lower-degree ones. Explicitly, it reads:
{\small \begin{align}
p_4^{(5)}&=q_{11}^3 s_1 t_1 + q_{11} q_{12}^2 s_1 t_1 + q_{11} q_{13}^2 s_1 t_1 + q_{11} q_{21}^2 s_1 t_1 + 
q_{12} q_{21} q_{22} s_1 t_1 + q_{13} q_{21} q_{23} s_1 t_1 + q_{11} q_{31}^2 s_1 t_1\nonumber \\
& + 
q_{12} q_{31} q_{32} s_1 t_1 + q_{13} q_{31} q_{33} s_1 t_1 + q_{11}^2 q_{21} s_2 t_1 + 
q_{21}^3 s_2 t_1 + q_{11} q_{12} q_{22} s_2 t_1 + q_{21} q_{22}^2 s_2 t_1 + 
q_{11} q_{13} q_{23} s_2 t_1 \nonumber \\
&+ q_{21} q_{23}^2 s_2 t_1 + q_{21} q_{31}^2 s_2 t_1 + 
q_{22} q_{31} q_{32} s_2 t_1 + q_{23} q_{31} q_{33} s_2 t_1 + q_{11}^2 q_{31} s_3 t_1 + 
q_{21}^2 q_{31} s_3 t_1 + q_{31}^3 s_3 t_1\nonumber \\
& + q_{11} q_{12} q_{32} s_3 t_1 + 
q_{21} q_{22} q_{32} s_3 t_1 + q_{31} q_{32}^2 s_3 t_1 + q_{11} q_{13} q_{33} s_3 t_1 + 
q_{21} q_{23} q_{33} s_3 t_1 + q_{31} q_{33}^2 s_3 t_1 + q_{11}^2 q_{12} s_1 t_2\nonumber  \\
&+ q_{12}^3 s_1 t_2 +
q_{12} q_{13}^2 s_1 t_2 + q_{11} q_{21} q_{22} s_1 t_2 + q_{12} q_{22}^2 s_1 t_2 + 
q_{13} q_{22} q_{23} s_1 t_2 + q_{11} q_{31} q_{32} s_1 t_2 + q_{12} q_{32}^2 s_1 t_2\nonumber  \\
&+ 
q_{13} q_{32} q_{33} s_1 t_2 + q_{11} q_{12} q_{21} s_2 t_2 + q_{12}^2 q_{22} s_2 t_2 + 
q_{21}^2 q_{22} s_2 t_2 + q_{22}^3 s_2 t_2 + q_{12} q_{13} q_{23} s_2 t_2 + q_{22} q_{23}^2 s_2 t_2 \nonumber \\
& +
q_{21} q_{31} q_{32} s_2 t_2 + q_{22} q_{32}^2 s_2 t_2 + q_{23} q_{32} q_{33} s_2 t_2 + 
q_{11} q_{12} q_{31} s_3 t_2 + q_{21} q_{22} q_{31} s_3 t_2 + q_{12}^2 q_{32} s_3 t_2 + 
q_{22}^2 q_{32} s_3 t_2 \nonumber \\
&+ q_{31}^2 q_{32} s_3 t_2 + q_{32}^3 s_3 t_2 + q_{12} q_{13} q_{33} s_3 t_2 +
q_{22} q_{23} q_{33} s_3 t_2 + q_{32} q_{33}^2 s_3 t_2 + q_{11}^2 q_{13} s_1 t_3 + 
q_{12}^2 q_{13} s_1 t_3 + q_{13}^3 s_1 t_3 \nonumber \\
&+ q_{11} q_{21} q_{23} s_1 t_3 + 
q_{12} q_{22} q_{23} s_1 t_3 + q_{13} q_{23}^2 s_1 t_3 + q_{11} q_{31} q_{33} s_1 t_3 + 
q_{12} q_{32} q_{33} s_1 t_3 + q_{13} q_{33}^2 s_1 t_3 + q_{11} q_{13} q_{21} s_2 t_3 \nonumber  \\
&+ 
q_{12} q_{13} q_{22} s_2 t_3 + q_{13}^2 q_{23} s_2 t_3 + q_{21}^2 q_{23} s_2 t_3 + 
q_{22}^2 q_{23} s_2 t_3 + q_{23}^3 s_2 t_3 + q_{21} q_{31} q_{33} s_2 t_3 + 
q_{22} q_{32} q_{33} s_2 t_3 \nonumber \\
&+ q_{23} q_{33}^2 s_2 t_3 + q_{11} q_{13} q_{31} s_3 t_3 + 
q_{21} q_{23} q_{31} s_3 t_3 + q_{12} q_{13} q_{32} s_3 t_3 + q_{22} q_{23} q_{32} s_3 t_3 + 
q_{13}^2 q_{33} s_3 t_3 + q_{23}^2 q_{33} s_3 t_3 \nonumber  \\
&+ q_{31}^2 q_{33} s_3 t_3 + 
q_{32}^2 q_{33} s_3 t_3 + q_{33}^3 s_3 t_3 \, .
\label{poldeg5}
\end{align}}

\noindent We define this polynomial to be $E_1:=p_4^{(5)}$.  The degree-six expansion, once solved the commutant constraint \eqref{pdes}, leads to twenty-nine polynomials $p_1^{(6)}, \ldots, p_{29}^{(6)}$. Among them, only  four turn out to be non expressible in terms of the lower-order ones, i.e. $p_{11}^{(6)}, p_{12}^{(6)}, p_{17}^{(6)}, p_{23}^{(6)}$. We note these indecomposable solutions by $F_j$ as follows  
\begin{equation}
F_1:=p_{11}^{(6)}, \quad F_2:=p_{12}^{(6)}, \quad F_3:=p_{17}^{(6)}, \quad F_4:=p_{23}^{(6)} \, .
\end{equation} 
Their explicit expressions are reported in the Appendix \ref{appendixA}.  Concerning degree-seven expansions, the commutant constraint lead to twenty-five solutions $p_1^{(7)}, \dots, p_{25}^{(7)}$, among which just two of them are not expressible as combinations of the lower-order polynomials, namely $p_9^{(7)}$ and $p_{16}^{(7)}$. Thus, besides the polynomials \eqref{poldeg4}, we need to consider the following additional polynomials of degree six and seven respectively:
\begin{equation}
\{F_1, F_2, F_3, F_4, G_1, G_2\}:=\{p_{11}^{(6)}, p_{12}^{(6)}, p_{17}^{(6)}, p_{23}^{(6)}, p_9^{(7)}, p_{16}^{(7)} \}\, = \boldsymbol{p}^{(6)} \sqcup \boldsymbol{p}^{(7)} .
\end{equation}
We thus have the following set composed by sixteen polynomials up to degree seven coming from a systematic analysis of the commutant related to the subalgebra $\mathfrak{su}(2) \times \mathfrak{su}(2)$:
\begin{equation}
\{b_1, b_2, b_3, c_1, C_2, d_1, D_2, D_3, D_4, E_1, F_1, F_2, F_3, F_4, G_1, G_2 \} \, .
\end{equation}
It is also recommended to compute the expansion for $k=8$ and beyond. A brief look at the Berezin brackets involving polynomials of total degree eight reveals that they cannot all be expressed as combinations of lower-degree polynomials. This suggests the presence of linearly independent degree-eight elements. Nevertheless, the expansion for $k=8$ includes $N=22!/(8!14!)=319770$ terms, which makes computation impractical within a reasonable timeframe. This conclusion is supported by analyzing the computation time $t(k)$ as a function of the expansion degree $k$. Up to degree seven, the recorded computation times (as derived from our computer with our specific code implementation) are as follows:
{\footnotesize \begin{equation}
\{deg \,1, deg \,2, deg \,3, deg\, 4, deg \,5, deg \,6, deg \,7 \}=\{3.70139 \cdot 10^{-2} \,s, 3.37363 \cdot 10^{-1} \,s, 1.59624 \, s, 2.41622\cdot 10^1 \, s, 5.95406 \cdot 10^2 \, s, 1.13846 \cdot 10^4 \, s, 1.94800\cdot 10^5 \, s\} \, .\nonumber
\end{equation}}

\noindent  In order to estimate roughly the time needed to compute the $k=8$ expansion, we can fit these points with a parametric function $f_{\alpha,\beta}(x)=\alpha \exp (\beta x)$ (we are essentially doing a two-parameter exponential fit). By doing so with the help of Wolfram Mathematica$^{\text{\textregistered}}$, in particular the \textsf{FindFit} function, we find the values $\alpha^*= 4.51331 \cdot 10^{-4} \,s$, $\beta^*=2.84043$ (see the left plot reported in \figref{fig1} below). 
\noindent Thus, the estimated time to solve the commutant constraint \eqref{pdes} for degree $k=8$  expansions is around $f_{\alpha^*, \beta^*}(8)=3.33549 \cdot 10^6 \, s$ (see the right plot reported in \figref{fig1}) , which is roughly equivalent to 39 days. It is important to mention that completing the degree-seven expansions took approximately 54 hours, namely more than two days. 
\begin{figure}[http]
	\centering
	\includegraphics[scale=0.45]{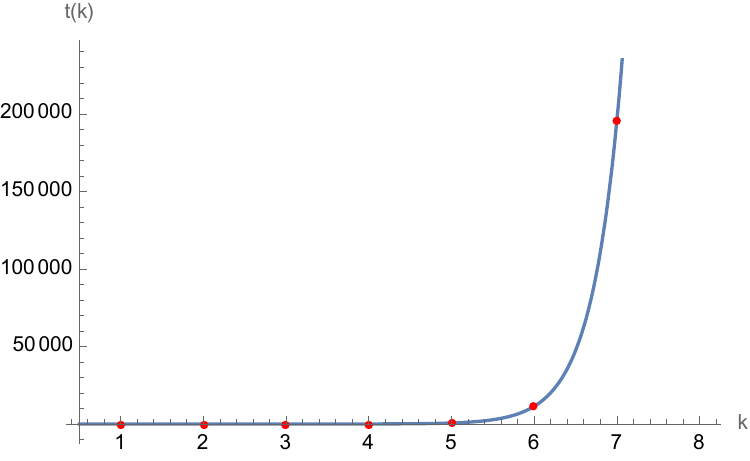} \quad 	\includegraphics[scale=0.45]{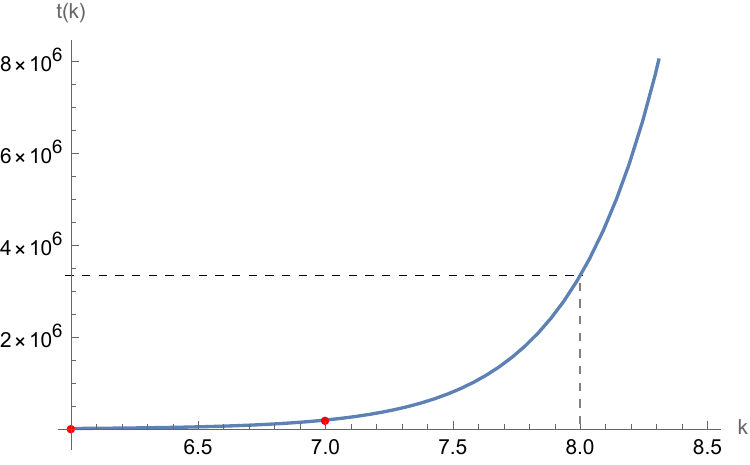}
	\caption{Plot of the function $f_{\alpha^*, \beta^*}(x)$.}
	\label{fig1}
\end{figure}

\noindent Therefore, the best strategy entails creating new indecomposable polynomials of degree eight by employing Berezin brackets from lower-degree polynomials. From a computational standpoint, constructing these systematically up to degree seven is achievable within a reasonable amount of time. Nevertheless, it is conceivable that more efficient programming techniques,  combined with the use of more powerful computers, could generate these polynomials within a more acceptable computation time.

To establish closure in the polynomial algebra, it is essential to include the new generators that appear within the Berezin brackets of higher degree. To identify higher degree indecomposable generators, we can compute the expansion pertaining to the non-zero Poisson brackets. Specifically, non-zero Berezin brackets start from the sixth degree. For instance:
\begin{equation}
\{C_2, D_2\} \, ,\qquad \{C_2, D_3\} \, ,\qquad  \{C_2, D_4\} \, .
\end{equation}
When considering expansions, we formally denote the sets with elements of degree two, three, four, and six as $\mathbf{B}$, $\mathbf{C}$, $\mathbf{D}$, and $\mathbf{F}$ respectively. In the compact form, we seek degree-six relations of the form:
\begin{equation}
\{\mathbf{C}, \mathbf{D}\} \sim \mathbf{F}+\mathbf{B}\mathbf{D}+\mathbf{C}\mathbf{C}+\mathbf{B}\mathbf{B}\mathbf{B} \, .
\label{deg6}
\end{equation}
In particular, the above equation formally represents the following expansion, such as for $C_2$ and $D_2$:
\begin{equation}
\begin{split}
\{C_2, D_2\}&=\alpha_1  F_1+\alpha_2  F_2+\alpha_3  F_3+\alpha_4  F_4+(\beta_1 b_1+\beta_2 b_2+\beta_3 b_3)d_1+(\beta_4 b_1+\beta_5 b_2+\beta_6 b_3)D_2\\
&+(\beta_7 b_1+\beta_8 b_2+\beta_9 b_3)D_3 +(\beta_{10} b_1+\beta_{11} b_2+\beta_{12} b_3)D_4+(\gamma_1 c_1^2+\gamma_2 C_2^2+\gamma_3 c_1 C_2)\\
&+(\delta_1 b_1^3+\delta_2 b_2^3+\delta_3 b_3^3+\delta_4 b_1^2 b_2+ \delta_5 b_1 b_2^2+\delta_6 b_1^2 b_3+\delta_7 b_1 b_3^2+\delta_8 b_2^2 b_3+\delta_9 b_2 b_3^2+\delta_{10} b_1 b_2 b_3) \, ,
\end{split}
\end{equation}
which needs to be solved for the unknown coefficients $\{\alpha_1, \dots, \alpha_{4}, \beta_1, \dots, \beta_{12}, \gamma_1, \dots, \gamma_{3}, \delta_1, \dots, \delta_{10}\}$. This same expansion applies to $\{C_2, D_3\}$ and $\{C_2, D_4\}$, that is, for Berezin brackets that result in degree-six polynomials. We will consistently use this formal notation to represent the expansions considered throughout the paper.
 At this level, by solving the algebraic equations obtained for the undetermined coefficients, we get the following:
\begin{equation}
\{C_2, D_2\}=-{\rm i} (F_1+F_3) \, , \qquad \{C_2, D_3\}= -2 {\rm i} F_3 \, , \qquad \{C_2, D_4\}=0 \, .
\end{equation}
At degree-seven, for Berezin brackets involving $\{\mathbf{C},\mathbf{E}\}$ and $\{\mathbf{D},\mathbf{D}\}$, we look for expansions of the type:
\begin{equation}
\{\mathbf{C},\mathbf{E}\} \sim \{\mathbf{D},\mathbf{D}\}  \sim \mathbf{G}+\mathbf{B}\mathbf{E}+\mathbf{C}\mathbf{D}+\mathbf{B}\mathbf{B}\mathbf{C} \, .
\label{deg7}
\end{equation}
The only non-zero Berezin brackets are the following:
\begin{equation}
\{D_2, D_3\}=\{D_3, D_4\}=2{\rm i}G_2 \, , \qquad \{D_2, D_4\}=\{E_1, C_2\}={\rm i}(G_1+G_2) \, .
\end{equation}
These polynomials of degree seven Poisson commute with all the generators of the subalgebra and, instead of presenting their explicit expressions (which are cumbersome), we simply notice that they can be expressed as:
 \begin{equation}
G_2=-\frac{{\rm i}}{2} \{D_2, D_3\}=-\frac{{\rm i}}{2}\{D_3, D_4\} \, , \qquad G_1 =-{\rm i}\{D_2, D_4\}-G_2=-{\rm i}\{E_1, C_2\}-G_2\, .
 \end{equation}
At this stage, we are prepared to calculate the Berezin brackets of degree eight. If these brackets can be written as combinations of lower-order polynomials, then the Berezin brackets $\{\mathbf{C}, \mathbf{F}\}$ and $\{\mathbf{D}, \mathbf{E}\}$ will employ the subsequent formal expansion:
\begin{equation}
\{\mathbf{C}, \mathbf{F}\} \sim \{\mathbf{D}, \mathbf{E}\}\sim  \mathbf{B}\mathbf{F}+\mathbf{C}\mathbf{E}+\mathbf{D}\mathbf{D}+\mathbf{B}\mathbf{C}\mathbf{C}+\mathbf{B}\mathbf{B}\mathbf{D}+\mathbf{B}\mathbf{B}\mathbf{B}\mathbf{B} \,  .
\label{deg8}
\end{equation}
The latter, once solved the algebraic equations obtained for the undetermined coefficients, lead to the following relations:
\begin{align}
\{C_2, F_{1}\}&= {\rm i}\bigl(-b_1 b_2 b_3^2-b_3 C_2^2+\frac{1}{4}b_1 \bigl(c_1-C_2\bigl)C_2+2 b_2 b_3(d_1+2D_2)+(d_1+4 D_2)D_2 \nonumber \\
&\hskip 0.5cm+((b_1-2 b_2)b_3-D_2)D_3+\bigl(\bigl(-\frac{1}{2}b_1+4 b_3\bigl)b_2+2D_2\bigl)D_4+C_2 E_1-b_2 F_2-\frac{1}{4}b_1 F_4\bigl) \, ,\nonumber  \\
\{C_2, F_{2}\}&= {\rm i} H_1  \, ,\nonumber \\
\{C_2, F_{3}\}&=   {\rm i}\bigl(-b_1 b_2 b_3^2-b_3 C_2^2+\frac{1}{4}b_2 \bigl(c_1-C_2\bigl)C_2+ b_2 b_3(d_1+2D_2)+2D_2^2+((2b_1- b_2)b_3+D_2)D_3 \nonumber \\
&\hskip 0.5cm+\bigl(-\frac{1}{2}b_1+2 b_3\bigl)b_2D_4+C_2 E_1-\frac{1}{2}b_2 F_2-b_1 F_4\bigl)\, ,\nonumber \\
\{C_2, F_{4}\}&={\rm i} H_2   \, , \nonumber  \\
\{D_2, E_{1}\}&=\frac{{\rm i}}{4}(b_2 F_1+b_1 F_2-H_1-H_2)  \, , \nonumber \\
\{D_3, E_{1}\}&=\frac{{\rm i}}{2}(H_2+b_2 F_1+4 b_3 F_3) \, , \nonumber \\
\{D_4, E_{1}\}&=\frac{{\rm i}}{2}(H_1+H_2+2 b_3(F_1+F_3)) \, ,
\end{align}
meaning that we need to introduce two new linearly independent and indecomposable degree-eight polynomials $H_1, H_2$ defined through the Berezin brackets, i.e.:
\begin{equation}
H_1:=-{\rm i}\{C_2, F_2\} \, , \qquad H_2:=-{\rm i}\{C_2, F_4\} \, .
\end{equation}
Notice that in some of the relations appear the new indecomposable polynomials on the right hand side. In fact, once a new indecomposable polynomial is found, it has to be added to the considered expansion. At this degree, there is also an algebraic relation involving the generators, i.e.:
\begin{align}
\frac{1}{2}b_1 b_2 b_3^2&+b_3 C_2^2-b_2 b_3(d_1+2D_2)-D_2^2-((b_1-b_2)b_3-d_1-2D_2+D_3)D_3 \nonumber \\
&+\bigl(\bigl(\frac{1}{2}b_1-2b_3\bigl)b_2+2D_3\bigl)D_4-2 C_2 E_1+\frac{1}{2}b_2 F_2+\frac{1}{2}b_1 F_4=0 \, .
\end{align}
 We remark that, although $H_1, H_2$ are indecomposable,  we cannot be sure at this level that they are the only polynomials of degree eight in $\boldsymbol{p}^{(8)}$. To understand this point, it is sufficient to look back at the degree-six generators and associated Berezin brackets of the same degree. In fact, in the Berezin brackets above, just two of the four indecomposable polynomials appear, i.e. $F_1$, $F_3$. That said, if we manage to close a polynomial algebra by adding these two polynomials to the previous generators, then the eighteen elements:
\begin{equation}
\{b_1, b_2, b_3, c_1, C_2, d_1, D_2, D_3, D_4, E_1, F_1, F_2, F_3, F_4, G_1, G_2, H_1, H_2\} 
\label{generators}
\end{equation}
would represent the minimal set of linearly independent and indecomposable generators that are necessary to reach closure. Consequently, we proceed by calculating the ninth-degree expansions. These expansions encompass $\{\mathbf{C},\mathbf{G} \}$ and $\{\mathbf{D},\mathbf{F} \}$ as demonstrated in the subsequent formal expansion:
\begin{equation}
\{\mathbf{C}, \mathbf{G}\} \sim \{\mathbf{D}, \mathbf{F}\}\sim  \mathbf{B} \mathbf{G}+ \mathbf{C} \mathbf{F}+ \mathbf{D} \mathbf{E}+ \mathbf{C} \mathbf{C} \mathbf{C}+ \mathbf{B} \mathbf{C} \mathbf{D}+\mathbf{B} \mathbf{B} \mathbf{E}+ \mathbf{B} \mathbf{B} \mathbf{B} \mathbf{C}\,  .
\label{deg9}
\end{equation}
After implementing the expansions and solving the corresponding algebraic equations for the undetermined coefficients, we get the following relations:

\begin{align}
\{D_2, F_1\}&={\rm i}\bigl(\bigl(\frac{1}{16}b_1^2 b_2+\frac{1}{4}b_1 b_2 b_3\bigl)c_1+\bigl(-\frac{1}{16}b_1^2 b_2-\frac{1}{4}b_1 b_2 b_3-\frac{1}{2}b_1 b_3^2\bigl)C_2+\frac{1}{2}c_1 C_2^2-\frac{3}{4}C_2^3  \nonumber \\
&\hskip 0.5cm+\bigl(-\frac{1}{2}b_2 c_1+\bigl(\frac{3}{4}b_2+b_3\bigl)C_2\bigl)d_1+\bigl(\bigl(-\frac{b_1}{4}-b_2\bigl)c_1+\bigl(\frac{1}{2}b_1+\frac{3}{2}b_2+3 b_3\bigl)C_2\bigl)D_2 \nonumber \\
&\hskip 0.5cm+\bigl(\bigl(-\frac{1}{4}b_1+\frac{1}{4}b_2\bigl)c_1+\bigl(\frac{1}{2}b_1-\frac{3}{4}b_2-b_3\bigl)C_2\bigl)D_3+\bigl(-b_2 c_1+\bigl(\frac{3}{2}b_2+2 b_3\bigl)C_2\bigl)D_4 \nonumber \\
&\hskip 0.5cm-\frac{1}{4}b_1 b_2 E_1-D_2 E_1-\frac{1}{2}C_2 F_2\bigl) \, , \nonumber \\
\{D_2, F_2\}&={\rm i}((-c_1+2 C_2)F_1+4 b_3 G_1-b_1 G_2) \, , \nonumber \\
\{D_2, F_3\}&={\rm i}\bigl(\bigl(\frac{1}{16}b_1 b_2^2+\frac{1}{4}b_1 b_2 b_3\bigl)c_1+\bigl(-\frac{1}{16}b_1 b_2^2-\frac{1}{4}b_1 b_2 b_3-\frac{1}{2}b_2 b_3^2\bigl)C_2+\frac{1}{2}c_1 C_2^2-\frac{3}{4}C_2^3 \nonumber \\
&\hskip 0.5cm+\bigl(-\frac{1}{4}b_2 c_1+\frac{1}{2}b_2C_2\bigl)d_1+\bigl(-\frac{3}{4}b_2c_1+\bigl(\frac{3}{2}b_2+b_3\bigl)C_2\bigl)D_2\nonumber \\
&\hskip 0.5cm+\bigl(\bigl(-\frac{1}{2}b_1+\frac{1}{4}b_2\bigl)c_1+\bigl(\frac{3}{4}b_1-\frac{1}{2}b_2+b_3\bigl)C_2\bigl)D_3+\bigl(-\frac{1}{2}b_2 c_1+b_2 C_2\bigl)D_4 \nonumber \\
&\hskip 0.5cm-\frac{1}{4}b_1 b_2 E_1-D_2 E_1-\frac{1}{2}C_2 F_2\bigl) \, ,\nonumber \\
\{D_2, F_4\}&={\rm i}((-c_1+2 C_2)F_3- b_2 G_1+4 b_3 G_2)\, ,\nonumber \\
\{D_3, F_1\}&=\frac{{\rm i}}{2}(c_1 C_2^2-2 C_2^3+b_2 C_2 d_1+2 b_2 C_2 D_2+(b_1-b_2)C_2 D_3+2 b_2 C_2 D_4-b_1 b_2 E_1 +4 D_2 E_1)\, ,\nonumber \\
\{D_3, F_2\}&=2 {\rm i}(C_2 F_1-b_1 G_2)\, ,\nonumber \\
\{D_3, F_3\}&= {\rm i} \bigl(\frac{1}{8}b_1 b_2^2 c_1-\bigl(\frac{1}{8}b_1 b_2^2+b_2 b_3^2\bigl)C_2-\bigl(\frac{1}{2}b_2 c_1-b_2 C_2\bigl)D_2+2 b_3 C_2 D_3-b_2 C_2 D_4\nonumber \\
&\hskip 0.5cm+2 D_3 E_1-2 C_2 F_4\bigl)\, ,\nonumber \\
\{D_3, F_4\}&={\rm i} I_1 \, ,\nonumber \\
\{D_4, F_1\}&={\rm i}\bigl(-\frac{1}{4}b_1 b_2 b_3 (c_1-C_2)-\frac{1}{4}c_1 C_2^2+\frac{1}{4}C_2^3+\frac{b_2}{2}(c_1 -C_2)d_1+(b_2c_1-(b_2+b_3)C_2)D_2\nonumber \\
&\hskip 0.5cm+\bigl(\frac{b_1}{4}-\frac{b_2}{2}\bigl)(c_1-C_2)D_3+\bigl(b_2 c_1-\bigl(\frac{b_1}{2}+b_2\bigl)C_2\bigl)D_4+(d_1+4D_2-D_3+2D_4)E_1\nonumber \\
&\hskip 0.5 cm-\frac{1}{2}C_2 F_2\bigl) \, ,\nonumber \\
\{D_4, F_2\}&={\rm i}I_2 \, ,\nonumber \\
\{D_4, F_3\}&=-\{D_2,F_3\}+\frac{1}{2}\{D_3,F_1\}+\frac{1}{2}\{D_3, F_3\} \, ,\nonumber \\
\{D_4, F_4\}&={\rm i}\bigl((c_1-C_2)F_3-4 b_3 G_2+\frac{1}{2}I_1\bigl) \, ,\nonumber \\
\{C_2, G_1\}&=\{D_4,F_1\} \, ,\nonumber \\
\{C_2,G_2\}&=\{D_4, F_3\} \, .
\end{align}
This implies that we find two indecomposable polynomials of degree nine $I_{1}$ and $I_2$ to be added to the set \eqref{generators}. The latter are defined as:
\begin{equation}
I_1:=-{\rm i}\{D_3, F_4\} \, ,\qquad I_2:=-{\rm i}\{D_4, F_2\} \, .
\end{equation}
We remark that the introduction of these new polynomials is needed because the Berezin bracket of $D_3, F_4$ and $D_4, F_2$ turn out to be not expressible as combinations of total degree nine  involving lower-order terms. Therefore, at this level, we are dealing with the following set composed by twenty polynomials up to degree nine:
\begin{equation}
\boldsymbol{Q}^{[9]} =  \{b_1, b_2, b_3, c_1, C_2, d_1, D_2, D_3, D_4, E_1, F_1, F_2, F_3, F_4, G_1, G_2, H_1, H_2, I_1, I_2\} \, . 
\label{generators2}
\end{equation}
At this point, a crucial observation that ensures the correctness of our results is that the twenty polynomials we have obtained are directly connected to the ones reported in \cite{nuclear}. In that paper, in fact, the following twenty polynomials commuting with the generators $\{s_i,t_\alpha\}$ are presented (sum over repeated indices is understood):
\begin{equation}
\begin{split}
C^{(2 0 0)}&=s_i s_i \, , \quad C^{(020)}=t_\alpha t_\alpha \, , \quad C^{(002)}=q_{i \alpha}q_{i \alpha} \, , \quad C^{(111)}=s_i t_\alpha q_{i \alpha} \, , \quad C^{(003)}=\epsilon_{ijk}\epsilon_{\alpha \beta \gamma}q_{i\alpha}q_{j \beta}q_{k \gamma} \, ,\\
C^{(202)}&=s_i s_j q_{i \alpha}q_{j \alpha} \, , \quad C^{(022)}=t_\alpha t_{\beta}q_{i\alpha}q_{i \beta} \, , \quad C^{(004)}=q_{i \alpha}q_{i \beta}q_{j \alpha}q_{j\beta} \, , \quad C^{(112)}=\epsilon_{ijk}\epsilon_{\alpha \beta \gamma}s_i t_\alpha q_{j \beta} q_{k \gamma} \, ,\\
C^{(113)}&=s_i t_\alpha q_{i \beta}q_{j \alpha}q_{j \beta} \, , \quad C^{(204)}=s_is_jq_{i \alpha}q_{j \beta}q_{k \alpha}q_{k \beta} \, , \quad C^{(024)}=t_\alpha t_\beta q_{i \alpha}q_{j \beta} q_{i \gamma}q_{j \gamma} \, ,\\
 C^{(213)}&=\epsilon_{ijk}s_i s_\ell t_{\alpha}q_{j \alpha}q_{k \beta}q_{\ell \beta} \, ,\quad 
C^{(123)}=\epsilon_{\alpha \beta \gamma} s_i t_\alpha t_\delta q_{i\beta}q_{j \gamma}q_{j \delta} \, , \quad C^{(214)}=\epsilon_{\alpha \beta \gamma}s_i s_j t_{\alpha}q_{i \beta}q_{j \delta}q_{k \gamma}q_{k \delta}\, , \\
 C^{(124)}&=\epsilon_{ijk}s_i t_\alpha t_\beta q_{j \alpha}q_{\ell \beta}q_{k \gamma}q_{\ell \gamma} \, , \quad C^{(215)}=\epsilon_{ijk}s_i s_\ell t_{\alpha}q_{j \alpha}q_{k \beta}q_{\ell \gamma}q_{m \beta} q_{m \gamma} \, , \\
 C^{(125)}&=\epsilon_{\alpha \beta \gamma}s_i  t_{\alpha}t_{\delta} q_{i \beta}q_{j \gamma}q_{k \delta}q_{j \rho} q_{k \rho}\,  , \quad C^{(306)}=\epsilon_{ijk}s_i s_\ell s_m q_{j \alpha}q_{k \beta}q_{\ell \alpha}q_{m \gamma}q_{n \beta}q_{n \gamma} \, , \\
 C^{(036)}&=\epsilon_{\alpha \beta \gamma}t_\alpha t_\delta t_\rho q_{i \beta}q_{j \gamma}q_{i \delta}q_{k \rho}q_{j \sigma}q_{k \sigma} \, .
\end{split}
\label{polquesne}
\end{equation}
 Remarkably, the indecomposable generators in $\boldsymbol{Q}^{[9]}$ can be expressed in terms of the above polynomials as:
\begin{equation}
\begin{split}
b_1&=C^{(200)}  \, , \quad b_2=C^{(020)} \, , \quad b_3=C^{(002)} \, , \quad c_1=C^{(111)}-\frac{2}{3}C^{(003)} \, , \quad C_2=C^{(111)}\\
d_1&=C^{(202)}+C^{(022)}-C^{(112)}-2 C^{(004)} \, ,\quad D_2=\frac{1}{2}C^{(112)} \, , \quad D_3=C^{(022)} \, ,\quad D_4=C^{(004)} \, , \\
 E_1&=C^{(113)} \, ,\quad  F_1=C^{(213)} \, , \quad F_2=2 C^{(204)}-C^{(200)}C^{(004)} \, , \quad F_3=C^{(123)} \, , \\
 F_4&=2 C^{(024)}-C^{(020)}C^{(004)} \, , \quad G_1=C^{(214)} \, , \quad  G_2=C^{(124)} \, , \quad H_1= C^{(215)} \, , \quad H_2=C^{(125)} \, , \\
 I_1&=-8C^{(036)} \, , \quad I_2=-4\bigl(C^{(306)}+C^{(002)}C^{(214)}+\frac{1}{6}C^{(003)} C^{(213)}\bigl) \, .
\end{split} 
\label{poldanilo}
\end{equation}
By examining these polynomials, it becomes apparent that simplifying some of the relations is feasible. In principle, it is permissible to omit additional terms that appear in certain polynomials, such as $F_2$ and $F_4$, where we can remove the terms $C^{(200)}C^{(004)}$ and $C^{(020)}C^{(004)}$, respectively. This method retains the linearly independent and indecomposable part of the polynomials. The same procedure applies to the polynomial $I_2$. Therefore, the most suitable generators for this problem are as follows:
\begin{equation}
\begin{split}
\bar{b}_1&:=C^{(200)}  \, , \quad \bar{b}_2:=C^{(020)} \, , \quad \bar{b}_3:=C^{(002)} \, , \quad \bar{c}_1:=C^{(111)}-\frac{2}{3}C^{(003)} \, , \quad \bar{C}_2:=C^{(111)}\\
\bar{d}_1&:=C^{(202)}+C^{(022)}-C^{(112)}-2 C^{(004)} \, ,\quad \bar{D}_2:=C^{(112)} \, , \quad \bar{D}_3:=C^{(022)} \, ,\quad \bar{D}_4:=C^{(004)} \, , \\
 \bar{E}_1&:=C^{(113)} \, , \quad \bar{F}_1:=C^{(213)} \, , \quad \bar{F}_2:=C^{(204)}\, , \quad \bar{F}_3:=C^{(123)} \, , \quad \bar{F}_4:= C^{(024)} \, , \quad \bar{G}_1:=C^{(214)} \, , \\ \bar{G}_2&:=C^{(124)} \, , \quad \bar{H}_1:=C^{(215)} \, , \quad \bar{H}_2:=C^{(125)} \, ,\qquad \bar{I}_1:=C^{(036)} \, , \quad \bar{I}_2:=C^{(306)} \, ,
\end{split} \label{suitgene}
\end{equation}
where we dropped the above terms in the polynomials and eliminated some unessential global factor.   Notice that not all the polynomials have changed, however we preferred to keep the notation uniform for the new basis. Therefore, in what follows, our efforts will focus on constructing the polynomial algebra generated by the following set composed of twenty polynomials up to degree nine (five central elements and fifteen generators):
\begin{equation}
  \mathcal{P}:= \{\bar{b}_1, \bar{b}_2, \bar{b}_3, \bar{c}_1, \bar{C}_2, \bar{d}_1, \bar{D}_2, \bar{D}_3, \bar{D}_4, \bar{E}_1, \bar{F}_1, \bar{F}_2, \bar{F}_3, \bar{F}_4, \bar{G}_1, \bar{G}_2, \bar{H}_1, \bar{H}_2, \bar{I}_1, \bar{I}_2\} \, . 
\label{generatorspoly}
\end{equation} Hence $\textbf{Alg} \left\langle \mathcal{P} \right\rangle $ generates a polynomial algebra with $\dim_{FL} \textbf{Alg} \left\langle \mathcal{P} \right\rangle = 20.$ In the next section, we will show that $\textbf{Alg} \left\langle \mathcal{P} \right\rangle$ is closed under the Poisson bracket $\{\cdot,\cdot\}.$ Additionally, we should highlight that the two missing label operators, which Moshinsky and Nagel discussed in \cite{MR189602} and represented by two commuting $\mathfrak{su}(2) \times \mathfrak{su}(2)$ scalars within the enveloping algebra of $\mathfrak{su}(4)$, can be expressed as follows:
\begin{equation}
\Omega=C^{(111)} =\bar{C}_2\, , \qquad \Phi=C^{(202)}+C^{(022)}-C^{(112)}=\bar{d}_1+2\bar{D}_4 \, .
\end{equation}

\section{The construction of the polynomial algebra}
\label{sec4}

In this Section \ref{sec4}, we  analyze the structure of the polynomial algebra generated by the set $\mathcal{P}$ in detail. In order to simplify the computations, we apply the grading method initially introduced in \cite{campoamor2025On} to reduce  the number of monomials predicted from the compact forms (i.e., just based on the degree).  In order to understand the precise features of the polynomial algebra, 
we inspect the formal expansions that arise as a consequence of the analysis of the degree associated with the Poisson brackets. For each degree, besides the Poisson bracket relations, there can also exist additional algebraic relations involving the above polynomials, i.e. relations of the type
\begin{align}
\text{P}(\bar{\mathbf{B}},\bar{\mathbf{C}},\bar{\mathbf{D}},\bar{\mathbf{E}},\bar{\mathbf{F}},\bar{\mathbf{G}},\bar{\mathbf{H}},\bar{\mathbf{I}})=0 \, ,
\label{algrel}
\end{align}
where we have indicated formally polynomial expansions of a given degree (in the underlying Lie algebra generators) involving the generators in the set \eqref{generatorspoly}. Let us begin by observing that the five elements $\{\bar{b}_1, \bar{b}_2,\bar{b}_3,\bar{c}_1,\bar{d}_1\}$ are central, and thus their Berezin brackets with any generator is zero. Again, the relations begin at degree six, where we can write the formal relation:
\begin{equation}
	\{\bar{\mathbf{C}},\bar{ \mathbf{D}}\} \sim \bar{\mathbf{F}}+\bar{\mathbf{B}}\bar{\mathbf{D}}+\bar{\mathbf{C}}\bar{\mathbf{C}}+\bar{\mathbf{B}}\bar{\mathbf{B}}\bar{\mathbf{B}} \, .
	\label{deg6b}
\end{equation}
At degree seven, for Berezin brackets involving $\{\bar{\mathbf{C}},\bar{\mathbf{E}}\}$ and $\{\bar{\mathbf{D}},\bar{\mathbf{D}}\}$, we have a formal expansion of the type:
\begin{equation}
	\{\bar{\mathbf{C}},\bar{\mathbf{E}}\} \sim \{\bar{\mathbf{D}},\bar{\mathbf{D}}\}  \sim \bar{\mathbf{G}}+\bar{\mathbf{B}}\bar{\mathbf{E}}+\bar{\mathbf{C}}\bar{\mathbf{D}}+\bar{\mathbf{B}}\bar{\mathbf{B}}\bar{\mathbf{C}} \, .
	\label{deg7b}
\end{equation}
Degree eight expansions involve the Berezin brackets $\{\bar{\mathbf{C}},\bar{ \mathbf{F}}\}$ and $\{\bar{\mathbf{D}}, \bar{\mathbf{E}}\}$ through the following formal expansion:
\begin{equation}
	\{\bar{\mathbf{C}}, \bar{\mathbf{F}}\} \sim \{\bar{\mathbf{D}}, \bar{\mathbf{E}}\}\sim  \bar{\mathbf{B}}\bar{\mathbf{F}}+\bar{\mathbf{C}}\bar{\mathbf{E}}+\bar{\mathbf{D}}\bar{\mathbf{D}}+\bar{\mathbf{B}}\bar{\mathbf{C}}\bar{\mathbf{C}}+\bar{\mathbf{B}}\bar{\mathbf{B}}\bar{\mathbf{D}}+\bar{\mathbf{B}}\bar{\mathbf{B}}\bar{\mathbf{B}}\bar{\mathbf{B}} \,  .
	\label{deg8b}
\end{equation}
Poisson brackets involving degree nine expansions are $\{\bar{\mathbf{C}},\bar{\mathbf{G}}\}$,  $\{\bar{\mathbf{D}},\bar{\mathbf{F}} \}$, and involve the following formal relation:
\begin{equation}
	\{\bar{\mathbf{C}}, \bar{\mathbf{G}}\} \sim \{\bar{\mathbf{D}}, \bar{\mathbf{F}}\}\sim  \bar{\mathbf{B}} \bar{\mathbf{G}}+ \bar{\mathbf{C}} \bar{\mathbf{F}}+ \bar{\mathbf{D}} \bar{\mathbf{E}}+ \bar{\mathbf{C}} \bar{\mathbf{C}} \bar{\mathbf{C}}+ \bar{\mathbf{B}} \bar{\mathbf{C}} \bar{\mathbf{D}}+\bar{\mathbf{B}} \bar{\mathbf{B}} \bar{\mathbf{E}}+ \bar{\mathbf{B}} \bar{\mathbf{B}} \bar{\mathbf{B}} \bar{\mathbf{C}}\,  .
	\label{deg9b}
\end{equation}
For degree-ten relations, which involve the Berezin brackets $\{\bar{\mathbf{C}},\bar{\mathbf{H}} \}$,  $\{\bar{\mathbf{D}},\bar{\mathbf{G}} \}$ and $\{\bar{\mathbf{E}},\bar{\mathbf{F}} \}$, we have the following:
\begin{align}
	\{\bar{\mathbf{C}}, \bar{\mathbf{H}}\} \sim \{\bar{\mathbf{D}}, \bar{\mathbf{G}}\}\sim \{\bar{\mathbf{E}},\bar{ \mathbf{F}}\}\  \sim  \bar{\mathbf{B}}\bar{\mathbf{H}}&+\bar{\mathbf{C}}\bar{\mathbf{G}}+\bar{\mathbf{D}}\bar{\mathbf{F}}+\bar{\mathbf{E}}\bar{\mathbf{E}}+\bar{\mathbf{B}}\bar{\mathbf{B}}\bar{\mathbf{F}}+\bar{\mathbf{B}}\bar{\mathbf{C}}\bar{\mathbf{E}}+\bar{\mathbf{B}}\bar{\mathbf{D}}\bar{\mathbf{D}}\nonumber \\
&+\bar{\mathbf{B}}\bar{\mathbf{B}}\bar{\mathbf{C}}\bar{\mathbf{C}}+\bar{\mathbf{C}}\bar{\mathbf{C}}\bar{\mathbf{D}}+\bar{\mathbf{B}}\bar{\mathbf{B}}\bar{\mathbf{B}}\bar{\mathbf{D}}+\bar{\mathbf{B}}\bar{\mathbf{B}}\bar{\mathbf{B}}\bar{\mathbf{B}}\bar{\mathbf{B}} \, .
	\label{deg10b}
\end{align}
 Degree eleven Poisson brackets are given by $\{\bar{\mathbf{C}},\bar{\mathbf{I}} \}$,  $\{\bar{\mathbf{D}},\bar{\mathbf{H}} \}$, $\{\bar{\mathbf{E}},\bar{\mathbf{G}} \}$  and  $\{\bar{\mathbf{F}},\bar{\mathbf{F}} \}$ through the following expansion:
\begin{align}
	\{\bar{\mathbf{C}}, \bar{\mathbf{I}}\} \sim \{\bar{\mathbf{D}}, \bar{\mathbf{H}}\}\sim \{\bar{\mathbf{E}},\bar{ \mathbf{G}}\}  \sim \{\bar{\mathbf{F}},\bar{ \mathbf{F}}\}  \sim \bar{\mathbf{B}}\bar{\mathbf{I}}+\bar{\mathbf{C}}\bar{\mathbf{H}}&+\bar{\mathbf{D}}\bar{\mathbf{G}}+\bar{\mathbf{E}}\bar{\mathbf{F}}+\bar{\mathbf{B}}\bar{\mathbf{C}}\bar{\mathbf{F}}+\bar{\mathbf{B}}\bar{\mathbf{D}}\bar{\mathbf{E}}+\bar{\mathbf{B}}\bar{\mathbf{C}}\bar{\mathbf{C}}\bar{\mathbf{C}}+\bar{\mathbf{C}}\bar{\mathbf{D}}\bar{\mathbf{D}} \nonumber \\
	&+\bar{\mathbf{C}}\bar{\mathbf{C}}\bar{\mathbf{E}}+\bar{\mathbf{B}}\bar{\mathbf{B}}\bar{\mathbf{G}}+\bar{\mathbf{B}}\bar{\mathbf{B}}\bar{\mathbf{C}}\bar{\mathbf{D}}+\bar{\mathbf{B}}\bar{\mathbf{B}}\bar{\mathbf{B}}\bar{\mathbf{E}}+\bar{\mathbf{B}}\bar{\mathbf{B}}\bar{\mathbf{B}}\bar{\mathbf{B}}\bar{\mathbf{C}}\, .
	\label{deg11b}
\end{align}
  Degree twelve expansions involve $\{\bar{\mathbf{D}},\bar{\mathbf{I}} \}$,  $\{\bar{\mathbf{E}},\bar{\mathbf{H}} \}$, $\{\bar{\mathbf{F}},\bar{\mathbf{G}} \}$  through the following formal expansion:
\begin{align}
	\{\bar{\mathbf{D}}, \bar{\mathbf{I}}\} \sim \{\bar{\mathbf{E}}, \bar{\mathbf{H}}\}\sim \{\bar{\mathbf{F}},\bar{ \mathbf{G}}\}   \sim \bar{\mathbf{C}}\bar{\mathbf{I}}&+\bar{\mathbf{D}}\bar{\mathbf{H}}+\bar{\mathbf{E}}\bar{\mathbf{G}}+\bar{\mathbf{F}}\bar{\mathbf{F}}+\bar{\mathbf{C}}\bar{\mathbf{D}}\bar{\mathbf{E}}+\bar{\mathbf{C}}\bar{\mathbf{C}}\bar{\mathbf{F}}+\bar{\mathbf{C}}\bar{\mathbf{C}}\bar{\mathbf{C}}\bar{\mathbf{C}}+\bar{\mathbf{D}}\bar{\mathbf{D}}\bar{\mathbf{D}} \nonumber \\
	&+\bar{\mathbf{B}}\bar{\mathbf{D}}\bar{\mathbf{F}}+\bar{\mathbf{B}}\bar{\mathbf{E}}\bar{\mathbf{E}}+\bar{\mathbf{B}}\bar{\mathbf{C}}\bar{\mathbf{G}}+\bar{\mathbf{B}}\bar{\mathbf{C}}\bar{\mathbf{C}}\bar{\mathbf{D}}+\bar{\mathbf{B}}\bar{\mathbf{B}}\bar{\mathbf{H}}+\bar{\mathbf{B}}\bar{\mathbf{B}}\bar{\mathbf{D}}\bar{\mathbf{D}} \nonumber\\
	&+\bar{\mathbf{B}}\bar{\mathbf{B}}\bar{\mathbf{C}}\bar{\mathbf{E}}+\bar{\mathbf{B}}\bar{\mathbf{B}}\bar{\mathbf{B}}\bar{\mathbf{F}}+\bar{\mathbf{B}}\bar{\mathbf{B}}\bar{\mathbf{B}}\bar{\mathbf{B}}\bar{\mathbf{D}}+\bar{\mathbf{B}}\bar{\mathbf{B}}\bar{\mathbf{B}}\bar{\mathbf{C}}\bar{\mathbf{C}}+\bar{\mathbf{B}}\bar{\mathbf{B}}\bar{\mathbf{B}}\bar{\mathbf{B}}\bar{\mathbf{B}}\bar{\mathbf{B}}\, .
	\label{deg12b}
\end{align}
  For degree thirteen expansions, which involve $\{\bar{\mathbf{E}},\bar{\mathbf{I}} \}$,  $\{\bar{\mathbf{F}},\bar{\mathbf{H}} \}$, $\{\bar{\mathbf{G}},\bar{\mathbf{G}} \}$, we  have the following:
\begin{align}
	\{\bar{\mathbf{E}}, \bar{\mathbf{I}}\} \sim \{\bar{\mathbf{F}}, \bar{\mathbf{H}}\}\sim \{\bar{\mathbf{G}},\bar{ \mathbf{G}}\}   \sim \bar{\mathbf{D}}\bar{\mathbf{I}} &+\bar{\mathbf{E}}\bar{\mathbf{H}}+ \bar{\mathbf{F}}\bar{\mathbf{G}}+ \bar{\mathbf{D}}\bar{\mathbf{D}}\bar{\mathbf{E}}+\bar{\mathbf{C}}\bar{\mathbf{E}}\bar{\mathbf{E}}+\bar{\mathbf{C}}\bar{\mathbf{D}}\bar{\mathbf{F}}+\bar{\mathbf{C}}\bar{\mathbf{C}}\bar{\mathbf{G}}+\bar{\mathbf{C}}\bar{\mathbf{C}}\bar{\mathbf{C}}\bar{\mathbf{D}}+\bar{\mathbf{B}}\bar{\mathbf{E}}\bar{\mathbf{F}} \nonumber \\
	&+\bar{\mathbf{B}}\bar{\mathbf{D}}\bar{\mathbf{G}}+\bar{\mathbf{B}}\bar{\mathbf{C}}\bar{\mathbf{H}}+\bar{\mathbf{B}}\bar{\mathbf{C}}\bar{\mathbf{D}}\bar{\mathbf{D}}+\bar{\mathbf{B}}\bar{\mathbf{C}}\bar{\mathbf{C}}\bar{\mathbf{E}}+\bar{\mathbf{B}}\bar{\mathbf{B}}\bar{\mathbf{I}}+\bar{\mathbf{B}}\bar{\mathbf{B}}\bar{\mathbf{D}}\bar{\mathbf{E}}+\bar{\mathbf{B}}\bar{\mathbf{B}}\bar{\mathbf{C}}\bar{\mathbf{F}}\nonumber \\
	&+\bar{\mathbf{B}}\bar{\mathbf{B}}\bar{\mathbf{C}}\bar{\mathbf{C}}\bar{\mathbf{C}}+\bar{\mathbf{B}}\bar{\mathbf{B}}\bar{\mathbf{B}}\bar{\mathbf{G}}+\bar{\mathbf{B}}\bar{\mathbf{B}}\bar{\mathbf{B}}\bar{\mathbf{C}}\bar{\mathbf{D}}+\bar{\mathbf{B}}\bar{\mathbf{B}}\bar{\mathbf{B}}\bar{\mathbf{B}}\bar{\mathbf{E}}+\bar{\mathbf{B}}\bar{\mathbf{B}}\bar{\mathbf{B}}\bar{\mathbf{B}}\bar{\mathbf{B}}\bar{\mathbf{C}} \, .
\end{align}
Degree fourteen Poisson brackets are $\{\bar{\mathbf{F}},\bar{\mathbf{I}} \}$,  $\{\bar{\mathbf{G}},\bar{\mathbf{H}} \}$, and they are given through the following formal expansion:
\begin{align}
	\{\bar{\mathbf{F}}, \bar{\mathbf{I}}\} \sim \{\bar{\mathbf{G}}, \bar{\mathbf{H}}\} \sim \bar{\mathbf{E}}\bar{\mathbf{I}} &+\bar{\mathbf{F}}\bar{\mathbf{H}}+ \bar{\mathbf{G}}\bar{\mathbf{G}}+\bar{\mathbf{D}}\bar{\mathbf{E}}\bar{\mathbf{E}}+\bar{\mathbf{D}}\bar{\mathbf{D}}\bar{\mathbf{F}}+\bar{\mathbf{C}}\bar{\mathbf{E}}\bar{\mathbf{F}}+\bar{\mathbf{C}}\bar{\mathbf{D}}\bar{\mathbf{G}}+\bar{\mathbf{C}}\bar{\mathbf{C}}\bar{\mathbf{H}}+\bar{\mathbf{C}}\bar{\mathbf{C}}\bar{\mathbf{D}}\bar{\mathbf{D}}+\bar{\mathbf{C}}\bar{\mathbf{C}}\bar{\mathbf{C}}\bar{\mathbf{E}} \nonumber \\
	&+ \bar{\mathbf{B}}\bar{\mathbf{F}}\bar{\mathbf{F}}+ \bar{\mathbf{B}} \bar{\mathbf{E}} \bar{\mathbf{G}}+ \bar{\mathbf{B}} \bar{\mathbf{D}} \bar{\mathbf{H}}+ \bar{\mathbf{B}} \bar{\mathbf{D}} \bar{\mathbf{D}} \bar{\mathbf{D}}+ \bar{\mathbf{B}} \bar{\mathbf{C}} \bar{\mathbf{I}}+ \bar{\mathbf{B}} \bar{\mathbf{C}} \bar{\mathbf{D}} \bar{\mathbf{E}}+ \bar{\mathbf{B}} \bar{\mathbf{C}} \bar{\mathbf{C}} \bar{\mathbf{F}}+ \bar{\mathbf{B}} \bar{\mathbf{C}} \bar{\mathbf{C}} \bar{\mathbf{C}} \bar{\mathbf{C}}\nonumber \\
	&+  \bar{\mathbf{B}} \bar{\mathbf{B}} \bar{\mathbf{E}} \bar{\mathbf{E}}+ \bar{\mathbf{B}} \bar{\mathbf{B}} \bar{\mathbf{D}} \bar{\mathbf{F}}+ \bar{\mathbf{B}} \bar{\mathbf{B}} \bar{\mathbf{C}} \bar{\mathbf{G}}+ \bar{\mathbf{B}} \bar{\mathbf{B}} \bar{\mathbf{C}} \bar{\mathbf{C}} \bar{\mathbf{D}}+ \bar{\mathbf{B}} \bar{\mathbf{B}} \bar{\mathbf{B}} \bar{\mathbf{H}}+ \bar{\mathbf{B}} \bar{\mathbf{B}} \bar{\mathbf{B}} \bar{\mathbf{D}} \bar{\mathbf{D}}+ \bar{\mathbf{B}} \bar{\mathbf{B}} \bar{\mathbf{B}} \bar{\mathbf{C}} \bar{\mathbf{E}}\nonumber \\
	&+   \bar{\mathbf{B}} \bar{\mathbf{B}} \bar{\mathbf{B}} \bar{\mathbf{B}} \bar{\mathbf{F}} +\bar{\mathbf{B}} \bar{\mathbf{B}} \bar{\mathbf{B}} \bar{\mathbf{B}} \bar{\mathbf{C}}\bar{\mathbf{C}}  +\bar{\mathbf{B}} \bar{\mathbf{B}} \bar{\mathbf{B}} \bar{\mathbf{B}} \bar{\mathbf{B}} \bar{\mathbf{D}} +\bar{\mathbf{B}} \bar{\mathbf{B}} \bar{\mathbf{B}} \bar{\mathbf{B}} \bar{\mathbf{B}} \bar{\mathbf{B}} \bar{\mathbf{B}} \, .
\end{align}
Degree fifteen expansions involve $\{\bar{\mathbf{G}},\bar{\mathbf{I}} \}$,  $\{\bar{\mathbf{H}},\bar{\mathbf{H}} \}$  through the following formal expansion:
\begin{align}
	\{\bar{\mathbf{G}}, \bar{\mathbf{I}}\} \sim \{\bar{\mathbf{H}}, \bar{\mathbf{H}}\} \sim \bar{\mathbf{F}}\bar{\mathbf{I}}&+\bar{\mathbf{G}}\bar{\mathbf{H}}+\bar{\mathbf{E}}\bar{\mathbf{E}}\bar{\mathbf{E}}+\bar{\mathbf{D}}\bar{\mathbf{E}}\bar{\mathbf{F}}+\bar{\mathbf{D}}\bar{\mathbf{D}}\bar{\mathbf{G}}+\bar{\mathbf{C}}\bar{\mathbf{F}}\bar{\mathbf{F}}+\bar{\mathbf{C}}\bar{\mathbf{E}}\bar{\mathbf{G}}+\bar{\mathbf{C}}\bar{\mathbf{D}}\bar{\mathbf{H}}+\bar{\mathbf{C}}\bar{\mathbf{D}}\bar{\mathbf{D}}\bar{\mathbf{D}}+\bar{\mathbf{C}}\bar{\mathbf{C}}\bar{\mathbf{I}}\nonumber \\
	&+\bar{\mathbf{C}}\bar{\mathbf{C}}\bar{\mathbf{D}}\bar{\mathbf{E}}+\bar{\mathbf{C}}\bar{\mathbf{C}}\bar{\mathbf{C}}\bar{\mathbf{F}}+\bar{\mathbf{C}}\bar{\mathbf{C}}\bar{\mathbf{C}}\bar{\mathbf{C}}\bar{\mathbf{C}}+\bar{\mathbf{B}}\bar{\mathbf{F}}\bar{\mathbf{G}}+\bar{\mathbf{B}}\bar{\mathbf{E}}\bar{\mathbf{H}}+\bar{\mathbf{B}}\bar{\mathbf{D}}\bar{\mathbf{I}}+\bar{\mathbf{B}}\bar{\mathbf{D}}\bar{\mathbf{D}}\bar{\mathbf{E}}+\bar{\mathbf{B}}\bar{\mathbf{C}}\bar{\mathbf{E}}\bar{\mathbf{E}}\nonumber \\
	&+\bar{\mathbf{B}}\bar{\mathbf{C}}\bar{\mathbf{D}}\bar{\mathbf{F}}+\bar{\mathbf{B}}\bar{\mathbf{C}}\bar{\mathbf{C}}\bar{\mathbf{G}}+\bar{\mathbf{B}}\bar{\mathbf{C}}\bar{\mathbf{C}}\bar{\mathbf{C}}\bar{\mathbf{D}}+\bar{\mathbf{B}}\bar{\mathbf{B}}\bar{\mathbf{E}}\bar{\mathbf{F}}+\bar{\mathbf{B}}\bar{\mathbf{B}}\bar{\mathbf{D}}\bar{\mathbf{G}}+\bar{\mathbf{B}}\bar{\mathbf{B}}\bar{\mathbf{C}}\bar{\mathbf{H}}+\bar{\mathbf{B}}\bar{\mathbf{B}}\bar{\mathbf{C}}\bar{\mathbf{D}}\bar{\mathbf{D}}\nonumber \\
	&+\bar{\mathbf{B}}\bar{\mathbf{B}}\bar{\mathbf{C}}\bar{\mathbf{C}}\bar{\mathbf{E}}+\bar{\mathbf{B}}\bar{\mathbf{B}}\bar{\mathbf{B}}\bar{\mathbf{I}}+\bar{\mathbf{B}}\bar{\mathbf{B}}\bar{\mathbf{B}}\bar{\mathbf{D}}\bar{\mathbf{E}}+\bar{\mathbf{B}}\bar{\mathbf{B}}\bar{\mathbf{B}}\bar{\mathbf{C}}\bar{\mathbf{F}}+\bar{\mathbf{B}}\bar{\mathbf{B}}\bar{\mathbf{B}}\bar{\mathbf{C}}\bar{\mathbf{C}}\bar{\mathbf{C}}+\bar{\mathbf{B}}\bar{\mathbf{B}}\bar{\mathbf{B}}\bar{\mathbf{B}}\bar{\mathbf{G}}\nonumber \\
	&+\bar{\mathbf{B}}\bar{\mathbf{B}}\bar{\mathbf{B}}\bar{\mathbf{B}}\bar{\mathbf{C}}\bar{\mathbf{D}}+\bar{\mathbf{B}}\bar{\mathbf{B}}\bar{\mathbf{B}}\bar{\mathbf{B}}\bar{\mathbf{B}}\bar{\mathbf{E}}+\bar{\mathbf{B}}\bar{\mathbf{B}}\bar{\mathbf{B}}\bar{\mathbf{B}}\bar{\mathbf{B}}\bar{\mathbf{B}}\bar{\mathbf{C}} \, .
\end{align}
Degree sixteen expansions involve $\{\bar{\mathbf{H}},\bar{\mathbf{I}} \}$ through the following formal expansion:
\begin{align}
	\{\bar{\mathbf{H}}, \bar{\mathbf{I}}\} \sim \bar{\mathbf{H}}\bar{\mathbf{H}}&+\bar{\mathbf{G}}\bar{\mathbf{I}}+\bar{\mathbf{E}}\bar{\mathbf{E}}\bar{\mathbf{F}}+\bar{\mathbf{D}}\bar{\mathbf{F}}\bar{\mathbf{F}}+\bar{\mathbf{D}}\bar{\mathbf{E}}\bar{\mathbf{G}}+\bar{\mathbf{D}}\bar{\mathbf{D}}\bar{\mathbf{H}}+\bar{\mathbf{D}}\bar{\mathbf{D}}\bar{\mathbf{D}}\bar{\mathbf{D}}+\bar{\mathbf{C}}\bar{\mathbf{F}}\bar{\mathbf{G}}+\bar{\mathbf{C}}\bar{\mathbf{E}}\bar{\mathbf{H}}+\bar{\mathbf{C}}\bar{\mathbf{D}}\bar{\mathbf{I}}+\bar{\mathbf{C}}\bar{\mathbf{D}}\bar{\mathbf{D}}\bar{\mathbf{E}}\nonumber \\
	&+\bar{\mathbf{C}}\bar{\mathbf{C}}\bar{\mathbf{E}}\bar{\mathbf{E}}+\bar{\mathbf{C}}\bar{\mathbf{C}}\bar{\mathbf{D}}\bar{\mathbf{F}}+\bar{\mathbf{C}}\bar{\mathbf{C}}\bar{\mathbf{C}}\bar{\mathbf{G}}+\bar{\mathbf{C}}\bar{\mathbf{C}}\bar{\mathbf{C}}\bar{\mathbf{C}}\bar{\mathbf{D}}+\bar{\mathbf{B}}\bar{\mathbf{G}}\bar{\mathbf{G}}+\bar{\mathbf{B}}\bar{\mathbf{F}}\bar{\mathbf{H}}+\bar{\mathbf{B}}\bar{\mathbf{E}}\bar{\mathbf{I}}+\bar{\mathbf{B}}\bar{\mathbf{D}}\bar{\mathbf{E}}\bar{\mathbf{E}}+\bar{\mathbf{B}}\bar{\mathbf{D}}\bar{\mathbf{D}}\bar{\mathbf{F}}\nonumber\\
	&+\bar{\mathbf{B}}\bar{\mathbf{C}}\bar{\mathbf{E}}\bar{\mathbf{F}}+\bar{\mathbf{B}}\bar{\mathbf{C}}\bar{\mathbf{D}}\bar{\mathbf{G}}+\bar{\mathbf{B}}\bar{\mathbf{C}}\bar{\mathbf{C}}\bar{\mathbf{H}}+\bar{\mathbf{B}}\bar{\mathbf{C}}\bar{\mathbf{C}}\bar{\mathbf{D}}\bar{\mathbf{D}}+\bar{\mathbf{B}}\bar{\mathbf{C}}\bar{\mathbf{C}}\bar{\mathbf{C}}\bar{\mathbf{E}}+\bar{\mathbf{B}}\bar{\mathbf{B}}\bar{\mathbf{F}}\bar{\mathbf{F}}+\bar{\mathbf{B}}\bar{\mathbf{B}}\bar{\mathbf{E}}\bar{\mathbf{G}}+\bar{\mathbf{B}}\bar{\mathbf{B}}\bar{\mathbf{D}}\bar{\mathbf{H}}\nonumber \\
	&+ \bar{\mathbf{B}}\bar{\mathbf{B}}\bar{\mathbf{D}}\bar{\mathbf{D}}\bar{\mathbf{D}}+\bar{\mathbf{B}}\bar{\mathbf{B}}\bar{\mathbf{C}}\bar{\mathbf{I}}+\bar{\mathbf{B}}\bar{\mathbf{B}}\bar{\mathbf{C}}\bar{\mathbf{D}}\bar{\mathbf{E}}+\bar{\mathbf{B}}\bar{\mathbf{B}}\bar{\mathbf{C}}\bar{\mathbf{C}}\bar{\mathbf{F}}+\bar{\mathbf{B}}\bar{\mathbf{B}}\bar{\mathbf{C}}\bar{\mathbf{C}}\bar{\mathbf{C}}\bar{\mathbf{C}}+\bar{\mathbf{B}}\bar{\mathbf{B}}\bar{\mathbf{B}}\bar{\mathbf{E}}\bar{\mathbf{E}}+\bar{\mathbf{B}}\bar{\mathbf{B}}\bar{\mathbf{B}}\bar{\mathbf{D}}\bar{\mathbf{F}} \nonumber \\
	&+\bar{\mathbf{B}}\bar{\mathbf{B}}\bar{\mathbf{B}}\bar{\mathbf{C}}\bar{\mathbf{G}}+\bar{\mathbf{B}}\bar{\mathbf{B}}\bar{\mathbf{B}}\bar{\mathbf{C}}\bar{\mathbf{C}}\bar{\mathbf{D}}+\bar{\mathbf{B}}\bar{\mathbf{B}}\bar{\mathbf{B}}\bar{\mathbf{B}}\bar{\mathbf{H}}+\bar{\mathbf{B}}\bar{\mathbf{B}}\bar{\mathbf{B}}\bar{\mathbf{B}}\bar{\mathbf{D}}\bar{\mathbf{D}}+\bar{\mathbf{B}}\bar{\mathbf{B}}\bar{\mathbf{B}}\bar{\mathbf{B}}\bar{\mathbf{C}}\bar{\mathbf{E}}+\bar{\mathbf{B}}\bar{\mathbf{B}}\bar{\mathbf{B}}\bar{\mathbf{B}}\bar{\mathbf{B}}\bar{\mathbf{F}}\nonumber \\
	&+\bar{\mathbf{B}}\bar{\mathbf{B}}\bar{\mathbf{B}}\bar{\mathbf{B}}\bar{\mathbf{B}}\bar{\mathbf{C}}\bar{\mathbf{C}}+\bar{\mathbf{B}}\bar{\mathbf{B}}\bar{\mathbf{B}}\bar{\mathbf{B}}\bar{\mathbf{B}}\bar{\mathbf{B}}\bar{\mathbf{D}}+\bar{\mathbf{B}}\bar{\mathbf{B}}\bar{\mathbf{B}}\bar{\mathbf{B}}\bar{\mathbf{B}}\bar{\mathbf{B}}\bar{\mathbf{B}}\bar{\mathbf{B}} \, .
\end{align}
Finally, degree seventeen expansions involve $\{\bar{\mathbf{I}},\bar{\mathbf{I}} \}$ through the following formal expansion:
\begin{align}
	\{\bar{\mathbf{I}}, \bar{\mathbf{I}}\} \sim \bar{\mathbf{H}}\bar{\mathbf{I}}&+\bar{\mathbf{E}}\bar{\mathbf{F}}\bar{\mathbf{F}}+\bar{\mathbf{E}}\bar{\mathbf{E}}\bar{\mathbf{G}}+\bar{\mathbf{D}}\bar{\mathbf{F}}\bar{\mathbf{G}}+\bar{\mathbf{D}}\bar{\mathbf{E}}\bar{\mathbf{H}}+\bar{\mathbf{D}}\bar{\mathbf{D}}\bar{\mathbf{I}}+\bar{\mathbf{D}}\bar{\mathbf{D}}\bar{\mathbf{D}}\bar{\mathbf{E}}+\bar{\mathbf{C}}\bar{\mathbf{G}}\bar{\mathbf{G}}+\bar{\mathbf{C}}\bar{\mathbf{F}}\bar{\mathbf{H}}+\bar{\mathbf{C}}\bar{\mathbf{E}}\bar{\mathbf{I}}+\bar{\mathbf{C}}\bar{\mathbf{D}}\bar{\mathbf{E}}\bar{\mathbf{E}} \nonumber \\
	&+\bar{\mathbf{C}}\bar{\mathbf{D}}\bar{\mathbf{D}}\bar{\mathbf{F}}+\bar{\mathbf{C}}\bar{\mathbf{C}}\bar{\mathbf{E}}\bar{\mathbf{F}}+\bar{\mathbf{C}}\bar{\mathbf{C}}\bar{\mathbf{D}}\bar{\mathbf{G}}+\bar{\mathbf{C}}\bar{\mathbf{C}}\bar{\mathbf{C}}\bar{\mathbf{H}}+\bar{\mathbf{C}}\bar{\mathbf{C}}\bar{\mathbf{C}}\bar{\mathbf{D}}\bar{\mathbf{D}}+\bar{\mathbf{C}}\bar{\mathbf{C}}\bar{\mathbf{C}}\bar{\mathbf{C}}\bar{\mathbf{E}}+\bar{\mathbf{B}}\bar{\mathbf{G}}\bar{\mathbf{H}}+\bar{\mathbf{B}}\bar{\mathbf{F}}\bar{\mathbf{I}}+\bar{\mathbf{B}}\bar{\mathbf{E}}\bar{\mathbf{E}}\bar{\mathbf{E}} \nonumber \\
	&+\bar{\mathbf{B}}\bar{\mathbf{D}}\bar{\mathbf{E}}\bar{\mathbf{F}}+\bar{\mathbf{B}}\bar{\mathbf{D}}\bar{\mathbf{D}}\bar{\mathbf{G}}+\bar{\mathbf{B}}\bar{\mathbf{C}}\bar{\mathbf{F}}\bar{\mathbf{F}}+\bar{\mathbf{B}}\bar{\mathbf{C}}\bar{\mathbf{E}}\bar{\mathbf{G}}+\bar{\mathbf{B}}\bar{\mathbf{C}}\bar{\mathbf{D}}\bar{\mathbf{H}}+\bar{\mathbf{B}}\bar{\mathbf{C}}\bar{\mathbf{D}}\bar{\mathbf{D}}\bar{\mathbf{D}}+\bar{\mathbf{B}}\bar{\mathbf{C}}\bar{\mathbf{C}}\bar{\mathbf{I}}+\bar{\mathbf{B}}\bar{\mathbf{C}}\bar{\mathbf{C}}\bar{\mathbf{D}}\bar{\mathbf{E}}\nonumber \\
	&+ \bar{\mathbf{B}}\bar{\mathbf{C}}\bar{\mathbf{C}}\bar{\mathbf{C}}\bar{\mathbf{F}}+\bar{\mathbf{B}}\bar{\mathbf{C}}\bar{\mathbf{C}}\bar{\mathbf{C}}\bar{\mathbf{C}}\bar{\mathbf{C}}+\bar{\mathbf{B}}\bar{\mathbf{B}}\bar{\mathbf{F}}\bar{\mathbf{G}}+\bar{\mathbf{B}}\bar{\mathbf{B}}\bar{\mathbf{E}}\bar{\mathbf{H}}+\bar{\mathbf{B}}\bar{\mathbf{B}}\bar{\mathbf{D}}\bar{\mathbf{I}}+\bar{\mathbf{B}}\bar{\mathbf{B}}\bar{\mathbf{D}}\bar{\mathbf{D}}\bar{\mathbf{E}}+\bar{\mathbf{B}}\bar{\mathbf{B}}\bar{\mathbf{C}}\bar{\mathbf{E}}\bar{\mathbf{E}}+\bar{\mathbf{B}}\bar{\mathbf{B}}\bar{\mathbf{C}}\bar{\mathbf{D}}\bar{\mathbf{F}}\nonumber \\
	&+\bar{\mathbf{B}}\bar{\mathbf{B}}\bar{\mathbf{C}}\bar{\mathbf{C}}\bar{\mathbf{G}}+\bar{\mathbf{B}}\bar{\mathbf{B}}\bar{\mathbf{C}}\bar{\mathbf{C}}\bar{\mathbf{C}}\bar{\mathbf{D}}+\bar{\mathbf{B}}\bar{\mathbf{B}}\bar{\mathbf{B}}\bar{\mathbf{E}}\bar{\mathbf{F}}+\bar{\mathbf{B}}\bar{\mathbf{B}}\bar{\mathbf{B}}\bar{\mathbf{D}}\bar{\mathbf{G}}+\bar{\mathbf{B}}\bar{\mathbf{B}}\bar{\mathbf{B}}\bar{\mathbf{C}}\bar{\mathbf{H}}+\bar{\mathbf{B}}\bar{\mathbf{B}}\bar{\mathbf{B}}\bar{\mathbf{C}}\bar{\mathbf{D}}\bar{\mathbf{D}}+\bar{\mathbf{B}}\bar{\mathbf{B}}\bar{\mathbf{B}}\bar{\mathbf{C}}\bar{\mathbf{C}}\bar{\mathbf{E}}\nonumber \\
	&+\bar{\mathbf{B}}\bar{\mathbf{B}}\bar{\mathbf{B}}\bar{\mathbf{B}}\bar{\mathbf{I}}+\bar{\mathbf{B}}\bar{\mathbf{B}}\bar{\mathbf{B}}\bar{\mathbf{B}}\bar{\mathbf{D}}\bar{\mathbf{E}}+\bar{\mathbf{B}}\bar{\mathbf{B}}\bar{\mathbf{B}}\bar{\mathbf{B}}\bar{\mathbf{C}}\bar{\mathbf{F}}+\bar{\mathbf{B}}\bar{\mathbf{B}}\bar{\mathbf{B}}\bar{\mathbf{B}}\bar{\mathbf{C}}\bar{\mathbf{C}}\bar{\mathbf{C}}+\bar{\mathbf{B}}\bar{\mathbf{B}}\bar{\mathbf{B}}\bar{\mathbf{B}}\bar{\mathbf{B}}\bar{\mathbf{G}}+\bar{\mathbf{B}}\bar{\mathbf{B}}\bar{\mathbf{B}}\bar{\mathbf{B}}\bar{\mathbf{B}}\bar{\mathbf{C}}\bar{\mathbf{D}} \nonumber \\
	&+ \bar{\mathbf{B}}\bar{\mathbf{B}}\bar{\mathbf{B}}\bar{\mathbf{B}}\bar{\mathbf{B}}\bar{\mathbf{B}}\bar{\mathbf{E}}+\bar{\mathbf{B}}\bar{\mathbf{B}}\bar{\mathbf{B}}\bar{\mathbf{B}}\bar{\mathbf{B}}\bar{\mathbf{B}}\bar{\mathbf{B}}\bar{\mathbf{C}} \, . \label{eq:deg17b}
\end{align}
\vskip 0.5cm
From the compact forms $\eqref{deg6b}$ to $\eqref{eq:deg17b}$, the number of predicted terms is computable using identity $\eqref{eq:counting}$. Conversely, in Subsection $\ref{4.1},$ we calculate the admissible terms in the selected Poisson brackets using the grading method from Subsection $\ref{2.4}$. The following table compares the number of admissible terms in the compact form with those from the grading method. It is crucial to note that our initial attempt to derive polynomial relations up to degree seventeen demonstrated that expansions beyond degree eleven couldn't be achieved by merely considering the all terms appearing on the right-hand side. We will see in the following subsections, how the grading method significantly reduced the number of generators, as detailed in the accompanying table.

\begin{table}[h]
    \centering
    \begin{tabular}{|c|c|c|c|c|}
        \hline
        Deg & No. from compact forms & No. from gradings & Example of Poisson brackets & $\Delta$ (difference) \\ \hline
        6    & 29     & 2     & $\{\bar{C}_2,\bar{D}_2\}$     & 27     \\ \hline
        7    & 26     & 2     & $ \{\bar{C}_2,\bar{E}_1\}$     & 24    \\ \hline
        8    & 67     & 26     & $\{\bar{C}_2,\bar{F}_1\}$     & 41     \\ \hline
        9    & 65     & 31     & $\{\bar{D}_2,\bar{F}_1\}$     & 34     \\ \hline
        10    & 169     & 57    & $\{\bar{C}_2,\bar{H}_2\}$     & 112     \\ \hline
        11    & 228     & 77     & $\{\bar{F}_1,\bar{F}_2\}$     & 151     \\ \hline
        12    & 398     & 149     & $\{\bar{D}_2,\bar{I}_2\}$     & 249     \\ \hline
        13    & 473     & 173    & $\{\bar{E}_1,\bar{I}_1\}$     & 300     \\ \hline
        14    & 737     & 141     & $\{\bar{F}_2,\bar{I}_1\}$     & 596     \\ \hline
        15   & 1132     & 117     & $\{\bar{G}_1,\bar{I}_2\}$     & 1015     \\ \hline
        16   & 1441     & 191     & $\{\bar{H}_1,\bar{I}_1\}$     & 1250     \\ \hline
        17   & 2064     & 378     & $ \{\bar{I}_1,\bar{I}_2\}$     & 1686     \\ \hline
    \end{tabular}
    
    \qquad
    
    \caption{Comparison of allowed generators in the Poisson brackets from different approaches}
    \label{tab:1}
\end{table}
Table $\ref{tab:1}$ presented herein further demonstrates the significant reduction in the number of permissible  terms achieved through the application of the grading method, prior to performing explicit calculations. In the following subsections, we elaborate on the precise formulas of the generators that are obtained via the grading  method.

\subsection{The allowed monomials from the grading of Poisson brackets}

\label{4.1}

 We now study the specific grading assigned to each generator within the set $\mathcal{P}$. Additionally, we perform a detailed analysis of the grading associated with their respective Poisson brackets, using the terminology and definitions outlined in Subsection $\ref{subsec2.3}$.

Assume that $\mathfrak{g} = \mathfrak{g}_1 \oplus \mathfrak{g}_2 \oplus \mathfrak{g}_3$ is a semisimple Lie algebra  with  \begin{align}
\nonumber
    [\mathfrak{g}_1,\mathfrak{g}_1] & \,\subset \mathfrak{g}_1, \quad  [\mathfrak{g}_2,\mathfrak{g}_2] \subset \mathfrak{g}_2, \quad  [\mathfrak{g}_1,\mathfrak{g}_2] = \{0\},  \\
    [\mathfrak{g}_1,\mathfrak{g}_3] & \,\subset \mathfrak{g}_3, \quad  [\mathfrak{g}_2,\mathfrak{g}_3] \subset \mathfrak{g}_3, \quad  [\mathfrak{g}_3,\mathfrak{g}_3] \subset \mathfrak{g}_1 \oplus \mathfrak{g}_2. \label{eq:c1}
\end{align} 
Then $S(\mathfrak{g}) \cong S(\mathfrak{g}_1) \otimes S(\mathfrak{g}_2) \otimes S(\mathfrak{g}_3)$.  A first step consists of knowing the grading of the non-trivial bracket of the generators of the commutant $S(\mathfrak{g})^{\mathfrak{g}_1 \oplus \mathfrak{g}_2}.$ Without loss of generality, assume that $\mathfrak{g}_1 = \mathrm{span} \{x_1,\ldots,x_u\}$, $\mathfrak{g}_2 = \mathrm{span} \{x_{u+1},\ldots,x_{u+v}\}$, and $ \mathfrak{g}_3  = \mathrm{span} \{x_{u+v+1},\ldots,x_{u+v+w}\} $. Let $\mathfrak{g}^\prime= \mathfrak{g}_1 \oplus \mathfrak{g}_2$ be a subalgebra of $\mathfrak{g}$ such that $\mathcal{Q}_\mathfrak{g}(d)$ is a polynomial algebra consisting of $\mathfrak{g}^\prime$-invariant polynomials. We further assume that a monomial with degree $i_1 + i_2  + i_3$ in $\mathcal{Q}_{i_1 + i_2  + i_3} \subset \mathcal{Q}_\mathfrak{g}(d)$ has the form of \begin{align*}
    p^{i_1 + i_2  + i_3}=   \underbrace{x_1^{a_1} \cdots x_u^{a_u}}_{\text{elements in $\mathfrak{g}_1$}} \underbrace{x_{u+1}^{a_{u+1}} \cdots x_{u+v}^{a_{u+v}}}_{\text{elements in $\mathfrak{g}_2$}}\underbrace{x_{u+v+1}^{a_{u+v+1}} \cdots x_{u+v+w}^{a_{u+v+w}}}_{\text{elements in $\mathfrak{g}_3$}} , \quad \begin{matrix}
           a_1 + \ldots + a_u = i_1 \\
           a_{u+1} + \ldots + a_{u+v} = i_2, \\
           a_{u+v+1} + \ldots + a_{u+v+w} = i_3
      \end{matrix} 
  \end{align*} $ \text{ with } a_1,\ldots,a_{u+v+w} \in \mathbb{N}_0 .$ By definition, for any non-zero $p \in \mathcal{Q}_{i_1 + i_2  + i_3}$,  $\mathcal{G}(p) = (i_1,i_2,i_3).$

\begin{lemma}
\label{3.9}
  Let $\mathfrak{g}^*$ be its dual admitting the same relations as in $\eqref{eq:c1}$, and let $\mathcal{Q}_\mathfrak{g}(d) \subset S(\mathfrak{g})^{\mathfrak{g}_1 \oplus \mathfrak{g}_2 }.$ For any non-zero indecomposable monomials $p \in \mathcal{Q}_{i_1 + i_2  + i_3}$ and $q \in \mathcal{Q}_{i_1' + i_2' + i_3'}$,   \begin{align}
  \nonumber
   \mathcal{G} \left(\{p,q\}\right) =     (i_1+i_1'-1,i_2+i_2',i_3+i_3') \tilde{+}(i_1+i_1',i_2+i_2'-1,i_3+i_3') \tilde{+}    \left\{\begin{matrix}
(i_1+i_1'+1,i_2+i_2',i_3+i_3'-2)  & \text{ if } \text{ }  [\mathfrak{g}_3,\mathfrak{g}_3] \subset \mathfrak{g}_1 \\
        \\
         (i_1+i_1',i_2+i_2'+1,i_3+i_3'-2) & \text{ if }  \text{ } [\mathfrak{g}_3,\mathfrak{g}_3] \subset \mathfrak{g}_2.
    \end{matrix}\right.  \\
\end{align} 
 
\end{lemma}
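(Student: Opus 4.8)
The plan is to compute $\{p,q\}$ directly from the Lie--Poisson bracket \eqref{LPB} and to track how the multidegree grading $\mathcal{G}$ of Definition \ref{gradingd} is transformed by the two elementary operations occurring in that formula: partial differentiation $\partial_{x_i}$ and multiplication by a coordinate $x_k$. Writing $e_1,e_2,e_3$ for the standard basis of $\mathbb{N}_0^{3}$, I would first record the two elementary facts. If $x_i$ is a coordinate associated with the block $\mathfrak{g}_a$, then for any monomial $m$ one has $\mathcal{G}(\partial_{x_i} m) = \mathcal{G}(m) - e_a$ whenever $\partial_{x_i} m \neq 0$ (and $\partial_{x_i} m = 0$ otherwise); likewise $\mathcal{G}(x_k\, m) = \mathcal{G}(m) + e_c$ when $x_k$ is a coordinate associated with $\mathfrak{g}_c$. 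Both are immediate from the fact that $\mathcal{G}$ counts, blockwise, the number of coordinate factors in a monomial, i.e. from the additivity recorded in Remark \ref{re2.4}(ii).

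Next, each summand of $\{p,q\} = \sum_{i,j,k} C_{ij}^k\, x_k\, \partial_{x_i} p\, \partial_{x_j} q$ is a scalar multiple of the monomial $x_k\, \partial_{x_i} p\, \partial_{x_j} q$, whose grading, by the two facts above, equals $\mathcal{G}(p) + \mathcal{G}(q) - e_a - e_b + e_c$, where $x_i$, $x_j$, $x_k$ are coordinates associated with $\mathfrak{g}_a$, $\mathfrak{g}_b$, $\mathfrak{g}_c$ respectively. A summand survives only if $C_{ij}^k \neq 0$, that is, only if $[\mathfrak{g}_a,\mathfrak{g}_b]$ has a nonzero component along $\mathfrak{g}_c$; thus the relations \eqref{eq:c1} constrain the admissible $c$ once $(a,b)$ is fixed. (The multi-Leibniz identity \eqref{eq:muli} may be invoked to organise this, but the elementary shift bookkeeping suffices.)

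I would then run the case analysis over the nine pairs $(a,b)\in\{1,2,3\}^2$. The pairs $(1,2)$ and $(2,1)$ contribute nothing, since $[\mathfrak{g}_1,\mathfrak{g}_2]=\{0\}$. For $(1,1)$ one has $c=1$, giving the net shift $-e_1$; for $(1,3)$ and $(3,1)$ one has $c=3$, and the shift collapses again to $-e_1$. Symmetrically, $(2,2)$, $(2,3)$, $(3,2)$ all yield $-e_2$. The only remaining pair is $(3,3)$, where $[\mathfrak{g}_3,\mathfrak{g}_3]\subset\mathfrak{g}_1\oplus\mathfrak{g}_2$: under the hypothesis $[\mathfrak{g}_3,\mathfrak{g}_3]\subset\mathfrak{g}_1$ one gets $c=1$ and shift $e_1 - 2e_3$, while under $[\mathfrak{g}_3,\mathfrak{g}_3]\subset\mathfrak{g}_2$ one gets $c=2$ and shift $e_2 - 2e_3$. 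Hence the only homogeneous gradings that can occur in $\{p,q\}$ are $\mathcal{G}(p)+\mathcal{G}(q)$ shifted by $-e_1$, $-e_2$, and (according to the two cases) $e_1-2e_3$ or $e_2-2e_3$; substituting $\mathcal{G}(p)+\mathcal{G}(q)=(i_1+i_1',\,i_2+i_2',\,i_3+i_3')$ reproduces verbatim the three homogeneous components of the statement, assembled by the formal operation $\tilde{+}$ of Remark \ref{re2.4}(iv).

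The computation itself is routine, so the main obstacle is conceptual bookkeeping and the correct reading of the claim. The statement records the full set of \emph{admissible} homogeneous gradings, not that $\{p,q\}$ is $\mathcal{G}$-homogeneous; accordingly I would emphasize, via Remark \ref{re2.4}(iv), that $\tilde{+}$ merely lists the distinct homogeneous types that may appear, some of which can be empty. For instance, the third component is absent unless $i_3\geq 1$ and $i_3'\geq 1$, since case $(3,3)$ requires at least one $\mathfrak{g}_3$-factor to be differentiated out of each of $p$ and $q$; likewise a term vanishes if the relevant block is empty or if coefficients cancel. Finally I would note that the hypothesis $p,q\in\mathcal{Q}_\mathfrak{g}(d)\subset S(\mathfrak{g})^{\mathfrak{g}_1\oplus\mathfrak{g}_2}$ ensures $\{p,q\}$ again lies in the commutant, so the grading decomposition is meaningful inside $\mathcal{Q}_\mathfrak{g}(d)$; the invariance does not, however, change the list of possible gradings, which is dictated purely by the block structure \eqref{eq:c1}.
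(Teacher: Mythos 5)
Your proposal is correct and follows essentially the same route as the paper: both arguments reduce to the observation that each surviving term of the Lie--Poisson bracket shifts the blockwise multidegree by $-e_a-e_b+e_c$ with $[\mathfrak{g}_a,\mathfrak{g}_b]$ meeting $\mathfrak{g}_c$, and then read off the admissible shifts from the relations \eqref{eq:c1}. The only cosmetic difference is that the paper first factors $p=Y_1Y_2Y_3$, $q=Z_1Z_2Z_3$ by block and applies the Leibniz rule at the block level, whereas you track the shift coordinate-by-coordinate and enumerate all nine pairs $(a,b)$ explicitly — which, if anything, makes the case analysis slightly more complete.
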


 \begin{proof}
     For any non-zero monomials $p  \in \mathcal{Q}_{i_1 + i_2  + i_3}$ and $q \in \mathcal{Q}_{i_1' + i_2' + i_3'}$ without loss of generality, we may write that $p = Y_1Y_2Y_3$ and $q = Z_1Z_2Z_3,$ where  
        \begin{align*}
         &  Y_1 = x_1^{a_1} \cdots x_u^{a_u},    \text{ } Y_2 = x_{u+1}^{a_{u+1}} \cdots x_{u+v}^{a_{u+v}}, \text{ } Y_3 =  x_{u+v+1}^{a_{u+v+1}} \cdots x_{u+v+w}^{a_{u+v+w}} ; \\
      & Z_1 = x_1^{a_1'} \cdots x_u^{a_u'} ,\text{ } Z_2 = x_{u+1}^{a_{u+1}'} \cdots x_{u+v}^{a_{u+v}'} , \text{ }  Z_3 =  x_{u+v+1}^{a_{u+v+1}'} \cdots x_{u+v+w}^{a_{u+v+w}'}.
    \end{align*} 
    Here $a_1 + \ldots + a_u = i_1, a_{u+1} + \ldots + a_{u+v} = i_2$, $a_{u+v+1} + \ldots + a_{u+v+w} = i_3$ and $a_1' + \ldots + a_u' = i_1', a_{u+1}' + \ldots a_{u+v}' = i_2'$, $a_{u+v+1}' + \ldots + a_{u+v+w}' = i_3'.$ A direct computation shows that 
    \begin{align}
     \{p,q\} = \{Y_1,q\}Y_2Y_3 + Y_1\{Z_2,q\}Z_3 + Y_1 Y_2 \{Y_3,q\}. \label{eq:Leibniz2}
 \end{align}    
 By assumption, the equation $\eqref{eq:Leibniz2}$ becomes 
 \begin{align}
     \{p,q\}    =    & \,    Y_1Y_2 \{Y_3,Z_1\}Z_2 Z_3 +  Y_1Y_2 \{Y_3,Z_2\}Z_1 Z_3 +  Y_1Y_2 \{Y_3,Z_3\}Z_2 Z_1 \, . \label{eq:grading3}
 \end{align}   
 By definition, the grading of $\{p,q\}$ in $\eqref{eq:grading3}$ is equal to the grading of each of the components. From the commutator relations, $\{A_1,B_1\} = 0$. We can then discard this term in $\eqref{eq:grading3}.$ For the rest of the components in $\eqref{eq:grading3}$, we  compute them case by case. Starting from the term  $  Y_1Y_2 \{Y_3,Z_1\}Z_2 Z_3$, a direct computation shows that 
 \begin{align*}
     \mathcal{G} \left(  Y_1Y_2 \{Y_3,Z_1\}Z_2 Z_3\right) = & \, \mathcal{G} \left(\{Y_3,Z_1\}\right) + \mathcal{G} (Z_2) + \mathcal{G} (Z_3) + \mathcal{G} (Y_1) +\mathcal{G} (Y_2)  \\
     = & \, (i_1-1,0,i_3) + ( i_1', 0,0 ) + (0,0,i_3')+ (0,i_2,0)+ (0,i_2',0 ) \\
     = & \, (i_1 +i_1'-1,i_2 +i_2',i_3+i_3') .
 \end{align*} Similarly, we have \begin{align*}
      \mathcal{G} \left(Y_1Y_2 \{Y_3,Z_2\}Z_1 Z_3 \right) = & \, (i_1+i_1',i_2+i_2'-1,i_3+i_3')\\
       \mathcal{G} \left(Y_1Y_2 \{Y_3,Z_3\}Z_2 Z_1\right) = & \, \left\{\begin{matrix}
           ( i_1+i_1'+1,i_2+i_2',i_3+i_3'-2) &  \text{ if } \text{ }[\mathfrak{g}_3,\mathfrak{g}_3] \subset \mathfrak{g}_1 \\
           (i_1+i_1',i_2+i_2'+1,i_3+i_3'-2) & \, \text{ if } \text{ }[\mathfrak{g}_3,\mathfrak{g}_3] \subset \mathfrak{g}_2 .
       \end{matrix}\right.   
 \end{align*} 
 Summing all the terms together, we deduce the grading of $\{p,q\}.$  
 \end{proof}

 \begin{remark}
   From the generators provided in $\eqref{suitgene}$, we observe that  
   \begin{align}
       \mathcal{G} \left(\{p,q\}\right) = & \,    (i_1+i_1'-1,i_2+i_2',i_3+i_3') \tilde{+}(i_1+i_1',i_2+i_2'-1,i_3+i_3') \tilde{+}  (i_1+i_1'+1,i_2+i_2',i_3+i_3'-2) \nonumber \\
     & \, \tilde{+}     (i_1+i_1',i_2+i_2'+1,i_3+i_3'-2) .
   \end{align}
 \end{remark}

 We next examine the grading of each generator within $\mathcal{P}.$

\begin{proposition}
\label{grading}
   The grading of each generator in $\mathcal{P}$ are given by \begin{align}
\mathcal{G} (\bar{b}_1) = & \, (2,0,0) ,\quad \mathcal{G} (\bar{b}_2) =  (0,2,0) ,\quad\mathcal{G} (\bar{b}_3) =   (0,0,2) ;\nonumber\\
    \mathcal{G} (\bar{c}_1) = & \, (1,1,1) \tilde{+} (0,0,3) ,\quad  \mathcal{G} (\bar{C}_2) =  (1,1,1) ; \nonumber \\
    \mathcal{G} (\bar{d}_1) = & \, (2,0,2) \tilde{+} (0,2,2) \tilde{+} (1,1,2) \tilde{+} (0,0,4) ;\nonumber \\ 
    \mathcal{G}(\bar{D}_2) = & \,  (1,1,2) , \quad  \mathcal{G}(\bar{D}_3) = (0,2,2) , \quad \mathcal{G}(\bar{D}_4) = (0,0,4) ,\quad  \mathcal{G} (\bar{E}_1) =  (1,1,3) ; \nonumber \\
  \mathcal{G}(\bar{F}_1) = & \,(2,1,3) , \quad \hskip 0.1cm\mathcal{G}(\bar{F}_2) = (2,0,4) , \quad \hskip 0.07cm\mathcal{G}(\bar{F}_3) = (1,2,3), \quad \mathcal{G}(\bar{F}_4) = (0,2,4); \label{eq:generatorsgrad} \\
            \mathcal{G} (\bar{G}_1) = & \, (2,1,4), \quad \hskip 0.05cm\mathcal{G}(\bar{G}_2) = (1,2,4) , \quad \mathcal{G}(\bar{H}_1) = (2,1,5) , \quad \mathcal{G}(\bar{H}_2) = (1,2,5) , \nonumber \\
            \mathcal{G}(\bar{I}_1) = & \,  (0,3,6) , \hskip 0.2cm\quad \mathcal{G}(\bar{I}_2) = (3,0,6).\nonumber
\end{align} 
\end{proposition}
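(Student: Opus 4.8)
The plan is to prove Proposition \ref{grading} by direct inspection of the explicit generators, exploiting the bookkeeping already built into the superscript notation of \eqref{polquesne}. Recall from Definition \ref{gradingd} that, for a monomial, $\mathcal{G}$ records the triple $(i_1,i_2,i_3)$ counting how many factors lie in $\mathfrak{g}_1=\mathrm{span}\{s_1,s_2,s_3\}$, in $\mathfrak{g}_2=\mathrm{span}\{t_1,t_2,t_3\}$ and in $\mathfrak{g}_3=\mathrm{span}\{q_{i\alpha}\}$, respectively. The key observation is that the polynomials $C^{(abc)}$ in \eqref{polquesne} are constructed precisely so that, after performing the Einstein summations and the $\delta$- and $\epsilon$-contractions, every monomial that appears contains exactly $a$ factors from $\mathfrak{g}_1$, $b$ from $\mathfrak{g}_2$ and $c$ from $\mathfrak{g}_3$.

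First I would establish that each $C^{(abc)}$ is \emph{homogeneous} for the grading, with $\mathcal{G}\left(C^{(abc)}\right)=(a,b,c)$. This is immediate: the tensors $\delta_{ij}$ and $\epsilon_{ijk}$ contract only the component indices in $\{1,2,3\}$, so summing over them permutes and identifies coordinates but never changes the number of $s$-, $t$- or $q$-type coordinates present in any monomial. Hence all monomials of $C^{(abc)}$ share the common grading $(a,b,c)$, and by Remark \ref{re2.4}(iv) this grading is homogeneous.

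Next I would read off the grading of every element of $\mathcal{P}$ from its expression in \eqref{suitgene}. For the eighteen generators that equal a single $C^{(abc)}$ (namely $\bar{b}_1,\bar{b}_2,\bar{b}_3,\bar{C}_2,\bar{D}_2,\bar{D}_3,\bar{D}_4,\bar{E}_1,\bar{F}_1,\bar{F}_2,\bar{F}_3,\bar{F}_4,\bar{G}_1,\bar{G}_2,\bar{H}_1,\bar{H}_2,\bar{I}_1,\bar{I}_2$) the grading is simply the superscript $(a,b,c)$, yielding at once $\mathcal{G}(\bar{b}_1)=(2,0,0)$, $\mathcal{G}(\bar{E}_1)=(1,1,3)$, $\mathcal{G}(\bar{I}_1)=(0,3,6)$, $\mathcal{G}(\bar{I}_2)=(3,0,6)$, and so on through the list \eqref{eq:generatorsgrad}. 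For the two remaining generators $\bar{c}_1=C^{(111)}-\tfrac{2}{3}C^{(003)}$ and $\bar{d}_1=C^{(202)}+C^{(022)}-C^{(112)}-2C^{(004)}$ I would invoke Remark \ref{re2.4}(iv): since the summands carry pairwise distinct gradings, namely $(1,1,1)$ and $(0,0,3)$ in the first case, and $(2,0,2),(0,2,2),(1,1,2),(0,0,4)$ in the second, no term collapses under $\tilde{+}$, and one obtains exactly the stated non-homogeneous gradings $\mathcal{G}(\bar{c}_1)=(1,1,1)\tilde{+}(0,0,3)$ and $\mathcal{G}(\bar{d}_1)=(2,0,2)\tilde{+}(0,2,2)\tilde{+}(1,1,2)\tilde{+}(0,0,4)$.

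There is no genuine analytical obstacle; the proposition is a verification. The only point demanding care is that one must use the \emph{simplified} representatives of \eqref{suitgene} rather than the original generators of \eqref{poldanilo}. For instance, the original $I_2=-4\bigl(C^{(306)}+C^{(002)}C^{(214)}+\tfrac{1}{6}C^{(003)}C^{(213)}\bigr)$ also contains the decomposable pieces $C^{(002)}C^{(214)}$ and $C^{(003)}C^{(213)}$, both of grading $(2,1,6)$, which differ from the grading $(3,0,6)$ of $C^{(306)}$; retaining them would render $I_2$ grading-inhomogeneous. Dropping these decomposable terms, exactly as in the passage to $\bar{I}_2=C^{(306)}$, restores the clean homogeneous value $\mathcal{G}(\bar{I}_2)=(3,0,6)$. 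One checks analogously that the reductions producing $\bar{F}_2=C^{(204)}$ and $\bar{F}_4=C^{(024)}$ discard only summands of the \emph{same} grading $(2,0,4)$ and $(0,2,4)$ respectively, so those gradings are unaffected. This confirmation completes the verification.
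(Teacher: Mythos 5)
Your verification is correct and follows exactly the route the paper intends (the paper states Proposition \ref{grading} without proof, relying on the same direct reading of the superscripts of the $C^{(abc)}$ in \eqref{polquesne} and \eqref{suitgene}, with Remark \ref{regrad} handling the non-homogeneous cases $\bar{c}_1$ and $\bar{d}_1$ via $\tilde{+}$). Your added checks — that the $\delta$- and $\epsilon$-contractions preserve the count of $s$-, $t$- and $q$-factors, and that the terms discarded in passing from \eqref{poldanilo} to \eqref{suitgene} either share the grading (for $\bar{F}_2,\bar{F}_4$) or would break homogeneity if retained (for $\bar{I}_2$) — are accurate and make the verification more explicit than the paper's.
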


\begin{remark}
\label{regrad}
    
Using the generators delineated in equations $\eqref{polquesne}$ and $\eqref{suitgene}$, we can determine that the homogeneous components of the generators denoted $\bar{c}_1$ and $\bar{d}_1$ are precisely constituted by the generators $\bar{C}_2$, $\bar{D}_2$, $\bar{D}_3$, and $\bar{D}_4$. We then represent these components in the subsequent manner: \begin{align} 
 \bar{c}_1 =  \bar{C}_2 - \frac{2}{3}C^{(003)}  \text{ and } \bar{d}_1 = C^{(202)} - \bar{D}_2 + \bar{D}_3 - 2\bar{D}_4, \label{eq:changeba1}
 \end{align} where $\mathcal{G} \left(C^{(003)} \right)=(0,0,3)$ and $\mathcal{G} \left(C^{(202)}\right)=(2,0,2)$. As a consequence of these expressions, additional terms emerge within the grading of the Poisson bracket $\{p,q\}$. Consequently, in the initial calculation of the anticipated grading $\mathcal{G}(\{p,q\})$, we will determine the permissible terms consisting of $C^{(003)}$ and $C^{(202)}$ from each homogeneous grading. Then, under the change of generators, all the allowed generators in terms of $\bar{c}_1$ and $\bar{d}_1$ can be obtained. Clearly, this involves the further terms with the grading that do not match $\mathcal{G} (\{p,q\}).$ This implies that the summation of these terms must equal the homogeneous terms $C^{(003)}$ or $C^{(202)}$. This gives us some constraints on determining the coefficients.  This point will become clearer in the following subsections when we will study explicit examples. 
\end{remark}

 In the following, using Proposition $\ref{grading}$ together with Lemma $\ref{3.9}$, we shall elucidate the grading associated with each Poisson bracket in detail. Additionally, we will provide the gradings of all the compact forms, and demonstrate all allowable terms derived from the homogeneous gradings. As $\bar{b}_1,\bar{b}_2, \bar{b}_3$ and $\bar{c}_1$ are central elements, we will start with the degree $6$ brackets.

\vskip 0.5cm
\subsection{Expansions in the Degree $6$ and $7$ brackets}

\label{subsec4.2}

We examine the expansions of the compact forms $\{\bar{\textbf{C}},\bar{\textbf{D}}\}$, $\{\bar{\textbf{C}},\bar{\textbf{E}}\}$, and $\{\bar{\textbf{D}},\bar{\textbf{D}}\} $. We now present the explicit formula for the polynomial relations within these significant brackets, utilizing the grading for clarification. Initially, we examine the non-trivial bracket involving a non-central element from  $\bar{C}_2$ and  $\bar{D}_2,\bar{D}_3,\bar{D}_4$. Through straightforward computation, it is obvious that 
\begin{align}
    \mathcal{G} \left(\{\bar{C}_2,\bar{D}_2\}\right) = &\, (1,2,3) \tilde{+}(2,1,3) \tilde{+}   (3,2,1)  
      \tilde{+}    (2,3,1)    ; \nonumber  \\
    \mathcal{G} \left(\{\bar{C}_2,\bar{D}_3\}\right) = &\,  (0,3,3) \tilde{+}(1,2,3) \tilde{+}   
         (1,4,1) \tilde{+} (2,3,1) ;  \label{eq:grad1}  \\
    \mathcal{G} \left(\{\bar{C}_2,\bar{D}_4\}\right) = &\,  (0,1,5) \tilde{+}(1,0,5) \tilde{+} 
     (2,1,3)   \tilde{+}
         (1,2,3)  \, . \nonumber
\end{align} 
From the grading of each generator in $\eqref{eq:generatorsgrad},$ we can exclude the terms $(2,3,1), (3,2,1)$, $ (0,3,3), (1,4,1)$, $(0,1,5)$ and $(1,0,5)$ as there are no polynomials in $\textbf{Alg} \langle \mathcal{P}\rangle$ with the grading above. Then the gradings in $\eqref{eq:grad1}$ become 
\begin{align*}
    &\mathcal{G}  \left(\{\bar{C}_2,\bar{D}_2\}\right) =    (1,2,3) \tilde{+}(2,1,3) = \mathcal{G}  \left(\{\bar{C}_2,\bar{D}_4\}\right) =
    \mathcal{G}  \left(\{\bar{C}_2,\bar{D}_3\}\right) . 
\end{align*} 
Hence, the allowed terms in each Poisson bracket above are given by \begin{align}
    \left\{\bar{C}_2,\bar{D}_2\right\} = \Gamma_{34}^1 \bar{F}_1 + \Gamma_{34}^2 \bar{F}_3, \text{ }  \left\{\bar{C}_2,\bar{D}_3\right\} =     \Gamma_{34}^3 \bar{F}_1 + \Gamma_{34}^4\bar{F}_3   
    , \text{ } \left\{\bar{C}_2,\bar{D}_4\right\} =      \Gamma_{34}^5 \bar{F}_1  + \Gamma_{34}^6\bar{F}_3 . \label{eq:expect} 
\end{align} 
Here $\Gamma_{34}^k$,  for $1 \leq k \leq 6$, are some constants to be determined. Let us remark that the subindices $``34 \textquotedblright$ in the coefficients indicate the corresponding degree of the generators appearing in the Poisson brackets on the left hand side. We will use this notation throughout the paper. Back to the compact forms, at the degree six, the formal relations from $\eqref{deg6b}$ reduce to 
\begin{equation}
  \left\{\bar{\textbf{C}},\bar{\textbf{D}}\right\} \sim \bar{\textbf{F}}.
\end{equation}  With the help of the relations in $\eqref{classicalrels}$, for degree-six expansions, we explicitly get:
\begin{equation}
\{\bar{C}_2, \bar{D}_2\}=-2{\rm i} (\bar{F}_1+\bar{F}_3) \, , \qquad \{\bar{C}_2, \bar{D}_3\}= -2 {\rm i} \bar{F}_3 \, , \qquad \{\bar{C}_2, \bar{D}_4\}=0 \, .
\end{equation}

Now, we determine the expected polynomials in the degree-seven brackets. This involves the brackets in the compact forms of $\{\bar{\textbf{C}},\bar{\textbf{E}}\}$ and $\{\bar{\textbf{D}},\bar{\textbf{D}}\}.$  We first consider the grading of the compact form  $\{\bar{\textbf{C}},\bar{\textbf{E}}\}$. As $\bar{c}_1$ is a central element, we only need to consider the expansion of the Poisson bracket $\{\bar{C}_2,\bar{E}_1\}.$  A direct calculation shows that  \begin{align}
    \mathcal{G} \left( \{\bar{C}_2, \bar{E}_1\}\right) = (1,2,4) \tilde{+} (2,1,4) \tilde{+}    (3,2,2)   \tilde{+}  (2,3,2) \, . \label{eq:deg7}
\end{align}   It is clear that the allowed polynomials from the grading above are contained in $\eqref{eq:deg7}.$ Hence, $\{\bar{C}_2,\bar{E}_1\} = \Gamma_{35}^1 \bar{G}_1 + \Gamma_{35}^2 \bar{G}_2.$ On the other hand, the grading from the compact form $\{\bar{\textbf{D}},\bar{\textbf{D}}\}$ is given by \begin{align*}
    \mathcal{G} \left( \{\bar{D}_2,\bar{D}_3\} \right) = & \, (0,3,4) \tilde{+}  (1,2,4) \tilde{+} (2,3,2) \tilde{+} (1,4,2) ; \\
    \mathcal{G} \left( \{\bar{D}_3,\bar{D}_4\}\right) = &  \, (0,1,6) \tilde{+} (1,2,4) \tilde{+} (0,3,4) \,. 
\end{align*}
Degree-seven Berezin brackets $\{\bar{\textbf{D}},\bar{\textbf{D}}\}$ read explicitly as follows:
\begin{equation}
\{\bar{D}_2, \bar{D}_3\}=2\{\bar{D}_3, \bar{D}_4\}=4{\rm i}\bar{G}_2 \, , \qquad \{\bar{D}_2, \bar{D}_4\}=2\{\bar{E}_1, \bar{C}_2\}=2{\rm i}\bigl(\bar{G}_1+\bar{G}_2\bigl) \, .
\end{equation} These relations conclude all the degree-seven Berezin brackets.

\vskip 0.5cm

\subsection{Expansions in the degree $8$ brackets}

\label{subsec4.3}

We now determine the non-trivial brackets with degree $8.$ This involves the brackets in the compact forms of $\{\bar{\textbf{C}},\bar{\textbf{F}}\}$ and $\{\bar{\textbf{D}},\bar{\textbf{E}}\}.$ We first consider the grading of all the terms in $\{\bar{\textbf{C}},\bar{\textbf{F}}\}.$ A direct computation shows that \begin{align}
    \mathcal{G} \left(\{\bar{C}_2,\bar{F}_1\}\right) = & \, (2,2,4) \tilde{+} (3,1,4) \tilde{+}   (4,2,2)  \tilde{+}   (3,3,2) \label{eq:4.22} ; \\
       \mathcal{G} \left(\{\bar{C}_2,\bar{F}_2\}\right) = & \, (2,1,5) \tilde{+} (3,0,5) \tilde{+}    (4,1,3) \tilde{+}     (3,2,3)  \label{eq:4.23} ;\\ 
       \mathcal{G} \left(\{\bar{C}_2,\bar{F}_3\}\right) = & \, (1,3,4) \tilde{+} (2,2,4) \tilde{+}     (3,3,2)\tilde{+}  (2,4,2) \label{eq:4.24}; \\
       \mathcal{G} \left(\{\bar{C}_2,\bar{F}_4\}\right) = & \, (0,3,5) \tilde{+} (1,2,4) \tilde{+}    (2,3,3)  \tilde{+}   (1,4,3) \label{eq:4.25}.
\end{align} From Proposition $\ref{grading}$, the allowed polynomials in each homogeneous grading of $\eqref{eq:4.22}$, are given by \begin{align*}
    (2,2,4) : = & \, \left\{\bar{b}_1\bar{b}_2\bar{b}_3^2, \text{ }  \bar{C}_2^2 \bar{b}_3, \text{ }  \bar{D}_2^2, \text{ }  \bar{b}_1 \bar{b}_3\bar{D}_3, \text{ }  \bar{b}_1\bar{b}_2\bar{D}_4, \text{ }  \bar{C}_2\bar{E}_1, \text{ }  \bar{b}_2\bar{F}_2, \text{ }  \bar{b}_1\bar{F}_4, \text{ } C^{(202)}\bar{D}_3, \text{ } \bar{b}_2\bar{b}_3C^{(202)}   \right\} ,\\
    (3,1,4) : = & \, \left\{ \bar{b}_1\bar{b}_3 \bar{D}_2, \text{ } C^{(202)}\bar{D}_2, \text{ } \bar{b}_1 \bar{c}_2 C^{(003)}   \right\}  , \\
    (4,2,2) : = & \,  \left\{\bar{b}_1^2 \bar{b}_2\bar{b}_3, \text{ }  \bar{b}_1\bar{C}_2^2, \text{ }  \bar{b}_1^2\bar{D}_3  , \text{ } \bar{b}_1\bar{b}_2C^{(202)} \right\}  , \\
    (3,3,2) : = & \,  \left\{ \bar{b}_1\bar{b}_2\bar{D}_2\right\}. 
\end{align*} Here, the permissible polynomials are coming from the homogeneous gradings. As discussed in Remark $\ref{regrad}$, we simply replace the non-central elements $C^{(202)}$ and $C^{(003)}$ by $\bar{d}_1$ and $\bar{c}_1$ respectively.  We then easily deduce that the expected polynomials, under the new choice of the generators, are given by \begin{align*}
  & \,  \bar{b}_1 \bar{F}_4, \text{ } \bar{b}_2 \bar{F}_2,\text{ }\bar{C}_2 \bar{E}_1,\text{ }\bar{d}_1 \bar{D}_3,\text{ }\bar{D}_2 \bar{D}_3,\text{ }\bar{D}_3^2,\text{ }\bar{D}_3 \bar{D}_4,\text{ }\bar{D}_2^2,\bar{b}_1 \bar{b}_2 \bar{D}_4,\\
  & \,\bar{b}_1 \bar{b}_3 \bar{D}_3,\text{ }\bar{b}_2 \bar{b}_3 \bar{d}_1,\text{ }\bar{b}_2 \bar{b}_3 \bar{D}_2,\text{ }\bar{b}_2 \bar{b}_3 \bar{D}_3,\text{ }\bar{b}_2 \bar{b}_3 \bar{D}_4,\text{ }\bar{b}_3 \bar{C}_2^2,\text{ }\bar{b}_1 \bar{b}_2 \bar{b}_3^2,\\
  & \, \bar{d}_1 \bar{D}_2,\text{ }\bar{D}_2 \bar{D}_4,\text{ }\bar{b}_1 \bar{b}_3 \bar{D}_2,\text{ }\bar{b}_1 \bar{c}_1 \bar{C}_2,\text{ }\bar{b}_1 \bar{C}_2^2,\text{ }\bar{b}_1^2 \bar{D}_3,\\
  & \,\bar{b}_1 \bar{b}_2 \bar{d}_1,\text{ }\bar{b}_1 \bar{b}_2 \bar{D}_2,\text{ }\bar{b}_1 \bar{b}_2 \bar{D}_3,\text{ }\bar{b}_1^2 \bar{b}_2 \bar{b}_3.
\end{align*}  In what follows, we will perform a calculation of the total number of expected generators derived from the compact forms to various degrees. This serves to demonstrate the manner in which the grading technique can effectively streamline the preliminary computational processes. Considering the next case, the polynomials from the homogeneous grading in $ \eqref{eq:4.23}$ are    \begin{align*}
     (2,1,5) : = & \, \left\{\bar{b}_3\bar{F}_1, \bar{H}_1   \right\}  , \text{  } (3,0,5) : =  \emptyset , \text{ } (4,1,3) : = \left\{\bar{b}_1  \bar{F}_1\right\} \text{ and }  (3,2,3) : =   \emptyset .
\end{align*} From the discussion above, we may write \begin{align}
    \{\bar{C}_2, \bar{F}_{1}\}=& \Gamma_{36}^1\bar{b}_1\bar{b}_2\bar{b}_3^2+ \Gamma_{36}^2\bar{C}_2^2 \bar{b}_3+ \Gamma_{36}^3 \bar{D}_2^2+\Gamma_{36}^4 \bar{b}_1 \bar{b}_3\bar{D}_3+ \Gamma_{36}^5 \bar{b}_1\bar{b}_2\bar{D}_4+\Gamma_{36}^6 \bar{C}_2\bar{E}_1+  \Gamma_{36}^7 \bar{b}_2\bar{F}_2 \nonumber \\
    & \, +\Gamma_{36}^8 \bar{b}_1\bar{F}_4+\Gamma_{36}^9  \bar{b}_1\bar{b}_3 \bar{D}_2 + \Gamma_{36}^{10}\bar{b}_1^2 \bar{b}_2\bar{b}_3+\Gamma_{36}^{11} \bar{b}_1\bar{C}_2^2+ \Gamma_{36}^{12} \bar{b}_1^2\bar{D}_3 +\Gamma_{36}^{13}\bar{b}_1\bar{b}_2\bar{D}_2 \nonumber \\
    & \,+\underbrace{\left(\Gamma_{36}^{14}\bar{b_1}\bar{c}_1\bar{C}_2 +\Gamma_{36}^{15} \bar{b_1} \bar{C}_2^2 + \Gamma_{36}^{16} \bar{d}_1\bar{D}_2 + \ldots + \Gamma_{36}^{26} \bar{b}_1 \bar{b}_2 \bar{d}_1 \right)}_{\text{New terms under the change of the generators }}; \label{eq:newt}\\
    \{\bar{C}_2,\bar{F}_2\} = & \, \Gamma_{36}^{  27} \bar{b}_3\bar{F}_1 +  \Gamma_{36}^{ 28}  \bar{H}_1   +  \Gamma_{36}^{ 29} \bar{b}_1  \bar{F}_1. \nonumber
\end{align} Here $\Gamma_{36}^k  $, for $1 \leq k \leq 29$,  are constant  coefficients. According to Remark $\ref{regrad}$, to ensure that the grading of the left-hand side of the given expansion aligns with the grading associated with the bracket $\{\bar{C}_2, \bar{F}_1\}$, we need to implement certain constraints in the new terms mentioned in $\eqref{eq:newt}$. For instance, there exists some $\alpha \in \mathbb{C}$ such that $\left(\Gamma_{36}^{14} \bar{c}_1 + \Gamma_{36}^{15} \bar{C}_2\right)\bar{C}_2 \bar{b_1} = \alpha \bar{C}_2 \bar{b_1} C^{(003)}$. This condition guarantees consistent grading between the constructs involved. With the help of the  classical analog of the relations of $\mathfrak{su}(4)$, we deduce that 
\begin{align}
\{\bar{C}_2, \bar{F}_{1}\}& = \,\frac{{\rm i}}{4}\bigl(-4 \bar{b}_3 \bar{C}_2^2+8 \bar{b}_2 \bar{b}_3\underbrace{(\bar{d}_1+\bar{D}_2-\bar{D}_3+2 \bar{D}_4)}_{\text{equals to $C^{(202)}$}}+ 2 \bar{D}_2\underbrace{(\bar{d}_1 +2 \bar{D}_2- \bar{D}_3+2 \bar{D}_4)}_{\text{equals to $C^{(202)} +\bar{D}_2$}} +4 \bar{C}_2 \bar{E}_1-8 \bar{b}_2 \bar{F}_2 \nonumber\\
& \hskip 0.9cm +\bar{b}_1\bigl(\underbrace{(\bar{c}_1-\bar{C}_2)}_{\text{equals to $-\frac{2}{3}C^{(003)}$} }\bar{C}_2+4 \bar{b}_3 \bar{D}_3+4 \bar{b}_2(\bar{D}_4-\bar{b}_3^2)-4 \bar{F}_4\bigl) \bigl) \, ,\nonumber \\
\{\bar{C}_2, \bar{F}_{2}\}&= \, -2{\rm i}\bar{H}_1 \, .
\end{align} 
Using a similar argument, we can deduce the allowed monomials from the brackets in $\eqref{eq:4.24}$ and $\eqref{eq:4.25}$.  The rest of the Poisson brackets in the compact form $\{\bar{\textbf{C}},\bar{\textbf{F}}\}$ are then given by
\begin{align} 
\{\bar{C}_2, \bar{F}_{3}\}&=\frac{{\rm i}}{4}\bigl( 2\bar{D}_2(\bar{D}_2+\bar{D}_3)+4 \bar{C}_2(\bar{E}_1-\bar{b}_3 \bar{C}_2)+\bar{b}_2\bigl((\bar{c}_1-\bar{C}_2)\bar{C}_2+4 \bar{b}_3(\bar{d}_1+\bar{D}_2-\bar{D}_3+2 \bar{D}_4)-4 \bar{F}_2 \bigl)  \nonumber\\
& \hskip 0.9 cm  +\bar{b}_1\bigl(8 \bar{b}_3 \bar{D}_3+4 \bar{b}_2(\bar{D}_4-\bar{b}_3^2)-8 \bar{F}_4\bigl)\bigl) \, , \nonumber\\
\{\bar{C}_2, \bar{F}_{4}\}&=-2 {\rm i} \bar{H}_2 \, .
\end{align} 
 Additionally, consider all the gradings in the compact form $\{\bar{\textbf{D}},\bar{\textbf{E}}\}$. Since $\bar{d}_1$ is a central element, we will start from $\bar{D}_2.$ A direct calculation shows that \begin{align}
    \mathcal{G} (\{\bar{D}_2,\bar{E}_1\}) = & \, (1,2,5) \tilde{+} (2,1,5) \tilde{+} (3,2,3) \tilde{+} (2,3,3); \label{eq:4.28}\\
    \mathcal{G} (\{\bar{D}_3,\bar{E}_1\}) = & \, (0,3,5) \tilde{+} (1,2,5) \tilde{+} (2,3,3) \tilde{+} (1,4,3) ; \label{eq:4.29}\\ 
   \mathcal{G} (\{\bar{D}_4,\bar{E}_1\}) = & \, (0,1,7) \tilde{+} (1,0,7) \tilde{+} (2,1,5) \tilde{+} (1,2,5).  \label{eq:4.30}
\end{align} The way of choosing permissible polynomials will be the same as presented above. Hence, we only provide one of these cases. Consider the grading from the bracket $\{\bar{D}_2,\bar{E}_1\},$ the allowed polynomials in each component are \begin{align}
    (1,2,5) :=   \left\{\bar{H}_2\right\}, \quad   (2,1,5) :=  \left\{\bar{b}_3\bar{F}_1,\bar{H}_1\right\}, \quad    (3,2,3) :=  \left\{\bar{b}_1\bar{F}_3\right\}  , \quad    (2,3,3):=  \left\{\bar{b}_2\bar{F}_1\right\}. \label{eq:gend2e1}
\end{align} Then the expected generators in the Poisson bracket $\eqref{eq:4.28}$ are given by 
\begin{align*}
    \{\bar{D}_2,\bar{E}_1\} =\Gamma_{45}^1 \bar{H}_1 + \Gamma_{45}^2 \bar{H}_2 + \left(\Gamma_{45}^3\bar{b}_3  + \Gamma_{45}^{4}\bar{b}_2\right)\bar{F}_1 + \Gamma_{45}^{5}\bar{b}_1\bar{F}_3.
\end{align*} Here, $\Gamma_{45}^k$ are constant coefficients for all $1 \leq k \leq 5$.  In the explicit calculation, the explicit values of the coefficients are $\Gamma_{45}^1 = \Gamma_{45}^2 = 2 {\rm i}$, $  \Gamma_{45}^3 = 0$, and $ \Gamma_{45}^4 = \Gamma_{45}^{5}=\frac{{\rm i}}{2}$, leading to
\begin{align}
    \{\bar{D}_2, \bar{E}_{1}\}&={\rm i}\left(2\bar{H}_1+2 \bar{H}_2+\frac{1}{2}(\bar{b}_2 \bar{F}_1+\bar{b}_1 \bar{F}_3)\right) \,.
\end{align} 
Analogously, from the grading $\eqref{eq:4.29}$ and $\eqref{eq:4.30}$, we also find
\begin{align}
\{\bar{D}_3, \bar{E}_{1}\}&={\rm i}\bigl(-2 \bar{H}_2+\frac{1}{2}\bar{b}_2 \bar{F}_1+2 \bar{b}_3 \bar{F}_3\bigl)\, ,\nonumber \\
\{\bar{D}_4, \bar{E}_{1}\}&={\rm i}\bigl( -2 \bar{H}_2-2\bar{H}_1+\bar{b}_3(\bar{F}_1+\bar{F}_3)\bigl)\, .
\end{align} 
This conclude the computations of the Poisson brackets up to degree eight.
\vskip 0.5cm

\subsection{Expansions in the degree $9$ brackets}

We now focus on the compact form corresponding to the degree $9$ bracket relations. The compact forms are $\{\bar{\textbf{C}},\bar{\textbf{G}}\}$ and $\{\bar{\textbf{D}},\bar{\textbf{F}}\}$. We start by analyzing the compact form $\{\bar{\textbf{D}},\bar{\textbf{F}}\}$.  The gradings in the term $\{\bar{D}_2,\bar{F}_s\}$ for all $1 \leq s \leq 4$ are given by \begin{align}
    \mathcal{G}\left(\{\bar{D}_2,\bar{F}_1\}\right) = & \, (2,2,5) \tilde{+} (3,1,5) \tilde{+} (4,2,3) \tilde{+}(3,3,3) ; \nonumber\\
     \mathcal{G}\left(\{\bar{D}_2,\bar{F}_2\}\right) = & \, (2,1,6) \tilde{+} (3,0,6) \tilde{+} (4,1,4) \tilde{+}(3,2,4) ;  \nonumber\\
     \mathcal{G}\left(\{\bar{D}_2,\bar{F}_3\}\right) = & \, (1,3,5) \tilde{+} (2,2,5) \tilde{+} (3,3,3) \tilde{+}(2,4,3) ;  \nonumber\\ 
      \mathcal{G}\left(\{\bar{D}_2,\bar{F}_4\}\right) = & \, (1,2,6) \tilde{+} (0,3,6) \tilde{+} (2,3,4) \tilde{+}(1,4,4).
      \end{align} Moreover, for the Poisson brackets $\{\bar{D}_3,\bar{F}_s\}$ and $\{\bar{D}_4,\bar{F}_s\}$ are given by
      \begin{align}
         \mathcal{G}\left(\{\bar{D}_3,\bar{F}_1\}\right) = & \, (1,3,5) \tilde{+} (2,2,5) \tilde{+} (3,3,3) \tilde{+}(2,4,3) ;  \nonumber \\
          \mathcal{G}\left(\{\bar{D}_3,\bar{F}_2\}\right) = & \, (1,2,6) \tilde{+} (2,1,6) \tilde{+} (3,2,4) \tilde{+}(2,3,4) ;  \nonumber \\
             \mathcal{G}\left(\{\bar{D}_3,\bar{F}_3\}\right) = & \, (0,4,5) \tilde{+} (1,3,5) \tilde{+} (2,4,3) \tilde{+}(1,5,3) ;  \nonumber\\
           \mathcal{G}\left(\{\bar{D}_3,\bar{F}_4\}\right) = & \,   (0,3,6) \tilde{+} (1,4,4) \tilde{+}(0,5,4);  \nonumber \\
       \mathcal{G}\left(\{\bar{D}_4,\bar{F}_1\}\right) = & \, (1,1,7) \tilde{+} (2,0,7) \tilde{+} (3,1,5) \tilde{+}(2,2,5) ;  \nonumber \\
          \mathcal{G}\left(\{\bar{D}_4,\bar{F}_2\}\right) = & \, (1,0,8) \tilde{+}   (3,0,6) \tilde{+}(2,1,6) ; \nonumber \\
             \mathcal{G}\left(\{\bar{D}_4,\bar{F}_3\}\right) = & \, (0,2,7) \tilde{+} (1,1,7) \tilde{+} (2,2,5) \tilde{+}(1,3,5) ;  \nonumber\\
           \mathcal{G}\left(\{\bar{D}_4,\bar{F}_4\}\right) = & \,   (0,1,8) \tilde{+} (1,2,6) \tilde{+}(0,3,6).      \label{eq:4.41}
\end{align} Note that, using Lemma \ref{3.9}, the grading of $\{\bar{D}_3,\bar{F}_4\}$ contains the homogeneous grading $(-1,4,6)$. It is clear that, from Proposition \ref{grading}, there is no generator or product of generators in $\mathcal{P}$ satisfying such homogeneous grading. Hence, we will omit the terms like that.

The procedure for finding the admissible monomials is, in fact, the same as before. Thus, we outline the permissible terms within certain gradings. Consider the grading in the bracket $\{\bar{D}_2,\bar{F}_1\}.$ Note that there will be extra polynomial terms by considering the change of the generators. This can be seen as follows. The admissible monomials for each homogeneous gradings are given by  \begin{align}
\nonumber
 (2,2,5) := & \, \left\{\bar{D}_2\bar{E}_1, \text{ } \bar{b}_3 \bar{C}_2 \bar{D}_2, \text{ }C^{(003)} \bar{C}_2^2, \text{ }\bar{b}_1 C^{(003)} \bar{D}_3, \text{ }\bar{b}_2 C^{(003)} C^{(202)}, \text{ }\bar{b}_1 \bar{b}_2 \bar{b}_3 C^{(003)}\right\}; \\
 (3,1,5):= & \, \left\{\bar{b}_1\bar{b}_3\bar{E}_1, \text{ }\bar{b}_1\bar{b}_3^2\bar{C}_2,    \text{ }\bar{b}_1\bar{C}_2\bar{D}_4, \text{ }\bar{C}_2 \bar{F}_2, \text{ }C^{(202)} \bar{E}_1 , \text{ }\bar{b}_1 C^{(003)} \bar{D}_2, \text{ } \bar{b}_3 \bar{C}_2 C^{(202)} \right\} ; \label{eq:gradingd2f1} \\ 
 \nonumber
 (4,2,3) := & \, \left\{\bar{b}_1\bar{D}_2\bar{C}_2, \text{ }\bar{b}_1^2\bar{b}_2C^{(003)}\right\}; \\
 \nonumber
 (3,3,3) := & \, \left\{\bar{b}_1\bar{b}_2\bar{E}_1, \text{ }\bar{b}_1\bar{b}_2\bar{b}_3\bar{C}_2, \text{ }\bar{b}_1\bar{C}_2\bar{D}_3, \text{ } \bar{C}_2^3, \text{ }\bar{b}_2C^{(202)}\bar{C}_2\right\} .
\end{align}  As mentioned before, the permissible monomials that arise from these gradings can be described as follows: \begin{align*}
   & \,  \bar{D}_2 \bar{E}_1,\bar{b}_1 \bar{c}_1 \bar{D}_3,\text{ } \bar{b}_1 \bar{C}_2 \bar{D}_3\text{ } ,\bar{b}_2 \bar{c}_1 \bar{d}_1,\text{ } \bar{b}_2 \bar{c}_1 \bar{D}_2,\text{ } \bar{b}_2 \bar{c}_1 \bar{D}_3,\text{ } \bar{b}_2 \bar{c}_1 \bar{D}_4,\bar{b}_2 \bar{C}_2 \bar{d}_1, \\
   & \, \bar{b}_2 \bar{C}_2 \bar{D}_2,\bar{b}_2 \bar{C}_2 \bar{D}_3,\text{ } \bar{b}_2 \bar{C}_2 \bar{D}_4,\text{ } \bar{b}_3 \bar{C}_2 \bar{D}_2,\text{ } \bar{c}_1 \bar{C}_2^2,\text{ } \bar{C}_2^3,\bar{b}_1 \bar{b}_2 \bar{b}_3 \bar{c}_1, \text{ } \bar{b}_1 \bar{C}_2 \bar{D}_4,\\
   & \,\bar{b}_1 \bar{b}_2  \bar{b}_3 \bar{C}_2,\bar{C}_2 \bar{F}_2,\bar{d}_1 \bar{E}_1,\text{ } \bar{D}_3 \bar{E}_1,\bar{D}_4 \bar{E}_1,\text{ } \bar{b}_1 \bar{b}_3 \bar{E}_1,\text{ } \bar{b}_1 \bar{c}_1 \bar{D}_2,\text{ } \bar{b}_1 \bar{C}_2 \bar{D}_2,\\
   & \,\bar{b}_3 \bar{C}_2 \bar{d}_1,\text{ } \bar{b}_3 \bar{C}_2 \bar{D}_3,\bar{b}_3 \bar{C}_2 \bar{D}_4,\text{ } \bar{b}_1 \bar{b}_3^2 \bar{C}_2,\text{ } \bar{b}_1^2 \bar{b}_2 \bar{c}_1,\text{ } \bar{b}_1^2 \bar{b}_2 \bar{C}_2,\text{ } \bar{b}_1 \bar{b}_2 \bar{E}_1.
\end{align*}  From this, we conclude that there are some coefficients $\Gamma_{46}^k$, with $1 \leq k \leq 31$, such that  \begin{align*}
    \{\bar{D}_2, \bar{F}_1\} = & \,\Gamma_{46}^1 \bar{D}_2\bar{E}_1  +   \ldots+ \Gamma_{46}^{30} \bar{C}_2^3   + \Gamma_{46}^{31} \bar{b}_1\bar{b}_2\bar{E}_1 .
\end{align*} Again, these coefficients can be determined via  the classical analog of the relations of $\mathfrak{su}(4)$ and the constraints from the homogeneous gradings. Eventually, we deduce that \begin{align}
    \{\bar{D}_2, \bar{F}_1\}&= \frac{{\rm i}}{8}\bigl( \bar{b}_1^2 \bar{b}_2(\bar{c}_1-\bar{C}_2) +\bar{b}_1\bigl(4 \bar{b}_2 \bar{b}_3(\bar{c}_1-\bar{C}_2)-8 \bar{b}_3^2 \bar{C}_2-2(\bar{c}_1-2\bar{C}_2)(\bar{D}_2+2\bar{D}_3)+8 \bar{C}_2 \bar{D}_4-4 \bar{b}_2 \bar{E}_1\bigl)\nonumber\\
& \hskip 0.7cm+4\bigl(-3\bar{C}_2^3+  2\bar{c}_1\bigl(\bar{C}_2^2-\bar{b}_2(\bar{d}_1+\bar{D}_2-\bar{D}_3+2\bar{D}_4)\bigl) -2 \bar{D}_2 \bar{E}_1+ \bar{C}_2(3 \bar{b}_2(\bar{d}_1 +\bar{D}_2-\bar{D}_3+2\bar{D}_4)\nonumber\\
& \hskip 0.7cm +\bar{b}_3(4\bar{d}_1+6 \bar{D}_2-4 \bar{D}_3+8 \bar{D}_4) -4 \bar{F}_2)\bigl)\bigl) \, .
\end{align} 

Similarly, the rest of the degree-nine relations are provided as follows:  
\begin{align}  
\{\bar{D}_2, \bar{F}_2\}&={\rm i}((-\bar{c}_1+2\bar{C}_2)\bar{F}_1+(\bar{b}_1+4 \bar{b}_3)\bar{G}_1) \, , \nonumber\\
\{\bar{D}_2, \bar{F}_3\}&=\frac{{\rm i}}{8}\bigl(\bar{b}_1\bigl((\bar{b}_2^2+4 \bar{b}_2\bar{b}_3)(\bar{c}_1-\bar{C}_2) -4(2\bar{c}_1-3 \bar{C}_2)\bar{D}_3-4 \bar{b}_2 \bar{E}_1\bigl)+4(2\bar{c}_1-3\bar{C}_2)\bar{C}_2^2-8\bar{D}_2 \bar{E}_1 \nonumber\\
&\hskip 0.9 cm +8 \bar{b}_3 \bar{C}_2(\bar{D}_2+2\bar{D}_3)-2 \bar{b}_2 \bigl(4 \bar{b}_3^2 \bar{C}_2^2+(\bar{c}_1-2\bar{C}_2)(2\bar{d}_1+3\bar{D}_2-2\bar{D}_3)+4(\bar{c}_1-3\bar{C}_2)\bar{D}_4\bigl) \nonumber\\
&\hskip 0.9cm -16 \bar{C}_2 \bar{F}_4 
\bigl) \, , \nonumber\\
\{\bar{D}_2, \bar{F}_4\}&={\rm i}\bigl((-\bar{c}_1+2\bar{C}_2)\bar{F}_3+(\bar{b}_2+4\bar{b}_3)\bar{G}_2\bigl) \, .
\end{align} 

On the other hand, using the grading in \eqref{eq:4.41}, we can determine the allowed polynomials in the Poisson brackets $\{\bar{D}_3,\bar{F}_s\}$ and $\{\bar{D}_4,\bar{F}_s\}$ for all $1 \leq s \leq 4.$ Taking into account the Poisson relations of $\mathfrak{su}^*(4)$ given in \eqref{classicalrels}, we then deduce 
\begin{align}
\{\bar{D}_3, \bar{F}_1\}&=\frac{\rm i}{2}\bigl(((\bar{c}_1-2 \bar{C}_2)\bar{C}_2+\bar{b}_1 \bar{D}_3+\bar{b}_2(\bar{d}_1+\bar{D}_2-\bar{D}_3+2\bar{D}_4)) \bar{C}_2-(\bar{b}_1\bar{b}_2-2 \bar{D}_2)\bar{E}_1\bigl) \, , \nonumber\\
\{\bar{D}_3, \bar{F}_2\}&={\rm i}\bar{C}_2 \bar{F}_1 \, ,\nonumber\\
\{\bar{D}_3, \bar{F}_3\}&=\frac{{\rm i}}{8}\bigl(\bar{b}_1 \bar{b}_2^2(\bar{c}_1-\bar{C}_2)-2\bar{b}_2\bar{c}_1\bar{D}_2-4 \bar{b}_2 \bar{C}_2(2 \bar{b}_3^2-\bar{D}_2-2\bar{D}_4)+16((\bar{b}_3\bar{C}_2+\bar{E}_1)\bar{D}_3-2\bar{C}_2\bar{F}_4)\bigl)\, , \nonumber\\
\{\bar{D}_3, \bar{F}_4\}&={\rm i}\bigl(-4 \bar{I}_1+\bar{b}_2 \bar{G}_2\bigl) \, , \nonumber \\
\{\bar{D}_4, \bar{F}_1\}&=\frac{{\rm i}}{4}\bigl( \bar{C}_2^3+2\bigl(\bar{b}_2(\bar{c}_1-\bar{C}_2)+2\bar{E}_1\bigl)(\bar{d}_1+\bar{D}_2-\bar{D}_3+2\bar{D}_4)-2\bar{b}_3\bar{C}_2\bar{D}_2-\bar{c}_1 \bar{C}_2^2 \nonumber\\
& \hskip 0.9cm-\bar{b}_1(\bar{c}_1-\bar{C}_2)(\bar{b}_2 \bar{b}_3-\bar{D}_3)+4(\bar{D}_2 \bar{E}_1-\bar{C}_2 \bar{F}_2)\bigl) \, ,\nonumber\\
\{\bar{D}_4, \bar{F}_2\}&={\rm i}\bigl(-2(\bar{I}_2+\bar{b}_3 \bar{G}_1)+\frac{1}{2}(\bar{c}_1-\bar{C}_2)\bar{F}_1\bigl) \, ,\nonumber\\
\{\bar{D}_4,\bar{F}_3\}&=\frac{1}{2}(\{\bar{D}_3,\bar{F}_1\}+\{\bar{D}_3,\bar{F}_3\}-\{\bar{D}_2,\bar{F}_3\}) \, ,\nonumber\\
\{\bar{D}_4, \bar{F}_4\}&={\rm i}\bigl(\frac{1}{2}(\bar{c}_1-\bar{C}_2)\bar{F}_3-2\bar{b}_3 \bar{G}_2-2 \bar{I}_1 \bigl)\, .
\end{align}  
\noindent  These computations clarify how the absence of a certain grading affects the admissible polynomials in the Poisson brackets. From now on, our analysis concentrates on presenting data on the allowable generators that arise from homogeneous gradings, selected from one case associated with each compact form. This approach will enable us to examine the differences that arise by transitioning from $C^{(003)}$ and $C^{(202)}$ to $\bar{c}_1$ and $\bar{d}_1$, respectively.

Now, for the grading in the compact form $\{\bar{\textbf{C}},\bar{\textbf{G}}\},$ which includes the Poisson brackets $\{\bar{C}_2,\bar{G}_1\}$ and $\{\bar{C}_2,\bar{G}_2\}.$ A direct computation shows that \begin{align}
    \mathcal{G} \left(\{\bar{C}_2, \bar{G}_1\}\right) =  (2,2,5) \tilde{+} (3,1,5) \tilde{+}(4,2,3) \tilde{+}(4,3,3)  ; \nonumber\\
      \mathcal{G} \left(\{\bar{C}_2, \bar{G}_2\}\right) =  (1,3,5) \tilde{+} (2,2,5) \tilde{+}(3,3,3) \tilde{+}(2,4,3) .
\end{align} Here, we present the allowed polynomials from each homogeneous grading in $\mathcal{G}\left(\{\bar{C}_2, \bar{G}_2\}\right) .$ A direct calculations shows that  \begin{align}
\nonumber
 (1,3,5): =   & \, \left\{\bar{C}_2 \bar{F}_4, \text{ } \bar{D}_3 \bar{E}_1, \text{ } \bar{b}_2 \bar{b}_3^2 \bar{C}_2, \text{ } \bar{b}_2 \bar{b}_3 \bar{E}_1, \text{ } \bar{b}_2 \bar{C}_2 \bar{D}_4, \text{ } \bar{b}_3 \bar{C}_2 \bar{D}_3, \text{ } \bar{b}_2 C^{(003)} \bar{D}_2\right\}; \\
    (2,2,5) : = & \, \left\{\bar{D}_2\bar{E}_1, \text{ } \bar{b}_3\bar{C}_2\bar{D}_2 , \text{ } \bar{D}_2 \bar{E}_1, \text{ } \bar{b}_1 C^{(003)} \bar{D}_3, \text{ } \bar{b}_2 C^{(003)} C^{(202)}, \text{ }  C^{(003)} \bar{C}_2^2, \text{ } \bar{b}_1 \bar{b}_2 \bar{b}_3 C^{(003)}\right\} ; \label{eq:gradingc2g2}\\
    \nonumber
     (3,3,3) : = & \,  \left\{\bar{b}_1\bar{b}_2\bar{E}_1, \text{ }   \bar{b}_1 \bar{C}_2 \bar{D}_3, \text{ } \bar{C}_2^3, \text{ } \bar{b}_1 \bar{b}_2 \bar{b}_3 \bar{C}_2, \text{ } \bar{b}_2 \bar{C}_2 C^{(202)}\right\}; \\
     \nonumber
    (2,4,3) : =  & \,  \left\{\bar{b}_2\bar{C}_2\bar{D}_2 , \text{ } \bar{b}_1\bar{b}_2^2C^{(003)}\right\}. 
\end{align}  Hence, together with the polynomials found in $\eqref{eq:gradingc2g2}$, all the admissible polynomials in terms of the extended terms $\bar{c}_1$ and $\bar{d}_1$ are given by \begin{align*}
    & \, \bar{C}_2 \bar{F}_4, \text{ } \bar{D}_3 \bar{E}_1, \text{ } \bar{b}_2 \bar{b}_3 \bar{E}_1, \text{ } \bar{b}_2 \bar{c}_1 \bar{D}_2, \text{ } \bar{b}_2 \bar{C}_2 \bar{D}_2, \text{ } \bar{b}_2 \bar{C}_2 \bar{D}_4,\text{ } \bar{b}_3 \bar{C}_2 \bar{D}_3, \text{ } \bar{b}_2 \bar{b}_3^2 \bar{C}_2, \text{ } \bar{b}_3 \bar{C}_2 \bar{D}_2, \\ 
    & \,  \bar{D}_2 \bar{E}_1, \text{ } \bar{b}_1 \bar{c}_1 \bar{D}_3, \text{ } \bar{b}_1 \bar{C}_2 \bar{D}_3, \text{ } \bar{b}_2 \bar{c}_1 \bar{d}_1, \text{ } \bar{b}_2 \bar{c}_1 \bar{D}_3, \text{ }\bar{b}_2 \bar{c}_1 \bar{D}_4,  \text{ } \bar{b}_2 \bar{C}_2 \bar{d}_1, \text{ } \bar{b}_2 \bar{C}_2 \bar{D}_3, \\
    & \, \bar{c}_1 \bar{C}_2^2,\text{ } \bar{C}_2^3, \text{ } \bar{b}_1 \bar{b}_2 \bar{b}_3 \bar{c}_1, \text{ } \bar{b}_1 \bar{b}_2 \bar{b}_3 \bar{C}_2, \text{ } \bar{b}_1 \bar{b}_2 \bar{E}_1, \text{ } \bar{b}_1 \bar{b}_2^2 \bar{c}_1, \text{ } \bar{b}_1 \bar{b}_2^2 \bar{C}_2.
\end{align*} There are $24$ coefficients such that $\{\bar{C}_2,\bar{G}_2\} = \Gamma_{37}^1 \bar{C}_2 \bar{F}_4 +\ldots + \Gamma_{37}^{24}\bar{b}_1 \bar{b}_2^2 \bar{C}_2.$ Moreover, from $\eqref{deg9b}$, the compact forms $\{\bar{\textbf{C}},\bar{\textbf{G}}\}$ and $\{\bar{\textbf{D}},\bar{\textbf{F}}\}$ have the same expansion. The grading method indicates that the permissible terms can be reduced from $65$ to $24.$ Together with the information above, after direct inspection we can further conclude that $\{\bar{C}_2, \bar{G}_2\} =\{\bar{D}_4, \bar{F}_3\} \, . $  Similarly, we also deduce that
\begin{align}
 \{\bar{C}_2, \bar{G}_1\} =\{\bar{D}_4, \bar{F}_1\} \, .
\end{align}  This concludes all the expansion in degree nine Poisson brackets.
\vskip 0.5cm

\subsection{Expansions in the degree $10$ brackets}

 For the case of brackets having degree $10$, the reduced forms can be grouped into three distinct compact forms: $\{\bar{\textbf{C}},\bar{\textbf{H}}\},$ $ \{\bar{\textbf{D}},\bar{\textbf{G}}\},$ and $\{\bar{\textbf{E}},\bar{\textbf{F}}\}$.   We first examine the compact form of type $\{\bar{\textbf{C}},\bar{\textbf{H}}\}$, which includes the Poisson brackets $\{\bar{C}_2,\bar{H}_1\}$ and $\{\bar{C}_2,\bar{H}_2\}.$ The gradations assigned to each Poisson brackets are as follows
 \begin{align}
    \mathcal{G} \left(\{\bar{C}_2, \bar{H}_1\}\right) = (2,2,6) \tilde{+} (3,1,6) \tilde{+} (4,2,4) \tilde{+}(3,3,4); \nonumber \\
     \mathcal{G} \left(\{\bar{C}_2, \bar{H}_2\}\right) =  {(1,3,6)}\tilde{+} (2,2,6)  \tilde{+}(3,3,4)\tilde{+} (2,4,4) ; 
\end{align} As an example, the allowed monomials in each homogeneous grading for the second bracket are \begin{align*}
    (1,3,6) := &\, \{\bar{D}_2 \bar{F}_4, \text{ } \bar{b}_2 \bar{D}_2 \bar{D}_4, \text{ } \bar{b}_3 \bar{D}_2 \bar{D}_3, \text{ } \bar{b}_2 \bar{b}_3^2 \bar{D}_2, \text{ }  \bar{b}_2 C^{(003)} \bar{E}_1, \text{ } \bar{b}_2 \bar{b}_3 C^{(003)} \bar{C}_2, \text{ } C^{(003)} \bar{C}_2 \bar{D}_3\}; \\ 
    (2,2,6)  : = & \,\{\bar{D}_3 \bar{F}_2, \text{ } \bar{E}_1^2, \text{ } \bar{b}_1 \bar{b}_3 \bar{F}_4, \text{ } \bar{b}_1 \bar{D}_3 \bar{D}_4, \text{ } \bar{b}_2 \bar{b}_3 \bar{F}_2, \text{ } \bar{b}_3 \bar{C}_2 \bar{E}_1, \text{ } \bar{b}_3 \bar{D}_2^2, \text{ } \bar{b}_1 \bar{b}_2 \bar{b}_3^3, \text{ }  \bar{C}_2^2 \bar{D}_4, \text{ } \bar{b}_1 \bar{b}_2 \bar{b}_3 \bar{D}_4, \text{ } \bar{b}_3^2 \bar{C}_2^2, \text{ } \bar{b}_1 \bar{b}_3^2 \bar{D}_3,\\
    & \, \bar{b}_2 C^{(202)} \bar{D}_4, \text{ } C^{(202)} \bar{F}_4, \text{ } \bar{b}_3 C^{(202)} \bar{D}_3, \text{ } C^{(003)} \bar{C}_2 \bar{D}_2, \text{ } \bar{b}_1 \bar{b}_2 \left(C^{(003)}\right)^2, \text{ } \bar{b}_2 \bar{b}_3^2 C^{(202)}\} ;\\ 
    (3,3,4) := &\, \{\bar{b}_1\bar{b}_2\bar{b}_3\bar{D}_2, \text{ } \bar{b}_1 \bar{D}_2 \bar{D}_3, \text{ } \bar{C}_2^2 \bar{D}_2, \text{ }  \bar{b}_2 C^{(202)} \bar{D}_2, \text{ } \bar{b}_1 \bar{b}_2 C^{(003)} \bar{C}_2\} ;\\
    (2,4,4) := &\, \{\bar{b}_1 \bar{b}_2 \bar{F}_4, \text{ } \bar{b}_1 \bar{D}_3^2, \text{ } \bar{b}_2^2 \bar{F}_2, \text{ } \bar{b}_2 \bar{C}_2 \bar{E}_1, \text{ } \bar{b}_1 \bar{b}_2^2 \bar{b}_3^2, \text{ } \bar{b}_2 \bar{D}_2^2, \text{ } \bar{C}_2^2 \bar{D}_3, \text{ } \bar{b}_1 \bar{b}_2^2 \bar{D}_4, \text{ } \bar{b}_1 \bar{b}_2 \bar{b}_3 \bar{D}_3, \text{ } \bar{b}_2 \bar{b}_3 \bar{C}_2^2, \text{ } \bar{b}_2 C^{(202)} \bar{D}_3, \text{ } \bar{b}_2^2 \bar{b}_3 C^{(202)}\}  .   
\end{align*} 
Again, the allowed polynomials underline the corresponding one with the homogeneous gradings, and we deduce that all the allowed polynomials in the expansion $\{\bar{C}_2,\bar{H}_2\}$ are \begin{align*}
    & \, \bar{D}_2 \bar{F}_4,\text{ } \bar{b}_2 \bar{c}_1 \bar{E}_1,\text{ }\bar{b}_2 \bar{C}_2 \bar{E}_1,\text{ }\bar{b}_2 \bar{D}_2 \bar{D}_4,\bar{b}_3 \bar{D}_2 \bar{D}_3,\text{ }\bar{c}_1 \bar{C}_2 \bar{D}_3,\text{ }\bar{C}_2^2 \bar{D}_3,\text{ }\bar{b}_2 \bar{b}_3^2 \bar{D}_2,\text{ }\bar{b}_2 \bar{b}_3 \bar{c}_1 \bar{C}_2,\text{ }\bar{b}_1 \bar{b}_2 \bar{c}_1^2,  \text{ }\bar{b}_2 \bar{D}_3^2,\\
    & \, \bar{b}_2 \bar{b}_3 \bar{C}_2^2,\text{ }\bar{d}_1 \bar{F}_4,\text{ }\bar{D}_3 \bar{F}_4,\text{ }\bar{D}_4 \bar{F}_4,\text{ }\bar{D}_3 \bar{F}_2,\text{ }\bar{E}_1^2,\bar{b}_1 \bar{b}_3 \bar{F}_4,\text{ }\bar{b}_1 \bar{D}_3 \bar{D}_4,\text{ }\bar{b}_2 \bar{b}_3 \bar{F}_2,\text{ }\bar{b}_2 \bar{d}_1 \bar{D}_4,\text{ }\bar{b}_2 \bar{D}_3 \bar{D}_4, \\
    & \, \bar{b}_2 \bar{D}_4^2,\text{ }\bar{b}_3 \bar{C}_2 \bar{E}_1,\text{ }\bar{b}_3 \bar{d}_1 \bar{D}_3,\text{ }\bar{b}_3 \bar{D}_3^2, \text{ } \bar{b}_3 \bar{D}_3 \bar{D}_4,\text{ }\bar{b}_3 \bar{D}_2^2,\text{ }\bar{c}_1 \bar{C}_2 \bar{D}_2,\text{ }\bar{C}_2^2 \bar{D}_2,\text{ }\bar{C}_2^2 \bar{D}_4,\text{ }\bar{b}_1 \bar{b}_2 \bar{b}_3 \bar{D}_4, \\
    & \,  \bar{b}_1 \bar{b}_2 \bar{c}_1 \bar{C}_2,\bar{b}_1 \bar{b}_2 \bar{C}_2^2,\text{ }\bar{b}_1 \bar{b}_3^2 \bar{D}_3,\text{ }\bar{b}_2 \bar{b}_3^2 \bar{d}_1,\text{ }\bar{b}_2 \bar{b}_3^2 \bar{D}_3,\text{ }\bar{b}_2 \bar{b}_3^2 \bar{D}_4,\text{ }\text{ }\bar{b}_3^2 \bar{C}_2^2,\text{ }\bar{b}_1 \bar{b}_2 \bar{b}_3^3,\text{ }\bar{b}_1 \bar{D}_2 \bar{D}_3,\\
    & \, \bar{b}_2 \bar{d}_1 \bar{D}_2,\text{ }\bar{b}_2 \bar{D}_2^2,\bar{b}_2 \bar{D}_2 \bar{D}_3,\text{ }\bar{b}_1 \bar{b}_2 \bar{b}_3 \bar{D}_2,\text{ }\bar{b}_1 \bar{b}_2 \bar{c}_1 \bar{C}_2,\text{ }\bar{b}_1 \bar{b}_2 \bar{F}_4,\text{ }\bar{b}_1 \bar{D}_3^2,\text{ }\bar{b}_2^2 \bar{F}_2,\text{ }\bar{b}_2 \bar{d}_1 \bar{D}_3,\\
    & \, \bar{b}_1 \bar{b}_2^2 \bar{D}_4,\text{ }\bar{b}_1 \bar{b}_2 \bar{b}_3 \bar{D}_3,\text{ }\bar{b}_2^2 \bar{b}_3 \bar{d}_1,\text{ }\bar{b}_2^2 \bar{b}_3 \bar{D}_2,\text{ }\bar{b}_2^2 \bar{b}_3 \bar{D}_3,\text{ }\bar{b}_2^2 \bar{b}_3 \bar{D}_4,\text{ }\bar{b}_1 \bar{b}_2^2 \bar{b}_3^2.
\end{align*} 
The maximal possible count of coefficients denoted as $\Gamma_{38}^1,\ldots,\Gamma_{38}^{57}$ reaches $57$, wherein the expression $\{\bar{C}_2,\bar{H}_2\}$ can be represented as $\Gamma_{38}^1\bar{D}_2 \bar{F}_4 + \ldots + \Gamma_{38}^{57} \bar{b}_1 \bar{b}_2^2 \bar{b}_3^2.$ By meticulously evaluating and identifying each of these coefficients, we derive
\begin{align*}
    \{\bar{C}_2, \bar{H}_2\}&=-\frac{{\rm i}}{32}\bigl(-24 \bar{b}_3^2 \bar{C}_2^2-4 \bar{b}_3(\bar{D}_2^2-16 \bar{C}_2 \bar{E}_1+4 \bar{b}_2 \bar{F}_2) +8\bigl( \bar{C}_2^2(\bar{D}_2+\bar{D}_4)+\bar{b}_2\bar{C}_2 \bar{E}_1 \nonumber \\
&\hskip 1.25cm+2\bar{b}_2 \bar{D}_4(\bar{d}_1+\bar{D}_2-\bar{D}_3+2\bar{D}_4)-2 \bar{E}_1^2-\bar{c}_1(\bar{C}_2 \bar{D}_2+\bar{b}_2 \bar{E}_1)+2\bar{D}_3 \bar{F}_2 \nonumber\\
&\hskip 1.25cm -2(3\bar{d}_1+4 \bar{D}_2-3 \bar{D}_3+6 \bar{D}_4)\bar{F}_4\bigl)+\bar{b}_1\bigl(\bar{b}_2(8 \bar{b}_3^3-(\bar{c}_1-\bar{C}_2)^2-8 \bar{b}_3 \bar{D}_4) \nonumber\\
&\hskip 1.25cm + 16 \bar{b}_3(\bar{F}_4-\bar{b}_3 \bar{D}_3) \bigl)\bigl) \, .
\end{align*} 
With a similar approach, the other bracket relation in degree-ten expansions of the compact form $\{\bar{\textbf{C}},\bar{\textbf{H}}\}$ reads explicitly: 
\begin{align}
\{\bar{C}_2, \bar{H}_1\}&=\frac{{\rm i}}{16}\bigl(2 \bar{C}_2\bigl((12 \bar{b}_3^2-3\bar{D}_2-4 \bar{D}_4)\bar{C}_2+3 \bar{c}_1 \bar{D}_2-24 \bar{b}_3 \bar{E}_1\bigl)+8(\bar{D}_2+4 \bar{D}_3)\bar{F}_2 \nonumber\\
& \hskip 0.9cm +\bar{b}_1\bigl(\bar{b}_2(\bar{c}_1-\bar{C}_2)^2-8 \bar{D}_3(\bar{b}_3^2+\bar{D}_4)+4 (\bar{c}_1 - \bar{C}_2)\bar{E}_1+16 \bar{b}_3 \bar{F}_4\bigl) \bigl) \, .
\end{align}

Now, in the compact form $ \{\bar{\textbf{D}},\bar{\textbf{G}}\}$, since $d_1$ is the central element, we will consider the Poisson brackets $\{\bar{D}_2, \bar{G}_1\}, \ldots, \{\bar{D}_4, \bar{G}_2\}$. The gradings in each of these Poisson brackets are given by \begin{align}
        \mathcal{G} \left(\{\bar{D}_2, \bar{G}_1\}\right) = (2,2,6) \tilde{+} (3,1,6) \tilde{+} (4,2,4) \tilde{+}(3,3,4); \nonumber \\
       \mathcal{G} \left(\{\bar{D}_2, \bar{G}_2\}\right) = (1,3,6)  \tilde{+} (2,2,6)\tilde{+} (3,3,4) \tilde{+}(4,2,4); \nonumber\\
      \mathcal{G} \left(\{\bar{D}_3, \bar{G}_1\}\right) = (1,3,6) \tilde{+} (2,2,6) \tilde{+} (3,3,4) \tilde{+}(4,2,4); \nonumber \\
     \mathcal{G} \left(\{\bar{D}_3, \bar{G}_2\}\right) = (0,4,6) \tilde{+} (1,3,6) \tilde{+} (2 ,4,4) \tilde{+}(1,5,4); \nonumber \\
     \mathcal{G} \left(\{\bar{D}_4, \bar{G}_1\}\right) =  (1,1,8)\tilde{+}(2,0,8)  \tilde{+} (3,1,6) \tilde{+}(2,2,6); \nonumber \\
     \mathcal{G} \left(\{\bar{D}_4, \bar{G}_2\}\right) = (0,2,8) \tilde{+} (1,1,8) \tilde{+} (2,2,6) \tilde{+}(1,3,6).
\end{align}  
We take as a guiding example the bracket $\{\bar{D}_2, \bar{G}_1\}$. The predicted polynomials for each homogeneous grading are as follows \begin{align*}
    (2,2,6) := &\, \left\{\bar{D}_3 \bar{F}_2, \text{ } \bar{E}_1^2, \text{ } \bar{b}_1 \bar{b}_3 \bar{F}_4, \text{ } \bar{b}_1 \bar{D}_3 \bar{D}_4, \text{ } \bar{b}_2 \bar{b}_3 \bar{F}_2, \text{ } \bar{b}_3 \bar{C}_2 \bar{E}_1\bar{b}_3 \bar{D}_2^2, \text{ } \bar{C}_2^2 \bar{D}_4, \text{ } \bar{b}_1 \bar{b}_2 \bar{b}_3 \bar{D}_4, \text{ } \bar{b}_3^2 \bar{C}_2^2, \text{ } \bar{b}_1 \bar{b}_2 \bar{b}_3^3, \text{ } \bar{b}_1 \bar{b}_3^2 \bar{D}_3,\right. \\
    &\, \left.\bar{b}_1 \bar{b}_2 \left(C^{(003)}\right)^2, \text{ } \bar{b}_2 \bar{b}_3^2 C^{(202)}, \text{ } C^{(003)} \bar{C}_2 \bar{D}_2, \text{ } C^{(202)} \bar{F}_4, \text{ }  \bar{b}_2 C^{(202)} \bar{D}_4, \text{ } \bar{b}_3 C^{(202)} \bar{D}_3 \right\} ; \\
    (3,1,6) := &\, \left\{\bar{D}_2 \bar{F}_2, \text{ } \bar{b}_1 \bar{D}_2 \bar{D}_4, \text{ } \bar{b}_1 \bar{b}_3^2 \bar{D}_2, \text{ } \bar{b}_1 C^{(003)} \bar{E}_1, \text{ } C^{(003)} \bar{C}_2 C^{(202)}, \text{ } \bar{b}_1 \bar{b}_3 C^{(003)} \bar{C}_2, \text{ } \bar{b}_3 C^{(202)} \bar{D}_2\right\}; \\
    (4,2,4) := &\, \left\{\bar{b}_1^2 \bar{F}_4, \text{ } \bar{b}_1 \bar{b}_2 \bar{F}_2, \text{ } \bar{b}_1 \bar{C}_2 \bar{E}_1, \text{ } \bar{b}_1 \bar{D}_2^2, \text{ } \bar{b}_1^2 \bar{b}_2 \bar{D}_4, \text{ } \bar{b}_1^2 \bar{b}_3 \bar{D}_3, \text{ } \bar{b}_1 \bar{b}_3 \bar{C}_2^2, \text{ } \bar{b}_1^2 \bar{b}_2 \bar{b}_3^2, \text{ } \bar{C}_2^2 C^{(202)}, \text{ } \bar{b}_1 \bar{b}_2 \bar{b}_3 C^{(202)}, \right. \\
  & \,    \left. \bar{b}_2 \left(C^{(202)}\right)^2, \text{ } \bar{b}_1 C^{(202)} \bar{D}_3\right\} ; \\
    (3,3,4) := &\, \left\{\bar{b}_1 \bar{D}_2 \bar{D}_3,\bar{C}_2^2 \bar{D}_2,\bar{b}_1 \bar{b}_2 \bar{b}_3 \bar{D}_2,\bar{b}_1 \bar{b}_2 C^{(003)} \bar{C}_2,\bar{b}_2 C^{(202)} \bar{D}_2\right\}.
\end{align*} The change of generators, i.e. $C^{(003)}$ to $\bar{c}_1$ and $C^{(202)}$ to $\bar{d}_1,$ make all the allowed polynomials in the Poisson bracket $\{\bar{D}_2,\bar{G}_1\}$ be extended  to \begin{align*}
   &\, \bar{d}_1 \bar{F}_4, \text{ }\bar{D}_2 \bar{F}_4,\text{ }\bar{D}_3 \bar{F}_4,\text{ }\bar{D}_4 \bar{F}_4,\text{ }\bar{D}_3 \bar{F}_2,\text{ }\bar{E}_1^2,\bar{b}_1 \bar{b}_3 \bar{F}_4,\text{ }\bar{b}_1 \bar{D}_3 \bar{D}_4,\text{ }\bar{b}_2 \bar{b}_3 \bar{F}_2,\text{ }\bar{b}_2 \bar{d}_1 \bar{D}_4,\text{ }\bar{b}_2 \bar{D}_2 \bar{D}_4,\text{ }\bar{b}_2 \bar{D}_3 \bar{D}_4,\text{ }\bar{b}_2 \bar{D}_4^2,\bar{b}_3 \bar{C}_2 \bar{E}_1,\text{ }\bar{b}_3 \bar{d}_1 \bar{D}_3,\\
    &\, \bar{b}_3 \bar{D}_2 \bar{D}_3,\text{ }\bar{b}_3 \bar{D}_3^2,\text{ }\bar{b}_3 \bar{D}_3 \bar{D}_4,\text{ }\bar{b}_3 \bar{D}_2^2,\text{ }\bar{c}_1 \bar{C}_2 \bar{D}_2,\text{ }\bar{C}_2^2 \bar{D}_2,\text{ }\bar{C}_2^2 \bar{D}_4,\bar{b}_1 \bar{b}_2 \bar{b}_3 \bar{D}_4,\text{ }\bar{b}_1 \bar{b}_2 \bar{c}_1^2,\text{ } \bar{b}_1 \bar{b}_2 \bar{c}_1 \bar{C}_2,\text{ }\bar{b}_1 \bar{b}_2 \bar{C}_2^2,\text{ }\bar{b}_1 \bar{b}_3^2 \bar{D}_3,\text{ }\bar{b}_2 \bar{b}_3^2 \bar{d}_1,\\
    &\, \bar{b}_2 \bar{b}_3^2 \bar{D}_2,\text{ }\bar{b}_2 \bar{b}_3^2 \bar{D}_3,\text{ }\bar{b}_2 \bar{b}_3^2 \bar{D}_4,\text{ }\bar{b}_3^2 \bar{C}_2^2,\text{ }\bar{b}_1 \bar{b}_2 \bar{b}_3^3,\text{ }\bar{D}_2 \bar{F}_2,\text{ }\bar{b}_1 \bar{c}_1 \bar{E}_1,\text{ }\text{ }\bar{b}_1 \bar{C}_2 \bar{E}_1,\text{ }\bar{b}_1 \bar{D}_2 \bar{D}_4, \text{ }\bar{b}_3 \bar{d}_1 \bar{D}_2,\text{ }\bar{b}_3 \bar{D}_2 \bar{D}_4,\text{ }\bar{c}_1 \bar{C}_2 \bar{d}_1,\\
    &\,\text{ }\bar{c}_1 \bar{C}_2 \bar{D}_3,\text{ }\bar{c}_1 \bar{C}_2 \bar{D}_4,\bar{C}_2^2 \bar{d}_1,\text{ }\bar{C}_2^2 \bar{D}_3,\text{ }\bar{b}_1 \bar{b}_3^2 \bar{D}_2, \text{ } \bar{b}_1 \bar{b}_3 \bar{c}_1 \bar{C}_2,\bar{b}_1 \bar{b}_3 \bar{C}_2^2,\text{ }\bar{b}_1^2 \bar{F}_4,\bar{b}_1 \bar{b}_2 \bar{F}_2,\text{ }\bar{b}_1 \bar{d}_1 \bar{D}_3,\text{ }\bar{b}_2 \bar{D}_3^2,\bar{b}_2 \bar{D}_3 \bar{D}_4,\\
    &\, \bar{b}_1 \bar{D}_2 \bar{D}_3,\text{ }\bar{b}_1 \bar{D}_3^2,\text{ }\bar{b}_1 \bar{D}_2^2,\text{ }\bar{b}_2 \bar{d}_1^2,\text{ } \bar{b}_2 \bar{d}_1 \bar{D}_2,\text{ } \bar{b}_2 \bar{d}_1 \bar{D}_3, \text{ }  \bar{b}_2 \bar{d}_1 \bar{D}_4,\text{ }\bar{b}_2 \bar{D}_2^2,2 \bar{b}_2 \bar{D}_2 \bar{D}_3,\text{ } \bar{b}_2 \bar{D}_2 \bar{D}_4,\text{ }\bar{b}_1^2 \bar{b}_2 \bar{D}_4 \\
    &\,\bar{b}_1^2 \bar{b}_3 \bar{D}_3,\text{ }\bar{b}_1 \bar{b}_2 \bar{b}_3 \bar{d}_1,\text{ }\bar{b}_1 \bar{b}_2 \bar{b}_3 \bar{D}_2,\text{ }\bar{b}_1 \bar{b}_2 \bar{b}_3 \bar{D}_3,\text{ }\bar{b}_1^2 \bar{b}_2 \bar{b}_3^2,\text{ }\bar{b}_2 \bar{d}_1 \bar{D}_2,\text{ }\bar{b}_2 \bar{D}_2 \bar{D}_3,\text{ }\bar{b}_1 \bar{b}_2 \bar{c}_1 \bar{C}_2.
\end{align*} 
In this case, the grading method reduces the number of the admissible terms from $169$ to just $71$. Hence, there are $71$ distinct polynomials that could appear in the expansion of the Poisson bracket $\{\bar{D}_2,\bar{G}_1\}$. Then taking into  account the Poisson relations of $\mathfrak{su}^*(4)$, we deduce that 
\begin{align}
    \{\bar{D}_2, \bar{G}_1\}&=\frac{{\rm i}}{8}\bigl(2 \bar{b}_1^2(2 \bar{F}_4-2 \bar{b}_3 \bar{D}_3+\bar{b}_2(\bar{b}_3^2-\bar{D}_4))+2(\bar{c}_1 \bar{C}_2-2 \bar{C}_2^2)(2\bar{d}_1+\bar{D}_2-2 \bar{D}_3)-8 \bar{b}_3^2 \bar{C}_2^2 \nonumber\\ & \hskip 0.7cm +8(\bar{c}_1 \bar{C}_2-3 \bar{C}_2^2)\bar{D}_4+32 \bar{b}_3 \bar{C}_2 \bar{E}_1-16 \bar{E}_1^2+16 \bar{D}_3 \bar{F}_2+8 \bar{b}_2^2(\bar{F}_2-\bar{b}_3(\bar{d}_1+\bar{D}_2-\bar{D}_3+2 \bar{D}_4) \nonumber \\
& \hskip 0.7cm +2 \bar{b}_2(2\bar{d}_1^2+\bar{D}_2^2+16  \bar{D}_2 \bar{D}_4+4 \bar{d}_1(\bar{D}_2+4 \bar{D}_4)-2(\bar{D}_3-2\bar{D}_4)(\bar{D}_3+6 \bar{D}_4)-8 \bar{C}_2 \bar{E}_1 \nonumber \\
& \hskip 0.7cm +4 \bar{b}_3(\bar{C}_2^2-2\bar{F}_2)) -16 (\bar{d}_1+\bar{D}_2-\bar{D}_3+2\bar{D}_4)\bar{F}_4 +\bar{b}_1\bigl(-\bar{D}_2^2+4 \bar{b}_2^2(\bar{b}_3^2-\bar{D}_4)\nonumber\\
&\hskip 0.7cm -4 (\bar{c}_1-2\bar{C}_2)\bar{E}_1+\bar{b}_2(\bar{c}_1 \bar{C}_2-\bar{C}_2^2-8\bar{b}_3 \bar{D}_3-4 \bar{F}_2+8\bar{F}_4)\bigl) \bigl) \,  .
\end{align} 
Eventually, with the same process above, the rest of the additional degree-ten bracket relations are obtained: 
\begin{align}
\{\bar{D}_2, \bar{G}_2\}&=-\frac{{\rm i}}{8}\bigl(8 \bar{b}_3^2 \bar{C}_2^2+2 (\bar{c}_1 \bar{C}_2-2\bar{C}_2^2)(\bar{D}_2-2\bar{D}_3)+\bar{b}_2 \bar{D}_2^2 +8 \bar{C}_2^2 \bar{D}_4+4(\bar{b}_2\bar{c}_1-2\bar{b}_2\bar{C}_2)\bar{E}_1+16 \bar{E}_1^2\nonumber\\
&\hskip 0.7cm +4 \bar{b}_3(\bar{b}_2^2\bigl(\bar{d}_1+\bar{D}_2-\bar{D}_3+2\bar{D}_4)-8 \bar{C}_2 \bar{E}_1\bigl)-4 (\bar{b}_2^2-4\bar{D}_3)\bar{F}_2-16(\bar{d}_1+\bar{D}_2-\bar{D}_3+2\bar{D}_4)\bar{F}_4\nonumber\\
& \hskip 0.7cm +\bar{b}_1\bigl(2 \bar{b}_2^2(\bar{D}_4-\bar{b}_3^2)-4 \bar{D}_3(\bar{D}_3+4\bar{D}_4)+4(\bar{b}_2+4\bar{b}_3)\bar{F}_4+\bar{b}_2(\bar{C}_2^2-\bar{c}_1 \bar{C}_2) \bigl)\bigl) \, , \nonumber\\
\{\bar{D}_3, \bar{G}_1\}&=\frac{{\rm i}}{8}\bigl(\bar{C}_2\bigl(\bar{b}_1 \bar{b}_2(\bar{c}_1-\bar{C}_2)+8 \bar{b}_3^2 \bar{C}_2+2\bar{c}_1\bar{D}_2-8\bar{C}_2 \bar{D}_4\bigl)+16 \bar{E}_1^2-16\bar{b}_3 \bar{C}_2 \bar{E}_1\bigl) \, , \nonumber\\
\{\bar{D}_3, \bar{G}_2\}&=\frac{{\rm i}}{8}\bigl(2 \bar{b}_1\bigl(2 \bar{D}_3^2+\bar{b}_2^2(\bar{b}_3^2-\bar{D}_4)-2\bar{b}_2\bar{F}_4\bigl)-\bar{b}_2\bigl(4 \bar{b}_3(\bar{c}_1-\bar{C}_2)\bar{C}_2+\bar{D}_2^2-4\bar{c}_1 \bar{E}_1 \bigl) \nonumber\\
& \hskip 0.7cm +4 \bar{b}_2^2\bigl(\bar{F}_2-\bar{b}_3(\bar{d}_1+\bar{D}_2-\bar{D}_3+2\bar{D}_4)\bigl)+4 \bigl( (\bar{c}_1 \bar{C}_2 -2\bar{C}_2^2-2\bar{b}_3\bar{D}_2)\bar{D}_3+4\bar{D}_2\bar{F}_4\bigl)\bigl) \, , \nonumber\\
\{\bar{D}_4, \bar{G}_1\}&=\frac{{\rm i}}{4}\bigl(4 \bar{b}_3^2\bar{C}_2^2+(\bar{c}_1 \bar{C}_2-\bar{C}_2^2)\bar{D}_2-4(\bar{b}_2 \bar{d}_1+\bar{b}_2 \bar{D}_2-\bar{b}_2 \bar{D}_3)\bar{D}_4-8 \bar{b}_2 \bar{D}_4^2 +8 \bar{E}_1^2+4(\bar{D}_2-\bar{D}_3)\bar{F}_2 \nonumber\\
&\hskip 0.7cm -\bar{b}_1(\bar{c}_1-\bar{C}_2)(\bar{b}_3\bar{C}_2-2\bar{E}_1) -2 \bar{b}_3\bigl(\bar{D}_2(\bar{d}_1+\bar{D}_2-\bar{D}_3+2\bar{D}_4)+6\bar{C}_2 \bar{E}_1-2\bar{b}_2 \bar{F}_2 \bigl) \nonumber\\
&\hskip 0.7cm +4(\bar{d}_1+\bar{D}_2-\bar{D}_3+2\bar{D}_4)\bar{F}_4\bigl) \, , \nonumber\\
\{\bar{D}_4, \bar{G}_2\}&= \frac{1}{2}(\{\bar{D}_3,\bar{G}_1\}+\{\bar{D}_3,\bar{G}_2\}-\{\bar{D}_2, \bar{G}_2\})\, .
\end{align} 
As last case for the degree-ten expansion, we consider the grading in the compact form $\{\bar{\textbf{E}} ,\bar{\textbf{F}}\}$ for which we have  \begin{align}
    \mathcal{G} (\{\bar{E}_1, \bar{F}_1\}) = & \, (2,2,6) \tilde{+} (3,1,6) \tilde{+}   (3,3,4) \tilde{+} (4,2,4) ; \nonumber \\
     \mathcal{G} (\{\bar{E}_1, \bar{F}_2\}) = & \, (2,1,7) \tilde{+} (3,0,7) \tilde{+} (4,1,5) \tilde{+} (3,2,5); \nonumber \\
      \mathcal{G} (\{\bar{E}_1, \bar{F}_3\}) = & \, (1,3,6) \tilde{+} (2,2,6) \tilde{+} (3,3,4) \tilde{+} (2,4,4); \nonumber  \\
       \mathcal{G} (\{\bar{E}_1, \bar{F}_4\}) = & \, (0,3,7) \tilde{+} (1,2,7) \tilde{+} (2,3,5) \tilde{+} (1,4,5).  
\end{align} 
 We consider the derivation of permissible monomials in the context of the Poisson bracket $\{\bar{E}_1,\bar{F}_1\}$, the remaining cases being analogous. A straightforward but routine calculation reveals that the admissible generators associated with each of the homogeneous gradings are given by 
\begin{align*}
    (2,2,6) :=&\, \left\{ \bar{d}_1 \bar{F}_4,\bar{D}_3 \bar{F}_2,\bar{E}_1^2, \text{ } \bar{b}_1 \bar{b}_3 \bar{F}_4, \text{ } \bar{b}_1 \bar{D}_3 \bar{D}_4, \text{ } \bar{b}_2 \bar{b}_3 \bar{F}_2, \text{ } \bar{b}_3 \bar{C}_2 \bar{E}_1\bar{b}_3 \bar{D}_2^2, \text{ } \bar{C}_2^2 \bar{D}_4, \text{ } \bar{b}_1 \bar{b}_2 \bar{b}_3 \bar{D}_4, \text{ } \bar{b}_3^2 \bar{C}_2^2, \text{ } \bar{b}_1 \bar{b}_2 \bar{b}_3^3, \text{ } \bar{b}_1 \bar{b}_3^2 \bar{D}_3,\right. \\
    &\, \left.\bar{b}_1 \bar{b}_2 \left(C^{(003)}\right)^2, \text{ } \bar{b}_2 \bar{b}_3^2 C^{(202)}, \text{ } C^{(003)} \bar{C}_2 \bar{D}_2, \text{ } C^{(202)} \bar{F}_4, \text{ }  \bar{b}_2 C^{(202)} \bar{D}_4, \text{ } \bar{b}_3 C^{(202)} \bar{D}_3 \right\} ; \\
    (3,1,6) := &\, \left\{\bar{D}_2 \bar{F}_2, \text{ } \bar{b}_1 \bar{D}_2 \bar{D}_4, \text{ } \bar{b}_1 \bar{b}_3^2 \bar{D}_2, \text{ } \bar{b}_1 C^{(003)} \bar{E}_1, \text{ } C^{(003)} \bar{C}_2 C^{(202)}, \text{ } \bar{b}_1 \bar{b}_3 C^{(003)} \bar{C}_2, \text{ } \bar{b}_3 C^{(202)} \bar{D}_2\right\} \\   
    (3,3,4) :=&\, \left\{\bar{b}_1 \bar{D}_2 \bar{D}_3, \text{ } \bar{C}_2^2 \bar{D}_2, \text{ } \bar{b}_1 \bar{b}_2 \bar{b}_3 \bar{D}_2, \text{ } \bar{b}_1 \bar{b}_2 C^{(003)} \bar{C}_2, \text{ } \bar{b}_2 C^{(202)} \bar{D}_2\right\}; \\ 
      (4,2,4) :=&\, \left\{\bar{b}_1^2 \bar{F}_4, \text{ } \bar{b}_1 \bar{b}_2 \bar{F}_2, \text{ } \bar{b}_1 \bar{C}_2 \bar{E}_1, \text{ } \bar{b}_1 \bar{d}_1 \bar{D}_3, \text{ } \bar{b}_1 \bar{D}_2^2, \text{ } \bar{b}_2 \bar{d}_1^2, \text{ } \bar{C}_2^2 \bar{d}_1, \text{ } \bar{b}_1^2 \bar{b}_2 \bar{D}_4,\bar{b}_1^2 \bar{b}_3 \bar{D}_3, \text{ } \bar{b}_1 \bar{b}_2 \bar{b}_3 \bar{d}_1, \text{ } \bar{b}_1 \bar{b}_3 \bar{C}_2^2, \text{ } \bar{b}_1^2 \bar{b}_2 \bar{b}_3^2\right\} .
\end{align*} 
 After considering the change of generators, there are $71$ predicted generators such that $\{\bar{E}_1,\bar{F}_1\} = \Gamma_{56}^1 \bar{d}_1\bar{F}_4 +\ldots +\Gamma_{56}^{71} \bar{b}_1^2\bar{b}_2 \bar{b}_3^2.$ Here $\Gamma_{56}^1,\ldots,\Gamma_{56}^{71}$ are constant coefficients that,  once determined, lead us to the following result:  \begin{align}
 \{\bar{E}_1, \bar{F}_1\}&=\frac{{\rm i}}{16}\bigl(16 \bar{b}_3^2 \bar{C}_2^2+4(\bar{c}_1 \bar{C}_2-2\bar{C}_2^2)\bar{d}_1+2(4 \bar{c}_1 \bar{C}_2-5 \bar{C}_2^2)\bar{D}_2-4(\bar{c}_1 \bar{C}_2-2\bar{C}_2^2)\bar{D}_3-16 \bar{b}_3 \bar{C}_2 \bar{E}_1 \nonumber\\
& \hskip 0.7cm + 8(\bar{c}_1 \bar{C}_2-4\bar{C}_2^2)\bar{D}_4+2\bar{b}_2\bigl(2 \bar{d}_1^2+4 \bar{d}_1 \bar{D}_2+\bar{D}_2^2-2\bar{D}_3^2+8(\bar{d}_1+\bar{D}_2)\bar{D}_4+8 \bar{D}_4^2 \nonumber\\
& \hskip 0.7cm +4 \bar{C}_2(\bar{b}_3 \bar{C}_2-2\bar{E}_1)\bigl)-16 \bar{E}_1^2+8 \bar{D}_2 \bar{F}_2+8 \bar{b}_2^2\bigl(\bar{F}_2-\bar{b}_3(\bar{d}_1+\bar{D}_2-\bar{D}_3+2\bar{D}_4)\bigl)\nonumber\\ 
& \hskip 0.7cm +32 (\bar{d}_1+\bar{D}_2-\bar{D}_3+2\bar{D}_4)\bar{F}_4+\bar{b}_1\bigl(4 \bar{b}_2^2(\bar{b}_3^2-\bar{D}_4)+2\bar{D}_3(-4\bar{b}_3^2+2\bar{d}_1+3\bar{D}_2+8\bar{D}_4)\nonumber\\
& \hskip 0.7cm +\bar{b}_2(\bar{c}_1^2-\bar{C}_2^2-8 \bar{b}_3 \bar{D}_3-4 \bar{F}_2+8\bar{F}_4) \bigl) \bigl) \, . 
\end{align}

We then list the rest of the expansions:\begin{align} 
\{\bar{E}_1, \bar{F}_2\}&=\frac{{\rm i}}{4}\bigl( (2\bar{d}_1+\bar{D}_2+8\bar{D}_4-4\bar{b}_3^2-2\bar{b}_2\bar{b}_3)\bar{F}_1-(4 \bar{b}_1 \bar{b}_3+\bar{D}_2)\bar{F}_3+2(\bar{c}_1-\bar{C}_2)\bar{G}_1-2\bar{C}_2\bar{G}_2\nonumber\\
& \hskip 0.7cm +2\bar{b}_2\bar{H}_1+4\bar{b}_1 \bar{H}_2\bigl) \, , \nonumber\\
 \{\bar{E}_1, \bar{F}_3\}&=\frac{{\rm i}}{32}\bigl(8 \bar{b}_2^2 \bar{b}_3(\bar{d}_1+\bar{D}_2-\bar{D}_3+2\bar{D}_4)-2\bar{b}_2\bigl(4 \bar{b}_3 \bar{C}_2^2-\bar{D}_2(2 \bar{d}_1+3\bar{D}_2-2\bar{D}_3) -16 \bar{D}_4^2-8 \bar{C}_2 \bar{E}_1\nonumber\\
& \hskip 0.7cm -4(2\bar{d}_1+3\bar{D}_2-2\bar{D}_3)\bar{D}_4 \bigl) +8\bar{b}_3^2 \bar{C}_2^2+4\bar{b}_3 \bar{D}_2^2+4 \bar{C}_2(\bar{c}_1\bar{D}_2+2(\bar{c}_1-2\bar{C}_2)\bar{D}_3-6\bar{C}_2 \bar{D}_4)\nonumber\\
&  \hskip 0.7cm -16 \bar{E}_1^2-8(\bar{b}_2^2+2\bar{b}_2\bar{b}_3-6\bar{D}_3)\bar{F}_2-16(\bar{d}_1-\bar{D}_3+2\bar{D}_4)\bar{F}_4-\bar{b}_1\bigl( -8 \bar{D}_3(2 \bar{b}_3^2+\bar{D}_3)\nonumber\\
&  \hskip 0.7cm+4\bar{b}_2^2(\bar{b}_3^2-\bar{D}_4)+16 \bar{b}_3 \bar{F}_4+\bar{b}_2(8 \bar{b}_3^3-(\bar{c}_1-\bar{C}_2)(\bar{c}_1+3\bar{C}_2)-8\bar{b}_3(\bar{D}_3+\bar{D}_4)+16 \bar{F}_4)\bigl)\bigl) \,,\nonumber \\
\{\bar{E}_1, \bar{F}_4\}&=-\frac{{\rm i}}{4}\bigl((4 \bar{b}_3^2+\bar{D}_2-2\bar{D}_3-4\bar{D}_4)\bar{F}_3-2(\bar{c}_1-\bar{C}_2)\bar{G}_2+4 \bar{b}_2(\bar{b}_3 \bar{F}_1-\bar{H}_1) \bigl) \, .
\end{align}

\vskip 0.5cm

All the expansions above conclude the degree-ten Poisson brackets. We now look at the expansions for all the degree-eleven brackets.
\subsection{Expansions in the degree $11$ brackets}

In the context of degree eleven brackets, the corresponding compact forms can be identified as $  \{\bar{\textbf{F}},\bar{\textbf{F}}\} $, $  \{\bar{\textbf{C}},\bar{\textbf{I}}\} $, $\{\bar{\textbf{D}},\bar{\textbf{H}}\}$, and $\{\bar{\textbf{E}},\bar{\textbf{G}}\}$. Our initial focus is to compute all the expansions within the compact form $  \{\bar{\textbf{F}},\bar{\textbf{F}}\} $, which includes the Poisson brackets $\{\bar{F}_1, \bar{F}_2\},\{\bar{F}_1, \bar{F}_3\},\{\bar{F}_1, \bar{F}_4\},\{\bar{F}_2, \bar{F}_3\},\{\bar{F}_2, \bar{F}_4\}$ and $ \{\bar{F}_3, \bar{F}_4\}.$ Following the same procedure above, it becomes evident that the gradings applicable to all these brackets are 
 \begin{align}
       \mathcal{G} (\{\bar{F}_1, \bar{F}_2\}) = & \, (3,1,7) \tilde{+} (4,0,7) \tilde{+} (5,1,5) \tilde{+} (4,2,5) ; \nonumber \\
     \mathcal{G} (\{\bar{F}_1, \bar{F}_3\}) = & \, (2,3,6) \tilde{+} (3,2,6) \tilde{+} (4,3,4) \tilde{+} (3,4,4);  \nonumber \\
      \mathcal{G} (\{\bar{F}_1, \bar{F}_4\}) = & \, (1,3,7) \tilde{+} (2,2,7) \tilde{+} (3,3,5) \tilde{+} (2,4,5);  \nonumber \\
       \mathcal{G} (\{\bar{F}_2, \bar{F}_3\}) = & \, (2,2,7) \tilde{+} (3,1,7) \tilde{+} (4,2,5) \tilde{+} (3,3,5); \nonumber \\
         \mathcal{G} (\{\bar{F}_2, \bar{F}_4\}) = & \, (1,2,8) \tilde{+} (2,1,8) \tilde{+} (3,2,6) \tilde{+} (2,3,6) ;  \nonumber \\
     \mathcal{G} (\{\bar{F}_3, \bar{F}_4\}) = & \, (0,4,7) \tilde{+} (1,3,7) \tilde{+} (2,4,5) \tilde{+} (1,5,5). 
 \end{align} The allowed polynomials for each homogeneous grading in $\{\bar{F}_1,\bar{F}_2\}$,  which we consider as the guiding example, are \begin{align*}
     (3,1,7) = &\, \left\{\bar{E}_1 \bar{F}_2, \text{ } \bar{b}_1 \bar{D}_4 \bar{E}_1, \text{ } \bar{b}_3 \bar{C}_2 \bar{F}_2, \text{ } \bar{b}_1 \bar{b}_3^3 \bar{C}_2, \text{ } \bar{b}_1 \bar{b}_3^2 \bar{E}_1, \text{ } \bar{b}_3 C^{(202)} \bar{E}_1, \text{ } \bar{b}_1 \bar{b}_3 \bar{C}_2 \bar{D}_4, \text{ } C^{(003)} C^{(202)} \bar{D}_2, \text{ }  \bar{C}_2 C^{(202)} \bar{D}_4, \right. \\
     & \,\left.\bar{b}_1 \bar{b}_3 C^{(003)} \bar{D}_2, \text{ } \bar{b}_1 \left(C^{(003)}\right)^2 \bar{C}_2, \text{ } \bar{b}_3^2 \bar{C}_2 C^{(202)} \right\}; \\
     (4,0,7)= &\, \left\{\bar{b}_1 C^{(003)} \bar{F}_2, \text{ } C^{(003)} \left(C^{(202)}\right)^2, \text{ } \bar{b}_1^2 C^{(003)} \bar{D}_4, \text{ } \bar{b}_1 \bar{b}_3 C^{(003)} C^{(202)}, \text{ } \bar{b}_1^2 \bar{b}_3^2 C^{(003)} \right\} \\   
     (5,1,5)= &\, \left\{ \bar{b}_1 \bar{C}_2 \bar{F}_2, \text{ }  \bar{b}_1^2 \bar{b}_3^2 \bar{C}_2, \text{ } \bar{b}_1^2 \bar{C}_2 \bar{D}_4, \text{ } \bar{b}_1 C^{(202)} \bar{E}_1, \text{ } \bar{C}_2 \left(C^{(202)}\right)^2, \text{ } \bar{b}_1^2 \bar{b}_3 \bar{E}_1, \text{ } \bar{b}_1^2 C^{(003)} \bar{D}_2, \text{ } \bar{b}_1 \bar{b}_3 \bar{C}_2 C^{(202)}\right\} ;\\
     (4,2,5) = &\, \left\{\bar{b}_1 \bar{D}_2 \bar{E}_1, \text{ } \bar{C}_2 C^{(202)} \bar{D}_2, \text{ } \bar{b}_1^2 C^{(003)} \bar{D}_3, \text{ } \bar{b}_1 \bar{b}_2 C^{(003)} C^{(202)}, \text{ } \bar{b}_1 \bar{b}_3 \bar{C}_2 \bar{D}_2, \text{ } \bar{b}_1 C^{(003)} \bar{C}_2^2, \text{ } \bar{b}_1^2 \bar{b}_2 \bar{b}_3 C^{(003)} \right\} .
 \end{align*} 
 Taking into account the usual change of the generators from $C^{(003)}$ and $C^{(202)}$ to $\bar{c}_1$ and $\bar{d}_1$,  the previous polynomials are reformulated as 
 \begin{align*}
  &\,    \bar{E}_1 \bar{F}_2, \text{ }\bar{b}_1 \bar{D}_4 \bar{E}_1,\text{ }\bar{b}_3 \bar{C}_2 \bar{F}_2,\bar{b}_3 \bar{d}_1 \bar{E}_1,\text{ }\bar{b}_3 \bar{D}_2 \bar{E}_1,\text{ }\bar{b}_3 \bar{D}_3 \bar{E}_1,\text{ }\bar{b}_3 \bar{D}_4 \bar{E}_1,\text{ }\bar{c}_1 \bar{d}_1 \bar{D}_2,\bar{c}_1 \bar{D}_2^2,\text{ }\bar{c}_1 \bar{D}_2 \bar{D}_3,\bar{c}_1 \bar{D}_2 \bar{D}_4, \\
     &\, \bar{C}_2 \bar{d}_1 \bar{D}_2,\text{ }\bar{C}_2 \bar{D}_2^2,\bar{C}_2 \bar{D}_2 \bar{D}_3, \text{ }\bar{C}_2 \bar{D}_2 \bar{D}_4,\text{ }\bar{C}_2 \bar{d}_1 \bar{D}_4, \text{ }\bar{C}_2 \bar{D}_3 \bar{D}_4, \text{ }\bar{C}_2 \bar{D}_4^2,\text{ }\bar{b}_1 \bar{b}_3^2 \bar{E}_1, \text{ }\bar{b}_1 \bar{b}_3 \bar{c}_1 \bar{D}_2, \text{ }\bar{b}_1 \bar{b}_3 \bar{C}_2 \bar{D}_2,   \\
     &\, \bar{b}_1 \bar{c}_1^2 \bar{C}_2,\text{ }  \bar{b}_1 \bar{c}_1 \bar{C}_2^2,\text{ }\bar{b}_1 \bar{C}_2^3, \text{ }\bar{b}_3^2 \bar{C}_2 \bar{d}_1, \text{ }\bar{b}_3^2 \bar{C}_2 \bar{D}_2, \text{ }\bar{b}_3^2 \bar{C}_2 \bar{D}_3, \text{ }\bar{b}_3^2 \bar{C}_2 \bar{D}_4, \text{ }\bar{b}_1 \bar{b}_3^3 \bar{C}_2, \text{ }\bar{b}_1 \bar{c}_1 \bar{F}_2, \text{ }\bar{b}_1 \bar{C}_2 \bar{F}_2,   \\
     &\, \bar{c}_1 \bar{d}_1^2,\text{ }  \bar{c}_1 \bar{d}_1 \bar{D}_2, \text{ }  \bar{c}_1 \bar{d}_1 \bar{D}_3, \text{ }  \bar{c}_1 \bar{d}_1 \bar{D}_4, \text{ }  \bar{c}_1 \bar{D}_2 \bar{D}_3,\text{ }  \bar{c}_1 \bar{D}_2 \bar{D}_4,\text{ }\bar{c}_1 \bar{D}_3^2,\text{  }  \bar{c}_1 \bar{D}_3 \bar{D}_4, \text{ }\bar{c}_1 \bar{D}_4^2, \text{ }\bar{C}_2 \bar{d}_1^2,  \\
     &\,   \bar{C}_2 \bar{d}_1 \bar{D}_3, \text{ }  \bar{C}_2 \bar{d}_1 \bar{D}_4, \text{ }  \bar{C}_2 \bar{D}_2 \bar{D}_3, \text{ }  \bar{C}_2 \bar{D}_2 \bar{D}_4, \text{ }\bar{C}_2 \bar{D}_3^2, \text{ }  \bar{C}_2 \bar{D}_3 \bar{D}_4, \text{ }\bar{b}_1^2 \bar{c}_1 \bar{D}_4, \text{ }\bar{b}_1^2 \bar{C}_2 \bar{D}_4, \\
    &\, \bar{b}_1 \bar{b}_3 \bar{C}_2 \bar{d}_1, \text{ }\bar{b}_1 \bar{b}_3 \bar{C}_2 \bar{D}_3, \text{ }\bar{b}_1^2 \bar{b}_3^2 \bar{c}_1, \text{ }\bar{b}_1^2 \bar{b}_3^2 \bar{C}_2, \text{ }\bar{b}_1 \bar{d}_1 \bar{E}_1, \text{ }\bar{b}_1 \bar{D}_2 \bar{E}_1, \text{ }\bar{b}_1 \bar{D}_3 \bar{E}_1, \text{ }\bar{b}_1^2 \bar{b}_3 \bar{E}_1, \text{ }\bar{b}_1^2 \bar{c}_1 \bar{D}_2, \\
     &\, \bar{b}_1^2 \bar{C}_2 \bar{D}_2, \text{ }\bar{b}_1^2 \bar{c}_1 \bar{D}_3, \text{ }\bar{b}_1^2 \bar{C}_2 \bar{D}_3, \text{ }\bar{b}_1 \bar{b}_2 \bar{c}_1 \bar{d}_1, \text{ }\bar{b}_1 \bar{b}_2 \bar{c}_1 \bar{D}_2, \text{ }\bar{b}_1 \bar{b}_2 \bar{c}_1 \bar{D}_3, \text{ }\bar{b}_1 \bar{b}_2 \bar{c}_1 \bar{D}_4, \text{ }   \bar{C}_2 \bar{d}_1 \bar{D}_2,\\
    &\, \bar{b}_1 \bar{b}_2 \bar{C}_2 \bar{d}_1, \text{ }\bar{b}_1 \bar{b}_2 \bar{C}_2 \bar{D}_2, \text{ }\bar{b}_1 \bar{b}_2 \bar{C}_2 \bar{D}_3, \text{ }\bar{b}_1 \bar{b}_2 \bar{C}_2 \bar{D}_4, \text{ }\bar{b}_1 \bar{c}_1 \bar{C}_2^2, \text{ }\bar{b}_1^2 \bar{b}_2 \bar{b}_3 \bar{c}_1, \text{ }\bar{b}_1^2 \bar{b}_2 \bar{b}_3 \bar{C}_2 \\
    & \, \text{ }\bar{b}_1 \bar{b}_3 \bar{c}_1 \bar{D}_4,\text{ }\bar{b}_1 \bar{b}_3 \bar{c}_1 \bar{D}_3,\text{ }\bar{b}_1 \bar{b}_3 \bar{c}_1 \bar{d}_1 .
 \end{align*} In other words, there are some coefficients $\Gamma_{66}^1,\ldots,\Gamma_{66}^{77}$ such that $$\{\bar{F}_1,\bar{F}_2\} =  \Gamma_{66}^1 \bar{E}_1 \bar{F}_2 + \ldots +  \Gamma_{66}^{76}  \bar{b}_1 \bar{b}_3 \bar{c}_1 \bar{D}_3 + \Gamma_{66}^{77}  \bar{b}_1 \bar{b}_3 \bar{c}_1 \bar{d}_1 .$$  Through meticulous determination of these coefficients, we infer that \begin{align*}
     \{\bar{F}_1, \bar{F}_2\}&=\frac{{\rm i}}{8}\bigl( 2(\bar{d}_1+\bar{D}_2-\bar{D}_3+2\bar{D}_4)(2\bar{c}_1\bar{D}_2-2\bar{C}_2\bar{D}_2-16 \bar{b}_3\bar{E}_1)+16(\bar{b}_3\bar{C}_2+\bar{E}_1)\bar{F}_2 \nonumber \\
& \hskip 0.7 cm +\bar{b}_1 \bigl(\bar{c}_1^2\bar{C}_2+\bar{C}_2^3+2\bar{b}_2\bar{C}_2(\bar{d}_1+\bar{D}_2-\bar{D}_3+2\bar{D}_4)-2 \bar{c}_1\bigl(\bar{C}_2^2+\bar{b}_2(\bar{d}_1+\bar{D}_2-\bar{D}_3+2\bar{D}_4)\bigl) \nonumber\\
& \hskip 0.7 cm +2 (4 \bar{b}_3^2-2\bar{d}_1-3\bar{D}_2+2\bar{D}_3-8\bar{D}_4)\bar{E}_1+4 \bar{C}_2 \bar{F}_2\bigl) \bigl) \, .
 \end{align*} 
 
Following the same approach, we deduce that the expansion of the rest of the brackets are given by 
\begin{align}
\{\bar{F}_1, \bar{F}_3\}&=\frac{{\rm i}}{4}\bigl( \bar{b}_2 \bar{C}_2 \bar{F}_1-4\bar{E}_1(\bar{F}_1+2\bar{F}_3)-2\bar{D}_2 \bar{G}_1+\bar{b}_1 \bar{b}_2(\bar{G}_1-\bar{G}_2)+2(2\bar{d}_1+5 \bar{D}_2-2\bar{D}_3+4\bar{D}_4)\bar{G}_2\nonumber \\
& \hskip 0.7 cm +4 \bar{C}_2(\bar{H}_2+\bar{b}_3 \bar{F}_3)-\bar{b}_1(8\bar{I}_1+\bar{c}_1\bar{F}_3)   \bigl) \, ,\nonumber \\
\{\bar{F}_1, \bar{F}_4\}&=-\frac{{\rm i}}{16}\bigl(\bar{b}_1 \bar{b}_2(4\bar{b}_3^2\bar{C}_2+\bar{c}_1 \bar{D}_2-\bar{C}_2(\bar{D}_2+4\bar{D}_4)-8\bar{b}_3 \bar{E}_1) +2\bar{b}_2(\bar{c}_1-\bar{C}_2)\bar{C}_2^2-2\bar{c}_1\bar{D}_2^2 \nonumber\\
 & \hskip 0.7 cm +2 \bar{C}_2 \bigl( \bar{D}_2^2-4\bar{b}_3^2 \bar{D}_2+2(2\bar{d}_1+3\bar{D}_2)\bar{D}_3-4\bar{D}_3^2+4(\bar{D}_2+2\bar{D}_3)\bar{D}_4+4(\bar{c}_1-3\bar{C}_2)\bar{E}_1\bigl) \nonumber\\
 & \hskip 0.7 cm +16 \bar{b}_3\bar{D}_2\bar{E}_1+8\bar{b}_2(\bar{d}_1+\bar{D}_2-\bar{D}_3+2\bar{D}_4)\bar{E}_1+8\bar{b}_1 \bar{C}_2 \bar{F}_4\bigl) \, , \nonumber
 \end{align} and 
 \begin{align} 
\{\bar{F}_2, \bar{F}_3\}&=\frac{{\rm i}}{16}\bigl( 2 \bar{D}_2\bigl(-4 \bar{b}_3^2 \bar{C}_2-\bar{c}_1 \bar{D}_2+2\bar{C}_2(\bar{d}_1+2\bar{D}_2-\bar{D}_3+4 \bar{D}_4)\bigl)+8(\bar{c}_1-\bar{C}_2)\bar{C}_2\bar{E}_1 \nonumber \\
 & \hskip 0.7 cm +8 \bar{b}_3\bigl( -\bar{C}_2^3+\bar{b}_2 \bar{C}_2(\bar{d}_1+\bar{D}_2-\bar{D}_3+2\bar{D}_4)+2\bar{D}_2 \bar{E}_1\bigl)+\bar{b}_1 \bigl(2 \bar{c}_1 \bar{C}_2^2-2\bar{C}_2^3+\bar{b}_2(\bar{c}_1-\bar{C}_2)\bar{D}_2 \nonumber\\
 & \hskip 0.7 cm +8 \bar{b}_3 \bar{C}_2 \bar{D}_3+8(\bar{D}_3-\bar{b}_2 \bar{b}_3)\bar{E}_1-8\bar{C}_2 \bar{F}_4   \bigl)\bigl) \, ,\nonumber\\
\{\bar{F}_2, \bar{F}_4\}&=\frac{{\rm i}}{4}\bigl( 4 \bar{b}_3 \bar{C}_2 \bar{F}_3+4(\bar{d}_1+\bar{D}_2-\bar{D}_3+2\bar{D}_4)\bar{G}_2-4 \bar{E}_1(2\bar{F}_1-\bar{F}_3)-\bar{b}_1(\bar{c}_1-\bar{C}_2)\bar{F}_3-4\bar{C}_2 \bar{H}_2\bigl) \, ,\nonumber\\
\{\bar{F}_3, \bar{F}_4\}&=\frac{{\rm i}}{8}\bigl( 2 \bar{D}_3((2\bar{c}_1-3\bar{C}_2)\bar{D}_2-16 \bar{b}_3 \bar{E}_1)+16 (\bar{E}_1+\bar{b}_3 \bar{C}_2)\bar{F}_4+\bar{b}_2\bigl(\bar{c}_1^2 \bar{C}_2+\bar{C}_2^3+2\bar{b}_1\bar{C}_2\bar{D}_3\nonumber\\
 & \hskip 0.7 cm -2 \bar{c}_1(\bar{C}_2^2+\bar{b}_1\bar{D}_3)+2(4\bar{b}_3^2-\bar{D}_2-2\bar{D}_3-4\bar{D}_4)\bar{E}_1+4\bar{C}_2 \bar{F}_4 \bigl)\bigl) \, ,\nonumber\\
 & \hskip 0.7 cm+8(\bar{D}_4-\bar{b}_3^2)\bar{E}_1\bigl)+24 \bar{E}_1 \bar{F}_4-8 \bar{b}_3(3 \bar{D}_3\bar{E}_1+\bar{C}_2\bar{F}_4) \, ,
 \end{align} 

The compact form $\{\bar{\textbf{C}},\bar{\textbf{I}}\}$ contains only two brackets, as $\bar{c}_1$ is a central element. A direct computation shows that 
  \begin{align} 
 \{\bar{C}_2, \bar{I}_1\}&=\frac{{\rm i}}{8}\bigl(\bigl( 3\bar{c}_1 \bar{D}_2+\bar{C}_2(12 \bar{b}_3^2-3\bar{D}_2-4\bar{D}_4)\bigl)\bar{D}_3-\bar{b}_2\bigl( \bar{b}_3( \bar{c}_1 \bar{D}_2-\bar{C}_2(\bar{D}_2+4\bar{D}_4-4\bar{b}_3^2))\nonumber \\
  &  \hskip 0.7 cm+8(\bar{D}_4-\bar{b}_3^2)\bar{E}_1\bigl)+24 \bar{E}_1 \bar{F}_4-8 \bar{b}_3(3 \bar{D}_3\bar{E}_1+\bar{C}_2\bar{F}_4) \, ,\nonumber\\
\{\bar{C}_2, \bar{I}_2\}&=\frac{{\rm i}}{8}\bigl((\bar{d}_1+\bar{D}_2-\bar{D}_3+2\bar{D}_4)(12 \bar{b}_3^2 \bar{C}_2+3(\bar{c}_1-\bar{C}_2)\bar{D}_2-4\bar{C}_2 \bar{D}_4-24 \bar{b}_3 \bar{E}_1)+8(\bar{b}_3 \bar{C}_2-3 \bar{E}_1)\bar{F}_2 \nonumber\\
& \hskip 0.7 cm -\bar{b}_1 (\bar{b}_3 \bar{c}_1 \bar{D}_2-\bar{b}_3 \bar{C}_2(\bar{D}_2+4\bar{D}_4-4\bar{b}_3^2)+8(\bar{D}_4-\bar{b}_3^2)\bar{E}_1) \bigl) \,.
 \end{align}

We proceed to examine the gradings from the compact form $\{\bar{\textbf{D}},\bar{\textbf{H}}\}$. In particular, $\bar{d}_1$ is a central element, and thus our analysis begins with $\bar{D}_2$. For the specific gradings for each respective term, we have
\begin{align}
\mathcal{G} \left( \{\bar{D}_2,\bar{H}_1\}\right)= &\, (2,2,7) \tilde{+}  (3,1,7) \tilde{+} (4,2,5) \tilde{+} (3,3,5) ; \nonumber \\
 \mathcal{G} \left( \{\bar{D}_2,\bar{H}_2\}\right)= &\, (1,3,7) \tilde{+}  (2,2,7) \tilde{+}  (3,3,5) \tilde{+} (2,4,5) ; \nonumber \\
 \mathcal{G} \left( \{\bar{D}_3,\bar{H}_1\}\right)= &\, (1,3,7) \tilde{+}  (2,2,7) \tilde{+}  (3,3,5) \tilde{+} (2,4,5)  ; \\
 \mathcal{G} \left( \{\bar{D}_3,\bar{H}_2\}\right)= &\, (0,4,7) \tilde{+} (1,3,7) \tilde{+} (2,4,5) \tilde{+} (1,5,5) ; \nonumber \\
 \mathcal{G} \left( \{\bar{D}_4,\bar{H}_1\}\right)= &\, (1,1,9) \tilde{+}  (2,0,9) \tilde{+} (3,1,7) \tilde{+} (2,2,7) ; \nonumber \\
 \mathcal{G} \left( \{\bar{D}_4,\bar{H}_2\}\right)= &\, (0,2,9) \tilde{+}  (1,1,9) \tilde{+}  (2,2,7) \tilde{+} (1,3,7) . \nonumber
\end{align} 
Now, we present all the predicted polynomials from the homogeneous grading associated with the first bracket. They are \begin{align*}
     (2,2,7) =&\, \left\{\bar{E}_1 \bar{F}_4,\bar{b}_2 \bar{D}_4 \bar{E}_1,\bar{b}_2 \bar{b}_3 \bar{C}_2 \bar{D}_4,\bar{b}_3 \bar{C}_2 \bar{F}_4,\bar{b}_3 \bar{D}_3 \bar{E}_1,\bar{C}_2 \bar{D}_3 \bar{D}_4,\bar{b}_3^2 \bar{C}_2 \bar{D}_3,\bar{b}_2 \bar{b}_3^3 \bar{C}_2,\bar{b}_2 \bar{b}_3^2 \bar{E}_1,C^{(003)} \bar{D}_2 \bar{D}_3,\bar{b}_2 \bar{b}_3 C^{(003)} \bar{D}_2,\right. \\
     & \, \left.\bar{b}_2 \left(C^{(003)}\right)^2 \bar{C}_2\right\}; \\ 
     (3,1,7)=&\, \left\{\bar{C}_2 \bar{D}_2 \bar{D}_4,\bar{b}_3 \bar{D}_2 \bar{E}_1,\bar{b}_3^2 \bar{C}_2 \bar{D}_2,C^{(003)} C^{(202)} \bar{D}_3,\bar{b}_1 \bar{b}_2 C^{(003)} \bar{D}_4,\bar{b}_1 \bar{b}_3 C^{(003)} \bar{D}_3,\bar{b}_2 \bar{b}_3 C^{(003)} C^{(202)},\bar{b}_3 C^{(003)} \bar{C}_2^2,\right. \\
     &\,\left.\bar{b}_1 \bar{b}_2 \bar{b}_3^2 C^{(003)},\bar{b}_1 C^{(003)} \bar{F}_4,\bar{b}_2 C^{(003)} \bar{F}_2,C^{(003)} \bar{C}_2 \bar{E}_1,C^{(003)} \bar{D}_2^2 \right\}; \\  
     (4,2,5) =&\, \left\{\bar{b}_1 \bar{C}_2 \bar{F}_4,\bar{b}_1 \bar{D}_3 \bar{E}_1,\bar{b}_1 \bar{b}_3 \bar{C}_2 \bar{D}_3,\bar{b}_2 \bar{C}_2 \bar{F}_2,\bar{C}_2^2 \bar{E}_1,\bar{C}_2 \bar{D}_2^2,\bar{b}_1 \bar{b}_2 \bar{b}_3 \bar{E}_1,\bar{b}_3 \bar{C}_2^3,\bar{b}_1 \bar{b}_2 \bar{C}_2 \bar{D}_4,\bar{b}_1 \bar{b}_2 \bar{b}_3^2 \bar{C}_2,\bar{b}_2 C^{(202)} \bar{E}_1,\right. \\
     &\, \left. \bar{b}_1 \bar{b}_2 C^{(003)} \bar{D}_2,\bar{C}_2 C^{(202)} \bar{D}_3,\bar{b}_2 \bar{b}_3 \bar{C}_2 C^{(202)}\right\}; \\ 
     (3,3,5)=&\, \left\{\bar{b}_2 \bar{D}_2 \bar{E}_1,\bar{C}_2 \bar{D}_2 \bar{D}_3,\bar{b}_2 \bar{b}_3 \bar{C}_2 \bar{D}_2,\bar{b}_2^2 C^{(003)} C^{(202)},\bar{b}_1 \bar{b}_2 C^{(003)} \bar{D}_3,\bar{b}_2 C^{(003)} \bar{C}_2^2,\bar{b}_1 \bar{b}_2^2 \bar{b}_3 C^{(003)}\right\}. 
\end{align*} 
 Replacing the generators $C^{(003)}$ by $\bar{c}_1$ and $C^{(202)}$ by $\bar{d}_1$ there are, after some computations, some coefficients  $\Gamma_{48}^1,\ldots,\Gamma_{48}^{65}$ such that $\{\bar{D}_2,\bar{H}_1\} = \Gamma_{48}^1 \bar{E}_1\bar{F}_4 + \ldots + \Gamma_{48}^{64} \bar{b}_2+\Gamma_{48}^{65} \bar{b}_1\bar{b}_2^2 \bar{b}_3 \bar{C}_2$.
Determining these coefficients provides us with the following expression 
\begin{align}
    \{\bar{D}_2, \bar{H}_1\}&=\frac{{\rm i}}{16}\bigl(4(2\bar{b}_3^2\bar{C}_2-(\bar{c}_2-2\bar{C}_2)\bar{d}_1)\bar{D}_2 +2(2\bar{C}_2-\bar{c}_1)\bar{D}_2^2+2\bar{b}_1^2(\bar{c}_1-\bar{C}_2)(\bar{b}_2 \bar{b}_3-\bar{D}_3)\nonumber\\
& \hskip 0.7 cm+4(\bar{c}_1-2\bar{C}_2)\bar{D}_2 \bar{D}_3-8(\bar{c}_1-3 \bar{C}_2)\bar{D}_2 \bar{D}_4-8 \bar{b}_2 \bar{b}_3 \bar{C}_2(\bar{d}_1+\bar{D}_2-\bar{D}_3+2\bar{D}_4) \nonumber \\
& \hskip 0.7 cm+8(2\bar{c}_1-3 \bar{C}_2)\bar{C}_2 \bar{E}_1+16(2\bar{d}_1+\bar{D}_2-2\bar{D}_3+4\bar{D}_4)\bar{b}_3 \bar{E}_1-16\bar{b}_2(\bar{c}_1-2\bar{C}_2)\bar{F}_2 \nonumber \\
& \hskip 0.7 cm -32 \bar{E}_1 \bar{F}_2+\bar{b}_1 \bar{b}_2\bigl(8 \bar{b}_3^2 \bar{C}_2+\bar{c}_1(\bar{D}_2+8\bar{D}_4)-\bar{C}_2(\bar{D}_2+16 \bar{D}_4)-\bar{b}_3 \bar{E}_1 \bigl) \nonumber\\
& \hskip 0.7 cm +2 \bar{b}_1\bigl(-\bar{C}_2^3-2(4\bar{b}_3^2+\bar{D}_2-2 \bar{D}_3-4\bar{D}_4)\bar{E}_1 +\bar{c}_1(\bar{C}_2^2-4\bar{F}_4)+2\bar{C}_2(\bar{b}_3 \bar{D}_2+2\bar{F}_4)\bigl) \bigl) \, .
\end{align} 
A similar approach shows that the expansion of the rest of the brackets are
 \begin{align} 
\{\bar{D}_2, \bar{H}_2\}&=\frac{{\rm i}}{16}\bigl( 8 \bar{b}_3 \bar{C}_2 (\bar{C}_2^2+\bar{b}_3 \bar{D}_2)+2(\bar{c}_1-3\bar{C}_2)\bar{D}_2^2+4(2\bar{C}_2 \bar{d}_1-\bar{c}_1\bar{D}_2+4\bar{C}_2\bar{D}_2)\bar{D}_3 \nonumber \\
& \hskip 0.7 cm -8 \bar{C}_2(\bar{D}_3^2+\bar{D}_2 \bar{D}_4-2 \bar{D}_3 \bar{D}_4)-2\bar{b}_2^2(\bar{c}_1-\bar{C}_2)(\bar{d}_1+\bar{D}_2-2\bar{D}_3+2\bar{D}_4)+16 \bar{c}_1 \bar{C}_2 \bar{E}_1 \nonumber \\
& \hskip 0.7 cm -40 \bar{C}_2^2 \bar{E}_1-16 \bar{b}_3 \bar{D}_2 \bar{E}_1+32 \bar{b}_3 \bar{D}_3 \bar{E}_1-2\bar{b}_2 \bigl( \bar{C}_2^3+2 \bar{b}_3 \bar{C}_2(2 \bar{d}_1+\bar{D}_2-2\bar{D}_3+4\bar{D}_4)\nonumber\\
& \hskip 0.7 cm+2(4 \bar{b}_3^2-2\bar{d}_1-\bar{D}_2+2\bar{D}_3-8\bar{D}_4)\bar{E}_1-\bar{c}_1(\bar{C}_2^2-4\bar{F}_2)-8\bar{C}_2 \bar{F}_2\bigl)-32 \bar{E}_1 \bar{F}_4\nonumber \\
& \hskip 0.7 cm +\bar{b}_1 \bigl( 2 \bar{b}_2^2 \bar{b}_3(\bar{c}_1-\bar{C}_2)+\bar{b}_2 (12 \bar{b}_3^2 \bar{C}_2+\bar{c}_1(\bar{D}_2+8\bar{D}_4)-\bar{C}_2(\bar{D}_2+20 \bar{D}_4)-8 \bar{b}_3 \bar{E}_1)\nonumber \\
& \hskip 0.7 cm -8(2 \bar{b}_3 \bar{C}_2 \bar{D}_3+2 \bar{c}_1 \bar{F}_4-5 \bar{C}_2 \bar{F}_4)\bigl)\bigl) \, ,\nonumber \\
\{\bar{D}_3, \bar{H}_1\}&=\frac{{\rm i}}{16}\bigl(8(\bar{D}_4-\bar{b}_3^2)\bar{C}_2 \bar{D}_2-2 \bar{c}_1\bar{D}_2^2+8(\bar{c}_1\bar{C}_2-2\bar{C}_2^2+2\bar{b}_3 \bar{D}_2+\bar{b}_1\bar{D}_3)\bar{E}_1+\bar{b}_1\bar{b}_2\bigl( 8 \bar{b}_3^2\bar{C}_2+\bar{c}_1 \bar{D}_2 \nonumber\\
& \hskip 0.7 cm-\bar{C}_2(\bar{D}_2+8\bar{D}_4)-8\bar{b}_3 \bar{E}_1\bigl)+8 \bar{b}_2\bar{C}_2\bigl( \bar{b}_3(\bar{d}_1+\bar{D}_2-\bar{D}_3+2\bar{D}_4)-2\bar{F}_2\bigl) \bigl) \, ,\nonumber \\
\{\bar{D}_3, \bar{H}_2\}&=\frac{{\rm i}}{8}\bigl( \bar{b}_2^2(\bar{c}_1-\bar{C}_2)(\bar{b}_1 \bar{b}_3-(\bar{d}_1+\bar{D}_2-\bar{D}_3+2\bar{D}_4))+2 \bar{D}_3 (12 \bar{b}_3^2 \bar{C}_2+\bar{c}_1 \bar{D}_2-4\bar{C}_2 \bar{D}_4-8\bar{b}_3 \bar{E}_1)\nonumber \\
& \hskip 0.7 cm-\bar{b}_2(8\bar{b}_3^3 \bar{C}_2+(\bar{c}_1-2\bar{C}_2)(\bar{c}_1-\bar{C}_2)\bar{C}_2+2\bar{b}_3 \bar{c}_1\bar{D}_2-4 \bar{b}_3 \bar{C}_2(\bar{D}_2+2\bar{D}_4)+2(\bar{D}_2+4\bar{D}_4) \bar{E}_1 \nonumber \\
& \hskip 0.7 cm-8 \bar{b}_3^2 \bar{E}_1)+32 (\bar{E}_1-\bar{b}_3 \bar{C}_2)\bar{F}_4 \bigl) \, ,
\end{align} and
\begin{align}
\{\bar{D}_4, \bar{H}_1\}&=\frac{{\rm i}}{16}\bigl(4 \bar{C}_2^2 \bar{E}_1-2 \bar{C}_2 \bar{D}_2(2 \bar{d}_1+\bar{D}_2-2 \bar{D}_3)-8 \bar{C}_2 \bar{D}_4(\bar{d}_1+2\bar{D}_2-\bar{D}_3)-16 \bar{C}_2 \bar{D}_4^2-8 \bar{b}_2 \bar{C}_2 \bar{F}_2\nonumber \\
& \hskip 0.7 cm +8 \bar{b}_3^2 \bar{C}_2(3 \bar{d}_1+2\bar{D}_2-3 \bar{D}_3+6 \bar{D}_4)+48 \bar{E}_1 \bar{F}_2+2 \bar{c}_1\bigl( \bar{D}_2(2\bar{d}_1+\bar{D}_2-2 \bar{D}_3+4\bar{D}_4)-2\bar{C}_2 \bar{E}_1 \nonumber \\
& \hskip 0.7 cm +4\bar{b}_2 \bar{F}_2\bigl) -16 \bar{b}_3\bigl((2 \bar{d}_1+\bar{D}_2-2 \bar{D}_3+4\bar{D}_4)\bar{E}_1+2\bar{C}_2 \bar{F}_2 \bigl) -\bar{b}_1 \bigl((8\bar{b}_3^3+\bar{c}_1^2-4\bar{b}_2 \bar{D}_4 \nonumber\\
& \hskip 0.7 cm -2\bar{b}_3(\bar{D}_2+4\bar{D}_4))\bar{C}_2 +\bar{C}_2^3+2\bar{b}_3 \bar{c}_1 \bar{D}_2+16(\bar{D}_4-\bar{b}_3^2)\bar{E}_1+4 \bar{C}_2 \bar{F}_4-2\bar{c}_1(\bar{C}_2^2-2\bar{b}_2 \bar{D}_4+2\bar{F}_4)\bigl)\bigl) \, ,\nonumber \\
\{\bar{D}_4, \bar{H}_2\}&=\frac{{\rm i}}{16} \bigl(2\bar{C}_2\bigl(\bar{D}_2^2-4\bar{b}_3^2(\bar{D}_2-3\bar{D}_3)-2\bar{D}_3(\bar{D}_2+2\bar{D}_4)\bigl)+4(\bar{C}_2^2+4\bar{b}_3 (\bar{D}_2-2\bar{D}_3)+12 \bar{F}_4)\bar{E}_1\nonumber\\
& \hskip 0.7 cm -8 \bar{b}_1 \bar{C}_2 \bar{F}_4-2 \bar{c}_1(\bar{D}_2^2-2\bar{D}_2 \bar{D}_3+2\bar{C}_2 \bar{E}_1-4\bar{b}_1 \bar{F}_4) -32 \bar{b}_3 \bar{C}_2 \bar{F}_4-\bar{b}_2\bigl(8 \bar{b}_3^2 \bar{C}_2+\bar{c}_1^2 \bar{C}_2\nonumber\\
& \hskip 0.7 cm +\bar{C}_2^3+2 \bar{b}_3 \bar{c}_1 \bar{D}_2-4 \bar{b}_1 \bar{C}_2 \bar{D}_4-2 \bar{b}_3 \bar{C}_2(\bar{D}_2+4\bar{D}_4)-16 \bar{b}_3^2\bar{E}_1+16 \bar{D}_4 \bar{E}_1 +4 \bar{C}_2 \bar{F}_2\nonumber\\
& \hskip 0.7 cm-2 \bar{c}_1(\bar{C}_2^2-2\bar{b}_1\bar{D}_4+2\bar{F}_2)\bigl)\bigl) \, .
\end{align}

Finally, we consider the compact form $\{\bar{\textbf{E}},\bar{\textbf{G}}\},$ which contains only two Poisson brackets. The gradings of these terms are given by \begin{align*}
      \mathcal{G} (\{\bar{E}_1, \bar{G}_1\}) = & \,  (2,2,7) \tilde{+}  (3,1,7) \tilde{+} (4,2,5) \tilde{+} (3,3,5)   ; \\
     \mathcal{G} (\{\bar{E}_1, \bar{G}_2\}) = & \,  (1,3,7) \tilde{+}  (2,2,7) \tilde{+}  (3,3,5) \tilde{+} (2,4,5).
\end{align*} Note that, from the computation above, we deduce that \begin{align*}
     \mathcal{G} (\{\bar{E}_1, \bar{G}_1\}) = \mathcal{G} \left( \{\bar{D}_2,\bar{H}_1\}\right)\text{ and } \mathcal{G} (\{\bar{E}_1, \bar{G}_2\}) = \mathcal{G} \left( \{\bar{D}_2,\bar{H}_2\}\right).
\end{align*} 
However,   we observe that, even if the admissible polynomials from these gradings are the same, the coefficients in the Poisson brackets heavily depend on the structure tensor of the given Lie algebra. In this case, we deduce that the explicit expression in the expansion are  
\begin{align}
\{\bar{E}_1, \bar{G}_1\}&=\frac{{\rm i}}{16}\bigl(\bar{C}_2(\bar{D}_2^2+4\bar{d}_1(\bar{D}_2+\bar{D}_3)-4\bar{D}_3^2+8(\bar{D}_2-\bar{d}_1+2\bar{D}_3)\bar{D}_4-16 \bar{D}_4^2)+8 \bar{b}_3^2 \bar{C}_2(\bar{d}_1-\bar{D}_3\nonumber\\
& \hskip 0.7 cm+2\bar{D}_4)+4 \bar{E}_1\bigl(3 \bar{c}_1 \bar{C}_2-7 \bar{C}_2^2+2\bar{b}_2(\bar{d}_1+\bar{D}_2-\bar{D}_3+2\bar{D}_4)\bigl)+4(\bar{b}_2(3\bar{C}_2-2\bar{c}_1)+4\bar{E}_1)\bar{F}_2\nonumber\\
& \hskip 0.7 cm+8 \bar{b}_3\bigl( 2\bar{C}_2^3+\bar{c}_1(\bar{b}_2(\bar{d}_1+\bar{D}_2-\bar{D}_3+2\bar{D}_4)-\bar{C}_2^2)+\bar{D}_2 \bar{E}_1-2\bar{C}_2(\bar{b}_2(\bar{d}_1+\bar{D}_2-\bar{D}_3+2\bar{D}_4)\nonumber\\
& \hskip 0.7 cm+\bar{F}_2)\bigl)-\bar{b}_1 \bigl(\bar{c}_1^2 \bar{C}_2-3\bar{c}_1 \bar{C}_2^2+2\bar{C}_2^3+4(\bar{b}_3(5\bar{C}_2-\bar{c}_1)-\bar{E}_1)\bar{D}_3+\bar{b}_2 \bigl( 2 \bar{b}_3^2(2\bar{c}_1-7\bar{C}_2)\nonumber\\
& \hskip 0.7 cm-(\bar{c}_1-\bar{C}_2)(\bar{d}_1+2\bar{D}_2-\bar{D}_3)+2\bar{D}_4(8 \bar{C}_2-3 \bar{c}_1)+4 \bar{b}_3 \bar{E}_1\bigl)+4(\bar{c}_1-5\bar{C}_2)\bar{F}_4 \bigl) \bigl) \,
\end{align}
and \begin{align}
\{\bar{E}_1, \bar{G}_2\}&=\frac{{\rm i}}{16}\bigl(8 \bar{b}_3 \bar{C}_2(2\bar{C}_2^2-\bar{c}_1 \bar{C}_2-\bar{b}_3 \bar{D}_2)-3 \bar{C}_2 \bar{D}_2^2+4\bar{C}_2 \bar{D}_3(\bar{d}_1+2(\bar{D}_2+\bar{b}_3^2))-4 \bar{C}_2 \bar{D}_3^2+8 \bar{C}_2 \bar{D}_2 \bar{D}_4\nonumber\\
& \hskip 0.7 cm +4(3 \bar{c}_1 \bar{C}_2-7 \bar{C}_2^2+2\bar{b}_3 \bar{D}_2)\bar{E}_1+16 (\bar{E}_1-\bar{b}_3 \bar{C}_2)\bar{F}_4-\bar{b}_1 \bigl( -8 \bar{D}_3(\bar{E}_1+\bar{b}_3(\bar{c}_1-2\bar{C}_2))\nonumber\\
& \hskip 0.7 cm +\bar{b}_2(2\bar{b}_3^2(2\bar{c}_1-7\bar{C}_2)-\bar{c}_1(\bar{D}_2+\bar{D}_3+4\bar{D}_4)+\bar{C}_2(\bar{D}_2+\bar{D}_3+14\bar{D}_4)+4\bar{b}_3 \bar{E}_1)\nonumber\\
& \hskip 0.7 cm +4(2\bar{c}_1-3\bar{C}_2)\bar{F}_4\bigl)-\bar{b}_2 \bigl( \bar{c}_1^2\bar{C}_2-3\bar{c}_1\bar{C}_2^2+2\bar{C}_2^3-4(\bar{d}_1+\bar{D}_2-\bar{D}_3+2\bar{D}_4)(\bar{E}_1+\bar{b}_3 \bar{c}_1)\nonumber\\
& \hskip 0.7 cm+20 \bar{C}_2(\bar{b}_3(\bar{d}_1+\bar{D}_2-\bar{D}_3+2\bar{D}_4)-\bar{F}_2)+4\bar{c}_1 \bar{F}_2\bigl) \bigl) \, .
\end{align} These expansions conclude the degree-eleven non-trivial Poisson brackets.
\vskip 0.5cm

\subsection{Expansions in the degree $12$ brackets}

For the expansions associated with degree brackets $12$, the compact forms are characterized by distinct pairings of brackets, which include pairs such as $\{\bar{\mathbf{D}},\bar{\mathbf{I}} \}$, $\{\bar{\mathbf{E}},\bar{\mathbf{H}} \}$, and $\{\bar{\mathbf{F}},\bar{\mathbf{G}} \}$. We begin with the compact form represented by $\{\bar{\mathbf{D}},\bar{\mathbf{I}} \}$. Thorough computational analysis reveals that the gradations resulting from these respective brackets are characterized by \begin{align}
    \{\bar{D}_2,\bar{I}_1\} = &\, (0,4,8) \tilde{+}(1,3,8) \tilde{+}(2,4,6) \tilde{+}(1,5,6)  ; \nonumber \\
   \{\bar{D}_3,\bar{I}_1\} = &\,  (0,4,8) \tilde{+}(1,5,6) \tilde{+}(0,6,6)  ; \nonumber \\
   \{\bar{D}_4,\bar{I}_1\} = &\,  (0,2,10) \tilde{+}(1,3,8) \tilde{+}(0,4,8)  ; \nonumber\\
   \{\bar{D}_2,\bar{I}_2\} = &\, (3,1,8) \tilde{+}(4,0,8) \tilde{+}(5,1,6) \tilde{+}(4,2,6)  ; \nonumber \\
   \{\bar{D}_3,\bar{I}_2\} = &\, (2,2,8) \tilde{+}(3,1,8) \tilde{+}(4,2,6) \tilde{+}(3,3,6)  ; \\
   \{\bar{D}_4,\bar{I}_2\} = &\, (2,0,10) \tilde{+}(4,0,8)\tilde{+}(3,1,8)  . \nonumber
\end{align} As an example, we will provide the allowed terms from the homogeneous grading of $\{\bar{D}_2,\bar{I}_2\}$. A direct calculation shows that \begin{align*}
    (3,1,8) = & \, \left\{\bar{b}_3 \bar{D}_2 \bar{F}_2,\bar{b}_1 \bar{b}_3 \bar{D}_2 \bar{D}_4,\bar{b}_1 \bar{b}_3^3 \bar{D}_2,C^{(003)} \bar{C}_2 \bar{F}_2,C^{(003)} C^{(202)} \bar{E}_1,C^{(202)} \bar{D}_2 \bar{D}_4,\bar{b}_1 \bar{b}_3 C^{(003)} \bar{E}_1,\bar{b}_1 \left(C^{(003)}\right)^2 \bar{D}_2,\bar{b}_1 C^{(003)} \bar{C}_2 \bar{D}_4, \right. \\
    &\, \left. \bar{b}_3^2 C^{(202)} \bar{D}_2,\bar{b}_3 C^{(003)} \bar{C}_2 C^{(202)},\bar{b}_1 \bar{b}_3^2 C^{(003)} \bar{C}_2\right\} ; \\ 
    (4,0,8)= & \, \left\{\bar{F}_2^2,\bar{b}_1 \bar{D}_4 \bar{F}_2,\bar{b}_1^2 \bar{b}_3^2 \bar{D}_4,\bar{b}_1 \bar{b}_3^2 \bar{F}_2,\bar{b}_1^2 \bar{b}_3^4, \bar{b}_1^2 \bar{D}_4^2,\bar{b}_1 \bar{b}_3 C^{(202)} \bar{D}_4,\bar{b}_1 \left(C^{(003)}\right)^2 C^{(202)},\bar{b}_3^2 \left(C^{(202)}\right)^2,\bar{b}_3 C^{(202)} \bar{F}_2, \right. \\
    &\, \left. \left(C^{(202)}\right)^2 \bar{D}_4,\bar{b}_1^2 \bar{b}_3 \left(C^{(003)}\right)^2,\bar{b}_1 \bar{b}_3^3 C^{(202)}\right\} ;\\
    (5,1,6) = & \, \left\{\bar{b}_1 \bar{D}_2 \bar{F}_2,\left(C^{(202)}\right)^2 \bar{D}_2,\bar{b}_1^2 C^{(003)} \bar{E}_1,\bar{b}_1^2 \bar{D}_2 \bar{D}_4,\bar{b}_1 \bar{b}_3 C^{(202)} \bar{D}_2,\bar{b}_1 C^{(003)} \bar{C}_2 C^{(202)},\bar{b}_1^2 \bar{b}_3^2 \bar{D}_2,\bar{b}_1^2 \bar{b}_3 C^{(003)} \bar{C}_2\right\} ; \\
    (4,2,6) = & \, \left\{\bar{F}_1^2\bar{b}_1 \bar{D}_3 \bar{F}_2,\bar{b}_1 \bar{E}_1^2\bar{C}_2^2 \bar{F}_2,\bar{b}_1^2 \bar{b}_3 \bar{F}_4,\bar{b}_1^2 \bar{D}_3 \bar{D}_4,\bar{b}_1 \bar{b}_2 \bar{b}_3 \bar{F}_2,\bar{b}_1^2 \bar{b}_2 \bar{b}_3 \bar{D}_4,\bar{b}_1 \bar{b}_3 \bar{C}_2 \bar{E}_1,\bar{b}_1 \bar{C}_2^2 \bar{D}_4,\bar{b}_1 \bar{b}_3 \bar{D}_2^2,\bar{b}_1 \bar{b}_3^2 \bar{C}_2^2,\bar{b}_1^2 \bar{b}_2 \bar{b}_3^3, \bar{b}_1^2 \bar{b}_3^2 \bar{D}_3,\right.\\
    &\,\left.  \bar{b}_2 C^{(202)} \bar{F}_2,,\bar{b}_1 \bar{b}_2 C^{(202)} \bar{D}_4,\bar{b}_1 C^{(003)} \bar{C}_2 \bar{D}_2,\bar{b}_1 \bar{b}_3 C^{(202)} \bar{D}_3,\bar{b}_2 \bar{b}_3 \left(C^{(202)}\right)^2,\bar{b}_3 \bar{C}_2^2 C^{(202)},\bar{b}_1^2 \bar{b}_2 \left(C^{(003)}\right)^2,\bar{b}_1 \bar{b}_2 \bar{b}_3^2 C^{(202)},\right. \\
    &\, \left.  \bar{b}_1 C^{(202)} \bar{F}_4, \bar{C}_2 C^{(202)} \bar{E}_1,\left(C^{(202)}\right)^2 \bar{D}_3,C^{(202)} \bar{D}_3^2\right\} . 
\end{align*}
After the usual change of the generators, we have that the bracket $\{\bar{D}_2,\bar{I}_2\}$ contains $149$ admissible terms, which means there are  coefficients $\Gamma_{49}^1 ,\ldots , \Gamma_{49}^{149}$ such that  
\begin{align*}
    \{\bar{D}_2,\bar{I}_2\} = \Gamma_{49}^1 \bar{b}_3 \bar{D}_2 \bar{F}_2 + \ldots + \Gamma_{49}^{148}   \bar{D}_2^2  \bar{D}_3+ \Gamma_{49}^{149}  \bar{D}_2     \bar{D}_3^2.
\end{align*} 
Determining these coefficients will lead to the explicit expansion of this brackets, which is given by 
\begin{align}
    \{\bar{D}_2, \bar{I}_2\}&=\frac{{\rm i}}{32}\bigl( 4(\bar{d}_1+\bar{D}_2-\bar{D}_3+2\bar{D}_4)\bigl( 2 \bar{d}_1 \bar{D}_2+5\bar{D}_2^2-2(4 \bar{d}_1 +5 \bar{D}_2)\bar{D}_3+8\bar{D}_3^2+4(3\bar{D}_2-4\bar{D}_3)\bar{D}_4 \nonumber \\
& \hskip 0.7 cm -16 \bar{b}_3 \bar{C}_2^2+12 \bar{c}_1 \bar{E}_1	\bigl)-192 \bar{b}_2^2 \bar{b}_3(\bar{b}_3(\bar{d}_1+\bar{D}_2-\bar{D}_3+2\bar{D}_4)-\bar{F}_2)-48(\bar{c}_1-2\bar{C}_2)\bar{C}_2\bar{F}_2 \nonumber \\
& \hskip 0.7 cm -32\bar{b}_3\bar{D}_2\bar{F}_2-48 \bar{b}_2\bigl((\bar{d}_1+\bar{D}_2-\bar{D}_3+2\bar{D}_4)\bar{F}_2-\bar{b}_3 \bigl(4 \bar{b}_3 \bar{C}_2^2+\bar{d}_1^2+2\bar{d}_1(\bar{D}_2+\bar{D}_3)+2\bar{D}_2 \bar{D}_3  \nonumber \\
& \hskip 0.7 cm-3\bar{D}_3^2 +4(\bar{d}_1+\bar{D}_2+\bar{D}_3)\bar{D}_4+4\bar{D}_4^2-8 \bar{C}_2 \bar{E}_1\bigl) \bigl)+4 \bar{b}_1\bigl( 18 \bar{b}_3^2 \bar{C}_2^2+24 \bar{b}_2^2(\bar{b}_3^3-\bar{b}_3 \bar{D}_4) \nonumber \\
& \hskip 0.7 cm+(\bar{c}_1-\bar{C}_2)\bar{C}_2(\bar{d}_1+5\bar{D}_2-\bar{D}_3)+2(3\bar{c}_1-8\bar{C}_2)\bar{C}_2\bar{D}_4-4\bar{E}_1^2+2\bar{b}_3\bigl( \bar{d}_1(\bar{D}_2+4\bar{D}_3)+3\bar{D}_2 \bar{D}_3  \nonumber \\
& \hskip 0.7 cm-4\bar{D}_3^2+2(\bar{D}_2+4\bar{D}_3) \bar{D}_4-2(\bar{c}_1+4\bar{C}_2)\bar{E}_1\bigl)-4(\bar{D}_2-2\bar{D}_3)\bar{F}_2+2\bar{b}_2 \bigl(4\bar{b}_3(\bar{F}_2+6\bar{F}_4)  \nonumber \\
& \hskip 0.7 cm+3 \bar{D}_4(\bar{d}_1+\bar{D}_2-\bar{D}_3+2\bar{D}_4)-\bar{b}_3^2(7(\bar{d}_1+\bar{D}_2)+17 \bar{D}_3+14 \bar{D}_4)\bigl)\bigl)\bigl) \, .
\end{align} 
 Repeating the same argument, the rest of the  degree-twelve expansions in the compact form $\{\bar{\textbf{D}},\bar{\textbf{I}}\}$ close as follows:
\begin{align}
\{\bar{D}_2, \bar{I}_1\}&=\frac{{\rm i}}{32}\bigl( \bar{b}_1 \bigl( \bar{b}_2^2(\bar{c}_1-\bar{C}_2)^2+24 \bar{b}_2\bar{D}_3(\bar{D}_4-\bar{b}_3^2)+48\bar{D}_3(\bar{b}_3 \bar{D}_3-\bar{F}_4)\bigl)+4 \bar{b}_2^2(\bar{c}_1-\bar{C}_2)(\bar{b}_3\bar{C}_2-2\bar{E}_1)  \nonumber \\
& \hskip 0.7 cm +4\bar{D}_3\bigl(3\bar{D}_2^2-16\bar{b}_3\bar{C}_2^2-8\bar{D}_3(\bar{d}_1-\bar{D}_3+2\bar{D}_4)+\bar{D}_2(8\bar{D}_4-6\bar{D}_3)+12 \bar{c}_1 \bar{E}_1 \bigl)-16 \bar{F}_4 (2 \bar{b}_3 \bar{D}_2  \nonumber \\
& \hskip 0.7 cm+3(\bar{c}_1-2 \bar{C}_2)\bar{C}_2)+4 \bar{b}_2 \bigl((6 \bar{b}_3^2-2\bar{D}_2-\bar{D}_3-10 \bar{D}_4) \bar{C}_2^2+(2\bar{D}_2+\bar{D}_3+4\bar{D}_4)\bar{c}_1 \bar{C}_2  \nonumber \\
& \hskip 0.7 cm+2\bar{b}_3 \bar{D}_3(4\bar{d}_1+5\bar{D}_2-4\bar{D}_3+8\bar{D}_4)-4\bar{b}_3\bar{c}_1\bar{E}_1+4\bar{E}_1^2-8\bar{D}_3\bar{F}_2-4\bar{D}_2\bar{F}_4 \bigl) \bigl) \, ,  \nonumber \\
\{\bar{D}_3, \bar{I}_1\}&= \frac{{\rm i}}{8} \bigl(2 \bar{D}_3^2(12 \bar{b}_3^2+\bar{D}_2-4\bar{D}_4)-\bar{b}_2 \bar{D}_3\bigl(8 \bar{b}_3^3+3\bar{c}_1^2-7 \bar{c}_1 \bar{C}_2+4\bar{C}_2^2-2\bar{b}_3(\bar{D}_2+4\bar{D}_4)\bigl)-64 \bar{b}_3 \bar{D}_3 \bar{F}_4  \nonumber \\
& \hskip 0.7 cm+\bar{b}_2^2(\bar{c}_1-\bar{C}_2)(\bar{b}_3 \bar{c}_1-\bar{E}_1)+4\bar{b}_2(4\bar{b}_3^2-\bar{D}_2-4\bar{D}_4)\bar{F}_4+48 \bar{F}_4^2 \bigl) \, ,  \nonumber \\
\{\bar{D}_4, \bar{I}_1\}&=\frac{{\rm i}}{32}\bigl(\bar{D}_3 \bigl( 12(\bar{d}_1+\bar{D}_2-\bar{D}_3)\bar{D}_3+48 \bar{b}_3^2 \bar{D}_3+12\bar{b}_3(\bar{C}_2^2-\bar{b}_1 \bar{D}_3)+8(\bar{D}_3-2\bar{D}_2)\bar{D}_4 \nonumber \\
& \hskip 0.7 cm-3(\bar{D}_2^2+8\bar{c}_1\bar{E}_1)\bigl)+2\bar{b}_2^2 \bar{b}_3(\bar{c}_1-\bar{C}_2)^2+4 \bar{F}_4\bigl(6(\bar{c}_1-\bar{C}_2)\bar{C}_2+4\bar{b}_3(\bar{D}_2-8\bar{D}_3)+3\bar{b}_1\bar{D}_3 \bigl)  \nonumber \\
& \hskip 0.7 cm +96 \bar{F}_4^2-2\bar{b}_2 \bigl( (8 \bar{b}_3^3+3\bar{c}_1^2-6 \bar{c}_1 \bar{C}_2+3\bar{C}_2^2)\bar{D}_3+4(\bar{c}_1 \bar{C}_2-\bar{C}_2^2)\bar{D}_4+3\bar{b}_1\bar{D}_3(\bar{D}_4-\bar{b}_3^2) \nonumber \\
& \hskip 0.7 cm +\bar{b}_3(6(\bar{d}_1+\bar{D}_2-\bar{D}_3)\bar{D}_3+4\bar{D}_3 \bar{D}_4-4(\bar{c}_1-\bar{C}_2)\bar{E}_1)-6 \bar{D}_3 \bar{F}_2-16 (\bar{b}_3^2-\bar{D}_4)\bar{F}_4	\bigl)\bigl) \, , \nonumber \\
\{\bar{D}_3, \bar{I}_2\}&=\frac{\rm i}{64}  \bigl(4 \bar{b}_1 \bigl(\bar{C}_2 \bigl(14 \bar{b}_3^2 \bar{C}_2+5 \bar{c}_1 \bar{D}_2-5 \bar{C}_2 \bar{D}_2-14 \bar{C}_2 \bar{D}_4\bigl)-8 \bar{E}_1 \bar{b}_3 \bar{C}_2+6 \bar{b}_2 \bar{D}_4 \bigl(\bar{d}_1+\bar{D}_2-\bar{D}_3+2 \bar{D}_4\bigl) \nonumber \\
& \hskip 0.7 cm-6 \bar{b}_3^2 \bar{b}_2 \bigl(\bar{d}_1+\bar{D}_2+7 \bar{D}_3+2 \bar{D}_4\bigl)+24 \bar{b}_2^2 \bigl(\bar{b}_3^3-\bar{b}_3 \bar{D}_4\bigl)+48 \bar{b}_3 \bar{b}_2 \bar{F}_4+4 \bar{D}_3 \bar{F}_2+4 \bar{E}_1^2\bigl) \nonumber \\
& \hskip 0.7 cm+3 \bar{b}_1^2 \bar{b}_2 \bigl(\bar{c}_1-\bar{C}_2\bigl)^2-4 \bigl(12 \bar{b}_2 \bigl(\bar{F}_2 \bigl(\bar{d}_1+\bar{D}_2-\bar{D}_3+2 \bar{D}_4\bigl)-\bar{b}_3 \bigl(4 \bar{b}_3 \bar{C}_2^2-8 \bar{E}_1 \bar{C}_2+2 \bar{d}_1 \bar{D}_2  \nonumber \\
& \hskip 0.7 cm+2 \bar{d}_1 \bar{D}_3+4 \bar{D}_4 \bigl(\bar{d}_1+\bar{D}_2+\bar{D}_3\bigl)+\bar{d}_1^2-3 \bar{D}_3^2+4 \bar{D}_4^2+2 \bar{D}_2 \bar{D}_3\bigl)\bigl)+\bigl(\bar{d}_1+\bar{D}_2-\bar{D}_3+2 \bar{D}_4\bigl)  \nonumber \\
& \hskip 0.7 cm \times	\bigl(20 \bar{b}_3 \bar{C}_2^2+4 \bar{D}_3 \bigl(\bar{d}_1-\bar{D}_3+2 \bar{D}_4\bigl)-3 \bar{D}_2^2+4 \bar{D}_2 \bar{D}_3\bigl)+48 \bar{b}_3 \bar{b}_2^2 \bigl(\bar{b}_3 \bigl(\bar{d}_1+\bar{D}_2-\bar{D}_3+2 \bar{D}_4\bigl)  \nonumber \\
& \hskip 0.7 cm-\bar{F}_2\bigl)-24 \bar{C}_2^2 \bar{F}_2\bigl)\bigl) \,  ,  \end{align} and
\begin{align}
\{\bar{D}_4, \bar{I}_2\}&=-\frac{{\rm i}}{128}  \bigl(\bar{b}_1^2 \bigl(3 \bar{b}_2-8 \bar{b}_3\bigl) \bigl(\bar{c}_1-\bar{C}_2\bigl)^2-4 \bigl(192 \bar{C}_2^2 \bar{b}_3^3+48\bar{b}_3^2 \bigl(\bar{d}_1^2+2 \bigl(\bar{D}_2+\bar{D}_3+2 \bar{D}_4\bigl) \bar{d}_1+4 \bar{D}_4^2  \nonumber \\
& \hskip 0.7 cm	+\bar{D}_3 \bigl(2 \bar{D}_2-3 \bar{D}_3\bigl)+4 \bigl(\bar{D}_2+\bar{D}_3\bigl) \bar{D}_4-8 \bar{C}_2 \bar{E}_1\bigl) +48 \bar{b}_2^2\bar{b}_3 \bigl(\bar{b}_3 \bigl(\bar{d}_1+\bar{D}_2-\bar{D}_3+2 \bar{D}_4\bigl)-\bar{F}_2\bigl)  \nonumber \\
& \hskip 0.7 cm +4 \bigl(3 \bar{C}_2^2 \bigl(\bar{d}_1+\bar{D}_2-\bar{D}_3+2 \bar{D}_4\bigl)-4 \bigl(8 \bar{d}_1+7 \bar{D}_2-8 \bar{D}_3+16 \bar{D}_4\bigl) \bar{F}_2\bigl) \bar{b}_3+96 \bar{F}_2^2  \nonumber \\
& \hskip 0.7 cm+\bigl(\bar{d}_1+\bar{D}_2-\bar{D}_3+2 \bar{D}_4\bigl) \bigl(-3 \bar{D}_2^2+4 \bigl(3 \bar{D}_3-8 \bar{D}_4\bigl) \bar{D}_2+4 \bigl(\bar{d}_1-\bar{D}_3+2 \bar{D}_4\bigl) \bigl(3 \bar{D}_3-4 \bar{D}_4\bigl) \nonumber \\
& \hskip 0.7 cm-24 \bar{c}_1 \bar{E}_1\bigl)+24 \bigl(\bar{c}_1-\bar{C}_2\bigl) \bar{C}_2 \bar{F}_2+12 \bar{b}_2 \bigl(\bigl(16 \bar{b}_3^2+\bar{d}_1+\bar{D}_2-\bar{D}_3+2 \bar{D}_4\bigl) \bar{F}_2-\bar{b}_3 \bigl(16 \bigl(\bar{d}_1+\bar{D}_2 \nonumber \\
& \hskip 0.7 cm-\bar{D}_3+2 \bar{D}_4\bigl) \bar{b}_3^2+4 \bar{C}_2^2 \bar{b}_3+\bar{d}_1^2-3 \bar{D}_3^2+4 \bar{D}_4^2+2 \bar{d}_1 \bar{D}_2+2 \bigl(\bar{d}_1+\bar{D}_2\bigl) \bar{D}_3+4 \bigl(\bar{d}_1+\bar{D}_2+\bar{D}_3\bigl) \bar{D}_4 \nonumber \\
& \hskip 0.7 cm-8 \bar{C}_2 \bar{E}_1\bigl)\bigl)\bigl)+4 \bar{b}_1 \bigl(16 \bigl(\bar{d}_1+\bar{D}_2+11 \bar{D}_3+2 \bar{D}_4\bigl) \bar{b}_3^3+2 \bar{b}_3^2\bigl(3 \bar{C}_2^2-16 \bigl(\bar{F}_2+6 \bar{F}_4\bigl)\bigl) +8 \bigl(\bigl(\bar{C}_2-\bar{c}_1\bigl) \bar{E}_1  \nonumber \\
& \hskip 0.7 cm-2 \bar{D}_4 \bigl(\bar{d}_1+\bar{D}_2-\bar{D}_3+2 \bar{D}_4\bigl)\bigl) \bar{b}_3-12 \bar{E}_1^2+\bigl(6 \bar{c}_1^2-12 \bar{C}_2 \bar{c}_1+6 \bar{C}_2^2\bigl) \bar{d}_1+\bigl(6 \bar{c}_1^2-9 \bar{C}_2 \bar{c}_1+3 \bar{C}_2^2\bigl) \bar{D}_2  \nonumber \\
& \hskip 0.7 cm+6 \bigl(-\bar{c}_1^2+2 \bar{C}_2 \bar{c}_1-\bar{C}_2^2\bigl) \bar{D}_3+\bigl(12 \bar{c}_1^2-16 \bar{C}_2 \bar{c}_1-2 \bar{C}_2^2\bigl) \bar{D}_4+24 \bar{b}_2^2 \bigl(\bar{b}_3^3-\bar{b}_3 \bar{D}_4\bigl)+4 \bigl(3 \bar{D}_3+8 \bar{D}_4\bigl) \bar{F}_2 \nonumber \\
& \hskip 0.7 cm-6 \bar{b}_2 \bigl(16 \bar{b}_3^4+\bigl(\bar{d}_1+\bar{D}_2+7 \bar{D}_3-14 \bar{D}_4\bigl) \bar{b}_3^2-8 \bar{F}_4 \bar{b}_3-\bar{D}_4 \bigl(\bar{d}_1+\bar{D}_2-\bar{D}_3+2 \bar{D}_4\bigl)\bigl)\bigl)\bigl) \, .
\end{align} 

Next, we consider the grading from the compact form $\{\bar{\textbf{F}},\bar{\textbf{G}}\}$. We first look at the gradings of $\{\bar{F}_s, \bar{G}_1\}$, for $s=1,2,3,4$. Certain gradings can be determined through a direct calculation as follows: \begin{align}
    \mathcal{G} \left(\{\bar{F}_1, \bar{G}_1\}\right)  = & \, (3,2,7) \tilde{+} (4,1,7)\tilde{+} (5,2,5)\tilde{+} (4,3,5); \nonumber\\
    \mathcal{G} \left(\{\bar{F}_2, \bar{G}_1\}\right)  = & \, (3,1,8) \tilde{+} (4,0,8)\tilde{+} (5,1,6)\tilde{+} (4,2,6); \nonumber \\
    \mathcal{G} \left(\{\bar{F}_3, \bar{G}_1\}\right)  = & \,(2,3,7) \tilde{+} (3,2,7)\tilde{+} (4,3,5)\tilde{+} (3,4,5); \nonumber \\
    \mathcal{G} \left(\{\bar{F}_4, \bar{G}_1\}\right)  = & \, (1,3,8) \tilde{+} (2,2,8)\tilde{+} (3,3,6)\tilde{+} (2,4,6). 
    \end{align}  
 All the predicted terms from the grading of $\{\bar{F}_1, \bar{G}_1\}$,  for example, are listed as 
 \begin{align*}
    (3,2,7) = &\, \left\{\bar{b}_1 \bar{b}_3^2 \bar{F}_3, \text{ } \bar{b}_3 \bar{D}_2 \bar{F}_1 , \text{ } \bar{D}_2 \bar{H}_1, \text{ } \bar{E}_1 \bar{G}_1, \text{ } \bar{F}_2 \bar{F}_3, \text{ } \bar{b}_1 \bar{b}_3 \bar{H}_2, \text{ } \bar{b}_1 \bar{D}_4 \bar{F}_3, \text{ } \bar{b}_3 \bar{C}_2 \bar{G}_1, \text{ } \bar{b}_1 C^{(003)} \bar{G}_2,  \right. \\
    & \, \left. \bar{b}_3 C^{(202)} \bar{F}_3, \text{ } C^{(003)} \bar{C}_2 \bar{F}_1, \text{ } C^{(202)} \bar{H}_2\right\} ; \\
    (4,1,7)= &\, \left\{ \bar{C}_2 \bar{I}_2, \text{ } \bar{F}_1 \bar{F}_2, \text{ } \bar{b}_1 \bar{b}_3^2 \bar{F}_1, \text{ } \bar{b}_1 \bar{D}_4 \bar{F}_1, \text{ } \bar{b}_1 \bar{b}_3 \bar{H}_1, \text{ } \bar{b}_1 C^{(003)} \bar{G}_1, \text{ } \bar{b}_3 C^{(202)} \bar{F}_1, \text{ } C^{(202)} \bar{H}_1\right\} ; \\
    (5,2,5)= &\, \left\{ \bar{b}_1^2 \bar{H}_2, \text{ } \bar{b}_1 \bar{D}_2 \bar{F}_1, \text{ } \bar{b}_1^2 \bar{b}_3 \bar{F}_3, \text{ } \bar{b}_1 \bar{C}_2 \bar{G}_1, \text{ } \bar{b}_1 C^{(202)} \bar{F}_3\right\} ; \\
    (4,3,5)= &\, \left\{\bar{b}_1 \bar{b}_2 \bar{H}_1, \text{ } \bar{b}_1 \bar{C}_2 \bar{G}_2, \text{ } \bar{b}_1 \bar{D}_2 \bar{F}_3, \text{ } \bar{b}_1 \bar{b}_2 \bar{b}_3 \bar{F}_1, \text{ } \bar{b}_1 \bar{D}_3 \bar{F}_1, \text{ } \bar{C}_2^2 \bar{F}_1, \text{ } \bar{b}_2 C^{(202)} \bar{F}_1 \right\}.
\end{align*} 
Taking into account the replacement of the generators, the set of permissible monomials reads  \begin{align*}
        & \,     \bar{d}_1 \bar{H}_2, \text{ } \bar{D}_2 \bar{H}_2, \text{ } \bar{D}_3 \bar{H}_2, \text{ } \bar{D}_4 \bar{H}_2, \text{ } \bar{D}_2 \bar{H}_1, \text{ } \bar{E}_1 \bar{G}_1, \text{ } \bar{F}_2 \bar{F}_3, \text{ } \bar{b}_1 \bar{b}_3 \bar{H}_2, \text{ } \bar{b}_1 \bar{c}_1 \bar{G}_2, \text{ } \bar{b}_1 \bar{C}_2 \bar{G}_2, \text{ } \bar{b}_1 \bar{D}_4 \bar{F}_3, \text{ } \bar{b}_3 \bar{C}_2 \bar{G}_1, \text{ } \bar{b}_3 \bar{d}_1 \bar{F}_3, \text{ } \bar{b}_3 \bar{D}_2 \bar{F}_3, \\
        & \,   \bar{b}_3 \bar{D}_3 \bar{F}_3, \text{ } \bar{b}_3 \bar{D}_4 \bar{F}_3, \text{ } \bar{b}_3 \bar{D}_2 \bar{F}_1, \text{ } \bar{c}_1 \bar{C}_2 \bar{F}_1, \text{ } \bar{C}_2^2 \bar{F}_1, \text{ } \bar{b}_1 \bar{b}_3^2 \bar{F}_3, \text{ } \bar{C}_2 \bar{I}_2, \text{ } \bar{d}_1 \bar{H}_1, \text{ } \bar{D}_3 \bar{H}_1, \text{ } \bar{D}_4 \bar{H}_1, \text{ } \bar{F}_1 \bar{F}_2, \text{ } \bar{b}_1 \bar{b}_3 \bar{H}_1, \text{ } \bar{b}_1 \bar{c}_1 \bar{G}_1, \text{ } \bar{b}_1 \bar{C}_2 \bar{G}_1, \\
        & \, \bar{b}_1 \bar{D}_4 \bar{F}_1, \text{ } \bar{b}_3 \bar{d}_1 \bar{F}_1, \text{ } \bar{b}_3 \bar{D}_3 \bar{F}_1, \text{ } \bar{b}_3 \bar{D}_4 \bar{F}_1, \text{ } \bar{b}_1 \bar{b}_3^2 \bar{F}_1, \text{ } \bar{b}_1^2 \bar{H}_2, \text{ } \bar{b}_1 \bar{d}_1 \bar{F}_3, \text{ } \bar{b}_1 \bar{D}_2 \bar{F}_3, \text{ } \bar{b}_1 \bar{D}_3 \bar{F}_3, \text{ } \bar{b}_1 \bar{D}_2 \bar{F}_1, \text{ } \bar{b}_1^2 \bar{b}_3 \bar{F}_3, \\
        & \,\bar{b}_1 \bar{b}_2 \bar{H}_1, \text{ } \bar{b}_1 \bar{D}_3 \bar{F}_1, \text{ } \bar{b}_2 \bar{d}_1 \bar{F}_1, \text{ } \bar{b}_2 \bar{D}_2 \bar{F}_1, \text{ } \bar{b}_2 \bar{D}_3 \bar{F}_1, \text{ } \bar{b}_2 \bar{D}_4 \bar{F}_1, \text{ } \bar{b}_1 \bar{b}_2 \bar{b}_3 \bar{F}_1 .   
\end{align*} This indicates the existence of $49$ coefficients $\Gamma_{67}^{1},\ldots,\Gamma_{67}^{49}$ such that $ \{\bar{F}_1,\bar{G}_1\} = \Gamma_{67}^1 \bar{d}_1\bar{H}_2 + \ldots +  \Gamma_{67}^{49}  \bar{b}_1 \bar{b}_2 \bar{b}_3 \bar{F}_1  $. In detail, by determining these coefficients, we find that 
\begin{align}
    \{\bar{F}_1, \bar{G}_1\}&=-\frac{\rm i}{8}  \bigl(-4 \bar{b}_2 \bigl(\bar{F}_1 \bigl(4 \bar{b}_3^2+\bar{d}_1+\bar{D}_2-\bar{D}_3+6 \bar{D}_4\bigl)-8 \bar{b}_3 \bar{H}_1+2 \bar{G}_1 \bigl(\bar{c}_1-\bar{C}_2\bigl)\bigl)+\bar{b}_1 \bigl(\bar{F}_1 \bigl(-2 \bar{b}_2 \bar{b}_3+\bar{D}_2+2 \bar{D}_3\bigl) \nonumber \\
&\hskip 0.7cm+8 \bar{F}_3 \bigl(\bar{b}_3^2+2 \bar{D}_4\bigl)+4 \bar{b}_2 \bar{H}_1-24 \bar{b}_3 \bar{H}_2+2 \bar{c}_1 \bigl(\bar{G}_1+3 \bar{G}_2\bigl)-2 \bar{C}_2 \bigl(\bar{G}_1+4 \bar{G}_2\bigl)\bigl)+16 \bar{b}_3 \bar{C}_2 \bigl(\bar{G}_1-2 \bar{G}_2\bigl)  \nonumber \\
&	\hskip 0.7cm+2 \bar{b}_1^2 \bigl(\bar{b}_3 \bar{F}_3-\bar{H}_2\bigl)+\bar{F}_1 \bigl(-4 \bar{c}_1 \bar{C}_2+6 \bar{C}_2^2+8 \bar{F}_2\bigl)+8 \bar{F}_3 \bigl(\bar{c}_1 \bar{C}_2-\bar{C}_2^2-4 \bar{F}_2\bigl)-8 \bigl(\bar{H}_1-2 \bar{H}_2\bigl) \nonumber \\
&	\hskip 0.7cm \times	\bigl(\bar{d}_1-\bar{D}_3+2 \bar{D}_4\bigl)+32 \bar{F}_1 \bar{F}_4\bigl) \, .
\end{align}  Then, after some heavy computations, we find that the rest of the expansions are given by 
\begin{align}
\{\bar{F}_2, \bar{G}_1\}&=\frac{{\rm i}}{64}  \bigl(4 \bar{b}_1 \bigl(2 \bar{b}_3^2 \bar{C}_2 \bigl(17 \bar{C}_2-2 \bar{c}_1\bigl)+2 \bar{b}_3 \bigl(-24 \bar{E}_1 \bar{C}_2+8 \bar{D}_3 \bigl(\bar{d}_1-\bar{D}_3+2 \bar{D}_4\bigl)-3 \bar{D}_2^2+8 \bar{D}_2 \bar{D}_3\bigl)\nonumber\\
&	\hskip 0.7cm -2 \bar{b}_2 \bigl(\bar{b}_3^2 \bigl(11 \bar{d}_1+11 \bar{D}_2+21 \bar{D}_3+22 \bar{D}_4\bigl)-8 \bar{b}_3 \bigl(\bar{F}_2+4 \bar{F}_4\bigl)+\bar{D}_4 \bigl(\bar{d}_1+\bar{D}_2-\bar{D}_3+2 \bar{D}_4\bigl)\bigl)\nonumber\\
&	\hskip 0.7cm +32 \bar{b}_2^2 \bigl(\bar{b}_3^3-\bar{b}_3 \bar{D}_4\bigl)+2 \bar{c}_1 \bar{C}_2 \bar{d}_1+4 \bar{c}_1 \bar{C}_2 \bar{D}_2-2 \bar{c}_1 \bar{C}_2 \bar{D}_3+8 \bar{c}_1 \bar{C}_2 \bar{D}_4+\bar{c}_1^2 \bar{D}_2-2 \bar{C}_2^2 \bar{d}_1-5 \bar{C}_2^2 \bar{D}_2\nonumber\\
&	\hskip 0.7cm +2 \bar{C}_2^2 \bar{D}_3-14 \bar{C}_2^2 \bar{D}_4+8 \bar{F}_4 \bigl(\bar{d}_1+\bar{D}_2-\bar{D}_3+2 \bar{D}_4\bigl)+4 \bar{D}_3 \bar{F}_2+4 \bar{E}_1^2\bigl)+4 \bigl(\bigl(\bar{d}_1+\bar{D}_2-\bar{D}_3+2 \bar{D}_4\bigl)\nonumber\\
&	\hskip 0.7cm \times \bigl(-20 \bar{b}_3 \bar{C}_2^2-16 \bar{b}_3^2 \bar{D}_2+8 \bar{E}_1 \bigl(2 \bar{c}_1+\bar{C}_2\bigl)+4 \bar{d}_1 \bigl(\bar{D}_2-3 \bar{D}_3\bigl)+7 \bar{D}_2^2+12 \bar{D}_3^2-16 \bar{D}_2 \bar{D}_3\nonumber\\
&	\hskip 0.7cm +24 \bigl(\bar{D}_2-\bar{D}_3\bigl) \bar{D}_4\bigl)+8 \bar{F}_2 \bigl(2 \bar{b}_3 \bar{D}_2-\bar{c}_1 \bar{C}_2+\bar{C}_2^2\bigl)+4 \bar{b}_2 \bigl(\bar{b}_3 \bigl(16 \bar{C}_2 \bigl(\bar{b}_3 \bar{C}_2-2 \bar{E}_1\bigl)+10 \bar{d}_1 \bar{D}_2+6 \bar{d}_1 \bar{D}_3\nonumber\\
&	\hskip 0.7cm +4 \bar{D}_4 \bigl(5 \bigl(\bar{d}_1+\bar{D}_2\bigl)+3 \bar{D}_3\bigl)+5 \bar{d}_1^2+\bar{D}_2^2-11 \bar{D}_3^2+20 \bar{D}_4^2+6 \bar{D}_2 \bar{D}_3\bigl)-3 \bar{F}_2 \bigl(\bar{d}_1+\bar{D}_2-\bar{D}_3+2 \bar{D}_4\bigl)\bigl)\nonumber\\
&	\hskip 0.7cm+64 \bar{b}_3 \bar{b}_2^2 \bigl(\bar{F}_2-\bar{b}_3 \bigl(\bar{d}_1+\bar{D}_2-\bar{D}_3+2 \bar{D}_4\bigl)\bigl)+8 \bar{F}_1^2\bigl)+\bar{b}_1^2 \bigl(\bar{b}_2 \bigl(-48 \bar{b}_3 \bar{D}_4+48 \bar{b}_3^3+\bigl(\bar{c}_1-\bar{C}_2\bigl)^2\bigl)\nonumber\\
&	\hskip 0.7cm+96 \bar{b}_3 \bigl(\bar{F}_4-\bar{b}_3 \bar{D}_3\bigl)\bigl)\bigl)\, ,\nonumber \\
\{\bar{F}_3, \bar{G}_1\}&=\frac{\rm i}{4} \bigl(-4 \bar{b}_3 \bar{C}_2 \bar{G}_1+\bar{b}_2 \bar{d}_1 \bar{F}_1-2 \bar{b}_2 \bar{d}_1 \bar{F}_3+4 \bar{b}_3 \bar{d}_1 \bar{F}_3+2 \bar{b}_2 \bar{D}_2 \bar{F}_1-\bar{b}_2 \bar{D}_3 \bar{F}_1+2 \bar{b}_2 \bar{D}_4 \bar{F}_1-2 \bar{b}_2 \bar{D}_2 \bar{F}_3+6 \bar{b}_3 \bar{D}_2 \bar{F}_3 \nonumber\\
&	\hskip 0.7cm-\bar{b}_1 \bar{D}_3 \bar{F}_3+2 \bar{b}_2 \bar{D}_3 \bar{F}_3-4 \bar{b}_3 \bar{D}_3 \bar{F}_3-4 \bar{b}_2 \bar{D}_4 \bar{F}_3+8 \bar{b}_3 \bar{D}_4 \bar{F}_3+\bar{b}_1 \bar{b}_2 \bar{b}_3 \bar{F}_3+\bar{c}_1 \bar{C}_2 \bigl(\bar{F}_1-\bar{F}_3\bigl)+\bar{C}_2^2 \bigl(3 \bar{F}_3-2 \bar{F}_1\bigl) \nonumber\\
&	\hskip 0.7cm-4 \bar{H}_2 \bigl(\bar{d}_1+2 \bar{D}_2-\bar{D}_3+2 \bar{D}_4\bigl)-4 \bar{F}_2 \bar{F}_3+8 \bar{E}_1 \bar{G}_1\bigl) \, ,\nonumber\\
\{\bar{F}_4, \bar{G}_1\}&=\frac{\rm i}{8}  \bigl(\bar{b}_1 \bar{c}_1 \bar{C}_2 \bar{D}_3+\bar{b}_2 \bigl(\bar{C}_2 \bigl(\bar{C}_2 \bigl(4 \bar{b}_3^2-\bar{D}_2-4 \bar{D}_4\bigl)+\bar{c}_1 \bar{D}_2\bigl)+\bar{E}_1 \bigl(\bar{b}_1 \bar{c}_1-\bigl(\bar{b}_1+4 \bar{b}_3\bigl) \bar{C}_2\bigl)+4 \bar{E}_1^2\bigl)\nonumber\\
&	\hskip 0.7cm+2 \bar{b}_3 \bigl(\bar{c}_1 \bar{C}_2 \bar{D}_2-\bigl(\bar{C}_2^2 \bigl(\bar{D}_2+2 \bar{D}_3+4 \bar{D}_4\bigl)\bigl)+8 \bar{E}_1^2\bigl)-\bar{b}_1 \bar{C}_2^2 \bar{D}_3+8 \bar{b}_3^3 \bar{C}_2^2-16 \bar{E}_1 \bar{b}_3^2 \bar{C}_2+2 \bar{c}_1 \bar{C}_2^3\nonumber\\
&	\hskip 0.7cm-\bar{c}_1^2 \bar{C}_2^2-2 \bar{E}_1 \bar{c}_1 \bar{D}_2+6 \bar{E}_1 \bar{C}_2 \bar{D}_2+4 \bar{C}_2^2 \bar{F}_4-\bar{C}_2^4+4 \bigl(\bar{F}_1-\bar{F}_3\bigl) \bar{F}_3\bigl) \, .
\end{align} 

Moreover, using Proposition \ref{grading} again, the gradings for the brackets $\{\bar{F}_s,\bar{G}_2\}$, for $s=1,2,3,4$, are given by 
 \begin{align}
    \mathcal{G} \left(\{\bar{F}_1, \bar{G}_2\}\right)  = & \, (2,3,7) \tilde{+} (3,2,7)\tilde{+} (4,3,5)\tilde{+} (3,4,5); \nonumber \\
    \mathcal{G} \left(\{\bar{F}_2, \bar{G}_2\}\right)  = & \, (2,2,8) \tilde{+} (3,1,8)\tilde{+} (4,2,6)\tilde{+} (3,3,6) ; \nonumber  \\
    \mathcal{G} \left(\{\bar{F}_3, \bar{G}_2\}\right)  = & \, (1,4,7) \tilde{+} ((2,3,7)\tilde{+} (3,4,5)\tilde{+} (2,5,5); \nonumber \\
    \mathcal{G} \left(\{\bar{F}_4, \bar{G}_2\}\right)  = & \, (0,4,8) \tilde{+} (1,3,8)\tilde{+} (2,4,6)\tilde{+} (1,5,6).
\end{align}  
In this case,  as an example, we deduce all the permissible polynomials from each homogeneous grading of $\{\bar{F}_4,\bar{G}_2\} $ as follows: \begin{align*}
    (0,4,8) = & \, \left\{\bar{F}_4^2, \text{ } \bar{b}_2 \bar{D}_4 \bar{F}_4, \text{ } \bar{b}_2^2 \bar{b}_3^4, \text{ } \bar{b}_2^2 \bar{b}_3^2 \bar{D}_4, \text{ } \bar{b}_3^2 \bar{D}_3^2, \text{ } \bar{b}_3 \bar{D}_3 \bar{F}_4, \text{ } \bar{b}_2 \bar{b}_3^3 \bar{D}_3, \text{ } \bar{D}_3^2 \bar{D}_4, \text{ } \bar{b}_2^2 \bar{D}_4^2, \text{ } \bar{b}_2 \bar{b}_3^2 \bar{F}_4, \text{ } \bar{b}_2 \bar{b}_3 \bar{D}_3 \bar{D}_4, \text{ } \bar{b}_2 \left(C^{(003)}\right)^2 \bar{D}_3, \right. \\
    & \, \left. \bar{b}_2^2 \bar{b}_3 \left(C^{(003)}\right)^2\right\}; \\
    (1,3,8) = & \, \left\{\bar{b}_3 \bar{D}_2 \bar{F}_4, \text{ } \bar{b}_3^2 \bar{D}_2 \bar{D}_3, \text{ } \bar{b}_2 \bar{b}_3 \bar{D}_2 \bar{D}_4, \text{ } \bar{D}_2 \bar{D}_3 \bar{D}_4, \text{ } \bar{b}_2 \bar{b}_3^3 \bar{D}_2, \text{ } C^{(003)} \bar{C}_2 \bar{F}_4, \text{ } C^{(003)} \bar{D}_3 \bar{E}_1, \text{ } \bar{b}_2 \bar{b}_3 C^{(003)} \bar{E}_1, \text{ } \bar{b}_2 \left(C^{(003)}\right)^2 \bar{D}_2,\right.\\
    & \, \left.\bar{b}_2 C^{(003)} \bar{C}_2 \bar{D}_4, \text{ } \bar{b}_3 C^{(003)} \bar{C}_2 \bar{D}_3, \text{ } \bar{b}_2 \bar{b}_3^2 C^{(003)} \bar{C}_2\right\}; \\ 
    (2,4,6) = & \, \left\{\bar{F}_3^2, \text{ } \bar{b}_1 \bar{D}_3 \bar{F}_4, \text{ } \bar{b}_2 \bar{D}_3 \bar{F}_2, \text{ } \bar{b}_2 \bar{E}_1^2, \text{ } \bar{C}_2^2 \bar{F}_4, \text{ } \bar{C}_2 \bar{D}_3 \bar{E}_1, \text{ } \bar{D}_2^2 \bar{D}_3, \text{ } \bar{b}_1 \bar{b}_2 \bar{b}_3 \bar{F}_4, \text{ } \bar{b}_1 \bar{b}_2 \bar{D}_3 \bar{D}_4, \text{ } \bar{b}_1 \bar{b}_3 \bar{D}_3^2, \text{ } \bar{b}_2 \bar{b}_3^2 \bar{C}_2^2, \text{ } \bar{b}_1 \bar{b}_2^2 \bar{b}_3^3, \text{ } \bar{b}_2 \bar{b}_3 \bar{C}_2 \bar{E}_1,\right.\\
    & \, \left. \bar{b}_2^2 \bar{b}_3 \bar{F}_2, \text{ } \bar{b}_2 \bar{b}_3 \bar{D}_2^2, \text{ } \bar{b}_1 \bar{b}_2 \bar{b}_3^2 \bar{D}_3, \text{ }  \bar{b}_2 \bar{C}_2^2 \bar{D}_4, \text{ } \bar{b}_3 \bar{C}_2^2 \bar{D}_3, \text{ } \bar{b}_1 \bar{b}_2^2 \bar{b}_3 \bar{D}_4, \text{ } \bar{b}_2 C^{(202)} \bar{F}_4\bar{b}_2^2 C^{(202)} \bar{D}_4, \text{ } \bar{b}_2 \bar{b}_3 C^{(202)} \bar{D}_3, \text{ }   \bar{b}_2 C^{(003)} \bar{C}_2 \bar{D}_2 ,\right. \\
    & \, \left. \bar{b}_1 \bar{b}_2^2\left(C^{(003)}\right)^2, \text{ } C^{(202)} \bar{D}_3^2, \text{ } \bar{b}_2^2 \bar{b}_3^2 C^{(202)}\right\}; \\ 
    (1,5,6)  = & \, \left\{\bar{b}_2 \bar{D}_2 \bar{F}_4, \text{ } \bar{b}_2^2 \bar{b}_3^2 \bar{D}_2, \text{ } \bar{D}_2 \bar{D}_3^2, \text{ } \bar{b}_2^2 \bar{D}_2 \bar{D}_4, \text{ } \bar{b}_2 \bar{b}_3 \bar{D}_2 \bar{D}_3, \text{ } \bar{b}_2^2 C^{(003)} \bar{E}_1, \text{ } \bar{b}_2 C^{(003)} \bar{C}_2 \bar{D}_3, \text{ } \bar{b}_2^2 \bar{b}_3 C^{(003)} \bar{C}_2\right\}  .
\end{align*} 
After the change of the generators, $73$ distinct allowed polynomials contained in the expansion of $  \{\bar{F}_4,\bar{G}_2\}$ arise, which are expanded in terms of the coefficients $\Gamma_{67}^1,\ldots,\Gamma_{67}^{73}.$ In other words, \begin{align*}
    \{\bar{F}_4,\bar{G}_2\} = \Gamma_{67}^1 \bar{F}_4^2 + \ldots +\Gamma_{67}^{72} \bar{b}_2\bar{c}_1\bar{C}_2\bar{D}_3 +  \Gamma_{67}^{73} \bar{b}_2^2 \bar{b}_3 \bar{c}_1 \bar{C}_2.
\end{align*} 
After determining all the coefficients above, we deduce that 
\begin{align}
    	\{\bar{F}_4, \bar{G}_2\}&=\frac{{\rm i}}{16}  \bigl(\bar{b}_2 \bigl(4 \bar{b}_3^2 \bar{C}_2 \bigl(4 \bar{C}_2-\bar{c}_1\bigl)-2 \bar{b}_3 \bigl(8 \bar{E}_1 \bar{C}_2+8 \bar{D}_3 \bigl(-\bar{d}_1+\bar{D}_3-2 \bar{D}_4\bigl)+\bar{D}_2^2-8 \bar{D}_3 \bar{D}_2\bigl)+\bar{c}_1 \bar{C}_2 \bigl(\bar{D}_2+2 \bar{D}_3  \nonumber\\
	& \hskip 0.7cm +4 \bar{D}_4\bigl)+\bar{c}_1^2 \bar{D}_2-2 \bar{C}_2^2 \bigl(\bar{D}_2+\bar{D}_3+4 \bar{D}_4\bigl)-8 \bar{F}_4 \bigl(\bar{d}_1+\bar{D}_2-\bar{D}_3+2 \bar{D}_4\bigl)-8 \bar{D}_3 \bar{F}_2+16 \bar{E}_1^2\bigl) \nonumber\\
	& \hskip 0.7cm+2 \bar{D}_3 \bigl(-16 \bar{b}_3 \bar{C}_2^2-8 \bar{b}_3^2 \bar{D}_2+8 \bar{E}_1 \bigl(\bar{c}_1+\bar{C}_2\bigl)+8 \bar{D}_3 \bigl(-\bar{d}_1+\bar{D}_3-2 \bar{D}_4\bigl)+3 \bar{D}_2^2+\bar{D}_2 \bigl(8 \bar{D}_4-6 \bar{D}_3\bigl)\bigl) \nonumber\\
	& \hskip 0.7cm+8 \bar{F}_4 \bigl(2 \bigl(\bar{b}_3 \bar{D}_2+\bar{C}_2^2\bigl)-\bar{c}_1 \bar{C}_2\bigl)+8 \bar{b}_3 \bar{b}_2^2 \bigl(\bar{F}_2-\bar{b}_3 \bigl(\bar{d}_1+\bar{D}_2-\bar{D}_3+2 \bar{D}_4\bigl)\bigl)+4 \bar{b}_1 \bigl(2 \bar{D}_3 \bigl(4 \bar{b}_3 \bar{D}_3-3 \bar{F}_4\bigl) \nonumber\\
	& \hskip 0.7cm+\bar{b}_2^2 \bigl(\bar{b}_3^3-\bar{b}_3 \bar{D}_4\bigl)+\bar{b}_2 \bar{D}_3 \bigl(\bar{D}_4-3 \bar{b}_3^2\bigl)\bigl)+16 \bar{F}_3^2\bigl) \, . 
\end{align}

Using the same strategy, the remaining Poisson brackets have the following structures
\begin{align}
\{\bar{F}_1, \bar{G}_2\}&=\frac{\rm i}{8}  \bigl(-2 \bigl(\bar{b}_2 \bigl(8 \bar{b}_3^2 \bar{F}_1-16 \bar{b}_3 \bar{H}_1+2 \bar{G}_1 \bigl(2 \bar{c}_1-\bar{C}_2\bigl)+\bar{F}_1 \bigl(\bar{d}_1+2 \bar{D}_2-\bar{D}_3+10 \bar{D}_4\bigl)-2 \bar{F}_3 \bigl(\bar{d}_1+\bar{D}_2-\bar{D}_3+2 \bar{D}_4\bigl)\bigl)\nonumber\\
&	\hskip 0.7cm	-12 \bar{b}_3 \bar{C}_2 \bar{G}_1+20 \bar{b}_3 \bar{C}_2 \bar{G}_2+2 \bar{c}_1 \bar{C}_2 \bigl(2 \bar{F}_1-3 \bar{F}_3\bigl)-6 \bar{C}_2^2 \bar{F}_1+7 \bar{C}_2^2 \bar{F}_3-8 \bar{C}_2 \bar{I}_2+4 \bar{H}_1 \bigl(2 \bar{d}_1+\bar{D}_2-2 \bar{D}_3+4 \bar{D}_4\bigl)\nonumber\\
&	\hskip 0.7cm+4 \bar{H}_2 \bigl(-\bar{d}_1+\bar{D}_2+\bar{D}_3-2 \bar{D}_4\bigl)-8 \bar{F}_1 \bar{F}_2+12 \bar{F}_2 \bar{F}_3-16 \bar{F}_1 \bar{F}_4+4 \bar{E}_1 \bar{G}_1-4 \bar{E}_1 \bar{G}_2\bigl)+\bar{b}_1 \bigl(\bar{b}_2 \bigl(4 \bigl(2 \bar{H}_1+\bar{H}_2\bigl)\nonumber\\
&	\hskip 0.7cm-2 \bar{b}_3 \bigl(3 \bar{F}_1+2 \bar{F}_3\bigl)\bigl)+8 \bar{b}_3^2 \bar{F}_3-24 \bar{b}_3 \bar{H}_2+4 \bar{c}_1 \bigl(\bar{G}_1+2 \bar{G}_2\bigl)-2 \bar{C}_2 \bigl(\bar{G}_1+8 \bar{G}_2\bigl)-2 \bar{d}_1 \bar{F}_3+\bigl(\bar{D}_2+4 \bar{D}_3\bigl) \bar{F}_1\nonumber\\
&	\hskip 0.7cm+4 \bigl(-\bar{D}_2+\bar{D}_3+3 \bar{D}_4\bigl) \bar{F}_3\bigl)+2 \bar{b}_1^2 \bigl(\bar{b}_3 \bar{F}_3-\bar{H}_2\bigl)\bigl) \, ,\nonumber\\
\{\bar{F}_2, \bar{G}_2\}&=\frac{{\rm i}}{64}  \bigl(-4 \bigl(2 \bar{b}_2 \bigl(-6 \bar{b}_3 \bigl(-8 \bar{E}_1 \bar{C}_2+2 \bar{d}_1 \bigl(\bar{D}_2+\bar{D}_3+2 \bar{D}_4\bigl)+\bar{d}_1^2+4 \bar{D}_4^2+\bar{D}_3 \bigl(2 \bar{D}_2-3 \bar{D}_3\bigl)+4 \bigl(\bar{D}_2+\bar{D}_3\bigl) \bar{D}_4\bigl)\nonumber\\
& \hskip 0.7cm-24 \bar{b}_3^2 \bar{C}_2^2-\bigl(\bar{d}_1+\bar{D}_2-\bar{D}_3+2 \bar{D}_4\bigl) \bigl(\bar{C}_2 \bigl(\bar{c}_1-\bar{C}_2\bigl)-6 \bar{F}_2\bigl)\bigl)+4 \bar{b}_3 \bigl(-\bar{c}_1 \bar{C}_2 \bar{D}_2+\bar{C}_2^2 \bigl(5 \bar{d}_1+6 \bar{D}_2-5 \bar{D}_3\nonumber\\
& \hskip 0.7cm	+14 \bar{D}_4\bigl)-8 \bar{E}_1^2\bigl)-16 \bar{b}_3^3 \bar{C}_2^2+32 \bar{E}_1 \bar{b}_3^2 \bar{C}_2+48 \bar{b}_2^2 \bar{b}_3 \bigl(\bar{b}_3 \bigl(\bar{d}_1+\bar{D}_2-\bar{D}_3+2 \bar{D}_4\bigl)-\bar{F}_2\bigl)+2 \bigl(-2 \bar{c}_1 \bigl(\bar{C}_2^3-\bar{E}_1 \bar{D}_2\bigl)\nonumber\\
& \hskip 0.7cm+\bar{c}_1^2 \bar{C}_2^2+\bar{C}_2 \bigl(-8 \bar{C}_2 \bar{F}_2+\bar{C}_2^3-2 \bar{E}_1 \bigl(2 \bar{d}_1+5 \bar{D}_2-2 \bar{D}_3+4 \bar{D}_4\bigl)\bigl)\bigl)+\bigl(\bar{d}_1+\bar{D}_2-\bar{D}_3+2 \bar{D}_4\bigl)\nonumber\\
& \hskip 0.7cm \times \bigl(4 \bar{D}_3 \bigl(\bar{d}_1-\bar{D}_3+2 \bar{D}_4\bigl)-3 \bar{D}_2^2+4 \bar{D}_3 \bar{D}_2\bigl)-8 \bar{F}_1 \bar{F}_3\bigl)+4 \bar{b}_1 \bigl(-2 \bar{b}_2 \bigl(3 \bar{b}_3^2 \bigl(\bar{d}_1+\bar{D}_2+7 \bar{D}_3+2 \bar{D}_4\bigl)\nonumber\\
& \hskip 0.7cm-24 \bar{b}_3 \bar{F}_4+\bar{E}_1 \bigl(\bar{C}_2-\bar{c}_1\bigl)-3 \bar{D}_4 \bigl(\bar{d}_1+\bar{D}_2-\bar{D}_3+2 \bar{D}_4\bigl)\bigl)+\bar{C}_2 \bigl(\bar{C}_2 \bigl(14 \bar{b}_3^2-5 \bar{D}_2-14 \bar{D}_4\bigl)+5 \bar{c}_1 \bar{D}_2\bigl)\nonumber\\
& \hskip 0.7cm-8 \bar{E}_1 \bar{b}_3 \bar{C}_2+24 \bar{b}_2^2 \bigl(\bar{b}_3^3-\bar{b}_3 \bar{D}_4\bigl)+4 \bar{D}_3 \bar{F}_2+4 \bar{E}_1^2\bigl)+3 \bar{b}_1^2 \bar{b}_2 \bigl(\bar{c}_1-\bar{C}_2\bigl)^2\bigl) \,, \nonumber \\
\{\bar{F}_3, \bar{G}_2\}&=\frac{{\rm i}}{8}  \bigl(\bar{b}_2 \bigl(4 \bar{b}_3^2 \bar{F}_1-4 \bar{b}_1 \bar{b}_3 \bar{F}_3+2 \bar{b}_1 \bar{H}_2+2 \bar{c}_1 \bar{G}_2-4 \bar{C}_2 \bar{G}_1+4 \bar{d}_1 \bar{F}_3-\bigl(3 \bar{D}_2+2 \bar{D}_3+4 \bar{D}_4\bigl) \bar{F}_1 \nonumber\\
& \hskip 0.7cm+4 \bigl(\bar{D}_2-\bar{D}_3+2 \bar{D}_4\bigl) \bar{F}_3\bigl)+2 \bigl(\bar{b}_3 \bigl(4 \bar{C}_2 \bar{G}_2-4 \bar{D}_3 \bar{F}_1\bigl)+2 \bar{b}_1 \bar{D}_3 \bar{F}_3-\bar{C}_2^2 \bar{F}_3+8 \bar{C}_2 \bar{I}_1+2 \bar{D}_2 \bar{H}_2 \nonumber\\
& \hskip 0.7cm-4 \bar{D}_3\bigl(\bar{H}_1+\bar{H}_2\bigl)+4 \bigl(2 \bar{F}_1+\bar{F}_3\bigl) \bar{F}_4-4 \bar{E}_1 \bar{G}_2\bigl)\bigl) \,.  
 \end{align}

 Finally, from the compact form $\{\bar{\textbf{E}},\bar{\textbf{H}}\},$ which contains the Poisson brackets $\{\bar{E}_1,\bar{H}_1\}$ and $\{\bar{E}_1,\bar{H}_2\}.$ We then conclude the gradings of these terms are given by \begin{align}
     \mathcal{G} \left(\{\bar{E}_1,\bar{H}_1\}\right) = &\, (2,2,8) \tilde{+}(3,1,8) \tilde{+}(4,2,6) \tilde{+}(3,3,6)  , \nonumber \\
      \mathcal{G} \left(\{\bar{E}_1,\bar{H}_2\}\right) = &\, (1,3,8) \tilde{+}(2,2,8) \tilde{+}(3,3,6) \tilde{+}(2,4,6)  .
 \end{align}   Again, by considering all the predicted generators from each of the homogeneous gradings above, we find that the expansion of these degree $12$ brackets are as follows: 
 \begin{align}
	\{\bar{E}_1, \bar{H}_1\}&=-\frac{\rm i}{64}  \bigl(\bar{b}_1^2 \bar{b}_2 \bigl(\bar{c}_1-\bar{C}_2\bigl)^2-2 \bigl(-3 \bar{D}_2^3-2 \bar{d}_1 \bar{D}_2^2+4 \bar{D}_3^2 \bar{D}_2-8 \bar{D}_3^3+16 \bar{d}_1 \bar{D}_3^2+16 \bigl(\bar{D}_2-3 \bar{D}_3\bigl) \bar{D}_4^2+\bigl(-8 \bar{d}_1^2 \nonumber\\
	& \hskip 0.7 cm-4 \bar{D}_2 \bar{d}_1+6 \bar{D}_2^2\bigl) \bar{D}_3+\bigl(-2 \bar{D}_2^2+8 \bar{d}_1 \bar{D}_2-24 \bar{D}_3 \bar{D}_2+40 \bar{D}_3^2-40 \bar{d}_1 \bar{D}_3\bigl) \bar{D}_4+4 \bar{c}_1 \bigl(2 \bar{d}_1-\bar{D}_2-2 \bar{D}_3 \nonumber\\
	& \hskip 0.7 cm+4 \bar{D}_4\bigl) \bar{E}_1-8 \bar{b}_3^2 \bigl(\bar{D}_2 \bigl(\bar{d}_1-\bar{D}_3+2 \bar{D}_4\bigl)+2 \bar{C}_2 \bar{E}_1\bigl)+48 \bar{b}_2^2 \bar{b}_3 \bigl(\bar{b}_3 \bigl(\bar{d}_1+\bar{D}_2-\bar{D}_3+2 \bar{D}_4\bigl)-\bar{F}_2\bigl) \nonumber\\
	& \hskip 0.7 cm+8 \bar{b}_3 \bigl(-\bigl(\bigl(\bar{d}_1+\bar{D}_2-\bar{D}_3+3 \bar{D}_4\bigl) \bar{C}_2^2\bigl)+\bar{c}_1 \bar{D}_2 \bar{C}_2+4 \bar{E}_1^2+2 \bar{D}_2 \bar{F}_2\bigl)+2 \bar{b}_2 \bigl(16\bar{b}_3^3 \bigl(\bar{d}_1+\bar{D}_2-\bar{D}_3+2 \bar{D}_4\bigl) \nonumber\\
	& \hskip 0.7 cm -8 \bigl(3 \bar{C}_2^2+2 \bar{F}_2\bigl) \bar{b}_3^2-2 \bigl(2 \bar{d}_1^2+\bigl(5 \bar{D}_2+8 \bar{D}_3+14 \bar{D}_4\bigl) \bar{d}_1-10 \bigl(\bar{D}_3-2 \bar{D}_4\bigl) \bigl(\bar{D}_3+\bar{D}_4\bigl)+\bar{D}_2 \bigl(7 \bar{D}_3+16 \bar{D}_4\bigl) \nonumber\\
	& \hskip 0.7 cm-24 \bar{C}_2 \bar{E}_1\bigl) \bar{b}_3-\bigl(2 \bar{c}_1-3 \bar{C}_2\bigl) \bigl(\bar{c}_1-\bar{C}_2\bigl) \bigl(\bar{d}_1+\bar{D}_2-\bar{D}_3+2 \bar{D}_4\bigl)+2 \bigl(4 \bar{d}_1+5 \bar{D}_2-4 \bar{D}_3+14 \bar{D}_4\bigl) \bar{F}_2\bigl)\bigl) \nonumber\\
	& \hskip 0.7 cm+2 \bar{b}_1 \bigl(24 \bigl(\bar{b}_3^3-\bar{b}_3 \bar{D}_4\bigl) \bar{b}_2^2+2 \bigl(8 \bar{b}_3^4-\bigl(4 \bar{d}_1+5 \bar{D}_2+20 \bar{D}_3+22 \bar{D}_4\bigl) \bar{b}_3^2+\bigl(-\bar{c}_1^2+\bar{C}_2^2+4 \bar{F}_2+24 \bar{F}_4\bigl) \bar{b}_3  \nonumber\\
	& \hskip 0.7 cm+\bar{D}_4 \bigl(4 \bar{d}_1+5 \bar{D}_2-4 \bar{D}_3+14 \bar{D}_4\bigl)+\bigl(\bar{c}_1-\bar{C}_2\bigl) \bar{E}_1\bigl) \bar{b}_2+\bigl(\bar{C}_2^2-\bar{c}_1^2\bigl) \bar{D}_2+2 \bigl(-8 \bar{b}_3^3+\bar{c}_1^2+\bar{C}_2^2-2 \bar{c}_1 \bar{C}_2\bigl) \bar{D}_3 \nonumber\\
	& \hskip 0.7 cm+8 \bar{C}_2^2 \bar{D}_4+8 \bar{b}_3 \bigl(\bar{D}_3 \bar{D}_4+\bar{C}_2 \bar{E}_1\bigl)-16 \bar{D}_3 \bar{F}_2-8 \bar{D}_4 \bar{F}_4-8 \bar{b}_3^2 \bigl(\bar{C}_2^2-2 \bar{F}_4\bigl)\bigl)\bigl) \, ,
 \end{align} and
 \begin{align} 
	\{\bar{E}_1, \bar{H}_2\}&=-\frac{{\rm i}}{64}  \bigl(2 \bigl(8 \bar{b}_3 \bigl(\bar{E}_1 \bar{b}_2 \bar{C}_2-\bar{b}_2 \bigl(2 \bar{D}_3-\bar{D}_4\bigl) \bigl(\bar{d}_1+\bar{D}_2-\bar{D}_3+2 \bar{D}_4\bigl)-\bar{c}_1 \bar{C}_2 \bar{D}_2+\bar{C}_2^2 \bigl(\bar{D}_3+\bar{D}_4\bigl) \nonumber\\
	&\hskip 0.7cm-2 \bar{D}_2 \bar{F}_4-4 \bar{E}_1^2\bigl)+\bar{b}_2 \bigl(\bigl(\bar{c}_1-\bar{C}_2\bigl) \bigl(\bar{c}_1 \bigl(2 \bar{d}_1+\bar{D}_2-2 \bar{D}_3+4 \bar{D}_4\bigl)+\bar{C}_2 \bigl(-2 \bar{d}_1+\bar{D}_2+2 \bar{D}_3  \nonumber\\
	&\hskip 0.7cm-4 \bar{D}_4\bigl)\bigl)+8 \bigl(2 \bar{D}_3-\bar{D}_4\bigl) \bar{F}_2-16 \bar{E}_1^2\bigl)-8 \bar{b}_3^2 \bigl(2 \bigl(\bar{b}_2 \bar{F}_2-5 \bar{E}_1 \bar{C}_2\bigl)+\bar{D}_3 \bigl(4 \bar{d}_1+3 \bar{D}_2-4 \bar{D}_3+8 \bar{D}_4\bigl)\bigl) \nonumber\\
	&\hskip 0.7cm+16 \bar{b}_3^3 \bigl(\bar{b}_2 \bigl(\bar{d}_1+\bar{D}_2-\bar{D}_3+2 \bar{D}_4\bigl)-2 \bar{C}_2^2\bigl)+12 \bar{E}_1 \bar{c}_1 \bar{D}_2-8 \bar{E}_1 \bar{c}_1 \bar{D}_3+8 \bar{d}_1 \bar{D}_3^2-12 \bar{d}_1 \bar{D}_2 \bar{D}_3 \nonumber\\
	&\hskip 0.7cm+8 \bar{d}_1 \bar{D}_3 \bar{D}_4+\bar{D}_2^3-8 \bar{D}_3^3+20 \bar{D}_2 \bar{D}_3^2+16 \bar{D}_3 \bar{D}_4^2-10 \bar{D}_2^2 \bar{D}_3+6 \bar{D}_2^2 \bar{D}_4+8 \bar{D}_3^2 \bar{D}_4-24 \bar{D}_2 \bar{D}_3 \bar{D}_4\bigl) \nonumber\\
	&\hskip 0.7cm+\bar{b}_1 \bigl(-4 \bar{b}_2 \bigl(\bar{b}_3 \bigl(\bar{c}_1^2-\bar{C}_2^2-4 \bar{F}_4\bigl)+\bar{b}_3^2 \bigl(\bar{D}_2+4 \bar{D}_3+6 \bar{D}_4\bigl)+\bar{E}_1 \bigl(\bar{C}_2-\bar{c}_1\bigl)-\bar{D}_4 \bigl(\bar{D}_2+4 \bar{D}_3+6 \bar{D}_4\bigl)\bigl) \nonumber\\
	&\hskip 0.7cm+3 \bar{b}_2^2 \bigl(\bar{c}_1-\bar{C}_2\bigl)^2+8 \bar{b}_3 \bar{D}_3 \bigl(\bar{D}_2+2 \bar{D}_3+6 \bar{D}_4\bigl)+4 \bar{D}_3 \bigl(2 \bar{c}_1-3 \bar{C}_2\bigl) \bigl(\bar{c}_1-\bar{C}_2\bigl)-8 \bigl(\bar{D}_2+4 \bar{D}_3+6 \bar{D}_4\bigl) \bar{F}_4\bigl)\bigl) \, .
\end{align}

Up to here, from all these relations given above, we have provided all the expansions in degree-twelve Poisson brackets. We then study the expansions in the degree-thirteen Poisson brackets.

\vskip 0.5cm

\subsection{Expansions in the degree $13$ brackets}

  Considering Poisson brackets of degree-thirteen, an evaluation of the compact forms reveals an organization into sets such as $\{\bar{\textbf{E}},\bar{\textbf{I}}\}$, $\{\bar{\textbf{F}},\bar{\textbf{H}}\}$, and $\{\bar{\textbf{G}},\bar{\textbf{G}}\}$. Starting our analysis with the compact forms $\{\bar{\textbf{E}},\bar{\textbf{I}}\}$, which exhibits uniform properties across all its elements, we derive the following result:
\begin{align}
    \mathcal{G}\left(\{\bar{E}_1,\bar{I}_1\}\right) = & \, (0,4,9) \tilde{+} (1,3,9) \tilde{+} (1,5,7) \tilde{+}(2,4,7); \nonumber \\
     \mathcal{G}\left(\{\bar{E}_1,\bar{I}_2\}\right) = & \, (3,1,9) \tilde{+} (4,0,9) \tilde{+} (5,1,7) \tilde{+}(4,2,7).
\end{align} 
  Expanding these elements, the following components in $ \mathcal{G}\left(\{\bar{E}_1,\bar{I}_1\}\right) $ are deduced 
\begin{align*}
    (0,4,9) := & \, \left\{\bar{b}_3 \bar{E}_1 \bar{F}_2, \text{ } \bar{b}_1 \bar{b}_3^4 \bar{C}_2, \text{ } \bar{b}_1 \bar{b}_3^3 \bar{E}_1, \text{ } \bar{C}_2 \bar{D}_4 \bar{F}_2, \text{ } \bar{b}_1 \bar{b}_3 \bar{D}_4 \bar{E}_1, \text{ } \bar{b}_1 \bar{C}_2 \bar{D}_4^2, \text{ } \bar{b}_1 \bar{b}_3^2 \bar{C}_2 \bar{D}_4, \text{ } \bar{b}_3^2 \bar{C}_2 \bar{F}_2, \text{ } \bar{b}_1 \left(C^{(003)}\right)^2 \bar{E}_1, \text{ } \bar{b}_1 C^{(003)} \bar{D}_2 \bar{D}_4,\right.\\
    & \, \left. \bar{b}_3^2 C^{(202)} \bar{E}_1, \text{ } \bar{b}_3 C^{(003)} C^{(202)} \bar{D}_2, \text{ } \left(C^{(003)}\right)^2 \bar{C}_2 C^{(202)}, \text{ }   \bar{b}_3 \bar{C}_2 C^{(202)} \bar{D}_4, \text{ } \bar{b}_1 \bar{b}_3^2 C^{(003)} \bar{D}_2, \text{ } \bar{b}_1 \bar{b}_3 \left(C^{(003)}\right)^2 \bar{C}_2, \right. \\
    & \, \left. \bar{b}_3^3 \bar{C}_2 C^{(202)}, \text{ } C^{(202)} \bar{D}_4 \bar{E}_1, \text{ }C^{(003)} \bar{D}_2 \bar{F}_2\right\} ;\\ 
    (1,3,9) := & \, \left\{C^{(003)} C^{(202)} \bar{F}_2, \text{ }\bar{b}_1 \bar{b}_3 C^{(003)} \bar{F}_2, \text{ }\bar{b}_1 C^{(003)} C^{(202)} \bar{D}_4, \text{ }\bar{b}_3 C^{(003)} \left(C^{(202)}\right)^2, \text{ }\bar{b}_1^2 \bar{b}_3 C^{(003)} \bar{D}_4, \text{ }\bar{b}_1^2 \left(C^{(003)}\right)^3,\right. \\
    &\, \left. \bar{b}_1 \bar{b}_3^2 C^{(003)} C^{(202)}, \text{ }\bar{b}_1^2 \bar{b}_3^3 C^{(003)} \right\} ;\\
     (1,5,7) := & \, \left\{\bar{b}_1 \bar{E}_1 \bar{F}_2, \text{ }\bar{b}_1^2 \bar{D}_4 \bar{E}_1, \text{ }\bar{b}_1^2 \bar{b}_3 \bar{C}_2 \bar{D}_4, \text{ }\bar{b}_1^2 \bar{b}_3^3 \bar{C}_2, \text{ }\bar{b}_1 \bar{b}_3 \bar{C}_2 \bar{F}_2, \text{ }\bar{b}_1^2 \bar{b}_3^2 \bar{E}_1, \text{ }\bar{C}_2 C^{(202)} \bar{F}_2, \text{ }\left(C^{(202)}\right)^2 \bar{E}_1, \text{ }\bar{b}_1 \bar{b}_3 C^{(202)} \bar{E}_1,  \right. \\
    & \left.\bar{b}_1 C^{(003)} C^{(202)} \bar{D}_2, \text{ } \bar{b}_1 \bar{C}_2 C^{(202)} \bar{D}_4, \text{ }\bar{b}_3 \bar{C}_2 \left(C^{(202)}\right)^2, \text{ }\bar{b}_1^2 \bar{b}_3 C^{(003)} \bar{D}_2, \text{ }\bar{b}_1^2 \left(C^{(003)}\right)^2 \bar{C}_2, \text{ }\bar{b}_1 \bar{b}_3^2 \bar{C}_2 C^{(202)}\right\} ;\\ 
     (2,4,7) := & \,\left\{\bar{F}_1 \bar{G}_1, \text{ }\bar{C}_2 \bar{D}_2 \bar{F}_2, \text{ }\bar{b}_1 \bar{b}_3 \bar{D}_2 \bar{E}_1, \text{ }\bar{b}_1 \bar{b}_3^2 \bar{C}_2 \bar{D}_2, \text{ }\bar{b}_1 \bar{C}_2 \bar{D}_2 \bar{D}_4, \text{ }\bar{b}_1 \bar{b}_2 C^{(003)} \bar{F}_2, \text{ }C^{(202)} \bar{D}_2 \bar{E}_1, \text{ }\bar{b}_1 C^{(003)} \bar{C}_2 \bar{E}_1, \text{ }\bar{b}_1 C^{(003)} C^{(202)} \bar{D}_3,\right.\\
    &\, \bar{b}_1 C^{(003)} \bar{D}_2^2, \text{ }\bar{b}_2 C^{(003)} \left(C^{(202)}\right)^2, \text{ }\bar{b}_3 \bar{C}_2 C^{(202)} \bar{D}_2, \text{ }C^{(003)} \bar{C}_2^2 C^{(202)}, \text{ }\bar{b}_1^2 \bar{b}_2 C^{(003)} \bar{D}_4, \text{ }\bar{b}_1^2 \bar{b}_3 C^{(003)} \bar{D}_3, \\
    & \,\left.\bar{b}_1 \bar{b}_2 \bar{b}_3 C^{(003)} C^{(202)}, \text{ }\bar{b}_1^2 C^{(003)} \bar{F}_4, \text{ }\bar{b}_1 \bar{b}_3 C^{(003)} \bar{C}_2^2, \text{ }\bar{b}_1^2 \bar{b}_2 \bar{b}_3^2 C^{(003)}\right\} .
\end{align*} After replacement of the new generators $\bar{c}_1$ and $\bar{d}_1$, 
we conclude that there cannot exist more than $173$ polynomials in the Poisson bracket $\{\bar{E}_1,\bar{I}_1\}$. In other words, we may write \begin{align}
     \{\bar{E}_1,\bar{I}_1\} = \Gamma_{59}^{1} \bar{b}_3\bar{E}_1\bar{F}_2 +  \ldots + \Gamma_{59}^{172} \bar{b}_1\bar{b}_3 \bar{c}_1 \bar{C}_2^2 + \Gamma_{59}^{173} \bar{b}_1^2 \bar{b}_2 \bar{b}_3^2 \bar{c}_1,
\end{align} where $\Gamma_{59}^{1},\ldots,\Gamma_{59}^{173}$ are arbitrary coefficients.   Specifically, after determining the coefficients, we end up with the expression 
 \begin{align}
	\{\bar{E}_1, \bar{I}_1\}&=-\frac{\rm i}{64}  \bigl(2 \bar{b}_2^2 \bigl(-2 \bar{c}_1 \bigl(-\bar{b}_3 \bigl(\bar{d}_1+\bar{D}_2-\bar{D}_3+2 \bar{D}_4\bigl)+\bar{C}_2^2+\bar{F}_2\bigl)+\bar{C}_2 \bigl(2 \bar{b}_3 \bigl(\bar{d}_1+\bar{D}_2-\bar{D}_3+2 \bar{D}_4\bigl)+\bar{C}_2^2-2 \bar{F}_2\bigl) \nonumber\\
	& \hskip 0.7cm +\bar{c}_1^2 \bar{C}_2\bigl)+\bar{b}_2 \bigl(4 \bar{b}_3 \bigl(-\bar{c}_1 \bar{C}_2 \bigl(\bar{c}_1+\bar{C}_2\bigl)+4 \bar{C}_2 \bar{F}_4+2 \bar{E}_1 \bigl(\bar{D}_2+2 \bar{D}_3+4 \bar{D}_4\bigl)\bigl)+8 \bar{b}_3^2 \bigl(\bar{c}_1 \bar{D}_2-2 \bar{C}_2 \bigl(\bar{D}_2+2 \bar{D}_3 \nonumber\\
	& \hskip 0.7cm +4 \bar{D}_4\bigl)\bigl)+32 \bar{b}_3^4 \bar{C}_2-32 \bar{E}_1 \bar{b}_3^3-\bar{c}_1 \bigl(-16 \bar{E}_1 \bar{C}_2+4 \bar{D}_3 \bigl(\bar{d}_1-\bar{D}_3+2 \bar{D}_4\bigl)+\bar{D}_2^2+8 \bigl(\bar{D}_3+\bar{D}_4\bigl) \bar{D}_2\bigl) \nonumber\\
	& \hskip 0.7cm +\bar{C}_2 \bigl(-4 \bar{d}_1 \bar{D}_3+3 \bar{D}_2^2+16 \bar{D}_4 \bar{D}_2+4 \bigl(\bar{D}_3+2 \bar{D}_4\bigl) \bigl(\bar{D}_3+4 \bar{D}_4\bigl)\bigl)-32 \bar{E}_1 \bar{F}_4\bigl)+8 \bar{F}_4 \bigl(12 \bar{b}_3^2 \bar{C}_2-16 \bar{E}_1 \bar{b}_3  \nonumber\\
	& \hskip 0.7cm+3 \bar{c}_1 \bar{D}_2-6 \bar{C}_2 \bigl(\bar{D}_3+2 \bar{D}_4\bigl)\bigl)+2 \bar{b}_1 \bigl(-\bigl(\bar{b}_2^2 \bigl(\bar{b}_3^2-\bar{D}_4\bigl) \bigl(\bar{c}_1+\bar{C}_2\bigl)\bigl)+2 \bar{b}_3 \bar{b}_2 \bar{D}_3 \bigl(\bar{c}_1+\bar{C}_2\bigl)-4 \bar{b}_2 \bar{C}_2 \bar{F}_4 \nonumber\\
	& \hskip 0.7cm+2 \bar{D}_3^2 \bigl(\bar{C}_2-\bar{c}_1\bigl)\bigl)+4 \bar{D}_3 \bigl(-6 \bar{b}_3 \bar{c}_1 \bar{D}_2+4 \bar{b}_3 \bar{C}_2 \bigl(\bar{D}_2+2 \bar{D}_3+6 \bar{D}_4\bigl)-24 \bar{b}_3^3 \bar{C}_2+\bar{E}_1 \bigl(24 \bar{b}_3^2-6 \bar{D}_2+4 \bar{D}_3 \nonumber\\
	& \hskip 0.7cm+8 \bar{D}_4\bigl)+3 \bar{C}_2 \bigl(\bar{c}_1-\bar{C}_2\bigl)^2\bigl)\bigl) . 
 \end{align} Then, the last expression in this compact form is given by \begin{align}
		\{\bar{E}_1, \bar{I}_2\}&=-\frac{\rm i}{32}  \bigl(-2 \bigl(2 \bar{F}_2 \bigl(6 \bar{C}_2 \bigl(-2 \bar{b}_3^2+\bar{d}_1-\bar{D}_3+4 \bar{D}_4\bigl)+16 \bar{E}_1 \bar{b}_3-3 \bar{c}_1 \bar{D}_2\bigl)+\bigl(\bar{d}_1+\bar{D}_2-\bar{D}_3+2 \bar{D}_4\bigl) \bigl(-\bigl(\bar{c}_1-\bar{C}_2\bigl) \nonumber\\
		& \hskip 0.7cm \times \bigl(3 \bar{C}_2 \bigl(\bar{c}_1+\bar{C}_2\bigl)-7 \bar{b}_2 \bigl(\bar{d}_1+\bar{D}_2-\bar{D}_3+2 \bar{D}_4\bigl)\bigl)+6 \bar{b}_3 \bar{c}_1 \bar{D}_2-4 \bar{b}_3 \bar{C}_2 \bigl(2 \bar{d}_1+3 \bar{D}_2-2 \bar{D}_3+10 \bar{D}_4\bigl) \nonumber\\
		& \hskip 0.7cm +24 \bar{b}_3^3 \bar{C}_2-24 \bar{E}_1 \bar{b}_3^2+2 \bar{E}_1 \bigl(-2 \bar{d}_1+7 \bar{D}_2+2 \bar{D}_3-8 \bar{D}_4\bigl)\bigl)+24 \bar{F}_1 \bar{G}_1\bigl)+\bar{b}_1 \bigl(\bar{b}_3 \bigl(-4 \bar{b}_2 \bar{C}_2 \bigl(\bar{d}_1+\bar{D}_2-\bar{D}_3+2 \bar{D}_4\bigl) \nonumber\\
		& \hskip 0.7cm -2 \bar{c}_1^2 \bar{C}_2+8 \bar{C}_2 \bar{F}_2+6 \bar{C}_2^3+4 \bar{E}_1 \bigl(2 \bar{d}_1+3 \bar{D}_2-2 \bar{D}_3+8 \bar{D}_4\bigl)\bigl)-\bar{c}_1 \bigl(-14 \bar{b}_2 \bar{F}_2+8 \bar{E}_1 \bar{C}_2+\bar{D}_2 \bigl(2 \bar{d}_1+3 \bar{D}_2 \nonumber\\
		& \hskip 0.7cm -2 \bar{D}_3+8 \bar{D}_4\bigl)\bigl)+4 \bar{b}_3^2 \bigl(\bar{c}_1 \bar{D}_2-2 \bar{C}_2 \bigl(2 \bar{d}_1+3 \bar{D}_2-2 \bar{D}_3+8 \bar{D}_4\bigl)\bigl)-2 \bar{F}_2 \bigl(5 \bar{b}_2 \bar{C}_2+8 \bar{E}_1\bigl)+16 \bar{b}_3^4 \bar{C}_2-16 \bar{E}_1 \bar{b}_3^3  \nonumber\\
		& \hskip 0.7cm +2 \bar{C}_2 \bigl(2 \bar{D}_4 \bigl(4 \bar{d}_1+7 \bar{D}_2-2 \bar{D}_3\bigl)+\bigl(\bar{D}_2+2 \bar{D}_3\bigl) \bigl(\bar{d}_1+\bar{D}_2-\bar{D}_3\bigl)+24 \bar{D}_4^2\bigl)\bigl)+\bar{b}_1^2 \bar{C}_2 \bigl(-4 \bar{b}_3 \bar{D}_3+2 \bar{b}_2 \bigl(\bar{b}_3^2-\bar{D}_4\bigl) \nonumber\\
		& \hskip 0.7cm +\bigl(\bar{c}_1-\bar{C}_2\bigl)^2+4 \bar{F}_4\bigl)\bigl) \, .
  \end{align} 

 Concerning the compact form $\{\bar{\textbf{F}},\bar{\textbf{H}}\}$,  certain gradings are determined through a direct calculation as follows: \begin{align}
    \mathcal{G} \left(\{\bar{F}_1, \bar{H}_1\}\right)  = & \, (3,2,8) \tilde{+} (4,1,8)\tilde{+} (5,2,6)\tilde{+} (4,3,6); \nonumber \\
    \mathcal{G} \left(\{\bar{F}_2, \bar{H}_1\}\right)  = & \, (3,1,9) \tilde{+} (4,0,9)\tilde{+} (5,1,7)\tilde{+} (4,2,7);  \nonumber \\
    \mathcal{G} \left(\{\bar{F}_3, \bar{H}_1\}\right)  = & \, (2,3,8) \tilde{+} (3,2,8)\tilde{+} (4,3,6)\tilde{+} (3,4,6); \nonumber \\
    \mathcal{G} \left(\{\bar{F}_4, \bar{H}_1\}\right)  = & \, (1,3,9) \tilde{+} (2,2,9)\tilde{+} (3,3,7)\tilde{+} (2,4,7).
    \end{align} 
   All the allowed polynomials for the first example are given by 
    \begin{align*}
     (3,2,8) = & \, \left\{\bar{D}_3 \bar{I}_2,\bar{E}_1 \bar{H}_1, \text{ }\bar{F}_2 \bar{G}_2, \text{ }\bar{b}_1 \bar{D}_4 \bar{G}_2, \text{ }\bar{b}_2 \bar{b}_3 \bar{I}_2, \text{ }\bar{b}_3 \bar{C}_2 \bar{H}_1, \text{ }\bar{b}_3 \bar{D}_2 \bar{G}_1, \text{ }\bar{b}_3 \bar{E}_1 \bar{F}_1, \text{ }\bar{b}_3^2 \bar{C}_2 \bar{F}_1, \text{ }\bar{C}_2 \bar{D}_4 \bar{F}_1, \text{ }\bar{b}_1 \bar{b}_3^2 \bar{G}_2, \right. \\
     & \, \left. \bar{b}_1 C^{(003)} \bar{H}_2, \text{ }\bar{b}_3 C^{(202)} \bar{G}_2, \text{ }C^{(003)} \bar{C}_2 \bar{G}_1, \text{ }C^{(003)} C^{(202)} \bar{F}_3, \text{ }C^{(003)} \bar{D}_2 \bar{F}_1, \text{ }\bar{b}_1 \bar{b}_3 C^{(003)} \bar{F}_3\right\} ;\\
     (4,1,8)= & \, \left\{\bar{D}_2 \bar{I}_2, \text{ }\bar{F}_2 \bar{G}_1, \text{ }\bar{b}_1 \bar{D}_4 \bar{G}_1, \text{ }\bar{b}_1 \bar{b}_3^2 \bar{G}_1, \text{ }\bar{b}_1 C^{(003)} \bar{H}_1, \text{ }\bar{b}_3 C^{(202)} \bar{G}_1, \text{ }C^{(003)} C^{(202)} \bar{F}_1, \text{ }\bar{b}_1 \bar{b}_3 C^{(003)} \bar{F}_1\right\} ;\\
     (5,2,6)= & \, \left\{\bar{b}_1^2 \bar{b}_3 \bar{G}_2, \text{ }\bar{b}_1 \bar{b}_2 \bar{I}_2, \text{ }\bar{b}_1 \bar{C}_2 \bar{H}_1, \text{ }\bar{b}_1 \bar{D}_2 \bar{G}_1, \text{ }\bar{b}_1 \bar{E}_1 \bar{F}_1, \text{ }\bar{b}_1 \bar{b}_3 \bar{C}_2 \bar{F}_1, \text{ }\bar{C}_2 C^{(202)} \bar{F}_1, \text{ }\bar{b}_1^2 C^{(003)} \bar{F}_3, \text{ }\bar{b}_1 C^{(202)} \bar{G}_2\right\} ;\\
     (4,3,6) = & \, \left\{\bar{b}_1^2 \bar{I}_1, \text{ }\bar{b}_1 \bar{C}_2 \bar{H}_2, \text{ }\bar{b}_1 \bar{b}_3 \bar{C}_2 \bar{F}_3, \text{ }\bar{b}_1 \bar{D}_2 \bar{G}_2, \text{ }\bar{b}_1 \bar{D}_3 \bar{G}_1, \text{ }\bar{b}_1 \bar{E}_1 \bar{F}_3, \text{ }\bar{C}_2 \bar{D}_2 \bar{F}_1, \text{ }\bar{b}_1 \bar{b}_2 \bar{b}_3 \bar{G}_1, \text{ }\bar{C}_2^2 \bar{G}_1,\right. \\
     & \, \left. \bar{b}_2 C^{(202)} \bar{G}_1, \text{ }\bar{C}_2 C^{(202)} \bar{F}_3, \text{ }\bar{b}_1 \bar{b}_2 C^{(003)} \bar{F}_1\right\}.
\end{align*}In this case, considering the newly introduced generators $\bar{c}_1$ and $\bar{d}_1$, we arrive at the conclusion that there is a possibility of having up to a maximum of $66$ distinct terms with coefficients $\Gamma_{68}^1,\ldots,\Gamma_{68}^{66}$: $$ \{\bar{F}_1,\bar{H}_1\} = \Gamma_{68}^1\bar{D}_3 \bar{I}_2 +   \ldots + \Gamma_{68}^{65} \bar{C}_2\bar{d}_1 \bar{F}_3+\Gamma_{68}^{66}\bar{b}_1 \bar{b}_2 \bar{c}_1 \bar{F}_1,$$ thereby establishing the relations. 
 After explicit computations, we deduce that 
\begin{align}
    \{\bar{F}_1 , \bar{H}_1\}&=\frac{{\rm i}}{16}  \bigl(\bar{b}_1 \bigl(\bar{b}_2 \bigl(\bar{c}_1 \bar{F}_1-\bar{C}_2 \bar{F}_1-4 \bar{I}_2\bigl)+8 \bar{b}_3 \bar{C}_2 \bar{F}_1+8 \bar{b}_3^2 \bar{G}_2+2 \bar{G}_2 \bigl(2 \bar{d}_1+\bar{D}_2-2 \bar{D}_3\bigl)+4 \bigl(\bar{D}_2-\bar{D}_3\bigl) \bar{G}_1 \nonumber\\
		& \hskip 0.7cm+8 \bar{E}_1 \bigl(\bar{F}_3-2 \bar{F}_1\bigl)\bigl)+2 \bigl(-16 \bar{b}_3 \bar{G}_2 \bigl(\bar{d}_1+\bar{D}_2-\bar{D}_3+2 \bar{D}_4\bigl)+\bar{C}_2 \bigl(2 \bar{c}_1 \bar{G}_1+\bar{F}_1 \bigl(2 \bar{d}_1+3 \bar{D}_2-2 \bar{D}_3+4 \bar{D}_4\bigl) \nonumber\\
	& \hskip 0.7cm	-6 \bar{F}_3 \bigl(\bar{d}_1+\bar{D}_2-\bar{D}_3+2 \bar{D}_4\bigl)\bigl)+2 \bar{c}_1 \bar{F}_3 \bigl(\bar{d}_1+\bar{D}_2-\bar{D}_3+2 \bar{D}_4\bigl)+\bar{c}_1 \bar{D}_2 \bar{F}_1-4 \bar{D}_2 \bar{I}_2+16 \bar{F}_2 \bar{G}_2\bigl) \nonumber\\
	& \hskip 0.7cm	+2 \bar{b}_1^2 \bar{F}_3 \bigl(\bar{C}_2-\bar{c}_1\bigl)\bigl) \, .
\end{align} 
A similar approach reveals that the rest of the explicit expansions of each of the Poisson brackets are  \begin{align}
	\{\bar{F}_2, \bar{H}_1\}&=-\frac{\rm i}{32}  \bigl(\bar{b}_1 \bigl(4 \bar{b}_3^2 \bigl(\bar{c}_1 \bar{D}_2-2 C_2 \bigl(2 \bar{d}_1+3 \bar{D}_2-2 \bar{D}_3+8 \bar{D}_4\bigl)\bigl)+4 \bar{b}_3 C_2 \bigl(-4 \bar{c}_1 C_2+\bar{c}_1^2+3 C_2^2+4 \bar{F}_2\bigl)+4 \bar{c}_1 \bigl(2 \bar{b}_2 \bar{F}_2 \nonumber\\
& \hskip 0.7cm	+\bar{E}_1 C_2\bigl)-8 \bar{F}_2 \bigl(\bar{b}_2 C_2+2 \bar{E}_1\bigl)+16 \bar{b}_3^4 C_2-\bar{c}_1 \bar{D}_2 \bigl(2 \bar{d}_1+\bar{D}_2-2 \bar{D}_3+8 \bar{D}_4\bigl)-4 \bar{E}_1 \bar{c}_1^2+C_2 \bigl(4 \bar{D}_4 \bigl(4 \bar{d}_1 \nonumber\\
& \hskip 0.7cm+7 \bar{D}_2-4 \bar{D}_3\bigl)+\bar{D}_2 \bigl(2 \bar{d}_1+\bar{D}_2-2 \bar{D}_3\bigl)+48 \bar{D}_4^2\bigl)\bigl)+8 \bar{F}_2 \bigl(2 C_2 \bigl(6 \bar{b}_3^2-\bar{d}_1+\bar{D}_3-4 \bar{D}_4\bigl)-8 \bar{E}_1 \bar{b}_3+\bar{c}_1 \bar{D}_2\bigl) \nonumber\\
& \hskip 0.7cm-4 \bigl(\bar{d}_1+\bar{D}_2-\bar{D}_3+2 \bar{D}_4\bigl) \bigl(-\bigl(\bar{c}_1-C_2\bigl) \bigl(-4 \bar{b}_2 \bigl(\bar{d}_1+\bar{D}_2-\bar{D}_3+2 \bar{D}_4\bigl)+2 \bar{c}_1 C_2+C_2^2\bigl)-2 \bar{b}_3 C_2 \bigl(2 \bar{d}_1 \nonumber\\
& \hskip 0.7cm+\bar{D}_2-2 \bar{D}_3+12 \bar{D}_4\bigl)+16 \bar{b}_3^3 C_2+8 \bar{E}_1 \bar{D}_2\bigl)+\bar{b}_1^2 \bigl(\bar{c}_1-C_2\bigl) \bigl(8 \bar{b}_3 \bar{D}_3+4 \bar{b}_2 \bigl(\bar{D}_4-\bar{b}_3^2\bigl)+C_2 \bigl(\bar{c}_1-C_2\bigl)-8 \bar{F}_4\bigl) \nonumber\\
& \hskip 0.7cm-64 \bar{F}_1 \bar{G}_1\bigl) \, , \nonumber\\
\{\bar{F}_3, \bar{H}_1\}&=-\frac{{\rm i}}{16}  \bigl(\bar{b}_1 \bigl(\bar{b}_2 \bigl(\bar{F}_1+4 \bar{F}_3\bigl) \bigl(\bar{c}_1-\bar{C}_2\bigl)-4 \bar{b}_3 \bar{c}_1 \bar{F}_3+4 \bar{b}_2 \bar{b}_3 \bigl(\bar{G}_1-\bar{G}_2\bigl)+6 \bar{c}_1 \bar{H}_2+\bar{D}_2 \bar{G}_2+4 \bar{D}_3 \bigl(\bar{G}_2-4 \bar{G}_1\bigl) \nonumber\\
& \hskip 0.7cm +6 \bar{E}_1 \bar{F}_3\bigl)+4 \bar{b}_2 \bigl(\bar{b}_3 \bar{c}_1 \bar{F}_1-3 \bar{G}_2 \bigl(\bar{d}_1+\bar{D}_2-\bar{D}_3+2 \bar{D}_4\bigl)+3 \bar{E}_1 \bar{F}_1\bigl)-8 \bar{b}_3^2 \bar{C}_2 \bar{F}_3+8 \bar{b}_3 \bar{C}_2 \bar{H}_1-8 \bar{b}_3 \bar{d}_1 \bar{G}_2 \nonumber\\
& \hskip 0.7cm-8 \bar{b}_3 \bar{D}_2 \bar{G}_1-16 \bar{b}_3 \bar{D}_3 \bar{G}_1+8 \bar{b}_3 \bar{D}_3 \bar{G}_2-16 \bar{b}_3 \bar{D}_4 \bar{G}_2-8 \bar{E}_1 \bar{b}_3 \bar{F}_1+16 \bar{E}_1 \bar{b}_3 \bar{F}_3-2 \bar{b}_1^2 \bar{I}_1-6 \bar{c}_1 \bar{C}_2 \bar{G}_1 \nonumber\\
& \hskip 0.7cm+8 \bar{c}_1\bar{C}_2 \bar{G}_2+2 \bar{c}_1 \bar{d}_1 \bar{F}_3+3 \bar{c}_1 \bar{D}_2 \bar{F}_1-2 \bar{c}_1 \bar{D}_3 \bar{F}_3+4 \bar{c}_1 \bar{D}_4 \bar{F}_3-10 \bar{C}_2 \bar{d}_1 \bar{F}_3+\bar{C}_2 \bar{D}_2 \bar{F}_1-12 \bar{C}_2 \bar{D}_3 \bar{F}_1 \nonumber \\
& \hskip 0.7cm-8 \bar{C}_2 \bar{D}_4 \bar{F}_1+10 \bar{C}_2 \bar{D}_3 \bar{F}_3-12 \bar{C}_2 \bar{D}_4 \bar{F}_3+18 \bar{C}_2^2 \bar{G}_1+16 \bar{F}_2 \bar{G}_2\bigl) \,   ,
\end{align} and 
\begin{align} 
\{\bar{F}_4, \bar{H}_1\}&=\frac{{\rm i}}{64}  \bigl(-4 \bigl(2 \bar{b}_3 \bar{c}_1 \bar{D}_2^2+\bar{b}_3 \bar{C}_2 \bigl(32 \bar{E}_1 \bar{C}_2+4 \bar{D}_3 \bigl(\bar{d}_1-\bar{D}_3+2 \bar{D}_4\bigl)+\bar{D}_2^2-8 \bar{D}_4 \bar{D}_2\bigl)+8 \bar{b}_3^3 \bar{C}_2 \bar{D}_2-8 \bar{b}_3^2 \bigl(2 \bar{C}_2^3  \nonumber\\
& \hskip 0.7cm+\bar{E}_1 \bar{D}_2\bigl)-\bar{c}_1^2 \bar{C}_2 \bar{D}_2-\bar{c}_1 \bar{C}_2^2 \bigl(\bar{D}_2+2 \bar{D}_3\bigl)-8 \bar{E}_1^2 \bar{c}_1+8 \bar{C}_2 \bigl(2 \bar{E}_1^2-3 \bar{D}_3 \bar{F}_2\bigl)+2 \bar{C}_2^3 \bigl(\bar{D}_2+\bar{D}_3+2 \bar{D}_4\bigl) \nonumber\\
& \hskip 0.7cm-8 \bar{E}_1 \bar{D}_2 \bar{D}_4+16 \bigl(\bar{F}_3-\bar{F}_1\bigl) \bar{G}_2\bigl)-\bar{b}_2 \bar{C}_2 \bigl(4 \bar{b}_3 \bar{C}_2 \bigl(\bar{C}_2-2 \bar{c}_1\bigl)+48 \bar{b}_3^2 \bigl(\bar{d}_1+\bar{D}_2-\bar{D}_3+2 \bar{D}_4\bigl)+8 \bar{E}_1 \bar{c}_1 \nonumber\\
& \hskip 0.7cm-4 \bar{d}_1 \bar{D}_3+\bigl(\bar{D}_2-2 \bar{D}_3\bigl)^2-8 \bar{D}_3 \bar{D}_4\bigl)+2 \bar{b}_1 \bigl(2 \bar{b}_2 \bigl(\bar{b}_3 \bar{c}_1 \bar{D}_2-\bar{b}_3 \bar{C}_2 \bigl(\bar{D}_2+\bar{D}_3+10 \bar{D}_4\bigl)+10 \bar{b}_3^3 \bar{C}_2 \nonumber\\
& \hskip 0.7cm-4 \bar{E}_1 \bar{b}_3^2+2 \bar{c}_1 \bar{F}_4-\bar{C}_2 \bar{F}_4-4 \bar{E}_1 \bar{D}_4\bigl)+\bar{b}_2^2 \bar{C}_2 \bigl(\bar{b}_3^2-\bar{D}_4\bigl)-2 \bar{C}_2 \bar{D}_3 \bigl(4 \bar{b}_3^2+\bar{D}_2-2 \bar{D}_3+8 \bar{D}_4\bigl) \nonumber\\
& \hskip 0.7cm+8 \bar{F}_4 \bigl(3 \bar{b}_3 \bar{C}_2+2 \bar{E}_1\bigl)+2 \bar{c}_1 \bar{D}_3 \bigl(\bar{D}_2-2 \bar{D}_3\bigl)\bigl)+4 \bar{b}_2^2 \bar{C}_2 \bigl(\bar{F}_2-\bar{b}_3 \bigl(\bar{d}_1+\bar{D}_2-\bar{D}_3+2 \bar{D}_4\bigl)\bigl) \nonumber\\
& \hskip 0.7cm+16 \bar{b}_2 \bar{F}_2 \bigl(3 \bar{b}_3 \bar{C}_2+2 \bar{E}_1\bigl)\bigl) \, .
\end{align} 
The rest of the grading in the compact form $\{\bar{\textbf{F}},\bar{\textbf{H}}\}$ are given by  
    \begin{align}
    \mathcal{G} \left(\{\bar{F}_1, \bar{H}_2\}\right)  = & \, (2,3,8) \tilde{+} (3,2,8)\tilde{+} (4,3,6)\tilde{+} (3,4,6); \nonumber \\
    \mathcal{G} \left(\{\bar{F}_2, \bar{H}_2\}\right)  = & \, (2,2,9) \tilde{+} (3,1,9)\tilde{+} (4,2,7)\tilde{+} (3,3,7); \nonumber \\
    \mathcal{G} \left(\{\bar{F}_3, \bar{H}_2\}\right)  = & \, (1,4,8) \tilde{+} (2,3,8)\tilde{+} (3,4,6)\tilde{+} (2,5,6); \nonumber \\
    \mathcal{G} \left(\{\bar{F}_4, \bar{H}_2\}\right)  = & \, (0,4,9) \tilde{+} (1,3,9)\tilde{+} (2,4,7)\tilde{+} (1,5,7).
\end{align} 
By evaluating the  admissible terms from each homogeneous gradings,  and proceeding as in the previous cases, we  deduce 
\begin{align}
\{\bar{F}_1, \bar{H}_2\}&=-\frac{{\rm i}}{32}  \bigl(2 \bigl(-8 \bar{b}_3 \bigl(3 \bar{C}_2 \bar{H}_1+\bar{G}_2 \left(-\bar{d}_1+\bar{D}_3-2 \bar{D}_4\right)-\bar{D}_2 \bar{G}_1+\bar{E}_1 \bigl(\bar{F}_1+2 \bar{F}_3\bigl)\bigl)+4 \bar{G}_1 \bigl(\bar{b}_2 \bigl(\bar{d}_1+\bar{D}_2-\bar{D}_3+2 \bar{D}_4\bigl) \nonumber\\
& \hskip 0.7cm-5 \bar{C}_2^2+4 \bar{F}_4\bigl)+8 \bigl(-\bar{E}_1 \bar{b}_2 \bar{F}_1+\bar{C}_2 \bar{F}_3 \bigl(\bar{d}_1-\bar{D}_3+\bar{D}_4\bigl)+\bar{C}_2 \bigl(\bar{D}_3+\bar{D}_4\bigl) \bar{F}_1\bigl)+8 \bar{b}_3^2 \bar{C}_2 \bigl(2 \bar{F}_1+\bar{F}_3\bigl) \nonumber\\
& \hskip 0.7cm+\bar{c}_1 \bigl(14 \bar{C}_2 \bar{G}_1-6 \bar{d}_1 \bar{F}_3+\bigl(\bar{D}_2-4 \bar{D}_3\bigl) \bar{F}_1+6 \bigl(\bar{D}_3-2 \bar{D}_4\bigl) \bar{F}_3\bigl)\bigl)+\bar{b}_1 \bigl(\bar{b}_2 \bigl(\bar{F}_1-6 \bar{F}_3\bigl) \bigl(\bar{c}_1-\bar{C}_2\bigl)+8 \bar{F}_3 \bigl(\bar{b}_3 \bar{c}_1 \nonumber\\
& \hskip 0.7cm+\bar{E}_1\bigl)+8 \bar{b}_2 \bar{b}_3 \bigl(\bar{G}_2-\bar{G}_1\bigl)-12 \bar{c}_1 \bar{H}_2+4 \bar{D}_3 \bar{G}_1-6 \bar{D}_2 \bar{G}_2\bigl)+12 \bar{b}_1^2 \bar{I}_1\bigl) \, , \nonumber\\
\{\bar{F}_2, \bar{H}_2\}&=\frac{\rm i}{32}  \bigl(6 \bar{C}_2 \bigl(-2 \bar{b}_3 \bar{D}_3+\bar{b}_2 \bigl(\bar{b}_3^2-\bar{D}_4\bigl)+2 \bar{F}_4\bigl) \bar{b}_1^2+\bigl(24 \bar{C}_2 \bar{D}_3 \bar{b}_3^2+4 \bar{C}_2^2 \bigl(\bar{c}_1+2 \bar{C}_2\bigl) \bar{b}_3-8 \bar{D}_3 \bar{E}_1 \bar{b}_3-3 \bar{C}_2 \bar{D}_2^2 \nonumber\\
& \hskip 0.7cm-12 \bar{C}_2 \bar{D}_3^2+12 \bar{C}_2 \bar{d}_1 \bar{D}_3+6 \bar{c}_1 \bar{D}_2 \bar{D}_3+6 \bar{C}_2 \bar{D}_2 \bar{D}_3-16 \bar{C}_2^2 \bar{E}_1-8 \bar{c}_1 \bar{C}_2 \bar{E}_1+2 \bar{b}_2 \bigl(4 \bar{C}_2 \bar{b}_3^3-14 \bar{E}_1 \bar{b}_3^2 \nonumber\\
& \hskip 0.7cm+\bar{c}_1 \bar{D}_2 \bar{b}_3-\bar{C}_2 \bigl(6 \bar{d}_1+7 \bar{D}_2-6 \bar{D}_3+16 \bar{D}_4\bigl) \bar{b}_3+6 \bar{D}_4 \bar{E}_1+4 \bar{c}_1 \bar{F}_2+2 \bar{C}_2 \bar{F}_2\bigl)+24 \bar{E}_1 \bar{F}_4\bigl) \bar{b}_1 \nonumber\\
& \hskip 0.7cm+2 \bigl(-8 \bar{C}_2 \bar{D}_2 \bar{b}_3^3+8 \bigl(-\bar{C}_2^3+\bar{b}_2 \bigl(\bar{d}_1+\bar{D}_2-\bar{D}_3+2 \bar{D}_4\bigl) \bar{C}_2+\bar{D}_2 \bar{E}_1\bigl) \bar{b}_3^2+2 \bigl(-2 \bar{E}_1 \bar{C}_2^2+2 \bar{d}_1 \bigl(\bar{D}_2 \nonumber\\
& \hskip 0.7cm - 4 \bar{D}_3\bigl) \bar{C}_2+\bigl(\bar{D}_2-2 \bar{D}_3\bigl) \bigl(3 \bar{D}_2-4 \bar{D}_3+8 \bar{D}_4\bigl) \bar{C}_2-\bar{c}_1 \bar{D}_2^2+10 \bar{b}_2 \bigl(\bar{d}_1+\bar{D}_2-\bar{D}_3+2 \bar{D}_4\bigl) \bar{E}_1\bigl) \bar{b}_3  \nonumber\\
& \hskip 0.7cm +4 \bar{b}_2 \bar{C}_2 \bar{d}_1^2+3 \bar{b}_2 \bar{C}_2 \bar{D}_2^2+4 \bar{b}_2 \bar{C}_2 \bar{D}_3^2-4 \bar{C}_2^3 \bar{d}_1-3 \bar{C}_2^3 \bar{D}_2+\bar{c}_1^2 \bar{C}_2 \bar{D}_2+7 \bar{b}_2 \bar{C}_2 \bar{d}_1 \bar{D}_2+4 \bar{C}_2^3 \bar{D}_3 \nonumber\\
& \hskip 0.7cm -8 \bar{b}_2 \bar{C}_2 \bar{d}_1 \bar{D}_3-7 \bar{b}_2 \bar{C}_2 \bar{D}_2 \bar{D}_3+8 \bar{b}_2 \bar{C}_2 \bar{d}_1 \bar{D}_4+6 \bar{b}_2 \bar{C}_2 \bar{D}_2 \bar{D}_4-8 \bar{b}_2 \bar{C}_2 \bar{D}_3 \bar{D}_4+\bar{D}_2^2 \bar{E}_1-4 \bar{D}_3^2 \bar{E}_1 \nonumber\\
& \hskip 0.7cm -4 \bar{d}_1 \bar{D}_2 \bar{E}_1+4 \bar{d}_1 \bar{D}_3 \bar{E}_1+8 \bar{D}_2 \bar{D}_3 \bar{E}_1+8 \bar{D}_3 \bar{D}_4 \bar{E}_1+\bar{c}_1 \bigl(2 \bigl(2 \bar{d}_1+\bar{D}_2-2 \bar{D}_3+4 \bar{D}_4\bigl) \bar{C}_2^2+8 \bar{E}_1^2  \nonumber\\
& \hskip 0.7cm -\bar{b}_2 \bigl(\bar{d}_1+\bar{D}_2-\bar{D}_3+2 \bar{D}_4\bigl) \bigl(4 \bar{d}_1+3 \bar{D}_2-4 \bar{D}_3+8 \bar{D}_4\bigl)\bigl)+4 \bar{C}_2 \bar{D}_2 \bar{F}_2-12 \bar{b}_2 \bar{E}_1 \bar{F}_2\bigl)+32 \bar{F}_3 \bar{G}_1  \nonumber\\
& \hskip 0.7cm +48 \bar{F}_1 \bigl(\bar{G}_2-\bar{G}_1\bigl)\bigl) \, , \nonumber \\
\{\bar{F}_3, \bar{H}_2\}&=\frac{\rm i}{16}   \bigl(\bar{b}_2 \bigl(6 \bar{b}_3 \bar{C}_2 \bar{F}_1+8 \bar{b}_3^2 \bar{G}_1+4 \bar{b}_1 \bar{I}_1+\bigl(\bar{D}_2-4 \bar{D}_3-8 \bar{D}_4\bigl) \bar{G}_1-2 \bar{E}_1 \bar{F}_1\bigl)-8 \bar{b}_3^2 \bar{C}_2 \bar{F}_3-32 \bar{b}_3 \bar{D}_3 \bar{G}_1  \nonumber\\
& \hskip 0.7cm+8 \bar{E}_1 \bar{b}_3 \bar{F}_3-6 \bar{b}_2^2 \bar{I}_2+6 \bar{c}_1 \bigl(\bar{C}_2 \bar{G}_2+\bar{D}_3 \bar{F}_1\bigl)-16 \bar{C}_2 \bar{D}_3 \bar{F}_1+(5 \bar{D}_2 +4  \bar{D}_3+8 \bar{D}_4) \bar{C}_2\bar{F}_3+24 \bar{F}_4 \bar{G}_1  \nonumber\\
& \hskip 0.7cm+8 (\bar{d}_1- \bar{D}_3 +2 \bar{D}_4) \bar{I}_1\bigl) \, ,  \nonumber\\
\{\bar{F}_4, \bar{H}_2\}&=-\frac{\rm i}{32}  \bigl(\bar{b}_2^2 \bigl(-2 \bar{c}_1 \bigl(-4 \bar{b}_3 \bigl(\bar{d}_1+\bar{D}_2-\bar{D}_3+2 \bar{D}_4\bigl)+\bar{C}_2^2+4 \bar{F}_2\bigl)+\bar{c}_1^2 \bar{C}_2+\bar{C}_2^3\bigl)+\bar{b}_2 \bigl(4 \bar{b}_3^2 \bigl(\bar{c}_1 \bar{D}_2  \nonumber\\
& \hskip 0.7cm -2 \bar{C}_2 \bigl(\bar{D}_2+2 \bar{D}_3+4 \bar{D}_4\bigl)\bigl)+4 \bar{b}_3 \bar{C}_2 \bigl(-4 \bar{c}_1 \bar{C}_2+\bar{c}_1^2+\bar{C}_2^2+4 \bar{F}_4\bigl)+16 \bar{b}_3^4 \bar{C}_2+12 \bar{E}_1 \bar{c}_1 \bar{C}_2 \nonumber\\
& \hskip 0.7cm+\bar{c}_1 \bar{D}_2 \bigl(\bar{D}_2-2 \bar{D}_3-4 \bar{D}_4\bigl)-4 \bar{E}_1 \bar{c}_1^2+\bar{C}_2 \bigl(8 \bar{E}_1 \bar{C}_2-8 \bar{d}_1 \bar{D}_3+\bar{D}_2^2-6 \bar{D}_3 \bar{D}_2+8 \bar{D}_4 \bar{D}_2 \nonumber\\
& \hskip 0.7cm+8 \bar{D}_3^2+16 \bar{D}_4^2\bigl)-16 \bar{E}_1 \bar{F}_4\bigl)+\bar{b}_1 \bigl(4 \bar{b}_2^2 \bar{c}_1 \bigl(\bar{D}_4-\bar{b}_3^2\bigl)+8 \bar{b}_2 \bar{C}_2 \bigl(\bar{b}_3 \bar{D}_3-\bar{F}_4\bigl)+8 \bar{D}_3^2 \bigl(\bar{C}_2-\bar{c}_1\bigl)\bigl) \nonumber\\
& \hskip 0.7cm+8 \bar{F}_4 \bigl(12 \bar{b}_3^2 \bar{C}_2-8 \bar{E}_1 \bar{b}_3+\bar{c}_1 \bar{D}_2-2 \bar{C}_2 \bigl(\bar{D}_3+2 \bar{D}_4\bigl)\bigl)+4 \bar{D}_3 \bigl(\bar{C}_2 \bigl(-2 \bar{b}_3 \bigl(8 \bar{b}_3^2+\bar{D}_2  \nonumber\\
& \hskip 0.7cm-2 \bigl(\bar{D}_3+4 \bar{D}_4\bigl)\bigl)-3 \bar{c}_1\bar{C}_2+2 \bar{c}_1^2+\bar{C}_2^2\bigl)-4 \bar{E}_1 \bar{D}_2\bigl)-32 \bar{F}_3 \bar{G}_2\bigl) .
\end{align}

Finally, we will look at the compact form $\{\bar{\textbf{G}},\bar{\textbf{G}}\}$, which contains the Poisson bracket $\{\bar{G}_1,\bar{G}_2\}$ only. By evaluating the grading, we have
 \begin{align*}
    \mathcal{G} \left(\{\bar{G}_1,\bar{G}_2\}\right) = (2,3,8) \tilde{+}(3,2,8) \tilde{+}(4,3,6) \tilde{+}(3,4,6) ,
\end{align*} 
 and the explicit expression reads:
\begin{align}
\{\bar{G}_1,\bar{G}_2\}&=\frac{{\rm i}}{16}  \bigl(2 \bigl(2 \bar{b}_2 \bar{F}_1 \bigl(\bar{b}_3 \bar{c}_1-4 \bar{E}_1\bigl)+8 \bar{b}_3 \bigl(\bar{C}_2 \bar{H}_1-\bar{D}_3 \bar{G}_1+\bar{E}_1 \bar{F}_1\bigl)+8 \bar{b}_3^2 \bar{C}_2 \bigl(\bar{F}_3-2 \bar{F}_1\bigl)+4 \bar{b}_2 \bar{G}_2 \bigl(\bar{d}_1+\bar{D}_2 \nonumber \\
& \hskip 0.7cm-\bar{D}_3+2 \bar{D}_4\bigl)-2 \bar{G}_2 \bigl(\bar{c}_1 \bar{C}_2+8 \bar{F}_2\bigl)+8 \bar{c}_1 \bar{C}_2 \bar{G}_1-2 \bar{c}_1 \bar{d}_1 \bar{F}_3+\bar{c}_1 \bar{D}_2 \bar{F}_1-4 \bar{c}_1 \bar{D}_3 \bar{F}_1+2 \bar{c}_1 \bar{D}_3 \bar{F}_3 \nonumber \\
& \hskip 0.7cm-4 \bar{c}_1 \bar{D}_4 \bar{F}_3-4 \bar{C}_2 \bar{D}_2 \bar{F}_1+8 \bar{C}_2 \bar{D}_3 \bar{F}_1+16 \bar{C}_2 \bar{D}_4 \bar{F}_1-16 \bar{C}_2 \bar{D}_4 \bar{F}_3-8 \bar{C}_2^2 \bar{G}_1+16 \bar{F}_4 \bar{G}_1\bigl) \nonumber \\
& \hskip 0.7cm+\bar{b}_1 \bigl(5 \bar{b}_2 \bar{F}_1 \bigl(\bar{C}_2-\bar{c}_1\bigl)+2 \bigl(6 \bar{b}_3 \bar{c}_1 \bar{F}_3-6 \bar{c}_1 \bar{H}_2+6 \bar{D}_3 \bar{G}_1+\bar{D}_2 \bar{G}_2-4 \bar{E}_1 \bar{F}_3\bigl)\bigl)-4 \bar{b}_1^2 \bar{I}_1\bigl) \, .
\end{align}

All these relations conclude the degree thirteen Poisson brackets. We now focus on the expansions in the degree-fourteen Poisson brackets.

\subsection{Expansions in the degree $14$ brackets}

In this case, we consider the compact forms $\{\bar{\textbf{F}},\bar{\textbf{I}}\}$ and $\{\bar{\textbf{G}},\bar{\textbf{H}}\}.$ Starting with the first of these, we deduce that the grading is given as follows: 
\begin{align}
    \mathcal{G} \left(\{\bar{F}_1, \bar{I}_1\}\right)  = & \, (1,4,9) \tilde{+} (2,3,9)\tilde{+} (3,4,7)\tilde{+} (2,5,7); \nonumber \\
    \mathcal{G} \left(\{\bar{F}_2, \bar{I}_1\}\right)  = & \, (1,3,10) \tilde{+} (2,2,10)\tilde{+} (3,3,8)\tilde{+} (2,4,8); \nonumber \\
    \mathcal{G} \left(\{\bar{F}_3, \bar{I}_1\}\right)  = & \, (0,5,9) \tilde{+} (1,4,9)\tilde{+} (2,5,7)\tilde{+} (1,6,7); \nonumber \\
    \mathcal{G} \left(\{\bar{F}_4, \bar{I}_1\}\right)  = & \,  (0,4,10)\tilde{+} (1,5,8)\tilde{+} (0,6,8).
\end{align} 
As an illustration, we elaborate on the monomial prediction from the grading of $\{\bar{F}_2, \bar{I}_1\}.$ The allowed generators from each homogeneous grading in $\mathcal{G} \left(\{\bar{F}_2, \bar{I}_1\}\right)$ are given by \begin{align*}
     (1,3,10) = & \, \left\{\bar{b}_3 \bar{D}_2 \bar{D}_3 \bar{D}_4, \text{ } \bar{b}_2 \bar{D}_2 \bar{D}_4^2, \text{ } \bar{b}_3^2 \bar{D}_2 \bar{F}_4, \text{ } \bar{D}_2 \bar{D}_4 \bar{F}_4, \text{ } \bar{b}_2 \bar{b}_3^2 \bar{D}_2 \bar{D}_4, \text{ } \bar{b}_2 \bar{b}_3^4 \bar{D}_2, \text{ } \bar{b}_3^3 \bar{D}_2 \bar{D}_3, \text{ } \bar{b}_2 C^{(003)} \bar{D}_4 \bar{E}_1, \text{ } C^{(003)} \bar{E}_1 \bar{F}_4,  \right. \\
     &\,   \bar{b}_3 C^{(003)} \bar{C}_2 \bar{F}_4, \text{ }     \bar{b}_3 C^{(003)} \bar{D}_3 \bar{E}_1, \text{ } \left(C^{(003)}\right)^2 \bar{D}_2 \bar{D}_3, \text{ } C^{(003)} \bar{C}_2 \bar{D}_3 \bar{D}_4, \text{ } \bar{b}_2 \bar{b}_3^2 C^{(003)} \bar{E}_1, \text{ } \bar{b}_2 \bar{b}_3 \left(C^{(003)}\right)^2 \bar{D}_2, \\
     & \, \left.    \bar{b}_2 \bar{b}_3 C^{(003)} \bar{C}_2 \bar{D}_4, \text{ } \bar{b}_2 \left(C^{(003)}\right)^3 \bar{C}_2, \text{ }\bar{b}_3^2 C^{(003)} \bar{C}_2 \bar{D}_3,\bar{b}_2 \bar{b}_3^3 C^{(003)} \bar{C}_2\right\}; \\
      (2,2,10)= & \, \left\{\bar{b}_3 \bar{F}_2 \bar{F}_4, \text{ }\bar{D}_3 \bar{D}_4 \bar{F}_2, \text{ }\bar{D}_4 \bar{E}_1^2, \text{ }\bar{b}_1 \bar{b}_3 \bar{D}_4 \bar{F}_4, \text{ }\bar{b}_1 \bar{D}_3 \bar{D}_4^2, \text{ }\bar{b}_2 \bar{b}_3 \bar{D}_4 \bar{F}_2, \text{ }\bar{b}_3 \bar{D}_2^2 \bar{D}_4, \text{ }\bar{b}_1 \bar{b}_3^4 \bar{D}_3, \text{ }\bar{b}_3^3 \bar{C}_2 \bar{E}_1, \text{ }\bar{b}_3^3 \bar{D}_2^2,  \text{ } \bar{b}_1 \bar{b}_2 \bar{b}_3 \bar{D}_4^2, \right.\\
     & \, \bar{b}_1 \bar{b}_2 \bar{b}_3^3 \bar{D}_4,\text{ }\bar{b}_1 \bar{b}_3^3 \bar{F}_4, \text{ }\bar{b}_2 \bar{b}_3^3 \bar{F}_2, \text{ }\bar{b}_3^4 \bar{C}_2^2, \text{ }\bar{b}_1 \bar{b}_3^2 \bar{D}_3 \bar{D}_4, \text{ }\bar{b}_3^2 \bar{D}_3 \bar{F}_2, \text{ }\bar{b}_3^2 \bar{E}_1^2, \text{ }\bar{C}_2^2 \bar{D}_4^2, \text{ } \bar{b}_3 \bar{C}_2 \bar{D}_4 \bar{E}_1, \text{ }\bar{b}_3^2 \bar{C}_2^2 \bar{D}_4,\\
     & \, \left.\bar{b}_2 C^{(202)} \bar{D}_4^2, \text{ }\bar{b}_3^2 C^{(202)} \bar{F}_4,  \text{ } \bar{b}_3 C^{(202)} \bar{D}_3 \bar{D}_4, \text{ }\bar{b}_3 C^{(003)} \bar{D}_2 \bar{E}_1, \text{ }\left(C^{(003)}\right)^2 \bar{C}_2 \bar{E}_1, \text{ }\bar{b}_1 \left(C^{(003)}\right)^2 \bar{F}_4,  \right. \\
     &\, \bar{b}_2 \left(C^{(003)}\right)^2 \bar{F}_2, \text{ }\left(C^{(003)}\right)^2 C^{(202)} \bar{D}_3, \text{ } \left(C^{(003)}\right)^2 \bar{D}_2^2, \text{ }C^{(003)} \bar{C}_2 \bar{D}_2 \bar{D}_4, \text{ }\bar{b}_1 \bar{b}_2 \left(C^{(003)}\right)^2 \bar{D}_4, \\
     & \, \bar{b}_1 \bar{b}_3 \left(C^{(003)}\right)^2 \bar{D}_3, \text{ }C^{(202)} \bar{D}_4 \bar{F}_4\bar{b}_2, \text{ } \bar{b}_3^2 C^{(202)} \bar{D}_4,\text{ } \bar{b}_2 \bar{b}_3 \left(C^{(003)}\right)^2 C^{(202)}, \text{ }\bar{b}_3^3 C^{(202)} \bar{D}_3, \\
     & \, \left. \bar{b}_3^2 C^{(003)} \bar{C}_2 \bar{D}_2, \text{ } \bar{b}_3 \left(C^{(003)}\right)^2 \bar{C}_2^2, \text{ }\bar{b}_1 \bar{b}_2 \bar{b}_3^2 \left(C^{(003)}\right)^2, \text{ }\bar{b}_2 \bar{b}_3^4 C^{(202)}\right\}; \\ 
     (3,3,8)= & \, \left\{\bar{F}_1 \bar{H}_2,\text{ }\bar{b}_3^2 \bar{C}_2^2 \bar{D}_2, \text{ }\bar{F}_3 \bar{H}_1, \text{ }\bar{G}_1 \bar{G}_2, \text{ }\bar{b}_3 \bar{F}_1 \bar{F}_3, \text{ }\bar{D}_2 \bar{D}_3 \bar{F}_2, \text{ }\bar{D}_2 \bar{E}_1^2, \text{ }\bar{b}_1 \bar{b}_3 \bar{D}_2 \bar{F}_4, \text{ }\bar{b}_1 \bar{D}_2 \bar{D}_3 \bar{D}_4, \text{ }\bar{b}_2 \bar{b}_3 \bar{D}_2 \bar{F}_2, \text{ }\bar{b}_1 \bar{b}_2 \bar{b}_3 \bar{D}_2 \bar{D}_4,  \right. \\
     & \, \left. \bar{b}_1 \bar{b}_3^2 \bar{D}_2 \bar{D}_3, \text{ }\bar{b}_3 \bar{D}_2^3, \text{ }\bar{b}_1 \bar{b}_2 \bar{b}_3^3 \bar{D}_2, \text{ } \bar{b}_3 \bar{C}_2 \bar{D}_2 \bar{E}_1, \text{ }\bar{C}_2^2 \bar{D}_2 \bar{D}_4, \text{ }   \bar{b}_2 C^{(003)} \bar{C}_2 \bar{F}_2, \text{ }\bar{b}_2 C^{(003)} C^{(202)} \bar{E}_1, \text{ }\bar{b}_2 C^{(202)} \bar{D}_2 \bar{D}_4,  \right. \\
     & \, \bar{b}_3 C^{(202)} \bar{D}_2 \bar{D}_3, \text{ }C^{(003)} \bar{C}_2^2 \bar{E}_1,  \text{ } \bar{b}_1 C^{(003)}\bar{D}_3 \bar{E}_1, \text{ }C^{(202)} \bar{D}_2 \bar{F}_4, \text{ }\bar{b}_1 \bar{b}_2 \bar{b}_3 C^{(003)} \bar{E}_1, \text{ } C^{(003)} \bar{C}_2 \bar{F}_4,  \\ 
     & \,   \bar{b}_1 \bar{b}_2 \left(C^{(003)}\right)^2 \bar{D}_2, \text{ }    \bar{b}_1 \bar{b}_2 C^{(003)} \bar{C}_2 \bar{D}_4, \text{ }\bar{b}_1 \bar{b}_3 C^{(003)} \bar{C}_2 \bar{D}_3, \text{ }\bar{b}_2 \bar{b}_3^2 C^{(202)} \bar{D}_2, \text{ } C^{(003)} \bar{C}_2 \bar{D}_2^2\bar{b}_1, \\
     & \, \left.\bar{b}_2 \bar{b}_3 C^{(003)} \bar{C}_2 C^{(202)},  \text{ }\bar{b}_3 C^{(003)} \bar{C}_2^3, \text{ }\bar{b}_1 \bar{b}_2 \bar{b}_3^2 C^{(003)} \bar{C}_2 C^{(003)} \bar{C}_2 C^{(202)} \bar{D}_3\right\} 
     \end{align*}   
     
and, in addition, the following $47$ generators for the last homogeneous grading
     \begin{align*} 
     (2,4,8)= & \, \left\{\bar{F}_3 \bar{H}_2, \text{ }\bar{G}_2^2, \text{ }\bar{b}_1 \bar{F}_4^2, \text{ }\bar{b}_2 \bar{F}_2 \bar{F}_4, \text{ }\bar{b}_3 \bar{F}_3^2, \text{ }\bar{C}_2 \bar{E}_1 \bar{F}_4, \text{ }\bar{D}_2^2 \bar{F}_4, \text{ }\bar{D}_3^2 \bar{F}_2, \text{ }\bar{D}_3 \bar{E}_1^2, \text{ }\bar{b}_1 \bar{b}_2 \bar{D}_4 \bar{F}_4, \text{ }\bar{b}_1 \bar{b}_3 \bar{D}_3 \bar{F}_4, \text{ }\bar{b}_1 \bar{D}_3^2 \bar{D}_4, \text{ }\bar{b}_3 \bar{C}_2 \bar{D}_3 \bar{E}_1,   \right. \\
     & \,  \bar{b}_2 \bar{b}_3^2 \bar{C}_2 \bar{E}_1,  \text{ } \bar{b}_2 \bar{D}_2^2 \bar{D}_4, \text{ }\bar{b}_2 \bar{b}_3 \bar{D}_3 \bar{F}_2, \text{ }\bar{b}_3 \bar{C}_2^2 \bar{F}_4, \text{ }\bar{b}_2 \bar{b}_3^3 \bar{C}_2^2, \text{ }\bar{b}_3^2 \bar{C}_2^2 \bar{D}_3, \text{ }\bar{b}_1 \bar{b}_2^2 \bar{b}_3^2 \bar{D}_4, \text{ }\bar{b}_3 \bar{D}_2^2 \bar{D}_3, \text{ }\bar{b}_2^2 \bar{D}_4 \bar{F}_2, \text{ }\bar{b}_1 \bar{b}_2^2 \bar{D}_4^2, \text{ }\bar{b}_1 \bar{b}_3^2 \bar{D}_3^2, \text{ }\bar{b}_2^2 \bar{b}_3^2 \bar{F}_2,\\
     & \,\bar{C}_2^2 \bar{D}_3 \bar{D}_4, \text{ }\bar{b}_2 \bar{b}_3 \bar{C}_2^2 \bar{D}_4,  \text{ }\bar{b}_1 \bar{b}_2 \bar{b}_3^3 \bar{D}_3,  \text{ }\bar{b}_2 \bar{b}_3 \bar{E}_1^2,\text{ }\bar{b}_2 \bar{C}_2 \bar{D}_4 \bar{E}_1,\text{ }\bar{b}_1 \bar{b}_2 \bar{b}_3^2 \bar{F}_4, \text{ }\bar{b}_1 \bar{b}_2 \bar{b}_3 \bar{D}_3 \bar{D}_4, \text{ }\bar{b}_2 \bar{b}_3^2 \bar{D}_2^2,   \text{ }\bar{b}_2 \bar{b}_3 C^{(202)} \bar{F}_4, \\
     & \,   \bar{b}_2 C^{(202)} \bar{D}_3 \bar{D}_4, \text{ }\bar{b}_1 \bar{b}_2 \left(C^{(003)}\right)^2 \bar{D}_3, \text{ }\bar{b}_2^2 \bar{b}_3 C^{(202)} \bar{D}_4, \text{ }\bar{b}_2^2 \left(C^{(003)}\right)^2 C^{(202)}, \text{ }C^{(003)} \bar{C}_2 \bar{D}_2 \bar{D}_3, \text{ }\bar{b}_2 \bar{b}_3^2 C^{(202)} \bar{D}_3\\
     & \,    \left.\bar{b}_2 \bar{b}_3 C^{(003)} \bar{C}_2 \bar{D}_2, \text{ }\bar{b}_2 \left(C^{(003)}\right)^2 \bar{C}_2^2, \text{ }    \bar{b}_1 \bar{b}_2^2 \bar{b}_3 \left(C^{(003)}\right)^2, \text{ }\bar{b}_2^2 \bar{b}_3^3 C^{(202)}, \text{ }C^{(202)} \bar{D}_3 \bar{F}_4, \text{ }\bar{b}_3 C^{(202)} \bar{D}_3^2 ,\text{ }\bar{b}_2 C^{(003)} \bar{D}_2 \bar{E}_1 \right\}.
\end{align*}

After the replacement of $C^{(003)}$ and $C^{(202)}$ to $\bar{c}_1$ and $\bar{d}_1$, we conclude that there are $141$ permissible terms such that  
\begin{align*}
    \{\bar{F}_2,\bar{I}_1\} = \Gamma_{69}^1   \bar{F}_3 \bar{H}_2 +\ldots + \Gamma_{69}^{140} \bar{b}_3 \bar{d}_1 \bar{D}_3^2   +\Gamma_{69}^{141}    \bar{b}_2 \bar{c}_1 \bar{C}_2 \bar{D}_2 \bar{E}_1  
\end{align*} 
  for coefficients $\Gamma_{69}^k$  with $1 \leq k \leq 141$.   After a heavy but routine computation, the coefficients are determined, leading to 
\begin{align}
    \{\bar{F}_2, \bar{I}_1\}&=-\frac{{\rm i}}{64}  \bigl(-2 \bigl(4 \bar{b}_3 \bar{c}_1 \bar{C}_2^3-2 \bar{b}_3 \bigl(4 \bar{C}_2^4+\bar{D}_2^2 \bar{D}_3+32 \bar{F}_3^2\bigl)+4 \bar{b}_3^2 \bar{C}_2^2 \bar{D}_2+\bar{c}_1 \bar{C}_2 \bar{D}_2 \bigl(\bar{D}_2-4 \bar{D}_3\bigl)-4 \bar{C}_2^2 \bar{d}_1 \bar{D}_3\nonumber \\
& \hskip 0.7cm+4 \bar{C}_2^2 \bar{D}_3^2-4 \bar{C}_2^2 \bar{D}_2 \bar{D}_4+8 \bar{E}_1 \bar{C}_2^3-24 \bar{D}_3^2 \bar{F}_2-8 \bar{D}_2 \bar{D}_3 \bar{F}_2-6 \bar{D}_2^2 \bar{F}_4+8 \bar{E}_1^2 \bar{D}_2+16 \bar{E}_1^2 \bar{D}_3+32 \bar{F}_3 \bar{H}_2 \nonumber \\
& \hskip 0.7cm+8 \bar{G}_2 \bigl(\bar{G}_2-4 \bar{G}_1\bigl)\bigl)+\bar{b}_2 \bigl(-8 \bar{b}_3 \bigl(\bar{c}_1 \bar{C}_2 \bar{D}_2+\bar{C}_2^2 \bigl(\bar{d}_1-\bar{D}_3-3 \bar{D}_4\bigl)+4 \bar{F}_4 \bigl(\bar{d}_1+\bar{D}_2+2 \bar{D}_4\bigl)-6 \bar{E}_1^2 \nonumber \\
& \hskip 0.7cm+4 \bar{D}_3 \bigl(\bar{F}_2-\bar{F}_4\bigl)\bigl)-24 \bar{b}_3^3 \bar{C}_2^2+32 \bar{b}_3^2 \bar{D}_3 \bigl(\bar{d}_1+\bar{D}_2-\bar{D}_3+2 \bar{D}_4\bigl)+8 \bar{c}_1 \bar{C}_2 \bar{F}_2-6 \bar{c}_1 \bar{C}_2^3+3 \bar{c}_1^2 \bar{C}_2^2 \nonumber \\
& \hskip 0.7cm+3 \bar{C}_2^4+4 \bar{D}_4 \bigl(4 \bar{d}_1 \bar{D}_3-\bigl(\bar{D}_2-2 \bar{D}_3\bigl)^2+8 \bar{D}_3 \bar{D}_4\bigl)-48 \bar{F}_2 \bar{F}_4\bigl)+\bar{b}_1 \bigl(\bar{b}_2 \bigl(-8 \bar{b}_3 \bigl(\bar{E}_1 \bigl(\bar{c}_1-\bar{C}_2\bigl)+\bar{D}_3 \bar{D}_4\bigl) \nonumber \\
& \hskip 0.7cm+4 \bar{b}_3^2 \bigl(\bar{c}_1 \bar{C}_2-6 \bar{F}_4\bigl)-8 \bar{b}_3^3 \bar{D}_3-2 \bar{c}_1 \bar{C}_2 \bigl(\bar{D}_2+4 \bar{D}_3+2 \bar{D}_4\bigl)+\bar{c}_1^2 \bigl(\bar{D}_2+4 \bar{D}_3\bigl)+\bar{C}_2^2 \bigl(\bar{D}_2+4 \bar{D}_3\bigl)\nonumber \\
& \hskip 0.7cm+40 \bar{D}_4 \bar{F}_4\bigl)+8 \bigl(\bar{D}_3 \bigl(-\bar{b}_3 \bar{C}_2^2+\bar{b}_3^2 \bar{D}_3+\bar{E}_1 \bigl(\bar{c}_1-\bar{C}_2\bigl)+\bar{D}_3 \bar{D}_4\bigl)+\bar{F}_4 \bigl(6 \bar{b}_3 \bar{D}_3+\bar{C}_2^2\bigl)-8 \bar{F}_4^2\bigl)\nonumber \\
& \hskip 0.7cm+\bar{b}_2^2 \bigl(-\bar{b}_3 \bigl(\bar{c}_1-\bar{C}_2\bigl)^2+8 \bar{b}_3^2 \bar{D}_4-8 \bar{D}_4^2\bigl)\bigl)-16 \bar{b}_2^2 \bar{D}_4 \bigl(\bar{b}_3 \bigl(\bar{d}_1+\bar{D}_2-\bar{D}_3+2 \bar{D}_4\bigl)-\bar{F}_2\bigl)\bigl) \,.
\end{align}

  Proceeding similarly, the remaining degree-fourteen relations close as follows:
\begin{align}
\{\bar{F}_1, \bar{I}_1\}&=\frac{{\rm i}}{16}  \bigl(\bar{b}_2 \bigl(-2 \bar{c}_1 \bigl(3 \bar{b}_1 \bar{G}_2+\bar{C}_2 \bar{F}_1\bigl)+2 \bar{b}_3 \bigl(2 \bar{c}_1 \bar{G}_1+8 \bar{D}_3 \bar{F}_1+4 \bar{D}_4 \bar{F}_1+\bar{D}_2 \bigl(10 \bar{F}_1+\bar{F}_3\bigl)\bigl)+6 \bar{G}_2 \bigl(\bar{b}_1 \bar{C}_2+4 \bar{E}_1\bigl) \nonumber \\
& \hskip 0.7cm-8 \bar{b}_3^3 \bar{F}_1+3 \bar{c}_1^2 \bar{F}_1-\bar{F}_1 \bigl(\bar{C}_2^2+16 \bar{F}_4\bigl)-2 \bigl(9 \bar{D}_2 \bar{H}_1+10 \bar{D}_3 \bar{H}_1+4 \bar{E}_1 \bar{G}_1\bigl)\bigl)-4 \bar{b}_3 \bigl(-\bar{b}_1 \bar{D}_3 \bar{F}_3+3 \bar{C}_2 \bar{F}_3 \bigl(\bar{c}_1 \nonumber \\
& \hskip 0.7cm+\bar{C}_2\bigl)+4 \bar{F}_1 \bar{F}_4\bigl)+8 \bar{b}_3^2 \bigl(\bar{C}_2 \bar{G}_2+2 \bar{D}_3 \bar{F}_1-\bar{D}_2 \bar{F}_3\bigl)+16 \bar{b}_1 \bar{D}_3 \bar{H}_2+4 \bar{b}_3 \bar{b}_2^2 \bigl(\bar{b}_3 \bar{F}_1-\bar{H}_1\bigl)-20 \bar{b}_1 \bar{F}_3 \bar{F}_4 \nonumber \\
& \hskip 0.7cm-12 \bar{c}_1 \bar{D}_3 \bar{G}_1+6 \bar{c}_1 \bar{D}_2 \bar{G}_2+12 \bar{E}_1 \bar{c}_1 \bar{F}_3-8 \bar{C}_2 \bar{D}_4 \bar{G}_2+28 \bar{E}_1 \bar{C}_2 \bar{F}_3-4 \bar{D}_3^2 \bar{F}_1-20 \bar{D}_2 \bar{D}_3 \bar{F}_1+11 \bar{D}_2^2 \bar{F}_3\nonumber \\
& \hskip 0.7cm+6 \bar{D}_2 \bar{D}_3 \bar{F}_3+8 \bar{D}_2 \bar{D}_4 \bar{F}_3\bigl) \, ,\nonumber \\
\{\bar{F}_3, \bar{I}_1\}&=-\frac{{\rm i}}{4}  \bigl(\bar{b}_2 \bigl(\bar{b}_3 \bar{c}_1 \bar{G}_2-2 \bar{b}_3 \bar{D}_4 \bar{F}_3-\bar{b}_3 \bar{D}_3 \bigl(6 \bar{F}_1+\bar{F}_3\bigl)+2 \bar{b}_3^3 \bar{F}_3-4 \bar{b}_3^2 \bar{H}_2-2 \bar{C}_2 \bar{I}_1+5 \bar{D}_3 \bar{H}_1+\bigl(\bar{D}_2+2 \bar{D}_3 \nonumber \\
& \hskip 0.7cm+4 \bar{D}_4\bigl) \bar{H}_2+\bar{F}_1 \bar{F}_4-2 \bar{E}_1 \bar{G}_2\bigl)-\bar{D}_3 \bigl(6 \bar{b}_3^2 \bar{F}_3-12 \bar{b}_3 \bar{H}_2+3 \bar{c}_1 \bar{G}_2+3 \bar{D}_2 \bar{F}_3+2 \bar{D}_4 \bar{F}_3\bigl)+\bar{b}_3 \bar{b}_2^2 \bigl(\bar{b}_3 \bar{F}_1 \nonumber \\
& \hskip 0.7cm-\bar{H}_1\bigl)+4 \bar{F}_4 \bigl(2 \bar{b}_3 \bar{F}_3-3 \bar{H}_2\bigl)+\bar{D}_3^2 \bigl(4 \bar{F}_1-\bar{F}_3\bigl)\bigl) \, ,\nonumber \\
\{\bar{F}_4, \bar{I}_1\}&=\frac{{\rm i}}{32} \bigl(4 \bar{b}_2 \bigl(\bar{F}_4 \bigl(-2 \bar{b}_3 \bigl(\bar{D}_2+2 \bar{D}_3+4 \bar{D}_4\bigl)+8 \bar{b}_3^3+\bigl(\bar{c}_1-\bar{C}_2\bigl) \bigl(3 \bar{c}_1-2 \bar{C}_2\bigl)\bigl)-2 \bar{D}_3 \bigl(\bar{b}_3 \bigl(\bar{c}_1-\bar{C}_2\bigl) \bigl(\bar{c}_1-2 \bar{C}_2\bigl) \nonumber \\
& \hskip 0.7cm-\bar{b}_3^2 \bigl(\bar{D}_2+2 \bar{D}_3+8 \bar{D}_4\bigl)+4 \bar{b}_3^4+2 \bar{E}_1 \bigl(\bar{c}_1-\bar{C}_2\bigl)+\bar{D}_4 \bigl(\bar{D}_2+2 \bar{D}_3+4 \bar{D}_4\bigl)\bigl)+4 \bar{F}_4^2\bigl)+\bar{b}_2^2 \bigl(4 \bar{b}_3^2-\bar{D}_2 \nonumber \\
& \hskip 0.7cm-2 \bar{D}_3-4 \bar{D}_4\bigl) \bigl(\bar{c}_1-\bar{C}_2\bigl)^2-4 \bar{D}_3^2 \bigl(\bigl(3 \bar{c}_1-4 \bar{C}_2\bigl) \bigl(\bar{c}_1-\bar{C}_2\bigl)-2 \bar{b}_3 \bigl(12 \bar{b}_3^2+\bar{D}_2-2 \bigl(\bar{D}_3+6 \bar{D}_4\bigl)\bigl)\bigl) \nonumber \\
& \hskip 0.7cm+16 \bar{D}_3 \bar{F}_4 \bigl(-12 \bar{b}_3^2+\bar{D}_3+4 \bar{D}_4\bigl)+128 \bar{b}_3 \bar{F}_4^2\bigl) \, . 
\end{align} 

Subsequently, we proceed to assess the gradings associated with the remaining components of the Poisson brackets. From $\{\bar{\textbf{F}},\bar{\textbf{I}}\}$, these are given by  \begin{align}
     \mathcal{G} \left(\{\bar{F}_1, \bar{I}_2\}\right)  = & \,  (4,1,9) \tilde{+} (5,0,9)\tilde{+} (6,1,7)\tilde{+} (5,2,7); \nonumber \\
    \mathcal{G} \left(\{\bar{F}_2, \bar{I}_2\}\right)  = & \, (4,0,10) \tilde{+} (6,0,8)\tilde{+} (5,1,8); \nonumber \\
    \mathcal{G} \left(\{\bar{F}_3, \bar{I}_2\}\right)  = & \, (3,2,9) \tilde{+} (4,1,9)\tilde{+} (5,2,7)\tilde{+} (4,3,7); \nonumber \\
    \mathcal{G} \left(\{\bar{F}_4, \bar{I}_2\}\right)  = & \,(2,2,10) \tilde{+} (3,1,10)\tilde{+} (4,2,8)\tilde{+} (3,3,8).
\end{align} 
  The allowed polynomials in these Poisson brackets are deduced as before, from which we  obtain that 
\begin{align} 
\{\bar{F}_1, \bar{I}_2\}&=\frac{{\rm i}}{8}  \bigl(-2 \bar{b}_1 \bigl(\bar{b}_3 \bar{c}_1 \bar{G}_1-\bar{b}_3 \bar{F}_1 \bigl(\bar{d}_1+2 \bar{D}_2-\bar{D}_3+4 \bar{D}_4\bigl)-6 \bar{b}_3 \bar{F}_3 \bigl(\bar{d}_1+\bar{D}_2-\bar{D}_3+2 \bar{D}_4\bigl)+2 \bar{b}_3^3 \bar{F}_1-4 \bar{b}_3^2 \bar{H}_1 \nonumber \\
& \hskip 0.7cm+2 \bar{C}_2 \bar{I}_2+2 \bar{H}_1 \bigl(\bar{D}_3-\bar{d}_1\bigl)+7 \bar{H}_2 \bigl(\bar{d}_1+\bar{D}_2-\bar{D}_3+2 \bar{D}_4\bigl)+\bar{F}_2 \bigl(4 \bar{F}_1+\bar{F}_3\bigl)-4 \bar{E}_1 \bar{G}_1\bigl)+\bar{b}_1^2 \bigl(2 \bar{b}_3 \bar{H}_2 \nonumber \\
& \hskip 0.7cm-\bigl(2 \bar{G}_1+\bar{G}_2\bigl) \bigl(\bar{c}_1-\bar{C}_2\bigl)-2 \bar{D}_4 \bar{F}_3\bigl)+2 \bigl(6 \bar{b}_3^2 \bar{F}_1 \bigl(\bar{d}_1+\bar{D}_2-\bar{D}_3+2 \bar{D}_4\bigl)-12 \bar{b}_3 \bar{H}_1 \bigl(\bar{d}_1+\bar{D}_2-\bar{D}_3 \nonumber \\
& \hskip 0.7cm+2 \bar{D}_4\bigl)-8 \bar{b}_3 \bar{F}_1 \bar{F}_2+\bigl(\bar{d}_1+\bar{D}_2-\bar{D}_3+2 \bar{D}_4\bigl) \bigl(3 \bar{c}_1 \bar{G}_1+\bar{F}_1 \bigl(\bar{d}_1+4 \bar{D}_2-\bar{D}_3+4 \bar{D}_4\bigl)-4 \bar{F}_3 \bigl(\bar{d}_1+\bar{D}_2 \nonumber \\
& \hskip 0.7cm-\bar{D}_3+2 \bar{D}_4\bigl)\bigl)+12 \bar{F}_2 \bar{H}_1\bigl)\bigl) \, , \nonumber \\
\{\bar{F}_2, \bar{I}_2\}&=\frac{{\rm i}}{32}\bigl(4 \bigl(-\bigl(\bar{d}_1+\bar{D}_2-\bar{D}_3+2 \bar{D}_4\bigl)^2 \bigl(2 \bar{b}_3 \bigl(2 \bar{d}_1+\bar{D}_2-2 \bar{D}_3+16 \bar{D}_4\bigl)-24 \bar{b}_3^3+\bigl(\bar{c}_1-\bar{C}_2\bigl) \bigl(3 \bar{c}_1-4 \bar{C}_2\bigl)\bigl) \nonumber \\
& \hskip 0.7cm+4 \bar{F}_2 \bigl(\bar{d}_1+\bar{D}_2-\bar{D}_3+2 \bar{D}_4\bigl) \bigl(-12 \bar{b}_3^2+\bar{d}_1+\bar{D}_2-\bar{D}_3+6 \bar{D}_4\bigl)+32 \bar{b}_3 \bar{F}_2^2\bigl)-4 \bar{b}_1 \bigl(\bar{F}_2 \bigl(2 \bar{b}_3 \bigl(2 \bar{d}_1+3 \bar{D}_2 \nonumber \\
& \hskip 0.7cm-2 \bar{D}_3+8 \bar{D}_4\bigl)-8 \bar{b}_3^3-\bigl(3 \bar{c}_1-2 \bar{C}_2\bigl) \bigl(\bar{c}_1-\bar{C}_2\bigl)\bigl)+2 \bigl(\bar{d}_1+\bar{D}_2-\bar{D}_3+2 \bar{D}_4\bigl) \bigl(\bar{b}_3 \bigl(\bar{c}_1-\bar{C}_2\bigl) \bigl(\bar{c}_1-2 \bar{C}_2\bigl) \nonumber \\
& \hskip 0.7cm-\bar{b}_3^2 \bigl(2 \bar{d}_1+3 \bar{D}_2-2 \bar{D}_3+12 \bar{D}_4\bigl)+4 \bar{b}_3^4+2 \bar{E}_1 \bigl(\bar{c}_1-\bar{C}_2\bigl)+\bar{D}_4 \bigl(2 \bar{d}_1+3 \bar{D}_2-2 \bar{D}_3+8 \bar{D}_4\bigl)\bigl)-4 \bar{F}_2^2\bigl) \nonumber \\
& \hskip 0.7cm+\bar{b}_1^2 \bigl(\bar{c}_1-\bar{C}_2\bigl)^2 \bigl(4 \bar{b}_3^2-2 \bar{d}_1-3 \bar{D}_2+2 \bar{D}_3-8 \bar{D}_4\bigl)\bigl) \, ,\\
\{\bar{F}_3, \bar{I}_2\}&=\frac{{\rm i}}{32}  \bigl(\bar{b}_1 \bigl(2 \bar{b}_2 \bigl(-11 \bar{b}_3^2 \bar{F}_1+4 \bar{b}_3 \bar{H}_1+\bar{c}_1 \bar{G}_1-\bar{C}_2 \bar{G}_1+7 \bar{D}_4 \bar{F}_1\bigl)+2 \bar{c}_1 \bigl(-2 \bar{b}_3 \bar{G}_1+7 \bar{C}_2 \bar{F}_1-3 \bar{C}_2 \bar{F}_3\bigl)-8 \bar{b}_3 \bar{d}_1 \bar{F}_3 \nonumber \\
& \hskip 0.7cm-12 \bar{b}_3 \bar{D}_3 \bar{F}_1+8 \bar{b}_3 \bar{D}_3 \bar{F}_3-3 \bar{c}_1^2 \bigl(\bar{F}_1-2 \bar{F}_3\bigl)-11 \bar{C}_2^2 \bar{F}_1+8 \bar{d}_1 \bar{H}_2+2 \bar{D}_2 \bar{H}_1+48 \bar{D}_3 \bar{H}_1-8 \bar{D}_3 \bar{H}_2 \nonumber \\
& \hskip 0.7cm+20 \bar{F}_1 \bar{F}_4+12 \bar{E}_1 \bar{G}_1-64 \bar{E}_1 \bar{G}_2\bigl)-4 \bar{b}_3 \bigl(5 \bar{b}_2 \bar{F}_1 \bigl(\bar{d}_1+\bar{D}_2-\bar{D}_3+2 \bar{D}_4\bigl)+4 \bar{c}_1 \bar{C}_2 \bar{F}_1+\bar{C}_2^2 \bar{F}_1+4 \bar{C}_2 \bar{I}_2 \nonumber \\
& \hskip 0.7cm-4 \bar{H}_1 \bigl(\bar{d}_1-\bar{D}_3+2 \bar{D}_4\bigl)+16 \bar{F}_1 \bar{F}_2\bigl)+\bar{b}_1^2 \bigl(2 \bar{F}_3 \bigl(\bar{b}_3^2+\bar{D}_4\bigl)-4 \bar{b}_3 \bar{H}_2+\bar{G}_2 \bigl(\bar{c}_1-\bar{C}_2\bigl)\bigl)+8 \bar{b}_3^2 \bigl(2 \bar{C}_2 \bar{G}_1 \nonumber \\
& \hskip 0.7cm +3 \bar{F}_1 \bigl(\bar{d}_1+\bar{D}_2-\bar{D}_3+2 \bar{D}_4\bigl)\bigl)+32 \bar{b}_2 \bar{d}_1 \bar{H}_1+32 \bar{b}_2 \bar{D}_2 \bar{H}_1-32 \bar{b}_2 \bar{D}_3 \bar{H}_1+64 \bar{b}_2 \bar{D}_4 \bar{H}_1-12 \bar{b}_2 \bar{F}_1 \bar{F}_2 \nonumber \\
& \hskip 0.7cm-12 \bar{c}_1 \bar{d}_1 \bar{G}_1-24 \bar{c}_1 \bar{d}_1 \bar{G}_2+12 \bar{c}_1 \bar{D}_3 \bar{G}_1-24 \bar{c}_1 \bar{D}_4 \bar{G}_1-24 \bar{c}_1 \bar{D}_2 \bar{G}_2+24 \bar{c}_1 \bar{D}_3 \bar{G}_2-48 \bar{c}_1 \bar{D}_4 \bar{G}_2+24 \bar{E}_1 \bar{c}_1 \bar{F}_1 \nonumber \\
& \hskip 0.7cm+6 \bar{d}_1 \bar{D}_2 \bar{F}_1+28 \bar{d}_1 \bar{D}_3 \bar{F}_1-24 \bar{d}_1 \bar{D}_4 \bar{F}_1-32 \bar{d}_1 \bar{D}_2 \bar{F}_3-8 \bar{d}_1 \bar{D}_3 \bar{F}_3+4 \bar{d}_1^2 \bar{F}_3+21 \bar{D}_2^2 \bar{F}_1-28 \bar{D}_3^2 \bar{F}_1-48 \bar{D}_4^2 \bar{F}_1 \nonumber \\
& \hskip 0.7cm+22 \bar{D}_2 \bar{D}_3 \bar{F}_1+12 \bar{D}_2 \bar{D}_4 \bar{F}_1+80 \bar{D}_3 \bar{D}_4 \bar{F}_1-36 \bar{D}_2^2 \bar{F}_3+4 \bar{D}_3^2 \bar{F}_3-16 \bar{D}_4^2 \bar{F}_3+32 \bar{D}_2 \bar{D}_3 \bar{F}_3-80 \bar{D}_2 \bar{D}_4 \bar{F}_3 \nonumber \\
& \hskip 0.7cm+48 \bar{F}_2 \bar{H}_1-48 \bar{E}_1 \bar{I}_2\bigl) \, , \nonumber 
\end{align} and
\begin{align}
\{\bar{F}_4, \bar{I}_2\}&=-\frac{{\rm i}}{64} \bigl(8 \bigl(-\bigl(\bar{c}_1-\bar{C}_2\bigl)^2 \bar{D}_3-8 \bar{b}_3 \bar{D}_3 \bar{D}_4+\bar{b}_2 \bigl(4 \bar{D}_4 \bar{b}_3^2+\bigl(\bar{c}_1-\bar{C}_2\bigl)^2 \bar{b}_3-4 \bar{D}_4^2\bigl)+8 \bar{D}_4 \bar{F}_4\bigl) \bar{b}_1^2+\bigl(10 \bar{C}_2^4 \nonumber \\
& \hskip 0.7cm-2 \bar{b}_2 \bar{d}_1 \bar{C}_2^2-\bar{b}_2 \bar{D}_2 \bar{C}_2^2+2 \bar{b}_2 \bar{D}_3 \bar{C}_2^2+20 \bar{D}_2 \bar{E}_1 \bar{C}_2-16 \bar{D}_4 \bar{E}_1 \bar{C}_2+\bar{c}_1^2 \bigl(10 \bar{C}_2^2-\bar{b}_2 \bigl(2 \bar{d}_1+\bar{D}_2-2 \bar{D}_3\bigl)\bigl) \nonumber \\
& \hskip 0.7cm-16 \bar{D}_2^2 \bar{D}_4+4 \bar{b}_3^3 \bigl(12 \bar{C}_2^2-\bar{b}_2 \bigl(16 \bar{d}_1+15 \bar{D}_2-16 \bar{D}_3+32 \bar{D}_4\bigl)\bigl)+\bar{c}_1 \bigl(-20 \bar{C}_2^3+2 \bar{b}_2 \bar{C}_2\bigl(2 \bar{d}_1+\bar{D}_2\nonumber \\
& \hskip 0.7cm-2 \bar{D}_3\bigl) -20 \bar{D}_2 \bar{E}_1\bigl)+112 \bar{b}_2 \bar{D}_4 \bar{F}_2+8 \bar{b}_3^2 \bigl(8 \bar{d}_1 \bar{D}_3+7 \bar{D}_2 \bar{D}_3-2 \bigl(4 \bar{D}_3 \bigl(\bar{D}_3-2 \bar{D}_4\bigl)+5 \bar{C}_2 \bar{E}_1+3 \bar{b}_2 \bar{F}_2\bigl)\bigl) \nonumber \\
& \hskip 0.7cm-128 \bar{F}_2 \bar{F}_4+4 \bar{b}_3 \bigl(\bigl(4 \bar{D}_4-7 \bar{D}_2\bigl) \bar{C}_2^2+7 \bar{c}_1 \bar{D}_2 \bar{C}_2+24 \bar{E}_1^2-\bar{b}_2 \bar{D}_2 \bar{D}_4+2 \bar{D}_2 \bar{F}_4\bigl)\bigl) \bar{b}_1+8 \bar{b}_3^2 \bigl(\bar{b}_2 \bigl(\bar{d}_1+\bar{D}_2 \nonumber \\
& \hskip 0.7cm-\bar{D}_3+2 \bar{D}_4\bigl) \bigl(14 \bar{d}_1+13 \bar{D}_2-14 \bar{D}_3+28 \bar{D}_4\bigl)-2 \bar{C}_2^2 \bigl(9 \bigl(\bar{d}_1+\bar{D}_2-\bar{D}_3\bigl)+16 \bar{D}_4\bigl)\bigl)+2 \bar{b}_3 \bigl(15 \bar{D}_2^3 \nonumber \\
& \hskip 0.7cm-4 \bigl(\bar{D}_3-8 \bar{D}_4\bigl) \bar{D}_2^2+4 \bar{D}_3 \bigl(6 \bar{D}_4-7 \bar{D}_3\bigl) \bar{D}_2+16 \bar{d}_1^2 \bar{D}_3+4 \bar{d}_1 \bigl(4 \bar{D}_2^2+7 \bar{D}_3 \bar{D}_2+8 \bar{D}_3 \bigl(\bar{D}_4-\bar{D}_3\bigl)\bigl) \nonumber \\
& \hskip 0.7cm+4 \bigl(\bar{C}_2^4-\bar{c}_1 \bar{C}_2^3+12 \bar{F}_2 \bar{C}_2^2+\bar{b}_2 \bar{D}_2 \bar{F}_2\bigl)+16 \bigl(\bar{D}_3^3-2 \bar{D}_4 \bar{D}_3^2+\bar{F}_1 \bigl(6 \bar{F}_1-\bar{F}_3\bigl)\bigl)\bigl)+8 \bigl(\bigl(\bigl(2 \bar{d}_1+\bar{D}_2 \nonumber \\
& \hskip 0.7cm-2 \bar{D}_3\bigl) \bar{D}_4+\bar{c}_1 \bar{E}_1\bigl) \bar{C}_2^2+2 \bigl(\bar{c}_1 \bar{D}_2 \bar{D}_4+2 \bar{E}_1 \bar{F}_2\bigl) \bar{C}_2+4 \bar{d}_1 \bar{E}_1^2-4 \bar{D}_3 \bar{E}_1^2-\bar{C}_2^3 \bar{E}_1+3 \bar{D}_2^2 \bar{F}_2+4 \bar{D}_3^2 \bar{F}_2 \nonumber \\
& \hskip 0.7cm-4 \bar{d}_1 \bar{D}_3 \bar{F}_2-2 \bar{D}_2 \bar{D}_3 \bar{F}_2+2 \bar{b}_2 \bigl(\bar{D}_4 \bigl(\bar{d}_1+\bar{D}_2-\bar{D}_3+2 \bar{D}_4\bigl){}^2-8 \bar{F}_2^2\bigl)-2 \bigl(2 \bar{d}_1+\bar{D}_2-2 \bar{D}_3\bigl) \nonumber \\
& \hskip 0.7cm \times \bigl(\bar{d}_1+\bar{D}_2-\bar{D}_3+2 \bar{D}_4\bigl) \bar{F}_4+4 \bar{G}_1 \bigl(\bar{G}_2-\bar{G}_1\bigl)+4 \bigl(\bar{F}_1+\bar{F}_3\bigl) \bar{H}_1\bigl)\bigl) \, .
\end{align}

Ultimately, in relation to the compact form denoted by $\{\bar{\textbf{G}},\bar{\textbf{H}}\}$, we can derive the gradings associated with the terms as follows:
 \begin{align}
    \mathcal{G} \left(\{\bar{G}_1,\bar{H}_1\}\right) =& \, (3,2,9) \tilde{+} (4,1,9) \tilde{+}(5,2,7) \tilde{+} (4,3,7)  ; \nonumber \\
    \mathcal{G} \left(\{\bar{G}_1,\bar{H}_2\}\right) =& \, (2,3,9) \tilde{+} (3,2,9) \tilde{+}(4,3,7) \tilde{+} (3,4,7)  ; \nonumber \\
    \mathcal{G} \left(\{\bar{G}_2,\bar{H}_1\}\right) =& \,  (2,3,9) \tilde{+} (3,2,9) \tilde{+}(4,3,7) \tilde{+} (3,4,7)   ; \nonumber \\
      \mathcal{G} \left(\{\bar{G}_2,\bar{H}_2\}\right) =& \, (1,4,9) \tilde{+} (2,3,9) \tilde{+}(3,4,7) \tilde{+} (2,5,7),
\end{align} 
  from which the explicit expressions are obtained as 
\begin{align}
\{\bar{G}_1, \bar{H}_1\}&=-\frac{{\rm i}}{32}  \bigl(4 \bar{b}_3 \bigl(9 \bar{b}_2 \bar{F}_1\bigl(\bar{d}_1+\bar{D}_2-\bar{D}_3+2 \bar{D}_4\bigl)-4 \bar{c}_1 \bar{C}_2 \bar{F}_1+\bar{C}_2^2 \bar{F}_1+8 \bar{C}_2 \bar{I}_2-8 \bar{H}_1 \bigl(\bar{d}_1-\bar{D}_3+2 \bar{D}_4\bigl) \nonumber \\
& \hskip 0.7cm-16 \bar{E}_1 \bar{G}_1\bigl)+2 \bar{b}_1 \bigl(\bar{b}_2 \bigl(\bar{b}_3^2 \bar{F}_1-8 \bar{b}_3 \bar{H}_1+2 \bar{c}_1 \bar{G}_1-2 \bar{C}_2 \bar{G}_1+3 \bar{D}_4 \bar{F}_1\bigl)+4 \bar{b}_3 \bar{c}_1 \bigl(\bar{G}_1+\bar{G}_2\bigl) \nonumber \\
& \hskip 0.7cm-2 \bigl(\bar{b}_3 \bigl(2 \bar{d}_1 \bar{F}_3-\bar{D}_2 \bar{F}_1+5 \bar{D}_3 \bar{F}_1+\bar{D}_2 \bar{F}_3-2 \bar{D}_3 \bar{F}_3\bigl)+4 \bar{b}_3^3 \bar{F}_3+\bar{D}_2 \bar{H}_1-8 \bar{D}_3 \bar{H}_1+4 \bar{F}_2 \bar{F}_3 \nonumber \\
& \hskip 0.7cm-3 \bar{F}_1 \bar{F}_4-2 \bar{E}_1 \bar{G}_1+6 \bar{E}_1 \bar{G}_2\bigl)+\bar{c}_1 \bar{C}_2 \bigl(\bar{F}_1-\bar{F}_3\bigl)+\bar{c}_1^2 \bigl(\bar{F}_3-\bar{F}_1\bigl)\bigl)+2 \bar{b}_1^2 \bigl(2 \bar{b}_3^2 \bar{F}_3+\bar{G}_2 \bigl(\bar{C}_2-\bar{c}_1\bigl) \nonumber \\
& \hskip 0.7cm-2 \bar{D}_4 \bar{F}_3\bigl)+16 \bar{b}_3^2 \bigl(2 \bar{C}_2 \bar{G}_1+\bar{d}_1 \bar{F}_1-\bigl(\bar{D}_2+\bar{D}_3-2 \bar{D}_4\bigl) \bar{F}_1\bigl)-24 \bar{b}_2 \bar{H}_1 \bigl(\bar{d}_1+\bar{D}_2-\bar{D}_3+2 \bar{D}_4\bigl)\nonumber \\
& \hskip 0.7cm+4 \bar{b}_2 \bar{F}_1 \bar{F}_2-8 \bar{c}_1 \bar{d}_1 \bar{G}_1+16 \bar{c}_1 \bar{d}_1 \bar{G}_2+8 \bar{c}_1 \bar{D}_3 \bar{G}_1-16 \bar{c}_1 \bar{D}_4 \bar{G}_1+16 \bar{c}_1 \bar{D}_2 \bar{G}_2-16 \bar{c}_1 \bar{D}_3 \bar{G}_2\nonumber \\
& \hskip 0.7cm+32 \bar{c}_1 \bar{D}_4 \bar{G}_2+16 \bar{E}_1 \bar{c}_1 \bar{F}_1-36 \bar{d}_1 \bar{D}_3 \bar{F}_1-16 \bar{d}_1 \bar{D}_4 \bar{F}_1+28 \bar{d}_1 \bar{D}_2 \bar{F}_3-32 \bar{d}_1 \bar{D}_3 \bar{F}_3+96 \bar{d}_1 \bar{D}_4 \bar{F}_3\nonumber \\
& \hskip 0.7cm+16 \bar{d}_1^2 \bar{F}_3+7 \bar{D}_2^2 \bar{F}_1+36 \bar{D}_3^2 \bar{F}_1-32 \bar{D}_4^2 \bar{F}_1-36 \bar{D}_2 \bar{D}_3 \bar{F}_1-56 \bar{D}_3 \bar{D}_4 \bar{F}_1+12 \bar{D}_2^2 \bar{F}_3+16 \bar{D}_3^2 \bar{F}_3\nonumber \\
& \hskip 0.7cm+128 \bar{D}_4^2 \bar{F}_3-28 \bar{D}_2 \bar{D}_3 \bar{F}_3+88 \bar{D}_2 \bar{D}_4 \bar{F}_3-96 \bar{D}_3 \bar{D}_4 \bar{F}_3+32 \bar{F}_2 \bar{H}_1\bigl)\, \nonumber \\
\{\bar{G}_2, \bar{H}_1\}&=-\frac{{\rm i}}{32} \bigl(2 \bar{b}_2 \bigl(\bar{b}_1 \bar{c}_1 \bigl(\bar{G}_1-3 \bar{G}_2\bigl)-\bar{b}_1 \bar{C}_2 \bar{G}_1+3 \bar{b}_1 \bar{C}_2 \bar{G}_2+4 \bar{b}_3 \bar{F}_1 \bigl(\bar{d}_1+2 \bar{D}_2-\bar{D}_3+4 \bar{D}_4\bigl)-10 \bar{b}_3 \bar{F}_3 \bigl(\bar{d}_1+\bar{D}_2\nonumber \\
& \hskip 0.7cm-\bar{D}_3+2 \bar{D}_4\bigl)+8 \bar{b}_1 \bar{D}_4 \bar{F}_1-9 \bar{b}_1 \bar{D}_4 \bar{F}_3+9 \bar{b}_1 \bar{b}_3^2 \bar{F}_3-4 \bar{b}_1 \bar{b}_3 \bar{H}_1-\bar{c}_1 \bar{C}_2 \bar{F}_1+\bar{c}_1^2 \bar{F}_1-8 \bar{H}_2 \bigl(\bar{d}_1+\bar{D}_2-\bar{D}_3\nonumber \\
& \hskip 0.7cm+2 \bar{D}_4\bigl)+4 \bar{d}_1 \bar{H}_1-4 \bar{D}_3 \bar{H}_1-12 \bar{F}_1 \bar{F}_2+10 \bar{F}_2 \bar{F}_3+8 \bar{E}_1 \bar{G}_1\bigl)+4 \bar{b}_3 \bigl(2 \bar{c}_1 \bigl(\bar{b}_1 \bar{G}_2+\bar{C}_2 \bigl(\bar{F}_3-\bar{F}_1\bigl)\bigl)+6 \bar{F}_1\nonumber \\
& \hskip 0.7cm \times \bigl(\bar{C}_2^2-\bar{b}_1 \bar{D}_3\bigl)-\bar{F}_3 \bigl(\bar{b}_1 \bigl(-2 \bar{D}_2+\bar{D}_3-4 \bar{D}_4\bigl)+3 \bar{C}_2^2+8 \bar{F}_2\bigl)+8 \bar{E}_1 \bar{G}_2\bigl)+16 \bar{b}_3^2 \bigl(-2 \bar{b}_1 \bar{H}_2+2 \bar{C}_2 \bar{G}_1\nonumber \\
& \hskip 0.7cm -\bar{C}_2 \bar{G}_2+\bar{D}_2 \bar{F}_1\bigl)+8 \bar{b}_1 \bar{D}_3 \bar{H}_1+8 \bar{b}_1 \bar{D}_3 \bar{H}_2+16 \bar{b}_1 \bar{b}_3^3 \bar{F}_3+8 \bar{b}_1 \bar{F}_1 \bar{F}_4-4 \bar{b}_1 \bar{F}_3 \bar{F}_4-8 \bar{E}_1 \bar{b}_1 \bar{G}_2-4 \bar{c}_1 \bar{D}_2 \bar{G}_1\nonumber \\
& \hskip 0.7cm -8 \bar{c}_1 \bar{D}_3 \bar{G}_1-8 \bar{E}_1 \bar{c}_1 \bar{F}_1+4 \bar{C}_2 \bar{D}_2 \bar{G}_1-8 \bar{C}_2^2 \bar{H}_1+8 \bar{d}_1 \bar{D}_3 \bar{F}_1-4 \bar{d}_1 \bar{D}_2 \bar{F}_3+12 \bar{d}_1 \bar{D}_3 \bar{F}_3+4 \bar{D}_2^2 \bar{F}_1-8 \bar{D}_3^2 \bar{F}_1\nonumber \\
& \hskip 0.7cm +4 \bar{D}_2 \bar{D}_3 \bar{F}_1-16 \bar{D}_2 \bar{D}_4 \bar{F}_1-9 \bar{D}_2^2 \bar{F}_3-12 \bar{D}_3^2 \bar{F}_3+16 \bar{D}_2 \bar{D}_3 \bar{F}_3+24 \bar{D}_3 \bar{D}_4 \bar{F}_3\bigl) \, , \nonumber \\
\{\bar{G}_1, \bar{H}_2\}&=-\frac{\rm i}{32} \bigl(4 \bar{b}_3 \bigl(2 \bar{b}_2 \bar{d}_1 \bar{F}_1-3 \bar{b}_2 \bar{d}_1 \bar{F}_3+3 \bar{b}_2 \bar{D}_2 \bar{F}_1-2 \bar{b}_2 \bar{D}_3 \bar{F}_1+4 \bar{b}_2 \bar{D}_4 \bar{F}_1-3 \bar{b}_2 \bar{D}_2 \bar{F}_3+3 \bar{b}_2 \bar{D}_3 \bar{F}_3-6 \bar{b}_2 \bar{D}_4 \bar{F}_3\nonumber \\
& \hskip 0.7cm+6 \bar{c}_1 \bar{C}_2 \bar{F}_1+\bar{C}_2^2 \bigl(3 \bar{F}_3-7 \bar{F}_1\bigl)+4 \bar{H}_2 \bigl(\bar{d}_1-\bar{D}_3+2 \bar{D}_4\bigl)+4 \bar{F}_2 \bar{F}_3+4 \bar{E}_1 \bar{G}_1\bigl)+\bar{b}_1 \bigl(2 \bar{b}_2 \bigl(\bar{b}_3^2 \bigl(8 \bar{F}_1-\bar{F}_3\bigl)\nonumber \\
& \hskip 0.7cm-3 \bar{c}_1 \bar{G}_1+3 \bar{C}_2 \bar{G}_1-8 \bar{D}_4 \bar{F}_1+5 \bar{D}_4 \bar{F}_3\bigl)-4 \bigl(\bar{b}_3 \bar{D}_3 \bigl(2 \bar{F}_1-3 \bar{F}_3\bigl)+6 \bar{D}_3 \bar{H}_1-2 \bar{F}_1 \bar{F}_4+5 \bar{F}_3 \bar{F}_4\bigl)+\bar{c}_1 \bar{F}_3\nonumber \\
& \hskip 0.7cm \times \bigl(\bar{C}_2-\bar{c}_1\bigl)+24 \bar{E}_1 \bar{G}_2\bigl)-16 \bar{b}_3^2 \bigl(\bar{C}_2 \bar{G}_1+\bar{F}_3 \bigl(\bar{d}_1-\bar{D}_3+2 \bar{D}_4\bigl)\bigl)-8 \bar{b}_2 \bar{H}_2 \bigl(\bar{d}_1+\bar{D}_2-\bar{D}_3+2 \bar{D}_4\bigl)\nonumber \\
& \hskip 0.7cm-8 \bar{b}_2 \bar{F}_1 \bar{F}_2+4 \bar{b}_2 \bar{F}_2 \bar{F}_3+8 \bar{E}_1 \bar{b}_2 \bar{G}_1-12 \bar{c}_1 \bar{C}_2 \bar{H}_1+4 \bar{c}_1 \bar{d}_1 \bar{G}_2-4 \bar{c}_1 \bar{D}_3 \bar{G}_2+8 \bar{c}_1 \bar{D}_4 \bar{G}_2-12 \bar{E}_1 \bar{c}_1 \bar{F}_1\nonumber \\
& \hskip 0.7cm-8 \bar{E}_1 \bar{c}_1 \bar{F}_3-8 \bar{C}_2 \bar{D}_2 \bar{G}_1-16 \bar{C}_2 \bar{D}_4 \bar{G}_1+16 \bar{E}_1\bar{C}_2 \bar{F}_1+8 \bar{C}_2^2 \bar{H}_1+8 \bar{d}_1 \bar{D}_2 \bar{F}_3+4 \bar{d}_1 \bar{D}_3 \bar{F}_3-8 \bar{D}_2^2 \bar{F}_1 \nonumber \\
& \hskip 0.7cm+4 \bar{D}_2 \bar{D}_3 \bar{F}_1+9 \bar{D}_2^2 \bar{F}_3-4 \bar{D}_3^2 \bar{F}_3-4 \bar{D}_2 \bar{D}_3 \bar{F}_3+8 \bar{D}_3 \bar{D}_4 \bar{F}_3\bigl)\, ,\nonumber
\end{align} and
\begin{align}
\{\bar{G}_2,\bar{H}_2\}&=-\frac{{\rm i}}{16}  \bigl(-2 \bigl(-2 \bigl(\bar{D}_3 \bigl(-3 \bar{b}_1 \bar{H}_2+2 \bar{c}_1 \bar{G}_1-\bar{c}_1 \bar{G}_2+\bar{d}_1 \bar{F}_3+4 \bar{D}_4 \bar{F}_1\bigl)+6 \bar{b}_1 \bar{F}_3 \bar{F}_4+2 \bar{E}_1 \bar{c}_1 \bar{F}_3-11 \bar{E}_1 \bar{C}_2 \bar{F}_3 \nonumber \\
& \hskip 0.7cm+\bar{D}_3^2 \bigl(2 \bar{F}_1-\bar{F}_3\bigl)+4 \bar{F}_4 \bar{H}_2\bigl)+4 \bar{b}_3^2 \bigl(-2 \bar{C}_2 \bar{G}_2+4 \bar{D}_3 \bar{F}_1-\bar{D}_3 \bar{F}_3\bigl)+2 \bar{b}_3 \bigl(\bar{b}_1 \bar{D}_3 \bar{F}_3-4 \bar{C}_2^2 \bar{F}_3-8 \bar{F}_1 \bar{F}_4 \nonumber \\
& \hskip 0.7cm+4 \bar{F}_3 \bar{F}_4+8 \bar{E}_1 \bar{G}_2\bigl)+\bar{D}_2 \bigl(-2 \bar{c}_1 \bar{G}_2+\bar{D}_3 \bar{F}_1-2 \bigl(\bar{D}_3+2 \bar{D}_4\bigl) \bar{F}_3\bigl)+\bar{D}_2^2 \bar{F}_3\bigl)+\bar{b}_2 \bigl(\bar{c}_1 \bigl(4 \bar{b}_1 \bar{G}_2-7 \bar{C}_2 \bar{F}_1 \nonumber \\
& \hskip 0.7cm+\bar{C}_2 \bar{F}_3\bigl)-2 \bar{b}_3 \bigl(2 \bar{c}_1 \bar{G}_1+4 \bar{D}_4 \bar{F}_1+\bar{D}_2 \bigl(3 \bar{F}_1+\bar{F}_3\bigl)\bigl)-4 \bar{b}_1 \bar{C}_2 \bar{G}_2+4 \bigl(-\bar{b}_1 \bar{D}_4 \bar{F}_3+\bigl(\bar{D}_2+\bar{D}_3\bigl) \bar{H}_1\nonumber \\
& \hskip 0.7cm+\bar{F}_2 \bar{F}_3+\bar{E}_1\bar{G}_1\bigl)+8 \bar{b}_3^3 \bar{F}_1+\bar{c}_1^2 \bigl(\bar{F}_1-\bar{F}_3\bigl)+6 \bar{C}_2^2 \bar{F}_1-8 \bar{F}_1 \bar{F}_4\bigl)+4 \bar{b}_3 \bar{b}_2^2 \bigl(\bar{H}_1-\bar{b}_3 \bar{F}_1\bigl)\bigl) \, .
\end{align} 

All these relations conclude the degree-fourteen Poisson brackets. We now turn our attention to the degree-fifteen Poisson brackets. 

\subsection{Expansions in the degree $15$ brackets}

The compact forms in degree fifteen brackets are given by $\{\bar{\textbf{G}},\bar{\textbf{I}}\}$ and $\{\bar{\textbf{H}},\bar{\textbf{H}}\}$. 
 Beginning with $\{\bar{\textbf{G}},\bar{\textbf{I}}\}$, a direct computation shows that the gradings are  
\begin{align}
      \mathcal{G} \left(\{\bar{G}_1,\bar{I}_1\}\right) =& \, (1,4,10) \tilde{+} (2,3,10) \tilde{+}(3,4,8) \tilde{+} (2,5,8)  ; \nonumber\\
    \mathcal{G} \left(\{\bar{G}_1,\bar{I}_2\}\right) =& \, (4,1,10) \tilde{+} (5,0,10) \tilde{+}(6,1,8) \tilde{+} (5,2,8)  ; \nonumber \\
    \mathcal{G} \left(\{\bar{G}_2,\bar{I}_1\}\right) =& \,  (0,5,10) \tilde{+} (1,4,10) \tilde{+}(2,5,8) \tilde{+} (1,6,8)   ; \nonumber \\
      \mathcal{G} \left(\{\bar{G}_2,\bar{I}_2\}\right) =& \, (3,2,10) \tilde{+} (4,1,10) \tilde{+}(5,2,8) \tilde{+} (4,3,8).
\end{align} 
 We consider the case of $\{\bar{G}_1,\bar{I}_2\}$, the remaining cases being similar. From Proposition \ref{grading}, all the allowed polynomials from the homogeneous grading in $\mathcal{G} \left(\{\bar{G}_1,\bar{I}_2\}\right)$ are listed below:
\begin{align*} 
     (4,1,10) = & \, \left\{\bar{b}_3 \bar{D}_2 \bar{I}_2,\bar{b}_1 \bar{b}_3 \bar{D}_4 \bar{G}_1,\bar{b}_1 \bar{b}_3^3 \bar{G}_1,\bar{b}_3 \bar{F}_2 \bar{G}_1,C^{(003)} \bar{C}_2 \bar{I}_2,C^{(003)} C^{(202)} \bar{H}_1,C^{(003)} \bar{F}_1 \bar{F}_2,\bar{b}_1 \bar{b}_3 C^{(003)} \bar{H}_1,\bar{b}_1 \left(C^{(003)}\right)^2 \bar{G}_1,\right. \\
     & \, \left. \bar{b}_1 C^{(003)} \bar{D}_4 \bar{F}_1,C^{(202)} \bar{D}_4 \bar{G}_1,\bar{b}_3^2 C^{(202)} \bar{G}_1,\bar{b}_3 C^{(003)} C^{(202)} \bar{F}_1,\bar{b}_1 \bar{b}_3^2 C^{(003)} \bar{F}_1 \right\}; \\
     (5,0,10) = & \, \left\{\bar{F}_2 \bar{I}_2,\bar{b}_1^2 \bar{b}_3^2 \bar{G}_1,\bar{b}_1 \bar{D}_4 \bar{I}_2,\bar{b}_1 \bar{b}_3^2 \bar{I}_2 ,\bar{b}_3 C^{(202)} \bar{I}_2\right\}; \\
     (6,1,8) = & \, \left\{ \bar{b}_1 \bar{D}_2 \bar{I}_2,\bar{b}_1 \bar{F}_2 \bar{G}_1,\bar{b}_1^2 \bar{D}_4 \bar{G}_1,\left(C^{(202)}\right)^2 \bar{G}_1,\bar{b}_1^2 C^{(003)} \bar{H}_1,\bar{b}_1 \bar{b}_3 C^{(202)} \bar{G}_1,\bar{b}_1 C^{(003)} C^{(202)} \bar{F}_1,\bar{b}_1^2 \bar{b}_3 C^{(003)} \bar{F}_1\right\}  ; \\
     (5,2,8) = & \, \left\{ \bar{b}_1 \bar{D}_3 \bar{I}_2,\bar{b}_1 \bar{E}_1 \bar{H}_1,\bar{b}_1 \bar{F}_2 \bar{G}_2,\bar{b}_1 \bar{b}_3 \bar{D}_2 \bar{G}_1,\bar{b}_1 \bar{b}_3 \bar{E}_1 \bar{F}_1,\bar{b}_1^2 \bar{D}_4 \bar{G}_2,\bar{b}_1 \bar{b}_2 \bar{b}_3 \bar{I}_2,\bar{b}_1 \bar{C}_2 \bar{D}_4 \bar{F}_1,\bar{b}_1 \bar{b}_3 \bar{C}_2 \bar{H}_1,\bar{C}_2 \bar{F}_1 \bar{F}_2,\bar{b}_1^2 \bar{b}_3^2 \bar{G}_2,\bar{b}_1 \bar{b}_3^2 \bar{C}_2 \bar{F}_1, \right. \\
     & \, \left.\bar{b}_2 C^{(202)} \bar{I}_2,\bar{C}_2^2 \bar{I}_2,\bar{C}_2 C^{(202)} \bar{H}_1,\left(C^{(202)}\right)^2 \bar{G}_2,C^{(202)} \bar{D}_2 \bar{G}_1,C^{(202)} \bar{E}_1 \bar{F}_1,\bar{b}_1^2 C^{(003)} \bar{H}_2,\bar{b}_1 \bar{b}_3 C^{(202)} \bar{G}_2,\bar{b}_1 C^{(003)} \bar{C}_2 \bar{G}_1,\right. \\
     & \, \left.\bar{b}_1 C^{(003)} C^{(202)} \bar{F}_3,\bar{b}_1 C^{(003)} \bar{D}_2 \bar{F}_1,\bar{b}_3 \bar{C}_2 C^{(202)} \bar{F}_1,\bar{b}_1^2 \bar{b}_3 C^{(003)} \bar{F}_3\right\}.
\end{align*} 
As usual, by substituting the generators $C^{(003)}$ and $C^{(202)}$ with $\bar{c}_1$ and $\bar{d}_1$, we find $117$ polynomials within the bracket $ \{\bar{G}_1,\bar{I}_2\}$, such that  
\begin{align*}
    \{\bar{G}_1, \bar{I}_2\} = \Gamma_{79}^1  \bar{b}_3 \bar{D}_2 \bar{I}_2 + \ldots + \Gamma_{79}^{116} \bar{b}_3 \bar{d}_1  \bar{C}_2  \bar{F}_1 + \Gamma_{79}^{117} \bar{b}_1^2 \bar{b}_3 \bar{c}_1 \bar{F}_3.
\end{align*} Here, the coefficients $\Gamma_{79}^1,\ldots,\Gamma_{79}^{117}$ are determined taking into account the  Poisson bracket relations of $\mathfrak{su}^*(4)$.  After finding the explicit values of the coefficients, the  expansion for $\{\bar{G}_1, \bar{I}_2\}$ reads  
\begin{align}
    \{\bar{G}_1, \bar{I}_2\} & =-\frac{{\rm i}}{8}  \bigl(\bar{b}_2 \bigl(\bar{c}_1 \bigl(2 \bar{b}_3 \bigl(\bar{b}_3 \bar{F}_3-\bar{H}_2\bigl)+\bar{D}_3 \bar{F}_1-\bigl(\bar{D}_3+2 \bar{D}_4\bigl) \bar{F}_3\bigl)+2 \bar{b}_3 \bar{D}_3 \bar{G}_2+\bar{C}_2 \bar{D}_3 \bigl(\bar{F}_3-\bar{F}_1\bigl)+4 \bar{F}_4 \bar{G}_1 \nonumber \\
& \hskip 0.7cm-4 \bar{F}_4 \bar{G}_2\bigl)-2 \bar{b}_3 \bigl(3 \bar{c}_1 \bar{D}_3 \bar{F}_3+4 \bar{F}_4 \bar{G}_2\bigl)+6 \bar{c}_1 \bigl(\bar{D}_3 \bar{H}_2+\bar{F}_3 \bar{F}_4\bigl)+2 \bar{D}_3 \bigl(4 \bar{D}_4 \bar{G}_2+\bar{D}_3 \bigl(\bar{G}_2-4 \bar{G}_1\bigl)\bigl)\bigl).
\end{align}
  Analogously, the remaining relations are computed: 
\begin{align}
\{\bar{G}_1, \bar{I}_1\}&=-\frac{{\rm i}}{32} \bigl(\bar{b}_2 \bigl(\bar{c}_1 \bigl(\bar{b}_1 \bigl(4 \bar{b}_3 \bar{F}_3-6 \bar{H}_2\bigl)+8 \bar{C}_2 \bigl(\bar{G}_1+2 \bar{G}_2\bigl)-\bigl(\bigl(\bar{D}_2+10 \bar{D}_3-8 \bar{D}_4\bigl) \bar{F}_1\bigl)\bigl)+\bar{C}_2 \bigl(4 \bar{b}_1 \bar{b}_3 \bar{F}_3 \nonumber \\
& \hskip 0.7cm-2 \bar{b}_1 \bar{H}_2-8 \bar{d}_1 \bar{F}_3+5 \bar{D}_2 \bar{F}_1-6 \bar{D}_3 \bar{F}_1+8 \bar{D}_3 \bar{F}_3\bigl)+4 \bigl(\bar{b}_3 \bigl(-2 \bar{b}_1 \bar{I}_1+\bar{D}_2 \bar{G}_1+2 \bar{D}_3 \bar{G}_1+\bar{D}_2 \bar{G}_2 \nonumber \\
& \hskip 0.7cm-10 \bar{E}_1 \bar{F}_1+8 \bar{E}_1 \bar{F}_3\bigl)-2 \bar{b}_1 \bar{D}_4 \bar{G}_2+2 \bar{b}_1 \bar{b}_3^2 \bar{G}_2+12 \bar{I}_1 \bigl(\bar{d}_1+\bar{D}_2-\bar{D}_3+2 \bar{D}_4\bigl)+20 \bar{F}_4 \bar{G}_1-6 \bar{F}_2 \bar{G}_2 \nonumber \\
& \hskip 0.7cm+12 \bar{E}_1 \bar{H}_1-32 \bar{E}_1 \bar{H}_2\bigl)\bigl)+2 \bar{D}_3 \bigl(\bar{b}_1 \bar{c}_1 \bar{F}_3-\bar{b}_1 \bar{C}_2 \bar{F}_3+20 \bar{b}_1 \bar{I}_1+6 \bar{D}_2 \bar{G}_1-16 \bar{D}_4 \bar{G}_1-10 \bar{D}_2 \bar{G}_2 \nonumber \\
& \hskip 0.7cm+8 \bar{E}_1 \bigl(2 \bar{F}_1+5 \bar{F}_3\bigl)\bigl)+4 \bar{b}_3 \bigl(-4 \bar{b}_1 \bar{D}_3 \bar{G}_2+\bar{c}_1 \bigl(2 \bar{C}_2 \bar{G}_2-2 \bar{D}_3 \bar{F}_1+\bar{D}_2 \bar{F}_3\bigl)+8 \bar{F}_4 \bar{G}_1\bigl)-2 \bar{b}_2^2 \nonumber \\
& \hskip 0.7cm \times \bigl(\bar{b}_3 \bigl(\bar{c}_1 \bar{F}_1-9 \bar{C}_2 \bar{F}_1+4 \bar{I}_2\bigl)+8 \bar{C}_2 \bar{H}_1\bigl)-4 \bar{G}_2 \bigl(-4 \bar{b}_1 \bar{F}_4+6 \bar{E}_1 \bar{c}_1+\bar{D}_2^2\bigl)+6 \bar{C}_2 \bar{F}_3 \bigl(-3 \bar{c}_1 \bar{C}_2 \nonumber \\
& \hskip 0.7cm+\bar{c}_1^2+2 \bar{C}_2^2-8 \bar{F}_4\bigl)-32 \bar{E}_1 \bar{D}_2 \bar{F}_3-40 \bar{D}_3^2 \bar{G}_1\bigl) \, \end{align} and \begin{align}
\{\bar{G}_2, \bar{I}_1\}&=-\frac{{\rm i}}{32}  \bigl(2 \bar{b}_1 \bigl(-\bar{b}_3 \bigl(-2 \bar{b}_2 \bar{I}_2+4 \bar{c}_1 \bar{H}_1+2 \bar{G}_2 \bigl(\bar{d}_1-\bar{D}_3+2 \bar{D}_4\bigl)+\bar{D}_2 \bigl(\bar{G}_1+2 \bar{G}_2\bigl)\bigl)+4 \bar{b}_3^2 \bar{c}_1 \bar{F}_1+\bar{c}_1 \bar{F}_1 \nonumber \\
& \hskip 0.7cm \times \bigl(-\bar{d}_1+\bar{D}_3-6 \bar{D}_4\bigl)+\bar{c}_1 \bar{F}_3 \bigl(\bar{d}_1+\bar{D}_2-\bar{D}_3+2 \bar{D}_4\bigl)+\bar{C}_2 \bar{F}_1 \bigl(\bar{d}_1-\bar{D}_3\bigl)-\bar{C}_2 \bar{F}_3\bigl(\bar{d}_1+\bar{D}_2 \nonumber \\
& \hskip 0.7cm -\bar{D}_3+2 \bar{D}_4\bigl)-2 \bigl(\bar{D}_2+2 \bar{D}_3\bigl) \bar{I}_2-4 \bar{F}_2 \bar{G}_1+8 \bar{F}_2 \bar{G}_2+4 \bar{E}_1 \bar{H}_1\bigl)+\bar{b}_1^2 \bigl(3 \bar{b}_3 \bar{F}_3 \bigl(\bar{c}_1-\bar{C}_2\bigl)+4 \bar{b}_3^2 \bar{G}_1 \nonumber \\
& \hskip 0.7cm -2 \bigl(\bar{H}_1+\bar{H}_2\bigl) \bigl(\bar{c}_1-\bar{C}_2\bigl)-4 \bar{D}_4 \bar{G}_1\bigl)+8 \bigl(-3 \bar{b}_3 \bar{c}_1 \bar{F}_1 \bigl(\bar{d}_1+\bar{D}_2-\bar{D}_3+2 \bar{D}_4\bigl)-4 \bar{b}_3 \bar{F}_2 \bar{G}_1+3 \bar{c}_1 \nonumber \\
& \hskip 0.7cm \times \bigl(\bar{H}_1 \bigl(\bar{d}_1+\bar{D}_2-\bar{D}_3+2 \bar{D}_4\bigl)+\bar{F}_1 \bar{F}_2\bigl)+\bigl(\bar{d}_1+\bar{D}_2-\bar{D}_3+2 \bar{D}_4\bigl) \bigl(\bar{G}_1 \bigl(\bar{d}_1+\bar{D}_2-\bar{D}_3+6 \bar{D}_4\bigl)\nonumber \\
& \hskip 0.7cm -4 \bar{G}_2 \bigl(\bar{d}_1+\bar{D}_2-\bar{D}_3+2 \bar{D}_4\bigl)\bigl)\bigl)\bigl) \, ,\nonumber \\
\{\bar{G}_2, \bar{I}_2\}&=\frac{{\rm i}}{16}  \bigl(-6 \bar{F}_1 \bar{C}_2^3-3 \bar{c}_1^2 \bar{F}_1 \bar{C}_2+15 \bar{b}_2 \bar{d}_1 \bar{F}_1 \bar{C}_2+15 \bar{b}_2 \bar{D}_2 \bar{F}_1 \bar{C}_2-15 \bar{b}_2 \bar{D}_3 \bar{F}_1 \bar{C}_2+30 \bar{b}_2 \bar{D}_4 \bar{F}_1 \bar{C}_2-56 \bar{b}_3 \bar{D}_4 \bar{F}_1 \bar{C}_2\nonumber \\
& \hskip 0.7cm+32 \bar{F}_1 \bar{F}_2 \bar{C}_2-48 \bar{d}_1 \bar{E}_1 \bar{F}_1-32 \bar{D}_2 \bar{E}_1 \bar{F}_1+48 \bar{D}_3 \bar{E}_1 \bar{F}_1+16 \bar{D}_4 \bar{E}_1 \bar{F}_1-16 \bar{d}_1 \bar{E}_1 \bar{F}_3-16 \bar{D}_2 \bar{E}_1 \bar{F}_3\nonumber \\
& \hskip 0.7cm+16 \bar{D}_3 \bar{E}_1 \bar{F}_3-32 \bar{D}_4 \bar{E}_1 \bar{F}_3+16 \bar{D}_2^2 \bar{G}_1+64 \bar{b}_2 \bar{b}_3 \bar{d}_1 \bar{G}_1+64 \bar{b}_2 \bar{b}_3 \bar{D}_2 \bar{G}_1+14 \bar{d}_1 \bar{D}_2 \bar{G}_1-64 \bar{b}_2 \bar{b}_3 \bar{D}_3 \bar{G}_1\nonumber \\
& \hskip 0.7cm-14 \bar{D}_2 \bar{D}_3 \bar{G}_1+128 \bar{b}_2 \bar{b}_3 \bar{D}_4 \bar{G}_1-64 \bar{b}_2 \bar{F}_2 \bar{G}_1+\bar{c}_1 \bigl(9 \bar{F}_1 \bar{C}_2^2-4 \bar{b}_3 \bar{G}_1 \bar{C}_2-2 \bar{b}_3 \bar{D}_2 \bar{F}_1-15 \bar{b}_2 \bigl(\bar{d}_1+\bar{D}_2\nonumber \\
& \hskip 0.7cm-\bar{D}_3+2 \bar{D}_4\bigl) \bar{F}_1+4 \bar{b}_3 \bigl(\bar{d}_1+\bar{D}_2-\bar{D}_3+2 \bar{D}_4\bigl) \bar{F}_3+12 \bar{E}_1 \bar{G}_1\bigl)+20 \bar{d}_1^2 \bar{G}_2+14 \bar{D}_2^2 \bar{G}_2+20 \bar{D}_3^2 \bar{G}_2\nonumber \\
& \hskip 0.7cm+34 \bar{d}_1 \bar{D}_2 \bar{G}_2-40 \bar{d}_1 \bar{D}_3 \bar{G}_2-34 \bar{D}_2 \bar{D}_3 \bar{G}_2+40 \bar{d}_1 \bar{D}_4 \bar{G}_2+28 \bar{D}_2 \bar{D}_4 \bar{G}_2-40 \bar{D}_3 \bar{D}_4 \bar{G}_2-16 \bar{b}_3 \bar{F}_2 \bar{G}_2\nonumber \\
& \hskip 0.7cm+\bar{b}_1 \bigl(-4 \bar{C}_2 \bar{F}_1 \bar{b}_3^2+4 \bigl(-4 \bar{E}_1 \bar{F}_3+5 \bar{D}_3 \bar{G}_1+2 \bar{D}_2 \bar{G}_2\bigl) \bar{b}_3+2 \bar{c}_1 \bar{D}_2 \bar{F}_1+2 \bar{C}_2 \bar{D}_2 \bar{F}_1-10 \bar{c}_1\bar{D}_3 \bar{F}_1\nonumber \\
& \hskip 0.7cm-6 \bar{C}_2 \bar{D}_3 \bar{F}_1-5 \bar{c}_1 \bar{d}_1 \bar{F}_3-3 \bar{C}_2 \bar{d}_1 \bar{F}_3+5 \bar{c}_1 \bar{D}_3 \bar{F}_3+3 \bar{C}_2 \bar{D}_3 \bar{F}_3+8 \bar{c}_1 \bar{C}_2 \bar{G}_1-24 \bar{F}_4 \bar{G}_1+16 \bar{c}_1 \bar{C}_2 \bar{G}_2\nonumber \\
& \hskip 0.7cm-32 \bar{F}_2 \bar{G}_2-2 \bar{b}_2 \bigl(16 \bar{G}_1 \bar{b}_3^2-2 \bigl(3 \bar{c}_1+\bar{C}_2\bigl) \bar{F}_1 \bar{b}_3-16 \bar{D}_4 \bar{G}_1+\bar{c}_1 \bar{H}_1+7 \bar{C}_2 \bar{H}_1\bigl)+8 \bar{E}_1 \bigl(4 \bar{H}_1+\bar{H}_2\bigl)\nonumber \\
& \hskip 0.7cm+4 \bigl(\bar{d}_1+\bar{D}_2-\bar{D}_3+2 \bar{D}_4\bigl) \bar{I}_1\bigl)+8 \bar{b}_1^2 \bigl(\bar{b}_3 \bar{c}_1 \bar{F}_3-\bar{c}_1 \bar{H}_2-2 \bar{b}_3 \bar{I}_1\bigl)+4 \bigl(2 \bar{C}_2^2+7 \bar{b}_2 \bigl(\bar{D}_3-\bar{d}_1\bigl)\bigl) \bar{I}_2\bigl) \, . 
\end{align}

Finally, for $\{\bar{\textbf{H}},\bar{\textbf{H}}\}$, which consists of the only Poisson bracket $\{\bar{H}_1,\bar{H}_2\}$, the grading reads
\begin{align}
    \mathcal{G} \left(\{\bar{H}_1,\bar{H}_2\}\right) = (2,3,10) \tilde{+}  (3,2,10) \tilde{+}  (4,3,8) \tilde{+}  (3,4,8) . \label{eq:grah1h2}
\end{align} 
 Proceeding in the usual way, the components in $\{\bar{H}_1,\bar{H}_2\}$ are given by
\begin{align}
\{\bar{H}_1, \bar{H}_2\}&=-\frac{{\rm i}}{64}  \bigl(-16 \bar{C}_2 \bigl(\bar{F}_1-2 \bar{F}_3\bigl) \bar{b}_3^3+16 \bigl(\bar{E}_1 \bigl(\bar{F}_1-2 \bar{F}_3\bigl)+\bar{D}_2 \bar{G}_1+\bigl(\bar{d}_1-\bar{D}_3+2 \bar{D}_4\bigl) \bar{G}_2-\bar{C}_2 \bar{H}_1\bigl) \bar{b}_3^2\nonumber \\
& \hskip 0.7cm+4 \bar{b}_1^2 \bar{I}_1 \bar{b}_3-8 \bigl(-2 \bar{I}_2 \bar{b}_2^2+14 \bar{E}_1 \bar{F}_1 \bar{b}_2+\bar{D}_2 \bar{G}_1 \bar{b}_2-4 \bar{C}_2 \bigl(\bar{D}_2+\bar{D}_4\bigl) \bar{F}_1+4 \bar{C}_2^2 \bar{G}_1-8 \bar{F}_4 \bar{G}_1\nonumber \\
& \hskip 0.7cm+\bar{c}_1 \bigl(\bar{D}_2 \bar{F}_1+2 \bar{D}_3 \bar{F}_1+3 \bar{d}_1 \bar{F}_3-3 \bar{D}_3 \bar{F}_3+6 \bar{D}_4 \bar{F}_3-2 \bar{C}_2 \bar{G}_1\bigl)+8 \bar{F}_2 \bar{G}_2\bigl) \bar{b}_3+6 \bar{c}_1^2 \bar{C}_2 \bar{F}_1\nonumber \\
& \hskip 0.7cm-\bar{b}_1 \bar{b}_2 \bigl(8 \bigl(\bar{G}_1+2 \bar{G}_2\bigl) \bar{b}_3^2+\bigl(-9 \bar{c}_1 \bar{F}_1+5 \bar{C}_2 \bar{F}_1-8 \bar{c}_1 \bar{F}_3+8 \bar{C}_2 \bar{F}_3\bigl) \bar{b}_3+8 \bar{D}_4 \bigl(\bar{G}_1-4 \bar{G}_2\bigl)+4 \bar{C}_2 \bar{H}_1\nonumber \\
& \hskip 0.7cm+8 \bar{c}_1 \bar{H}_2-8 \bar{C}_2 \bar{H}_2\bigl)-2 \bar{c}_1 \bigl(3 \bar{F}_1 \bar{C}_2^2+2 \bigl(\bar{b}_2 \bigl(\bar{d}_1-\bar{D}_3+4 \bar{D}_4\bigl) \bar{F}_1+\bar{b}_2 \bigl(\bar{d}_1+\bar{D}_2-\bar{D}_3+2 \bar{D}_4\bigl) \bar{F}_3 \nonumber \\
& \hskip 0.7cm-6 \bar{F}_2 \bar{F}_3-6 \bar{E}_1 \bar{G}_1+8 \bar{E}_1 \bar{G}_2-2 \bigl(\bar{d}_1-\bar{D}_3+2 \bar{D}_4\bigl) \bar{H}_2\bigl)\bigl)-2 \bar{b}_1 \bigl(2 \bar{G}_2 \bar{c}_1^2+\bigl(-10 \bar{F}_3 \bar{b}_3^2+8 \bar{H}_2 \bar{b}_3 \nonumber \\
& \hskip 0.7cm+4 \bar{D}_3 \bar{F}_1+\bar{D}_2 \bar{F}_3+4 \bar{D}_4 \bar{F}_3-4 \bar{C}_2 \bar{G}_2\bigl) \bar{c}_1+12 \bar{b}_3 \bar{E}_1 \bar{F}_3-2 \bar{C}_2 \bigl(\bar{D}_3 \bar{F}_1+\bar{D}_2 \bar{F}_3\bigl)-6 \bar{b}_3 \bar{D}_3 \bar{G}_1 \nonumber \\
& \hskip 0.7cm-16 \bar{F}_4 \bar{G}_1+\bar{b}_3 \bar{D}_2 \bar{G}_2-24 \bar{b}_3 \bar{D}_3 \bar{G}_2+32\bar{F}_4 \bar{G}_2-8 \bar{E}_1 \bar{H}_2-8 \bar{d}_1 \bar{I}_1+8 \bar{D}_3 \bar{I}_1\bigl)+4 \bigl(\bigl(\bar{G}_1+5 \bar{G}_2\bigl) \bar{D}_2^2 \nonumber \\
& \hskip 0.7cm-2 \bigl(\bar{E}_1 \bigl(7 \bar{F}_1+4 \bar{F}_3\bigl)-2 \bigl(\bar{D}_3+\bar{D}_4\bigl) \bar{G}_1+2 \bar{D}_3 \bar{G}_2\bigl) \bar{D}_2+4 \bigl(\bigl(\bar{G}_1+\bar{G}_2\bigl) \bar{D}_3^2+\bigl(\bar{E}_1 \bigl(6 \bar{F}_1-3 \bar{F}_3\bigl)\nonumber \\
& \hskip 0.7cm-2 \bar{D}_4 \bar{G}_2-\bar{d}_1 \bigl(\bar{G}_1+\bar{G}_2\bigl)\bigl) \bar{D}_3+\bar{E}_1 \bigl(2 \bar{D}_4 \bar{F}_1+3 \bar{d}_1 \bar{F}_3-5 \bar{C}_2 \bar{G}_1\bigl)\bigl)+\bar{b}_2 \bigl(\bar{C}_2 \bigl(\bar{d}_1-\bar{D}_3\bigl) \bar{F}_1 \nonumber \\
& \hskip 0.7cm+\bar{C}_2 \bigl(\bar{d}_1+\bar{D}_2-\bar{D}_3+2 \bar{D}_4\bigl) \bar{F}_3+4 \bar{F}_2 \bigl(\bar{G}_1-4 \bar{G}_2\bigl)+40 \bar{E}_1 \bar{H}_1-2 \bigl(\bar{D}_2+6 \bar{D}_3\bigl) \bar{I}_2\bigl)\bigl)\bigl) \, .
\end{align}

It is evident that all expansions in the Poisson brackets above have been shown to yield relations of degree fifteen. We now proceed to a detailed exploration of the expansions observed in the Poisson brackets of degrees sixteen and seventeen.

\subsection{Expansions in the degree $16$ and $17$}

 We first have a look at the degree sixteen expansion. In other words, we have to consider the compact form $\{\bar{\textbf{H}},\bar{\textbf{I}}\}$ including the Poisson brackets $\{\bar{H}_1,\bar{I}_1\},\{\bar{H}_1,\bar{I}_2\} ,\{\bar{H}_2,\bar{I}_1\}$ and $ \{\bar{H}_2,\bar{I}_2\}.$ Using Lemma \ref{4.1} and Proposition \ref{grading}, each of the gradings is given by
\begin{align*}
         \mathcal{G} \left(\{\bar{H}_1,\bar{I}_1\}\right) =& \, (1,4,11) \tilde{+} (2,3,11) \tilde{+}(3,4,9) \tilde{+} (2,5,9)  ; \\
    \mathcal{G} \left(\{\bar{H}_1,\bar{I}_2\}\right) =& \, (4,1,11) \tilde{+} (5,0,11) \tilde{+}(6,1,9) \tilde{+} (5,2,9)  ;
    \end{align*} and \begin{align*}
    \mathcal{G} \left(\{\bar{H}_2,\bar{I}_1\}\right) =& \,  (0,5,11) \tilde{+} (1,4,11) \tilde{+}(2,5,9) \tilde{+} (1,6,9)   ; \\
      \mathcal{G} \left(\{\bar{H}_2,\bar{I}_2\}\right) =& \, (3,2,11) \tilde{+} (4,1,11) \tilde{+}(5,2,9) \tilde{+} (4,3,9).
\end{align*} 

As previously detailed, we shall enumerate all admissible polynomials associated with one of the grading structures described earlier. For illustrative purposes, consider the polynomials that are permissible within the context of the homogeneous grading, specifically derived from the Poisson bracket denoted by $\{\bar{H}_1,\bar{I}_1\}$, which are as follows:
\begin{align*}
     (1,4,11) = &\, \left\{\bar{b}_3 \bar{E}_1 \bar{I}_1, \text{ } \bar{b}_3^3 \bar{D}_3 \bar{F}_3, \text{ } \bar{b}_2 \bar{b}_3^4 \bar{F}_3, \text{ } \bar{b}_3 \bar{F}_4 \bar{H}_2, \text{ } \bar{C}_2 \bar{D}_4 \bar{I}_1, \text{ } \bar{D}_3 \bar{D}_4 \bar{H}_2, \text{ } \bar{b}_2 \bar{b}_3^2 \bar{D}_4 \bar{F}_3, \text{ } \bar{D}_4 \bar{F}_3 \bar{F}_4, \text{ } \bar{b}_2 \bar{b}_3 \bar{D}_4 \bar{H}_2, \text{ }   \bar{b}_3 \bar{D}_3 \bar{D}_4 \bar{F}_3 , \text{ } \bar{b}_2 \bar{D}_4^2 \bar{F}_3,\right. \\
     &\,   \bar{b}_3^2 \bar{C}_2 \bar{I}_1,\text{ }  \bar{b}_3^2 \bar{F}_3 \bar{F}_4, \text{ } \bar{b}_3^2 \bar{D}_3 \bar{H}_2, \text{ } \bar{b}_2 \bar{b}_3^3 \bar{H}_2, \text{ } \bar{b}_2 \left(C^{(003)}\right)^2 \bar{H}_2, \text{ } \bar{b}_2 C^{(003)} \bar{D}_4 \bar{G}_2, \text{ }  \bar{b}_3 C^{(003)} \bar{D}_3 \bar{G}_2, \text{ }  \left(C^{(003)}\right)^2 \bar{D}_3 \bar{F}_3,   \\
     &\,  \left.C^{(003)} \bar{D}_2 \bar{I}_1, \text{ } C^{(003)} \bar{F}_4 \bar{G}_2, \text{ } \bar{b}_2 \bar{b}_3^2 C^{(003)} \bar{G}_2, \text{ } \bar{b}_2 \bar{b}_3 \left(C^{(003)}\right)^2 \bar{F}_3\right\}; \\
     (2,3,11) = &\, \left\{\bar{D}_2 \bar{D}_4 \bar{H}_2, \text{ } \bar{D}_3 \bar{D}_4 \bar{H}_1, \text{ } \bar{D}_4 \bar{E}_1 \bar{G}_2, \text{ } \bar{D}_4 \bar{F}_1 \bar{F}_4, \text{ } \bar{b}_2 \bar{b}_3 \bar{D}_4 \bar{H}_1, \text{ } \bar{b}_3^3 \bar{C}_2 \bar{G}_2, \text{ } \bar{b}_3^3 \bar{D}_2 \bar{F}_3, \text{ } \bar{b}_3^3 \bar{D}_3 \bar{F}_1, \text{ }  \bar{b}_3^2 \bar{C}_2^2 \bar{F}_3, \text{ } \bar{b}_1 \bar{b}_2 \bar{b}_3^3 \bar{F}_3,  \right. \\
     & \,     \bar{b}_3 \bar{C}_2 \bar{D}_4 \bar{G}_2, \text{ } \bar{b}_3 \bar{D}_2 \bar{D}_4 \bar{F}_3, \text{ } \bar{b}_3 \bar{D}_3 \bar{D}_4 \bar{F}_1, \text{ } \bar{b}_2 \bar{b}_3^2 \bar{D}_4 \bar{F}_1,  \text{ }  \bar{b}_2 \bar{b}_3^3 \bar{H}_1, \bar{b}_2 \bar{D}_4^2 \bar{F}_1, \text{ } \bar{b}_3^2 \bar{D}_2 \bar{H}_2, \text{ } \bar{b}_3^2 \bar{D}_3 \bar{H}_1, \text{ } \bar{b}_3^2 \bar{E}_1 \bar{G}_2, \text{ } \\ 
     &\, \bar{b}_3^2 \bar{F}_1 \bar{F}_4, \text{ } \bar{b}_3 \bar{F}_4 \bar{H}_1, \text{ } \bar{b}_2 \bar{b}_3^4 \bar{F}_1, \text{ } C^{(003)} C^{(202)} \bar{I}_1,\text{ } C^{(003)} \bar{E}_1 \bar{H}_2, \text{ } C^{(003)} \bar{F}_4 \bar{G}_1, \text{ } \bar{b}_2 \left(C^{(003)}\right)^2 \bar{H}_1, \text{ }  \\
     &\,\bar{b}_2 C^{(003)} \bar{D}_4, \text{ }  \bar{G}_1\bar{b}_3 C^{(003)} \bar{C}_2 \bar{H}_2, \text{ } \bar{b}_3 C^{(003)} \bar{E}_1 \bar{F}_3, \text{ } \left(C^{(003)}\right)^2 \bar{C}_2 \bar{G}_2, \text{ }\left(C^{(003)}\right)^2 \bar{D}_3 \bar{F}_1, \text{ } \\
     & \, C^{(003)} \bar{C}_2 \bar{D}_4 \bar{F}_3, \text{ } \bar{b}_3 C^{(003)} \bar{D}_2 \bar{G}_2, \text{ }  \bar{b}_3 C^{(003)} \bar{D}_3 \bar{G}_1, \text{ } \bar{b}_1 \bar{b}_3 C^{(003)} \bar{I}_1, \text{ } \bar{b}_2 \bar{b}_3^2 C^{(003)} \bar{G}_1, \\
     &\, \left.\left(C^{(003)}\right)^2 \bar{D}_2 \bar{F}_3, \text{ } \bar{b}_2 \bar{b}_3 \left(C^{(003)}\right)^2 \bar{F}_1, \text{ } \bar{b}_3^2 C^{(003)} \bar{C}_2 \bar{F}_3\right\} ; \\
     (3,  4,   9) = &\, \text{ }  \left\{\bar{b}_1 \bar{E}_1 \bar{I}_1, \text{ } \bar{b}_1 \bar{F}_4 \bar{H}_2, \text{ } \bar{b}_2 \bar{F}_2 \bar{H}_2, \text{ } \bar{C}_2 \bar{E}_1 \bar{H}_2, \text{ } \bar{C}_2 \bar{F}_4 \bar{G}_1, \text{ } \bar{D}_2^2 \bar{H}_2, \text{ } \bar{D}_2 \bar{D}_3 \bar{H}_1, \text{ } \bar{D}_2 \bar{E}_1 \bar{G}_2, \text{ } \bar{b}_2 \bar{b}_3 \bar{D}_2 \bar{H}_1, \text{ } \bar{b}_2 \bar{b}_3 \bar{E}_1 \bar{G}_1,  \right. \\
     & \,  \bar{b}_2 \bar{b}_3 \bar{F}_2 \bar{F}_3, \text{ } \bar{b}_1 \bar{b}_2 \bar{b}_3^2 \bar{H}_2, \text{ } \bar{D}_2 \bar{F}_1 \bar{F}_4, \text{ } \bar{b}_1 \bar{b}_3^2 \bar{D}_3 \bar{F}_3, \text{ }  \bar{D}_3 \bar{E}_1 \bar{G}_1, \text{ } \bar{D}_3 \bar{F}_2 \bar{F}_3, \text{ } \bar{E}_1^2 \bar{F}_3, \text{ } \bar{b}_1 \bar{b}_2 \bar{D}_4 \bar{H}_2,\text{ }  \bar{b}_1 \bar{b}_3 \bar{C}_2 \bar{I}_1,  \\ 
     & \, \bar{b}_1 \bar{b}_3 \bar{D}_3 \bar{H}_2, \text{ } \bar{b}_1 \bar{b}_3 \bar{F}_3 \bar{F}_4, \text{ } \bar{b}_2 \bar{D}_2 \bar{D}_4 \bar{F}_1, \text{ }  \bar{b}_3 \bar{D}_2^2 \bar{F}_3, \text{ }  \bar{b}_3 \bar{C}_2^2 \bar{H}_2, \text{ }  \bar{C}_2^2 \bar{D}_4 \bar{F}_3, \text{ } \bar{b}_3 \bar{C}_2 \bar{D}_2 \bar{G}_2, \text{ } \bar{b}_3 \bar{C}_2 \bar{D}_3 \bar{G}_1,  \\ 
     &\,    \bar{b}_3 \bar{C}_2 \bar{E}_1 \bar{F}_3, \text{ } \bar{b}_3 \bar{D}_2 \bar{D}_3 \bar{F}_1,\text{ } \bar{b}_2 \bar{C}_2 \bar{D}_4 \bar{G}_1, \text{ }\bar{b}_2 \bar{b}_3^2 \bar{C}_2 \bar{G}_1, \text{ } \bar{b}_2 \bar{b}_3^2 \bar{D}_2 \bar{F}_1, \text{ }    \bar{b}_1 \bar{D}_3 \bar{D}_4 \bar{F}_3, \text{ } \bar{b}_1 \bar{b}_2 \bar{b}_3 \bar{D}_4 \bar{F}_3, \\ 
     &\,    \bar{b}_2^2 C^{(003)} \bar{I}_2, \text{ } \bar{C}_2 C^{(202)} \bar{I}_1, \text{ }  C^{(003)} \bar{C}_2 \bar{D}_2 \bar{F}_3,\text{ } C^{(202)} \bar{D}_3 \bar{H}_2, \text{ } C^{(202)} \bar{F}_3 \bar{F}_4, \text{ }  \bar{b}_2 \bar{b}_3 C^{(202)} \bar{H}_2,   \\
     &\, \bar{b}_2 C^{(003)} \bar{C}_2 \bar{H}_1, \text{ } \bar{b}_2 C^{(003)} C^{(202)} \bar{G}_2,   \text{ } C^{(003)} \bar{C}_2 \bar{D}_3 \bar{F}_1, \text{ } \bar{b}_2 C^{(003)} \bar{D}_2 \bar{G}_1, \text{ } \bar{b}_2 C^{(003)} \bar{E}_1 \bar{F}_1,\\
     &\, \text{ }  \bar{b}_2 C^{(202)} \bar{D}_4 \bar{F}_3, \text{ } \bar{b}_3 C^{(202)} \bar{D}_3 \bar{F}_3, \text{ } C^{(003)} \bar{C}_2^2 \bar{G}_2,\text{ } \bar{b}_1 \bar{b}_2 \bar{b}_3 C^{(003)} \bar{G}_2, \text{ } \bar{b}_1 C^{(003)} \bar{D}_3 \bar{G}_2, \\
     &\,  \left.\bar{b}_1 \bar{b}_2 \left(C^{(003)}\right)^2 \bar{F}_3, \text{ }  \bar{b}_2 \bar{b}_3^2 C^{(202)} \bar{F}_3, \text{ } \bar{b}_2 \bar{b}_3 C^{(003)} \bar{C}_2 \bar{F}_1\right\} 
\end{align*} and
\begin{align*}
     (2,5,9) = &\, \left\{\bar{b}_2 \bar{F}_4 \bar{H}_1, \text{ } \bar{C}_2 \bar{D}_2 \bar{I}_1, \text{ } \bar{C}_2 \bar{F}_4 \bar{G}_2, \text{ } \bar{D}_2 \bar{D}_3 \bar{H}_2, \text{ } \bar{D}_2 \bar{F}_3 \bar{F}_4, \text{ } \bar{D}_3^2 \bar{H}_1, \text{ } \bar{D}_3 \bar{E}_1 \bar{G}_2, \text{ } \bar{D}_3 \bar{F}_1 \bar{F}_4, \text{ } \bar{b}_2^2 \bar{D}_4 \bar{H}_1, \text{ }   \bar{b}_2 \bar{b}_3 \bar{D}_2 \bar{H}_2, \text{ } \bar{b}_2^2 \bar{b}_3 \bar{D}_4 \bar{F}_1,  \right.\\
     &\, \bar{b}_2^2 \bar{b}_3^2 \bar{H}_1, \text{ } \bar{b}_3 \bar{D}_3^2 \bar{F}_1, \text{ } \bar{b}_2 \bar{b}_3 \bar{D}_3 \bar{H}_1, \text{ } \bar{b}_2 \bar{b}_3 \bar{E}_1 \bar{G}_2, \text{ } \bar{b}_2 \bar{C}_2 \bar{D}_4 \bar{G}_2, \text{ } \bar{b}_2 \bar{D}_2 \bar{D}_4 \bar{F}_3, \text{ } \bar{b}_2 \bar{b}_3^2 \bar{C}_2 \bar{G}_2, \text{ } \bar{b}_2 \bar{b}_3^2 \bar{D}_2 \bar{F}_3, \text{ } \bar{b}_2 \bar{D}_3 \bar{D}_4 \bar{F}_1, \text{ } \bar{b}_2 \bar{b}_3^2 \bar{D}_3 \bar{F}_1,  \\
     &\,\bar{b}_3 \bar{C}_2 \bar{D}_3 \bar{G}_2, \text{ } \bar{b}_2^2 \bar{b}_3^3 \bar{F}_1, \text{ } \bar{b}_3 \bar{D}_2 \bar{D}_3 \bar{F}_3, \text{ } \bar{b}_2 \bar{b}_3 \bar{F}_1 \bar{F}_4, \text{ } \bar{b}_2 C^{(003)} \bar{C}_2 \bar{H}_2, \text{ } \bar{b}_2 C^{(003)} \bar{D}_2 \bar{G}_2, \text{ } \bar{b}_2 C^{(003)} \bar{D}_3 \bar{G}_1, \text{ } \bar{b}_2 C^{(003)} \bar{E}_1 \bar{F}_3,\\
     &\, \left.   C^{(003)} \bar{C}_2 \bar{D}_3 \bar{F}_3, \text{ } \bar{b}_2^2 \bar{b}_3 C^{(003)} \bar{G}_1, \text{ }     \bar{b}_1 \bar{b}_2 C^{(003)} \bar{I}_1, \text{ } \bar{b}_2^2 \left(C^{(003)}\right)^2 \bar{F}_1,  \text{ } \bar{b}_2 \bar{b}_3 C^{(003)} \bar{C}_2 \bar{F}_3\right\} .
\end{align*}  
Following the usual transformation of generators,  these results in $191$ admissible  terms such that   
\begin{align*}
    \{\bar{H}_1,\bar{I}_1\} = \Gamma_{89}^1 \bar{b}_3\bar{E}_1\bar{I}_1 +\ldots + \Gamma_{89}^{190}\bar{b}_2^2\bar{c}_1^2 \bar{F}_1 +\Gamma_{89}^{191}\bar{b}_2\bar{b}_3\bar{c}_1\bar{C}_2\bar{F}_3 .
\end{align*} 
 Once the coefficients are determined, we find the explicit expression:
\begin{align}
\{\bar{H}_1, \bar{I}_1\}&=\frac{{\rm i}}{64}  \bigl(2 \bigl(24 \bar{F}_1 \bar{b}_3^3-24 \bar{H}_1 \bar{b}_3^2-\bigl(5 \bar{c}_1^2-6 \bar{C}_2 \bar{c}_1+\bar{C}_2^2\bigl) \bar{F}_1+8 \bar{C}_2 \bar{I}_2\bigl) \bar{b}_2^2+\bigl(-8 \bar{b}_1 \bar{F}_3 \bar{b}_3^3+8 \bigl(\bigl(\bar{D}_2+6 \bar{D}_3\nonumber \\
& \hskip 0.7cm-4 \bar{D}_4\bigl) \bar{F}_1+3 \bar{D}_2 \bar{F}_3+2 \bar{b}_1 \bar{H}_2\bigl) \bar{b}_3^2-4 \bigl(2 \bar{b}_1 \bar{D}_4 \bar{F}_3-4 \bar{F}_2 \bar{F}_3+4 \bar{c}_1 \bar{C}_2 \bigl(\bar{F}_1-2 \bar{F}_3\bigl)+20 \bar{F}_1 \bar{F}_4\nonumber \\
& \hskip 0.7cm+8 \bar{E}_1 \bar{G}_1+\bar{b}_1 \bar{c}_1 \bar{G}_2-\bar{b}_1 \bar{C}_2 \bar{G}_2-36 \bar{E}_1 \bar{G}_2+4 \bigl(\bar{D}_2+4 \bar{D}_3\bigl) \bar{H}_1\bigl) \bar{b}_3+32 \bar{D}_4^2 \bar{F}_1-32 \bar{D}_3 \bar{D}_4 \bar{F}_1\nonumber \\
& \hskip 0.7cm-4 \bar{c}_1 \bar{E}_1 \bar{F}_1+36 \bar{C}_2 \bar{E}_1 \bar{F}_1+5 \bar{b}_1 \bar{c}_1^2 \bar{F}_3+\bar{b}_1 \bar{C}_2^2 \bar{F}_3-6 \bar{b}_1 \bar{c}_1 \bar{C}_2 \bar{F}_3-72 \bar{c}_1 \bar{E}_1 \bar{F}_3+8 \bar{C}_2 \bar{E}_1 \bar{F}_3\nonumber \\
& \hskip 0.7cm+24 \bar{c}_1 \bar{D}_3 \bar{G}_1+8 \bar{C}_2 \bar{D}_3 \bar{G}_1+16 \bar{c}_1 \bar{D}_4 \bar{G}_1-16 \bar{C}_2 \bar{D}_4 \bar{G}_1+4 \bar{D}_2 \bigl(2 \bar{D}_4 \bar{F}_1+\bigl(\bar{c}_1-3 \bar{C}_2\bigl) \bar{G}_1\bigl)\nonumber \\
& \hskip 0.7cm-16 \bar{C}_2 \bar{d}_1 \bar{G}_2+16 \bar{C}_2 \bar{D}_3 \bar{G}_2-32 \bar{F}_4 \bar{H}_1-8 \bar{b}_1 \bigl(3 \bar{c}_1+\bar{C}_2\bigl) \bar{I}_1\bigl) \bar{b}_2+4 \bigl(-4 \bar{D}_2 \bar{F}_3 \bar{b}_3^3+2 \bigl(5 \bar{F}_3 \bar{C}_2^2\nonumber \\
& \hskip 0.7cm-6 \bar{b}_1 \bar{D}_3 \bar{F}_3+4 \bar{E}_1 \bar{G}_2\bigl) \bar{b}_3^2+2 \bigl(\bar{F}_3 \bar{D}_2^2+\bigl(-2 \bar{D}_3 \bar{F}_1+5 \bar{D}_3 \bar{F}_3+2 \bar{D}_4 \bar{F}_3\bigl) \bar{D}_2+10 \bar{b}_1 \bar{F}_3 \bar{F}_4\nonumber \\
& \hskip 0.7cm+2 \bar{D}_3 \bigl(-2 \bar{D}_3 \bar{F}_1+4 \bar{D}_4 \bar{F}_1+\bar{b}_1 \bar{H}_2\bigl)\bigl) \bar{b}_3+3 \bar{c}_1^2 \bar{D}_3 \bar{F}_1-32 \bar{E}_1^2 \bar{F}_3-2 \bar{C}_2^2 \bar{D}_4 \bar{F}_3+4 \bar{b}_1 \bar{D}_3 \bar{D}_4 \bar{F}_3\nonumber \\
& \hskip 0.7cm+12 \bar{D}_3 \bar{F}_2 \bar{F}_3+12 \bar{D}_3 \bar{F}_1 \bar{F}_4-16 \bar{D}_4 \bar{F}_1 \bar{F}_4-12 \bar{D}_2 \bar{F}_3 \bar{F}_4+28 \bar{D}_3 \bar{E}_1 \bar{G}_1-3 \bar{b}_1 \bar{C}_2 \bar{D}_3 \bar{G}_2\nonumber \\
& \hskip 0.7cm+14 \bar{D}_2 \bar{E}_1 \bar{G}_2-12 \bar{D}_3 \bar{E}_1 \bar{G}_2-8 \bar{D}_4 \bar{E}_1 \bar{G}_2+\bar{c}_1 \bigl(-3 \bar{C}_2 \bar{D}_3 \bar{F}_1+2 \bar{C}_2 \bigl(\bar{D}_4-3 \bar{b}_3^2\bigl) \bar{F}_3+4 \bar{b}_3 \bar{E}_1 \bar{F}_3\nonumber \\
& \hskip 0.7cm-12 \bar{F}_4 \bar{G}_1+2 \bar{b}_3 \bar{D}_2 \bar{G}_2+3 \bar{b}_1 \bar{D}_3 \bar{G}_2\bigl)+4 \bar{D}_3^2 \bar{H}_1-8 \bar{D}_2 \bar{D}_3 \bar{H}_1-16 \bar{b}_1 \bar{F}_4 \bar{H}_2\bigl)\bigl) \, .
\end{align} 

Repeating the procedure, the remaining expansions in $\{\bar{\textbf{H}},\bar{\textbf{I}}\}$ are given by 
\begin{align} 
\{\bar{H}_2, \bar{I}_1\}&=\frac{{\rm i}}{32} \bigl(b_2 \bigl(2 \bar{b}_3 \bigl(\bar{F}_3 \bigl(-\bar{c}_1 \bar{C}_2+\bar{c}_1^2+4 \bar{F}_4\bigl)-4 \bigl(4 \bar{D}_3 \bar{H}_1+2 \bar{D}_4 \bar{H}_2+\bar{F}_1 \bar{F}_4-\bar{E}_1 \bar{G}_2\bigl)\bigl)+10 \bar{b}_1 \bar{C}_2 \bar{I}_1-16 \bar{b}_3^4 \bar{F}_3\nonumber \\
& \hskip 0.7cm+\bar{b}_3^2 \bigl(28 \bar{D}_3 \bar{F}_1-6 \bar{D}_2 \bar{F}_3+8 \bigl(\bar{D}_3+4 \bar{D}_4\bigl) \bar{F}_3\bigl)+16 \bar{b}_3^3 \bar{H}_2+2 \bar{D}_3 \bigl(\bigl(6 \bar{G}_1+\bar{G}_2\bigl) \bigl(\bar{c}_1-\bar{C}_2\bigl)+6 \bar{D}_4 \bar{F}_1\nonumber \\
& \hskip 0.7cm-4 \bar{D}_4 \bar{F}_3\bigl)-10 \bar{c}_1 \bar{C}_2 \bar{H}_2+3 \bar{c}_1 \bar{D}_2 \bar{G}_2-4 \bar{E}_1 \bar{c}_1 \bar{F}_3-8 \bar{C}_2 \bar{D}_2 \bar{G}_2+14 \bar{E}_1 \bar{C}_2 \bar{F}_3-16 \bar{D}_4^2 \bar{F}_3+2 \bar{D}_2 \bar{D}_4 \bar{F}_3\nonumber \\
& \hskip 0.7cm-16 \bar{F}_4 \bar{H}_2+16 \bar{E}_1 \bar{I}_1\bigl)-2 \bar{b}_2^2 \bigl(-\bar{b}_3 \bigl(\bar{G}_1-\bar{G}_2\bigl) \bigl(\bar{c}_1-\bar{C}_2\bigl)-8 \bar{b}_3 \bar{D}_4 \bar{F}_1+6 \bar{H}_1 \bigl(\bar{D}_4-\bar{b}_3^2\bigl)+8 \bar{b}_3^3 \bar{F}_1\bigl)\nonumber \\
& \hskip 0.7cm-48 \bar{b}_3^2 \bigl(\bar{D}_3 \bar{H}_2+\bar{F}_3 \bar{F}_4\bigl)+8 \bar{b}_3 \bigl(\bar{D}_3 \bigl(5 \bar{D}_3 \bar{F}_1-2 \bigl(\bar{D}_3+3 \bar{D}_4\bigl) \bar{F}_3\bigl)+4 \bar{F}_4 \bar{H}_2\bigl)+48 \bar{b}_3^3 \bar{D}_3 \bar{F}_3\nonumber \\
& \hskip 0.7cm+6 \bar{c}_1 \bar{C}_2 \bar{D}_3 \bar{F}_3-6 \bar{c}_1^2 \bar{D}_3 \bar{F}_3+24 \bar{c}_1 \bar{F}_4 \bar{G}_2+32 \bar{C}_2 \bar{D}_4 \bar{I}_1-72 \bar{D}_3 \bar{F}_1 \bar{F}_4+24 \bar{D}_2 \bar{F}_3 \bar{F}_4+24 \bar{D}_3 \bar{F}_3 \bar{F}_4\nonumber \\
& \hskip 0.7cm+80 \bar{D}_4 \bar{F}_3 \bar{F}_4+24 \bar{E}_1 \bar{D}_3 \bar{G}_2-24 \bar{D}_3^2 \bar{H}_1-16 \bar{D}_3 \bar{D}_4 \bar{H}_2\bigl) \,   \\
\{\bar{H}_1, \bar{I}_2\}&=\frac{{\rm i}}{64}  \bigl(\bigl(\bigl(\bar{F}_1+14 \bar{F}_3\bigl) \bar{c}_1^2-2 \bigl(\bar{C}_2 \bigl(\bar{F}_1+14 \bar{F}_3\bigl)+2 \bar{b}_3 \bigl(2 \bar{G}_1+\bar{G}_2\bigl)\bigl) \bar{c}_1+\bar{C}_2^2 \bigl(\bar{F}_1+14 \bar{F}_3\bigl)+4 \bar{b}_3 \bar{C}_2 \bigl(2 \bar{G}_1+\bar{G}_2\bigl)\nonumber \\
& \hskip 0.7cm-8 \bigl(\bar{b}_3^2-\bar{D}_4\bigl) \bigl(3 \bar{b}_3 \bar{F}_3-4 \bar{H}_2\bigl)\bigl) \bar{b}_1^2+4 \bigl(-8 \bar{F}_1 \bar{b}_3^4+8 \bar{H}_1 \bar{b}_3^3+2 \bigl(\bigl(\bar{d}_1+\bar{D}_2-\bar{D}_3+10 \bar{D}_4\bigl) \bar{F}_1-\bigl(22 \bar{d}_1\nonumber \\
& \hskip 0.7cm+19 \bar{D}_2-22 \bar{D}_3+34 \bar{D}_4\bigl) \bar{F}_3\bigl) \bar{b}_3^2+\bigl(\bar{F}_1 \bar{c}_1^2-\bar{C}_2 \bar{F}_1 \bar{c}_1+4 \bigl(\bar{F}_2 \bigl(\bar{F}_1+12 \bar{F}_3\bigl)-11 \bar{E}_1 \bar{G}_1+\bigl(\bar{d}_1-\bar{D}_3\bigl) \bar{H}_1\nonumber \\
& \hskip 0.7cm+7 \bigl(\bar{d}_1+\bar{D}_2-\bar{D}_3+2 \bar{D}_4\bigl) \bar{H}_2-\bar{C}_2 \bar{I}_2\bigl)\bigl) \bar{b}_3-12 \bar{D}_4^2 \bar{F}_1+2 \bar{D}_3 \bar{D}_4 \bar{F}_1+20 \bar{c}_1 \bar{E}_1 \bar{F}_1-20 \bar{C}_2 \bar{E}_1 \bar{F}_1+4 \bar{D}_4^2 \bar{F}_3\nonumber \\
& \hskip 0.7cm+6 \bar{D}_2 \bar{D}_4 \bar{F}_3-12 \bar{D}_3 \bar{D}_4 \bar{F}_3-2 \bar{d}_1 \bar{D}_4 \bigl(\bar{F}_1-6 \bar{F}_3\bigl)-2 \bar{c}_1 \bar{D}_3 \bar{G}_1+2 \bar{C}_2 \bar{D}_3 \bar{G}_1-7 \bar{c}_1 \bar{D}_2 \bar{G}_2+7 \bar{C}_2 \bar{D}_2 \bar{G}_2\nonumber \\
& \hskip 0.7cm+4 \bar{c}_1 \bar{D}_3 \bar{G}_2-4 \bar{C}_2 \bar{D}_3 \bar{G}_2-18 \bar{c}_1 \bar{D}_4 \bar{G}_2+18 \bar{C}_2 \bar{D}_4 \bar{G}_2+2 \bigl(\bar{c}_1-\bar{C}_2\bigl) \bar{d}_1 \bigl(\bar{G}_1-2 \bar{G}_2\bigl)+4 \bar{F}_2 \bigl(\bar{H}_2-3 \bar{H}_1\bigl)\nonumber \\
& \hskip 0.7cm+8 \bar{b}_2 \bar{c}_1 \bar{I}_2-8 \bar{b}_2 \bar{C}_2 \bar{I}_2+12 \bar{E}_1 \bar{I}_2\bigl) \bar{b}_1+4 \bigl(24 \bigl(\bar{d}_1+\bar{D}_2-\bar{D}_3+2 \bar{D}_4\bigl) \bar{F}_1 \bar{b}_3^3-24 \bigl(\bar{F}_1 \bar{F}_2+\bigl(\bar{d}_1+\bar{D}_2-\bar{D}_3\nonumber \\
& \hskip 0.7cm+2 \bar{D}_4\bigl) \bar{H}_1\bigl) \bar{b}_3^2+2 \bigl(10 \bar{F}_3 \bigl(\bar{d}_1+\bar{D}_2-\bar{D}_3+2 \bar{D}_4\bigl)^2-\bigl(4 \bar{d}_1^2+11 \bar{D}_2 \bar{d}_1-8 \bar{D}_3 \bar{d}_1+4 \bar{D}_2^2+4 \bar{D}_3^2+40 \bar{D}_4^2\nonumber \\
& \hskip 0.7cm-11 \bar{D}_2 \bar{D}_3+4 \bigl(7 \bar{d}_1+6 \bar{D}_2-7 \bar{D}_3\bigl) \bar{D}_4\bigl) \bar{F}_1+8 \bar{F}_2 \bar{H}_1\bigl) \bar{b}_3-3 \bar{c}_1^2 \bigl(\bar{d}_1+\bar{D}_2-\bar{D}_3+2 \bar{D}_4\bigl) \bar{F}_1+3 \bar{c}_1\nonumber  \\
& \hskip 0.7cm \times \bigl(\bar{C}_2 \bigl(\bar{d}_1+\bar{D}_2-\bar{D}_3+2 \bar{D}_4\bigl)\bar{F}_1+4 \bar{F}_2 \bar{G}_1\bigl)+2 \bigl(8 \bar{H}_1 \bar{D}_4^2+24 \bar{F}_1 \bar{F}_2 \bar{D}_4-36 \bar{F}_2 \bar{F}_3 \bar{D}_4+20 \bar{E}_1 \bar{G}_1 \bar{D}_4\nonumber \\
& \hskip 0.7cm+4 \bar{d}_1 \bar{H}_1 \bar{D}_4-4 \bar{D}_3 \bar{H}_1 \bar{D}_4+6 \bar{d}_1 \bar{F}_1 \bar{F}_2+12 \bar{D}_2 \bar{F}_1 \bar{F}_2-6 \bar{D}_3 \bar{F}_1 \bar{F}_2-18 \bar{d}_1 \bar{F}_2 \bar{F}_3-18 \bar{D}_2 \bar{F}_2 \bar{F}_3+18 \bar{D}_3 \bar{F}_2 \bar{F}_3\nonumber \\
& \hskip 0.7cm+6 \bar{D}_2 \bar{E}_1 \bar{G}_1+3 \bar{d}_1 \bar{D}_2 \bar{H}_1-3 \bar{D}_2 \bar{D}_3 \bar{H}_1-2 \bigl(\bar{d}_1+\bar{D}_2-\bar{D}_3+2 \bar{D}_4\bigl) \bigl(3 \bar{D}_2+10 \bar{D}_4\bigl) \bar{H}_2\bigl)\bigl)\bigl)  
\end{align} and 
\begin{align}
\{\bar{H}_2, \bar{I}_2\}&=-\frac{\rm i}{256}  \bigl(\bigl(8 \bar{F}_3\bar{b}_3^3-16 \bar{H}_2 \bar{b}_3^2+8 \bar{D}_4 \bar{F}_3 \bar{b}_3+4 \bigl(\bar{c}_1-\bar{C}_2\bigl) \bar{G}_2 \bar{b}_3+35 \bigl(\bar{c}_1-\bar{C}_2\bigl) \bar{C}_2 \bar{F}_3-140 \bigl(\bar{c}_1+\bar{C}_2\bigl) \bar{I}_1\bigl) \bar{b}_1^2\nonumber \\
& \hskip 0.7cm+\bigl(\bar{2} \bigl(\bigl(24 \bar{d}_1 \bar{F}_3+70 \bar{D}_2 \bar{F}_3-8 \bar{D}_3 \bigl(20 \bar{F}_1+3 \bar{F}_3\bigl)+48 \bar{c}_1 \bar{G}_1\bigl) \bar{b}_3^2+2 \bigl(14 \bar{F}_1 \bar{c}_1^2-49 \bar{C}_2 \bar{F}_1 \bar{c}_1+70 \bar{c}_1\bar{C}_2 \bar{F}_3-4 \bar{F}_2 \bar{F}_3\nonumber \\
& \hskip 0.7cm +64 \bar{F}_1 \bar{F}_4-32 \bar{E}_1 \bar{G}_1+164 \bar{E}_1 \bar{G}_2+2 \bar{D}_2 \bar{H}_1+44 \bar{D}_3 \bar{H}_1\bigl) \bar{b}_3+16 \bar{c}_1 \bar{E}_1 \bar{F}_1+104 \bar{C}_2 \bar{E}_1 \bar{F}_1-24 \bar{d}_1 \bar{D}_4 \bar{F}_3\nonumber  \\
& \hskip 0.7cm -176 \bar{c}_1 \bar{E}_1 \bar{F}_3-56 \bar{C}_2 \bar{E}_1 \bar{F}_3-2 \bar{c}_1 \bar{D}_2 \bar{G}_1-23 \bar{C}_2 \bar{D}_2 \bar{G}_1-12 \bar{c}_1 \bar{D}_4 \bar{G}_1+4 \bar{C}_2 \bar{D}_4 \bar{G}_1-12 \bar{c}_1 \bar{d}_1 \bar{G}_2-58 \bar{C}_2 \bar{d}_1 \bar{G}_2\nonumber \\
& \hskip 0.7cm+2 \bar{D}_3 \bigl(-70 \bar{D}_4 \bar{F}_1+12 \bar{D}_4 \bar{F}_3+11 \bar{c}_1 \bar{G}_1+35 \bar{C}_2 \bar{G}_1+6 \bar{c}_1 \bar{G}_2+29 \bar{C}_2 \bar{G}_2\bigl)-4 \bigl(6 \bar{c}_1^2-11 \bar{C}_2 \bar{c}_1+88 \bar{F}_4\bigl) \bar{H}_1\bigl) \nonumber \\
& \hskip 0.7cm+\bar{b}_2 \bigl(456 \bar{F}_1 \bar{b}_3^3-416 \bar{H}_1 \bar{b}_3^2-16 \bigl(7 \bar{D}_4 \bar{F}_1+\bigl(\bar{c}_1-\bar{C}_2\bigl) \bar{G}_1\bigl) \bar{b}_3-55 \bar{c}_1^2 \bar{F}_1-9 \bar{C}_2^2 \bar{F}_1+64 \bar{c}_1 \bar{C}_2 \bar{F}_1+72 \bar{D}_4 \bar{H}_1\nonumber \\
& \hskip 0.7cm+100 \bar{C}_2 \bar{I}_2\bigl)\bigl) \bar{b}_1-4 \bigl(16 \bigl(7 \bar{d}_1+6 \bar{D}_2-7 \bar{D}_3+14 \bar{D}_4\bigl) \bar{F}_1 \bar{b}_3^3+2 \bigl(16 \bar{E}_1\bar{G}_1-\bar{F}_1 \bigl(12 \bar{c}_1 \bar{C}_2+29 \bar{b}_2 \bigl(\bar{d}_1+\bar{D}_2-\bar{D}_3\nonumber \\
& \hskip 0.7cm+2 \bar{D}_4\bigl)+112 \bar{F}_2\bigl)\bigl) \bar{b}_3^2+2 \bigl(12 \bar{F}_3 \bar{d}_1^2+\bigl(10 \bar{D}_2 \bar{F}_1+34 \bar{D}_3 \bar{F}_1-56 \bar{D}_4 \bar{F}_1-29 \bar{D}_2 \bar{F}_3-24 \bar{D}_3 \bar{F}_3+24 \bar{D}_4 \bar{F}_3-4 \bar{c}_1\bar{G}_1\nonumber \\
& \hskip 0.7cm+8 \bar{b}_2 \bar{H}_1\bigl) \bar{d}_1+\bar{D}_2^2 \bigl(29 \bar{F}_1-41 \bar{F}_3\bigl)+\bar{D}_2 \bigl(24 \bar{D}_3 \bar{F}_1+29 \bar{D}_3 \bar{F}_3-82 \bar{D}_4 \bar{F}_3+8 \bar{b}_2 \bar{H}_1\bigl)-2 \bigl(\bigl(17 \bar{F}_1-6 \bar{F}_3\bigl) \bar{D}_3^2\nonumber \\
& \hskip 0.7cm-2 \bigl(31 \bar{D}_4 \bar{F}_1-6 \bar{D}_4 \bar{F}_3+\bar{c}_1 \bar{G}_1-2 \bar{b}_2 \bar{H}_1\bigl) \bar{D}_3+56 \bar{D}_4^2 \bar{F}_1-4 \bar{c}_1 \bar{E}_1 \bar{F}_1-17 \bar{b}_2 \bar{F}_1 \bar{F}_2+4 \bar{c}_1 \bar{D}_4 \bar{G}_1-8 \bar{b}_2 \bar{D}_4 \bar{H}_1\nonumber\\
& \hskip 0.7cm -56 \bar{F}_2 \bar{H}_1+56 \bar{E}_1 \bar{I}_2\bigl)\bigl) \bar{b}_3+60 \bar{b}_2 \bar{D}_4^2 \bar{F}_1-40 \bar{E}_1^2 \bar{F}_1+30 \bar{b}_2 \bar{d}_1 \bar{D}_4 \bar{F}_1+30 \bar{b}_2 \bar{D}_2 \bar{D}_4 \bar{F}_1-30 \bar{b}_2 \bar{D}_3 \bar{D}_4 \bar{F}_1-12 \bar{D}_2 \bar{F}_1 \bar{F}_2\nonumber \\
& \hskip 0.7cm-72 \bar{D}_3 \bar{F}_1 \bar{F}_2-72 \bar{D}_4 \bar{F}_1 \bar{F}_2-12 \bar{c}_1^2 \bigl(\bar{d}_1+\bar{D}_2-\bar{D}_3+2 \bar{D}_4\bigl) \bigl(\bar{F}_1-\bar{F}_3\bigl)-24 \bar{d}_1 \bar{F}_2 \bar{F}_3+48 \bar{D}_2 \bar{F}_2 \bar{F}_3+24 \bar{D}_3 \bar{F}_2\bar{F}_3\nonumber \\
& \hskip 0.7cm-24 \bar{d}_1 \bar{F}_1 \bar{F}_4-24 \bar{D}_2 \bar{F}_1 \bar{F}_4+24 \bar{D}_3 \bar{F}_1 \bar{F}_4-48 \bar{D}_4 \bar{F}_1 \bar{F}_4-19 \bar{b}_2 \bar{C}_2 \bar{d}_1 \bar{G}_1-19 \bar{b}_2 \bar{C}_2 \bar{D}_2 \bar{G}_1+19 \bar{b}_2 \bar{C}_2 \bar{D}_3 \bar{G}_1\nonumber \\
& \hskip 0.7cm-38 \bar{b}_2 \bar{C}_2 \bar{D}_4 \bar{G}_1-32 \bar{d}_1 \bar{E}_1 \bar{G}_1+6 \bar{D}_2 \bar{E}_1 \bar{G}_1+32 \bar{D}_3 \bar{E}_1 \bar{G}_1-96 \bar{D}_4 \bar{E}_1 \bar{G}_1+72 \bar{C}_2 \bar{F}_2 \bar{G}_1+68 \bar{d}_1 \bar{E}_1 \bar{G}_2+68 \bar{D}_2 \bar{E}_1 \bar{G}_2\nonumber \\
& \hskip 0.7cm-68 \bar{D}_3 \bar{E}_1 \bar{G}_2+136 \bar{D}_4 \bar{E}_1 \bar{G}_2-37 \bar{D}_2^2 \bar{H}_1-36 \bar{D}_3^2 \bar{H}_1+144 \bar{D}_4^2 \bar{H}_1-8 \bar{d}_1 \bar{D}_2 \bar{H}_1+36 \bar{d}_1 \bar{D}_3 \bar{H}_1+44 \bar{D}_2 \bar{D}_3 \bar{H}_1\nonumber \\
& \hskip 0.7cm+72 \bar{d}_1 \bar{D}_4 \bar{H}_1-56 \bar{b}_2 \bar{F}_2 \bar{H}_1-72 \bar{C}_2 \bar{D}_4 \bar{I}_2+\bar{c}_1 \bigl(-24 \bar{G}_1 \bar{C}_2^2+12 \bigl(\bigl(\bar{d}_1-\bar{D}_3+5 \bar{D}_4\bigl) \bar{F}_1+\bigl(\bar{d}_1+\bar{D}_2-\bar{D}_3+2 \bar{D}_4\bigl) \bar{F}_3\bigl) \bar{C}_2\nonumber \\
& \hskip 0.7cm+19 \bar{b}_2 \bigl(\bar{d}_1+\bar{D}_2-\bar{D}_3+2 \bar{D}_4\bigl) \bar{G}_1-48 \bar{F}_2 \bigl(\bar{G}_1+\bar{G}_2\bigl)+24 \bar{D}_2 \bar{I}_2\bigl)\bigl)\bigl) \, \nonumber\\.
\end{align}

Finally, for degree $17$,  there is only one compact form $\{\bar{\textbf{I}},\bar{\textbf{I}}\}$, which contains the Poisson bracket $\{\bar{I}_1,\bar{I}_2\}$ with the grading 
\begin{align*}
    \mathcal{G} \left(\{\bar{I}_1,\bar{I}_2\}\right) = (2,3,12) \tilde{+} (3,2,12) \tilde{+} (4,3,10) \tilde{+} (3,4,10).
\end{align*} 
The generators are explicitly given by 
\begin{align*}
    (2,3,12) = & \, \left\{\bar{b}_3 \bar{F}_2 \bar{I}_1, \text{ }  \bar{D}_4 \bar{E}_1 \bar{H}_2, \text{ } \bar{D}_4 \bar{F}_4 \bar{G}_1, \text{ } \bar{b}_1 \bar{b}_3 \bar{D}_4 \bar{I}_1, \text{ } \bar{b}_2 \bar{D}_4^2 \bar{G}_1, \text{ } \bar{b}_3^2 \bar{E}_1 \bar{H}_2, \text{ } \bar{b}_3^2 \bar{F}_4 \bar{G}_1, \text{ } \bar{b}_3^2 \bar{C}_2 \bar{D}_4 \bar{F}_3, \text{ }  \bar{b}_3^3 \bar{C}_2 \bar{H}_2, \text{ } \bar{b}_3^3 \bar{d}_2 \bar{G}_2, \text{ } \bar{b}_3^3 \bar{d}_3 \bar{G}_1,   \right. \\
    & \, \bar{b}_3^3 \bar{E}_1 \bar{F}_3 , \text{ } \bar{b}_3^4 \bar{C}_2 \bar{F}_3, \text{ }   \bar{b}_2 \bar{b}_3^4 \bar{G}_1, \text{ } \bar{b}_1 \bar{b}_3^3 \bar{I}_1, \text{ } \bar{C}_2 \bar{D}_4^2 \bar{F}_3, \text{ } \bar{b}_3 \bar{C}_2 \bar{D}_4 \bar{H}_2, \text{ } \bar{b}_3 \bar{d}_2 \bar{D}_4 \bar{G}_2, \text{ } \bar{b}_3 \bar{d}_3 \bar{D}_4 \bar{G}_1, \text{ } \bar{b}_3 \bar{D}_4 \bar{E}_1 \bar{F}_3,  \text{ } \left(C^{(003)}\right)^2 \bar{d}_2 \bar{G}_2, \\
    & \, \text{ }  \left(C^{(003)}\right)^2 \bar{E}_1 \bar{F}_3, \text{ } C^{(003)} \bar{C}_2 \bar{D}_4 \bar{G}_2, \text{ } \bar{b}_3 C^{(003)} \bar{d}_2 \bar{H}_2, \text{ } \bar{b}_3 C^{(003)} \bar{d}_3 \bar{H}_1, \text{ } \bar{b}_3 C^{(003)} \bar{E}_1 \bar{G}_2, \text{ } \bar{b}_3 C^{(003)} \bar{F}_1 \bar{F}_4,  \text{ } \left(C^{(003)}\right)^2 \bar{d}_3 \bar{G}_1,   \\ 
   & \, \text{ }  C^{(003)} \bar{F}_4 \bar{H}_1, \text{ } C^{(202)} \bar{D}_4 \bar{I}_1, \text{ }  C^{(003)} \bar{d}_2 \bar{D}_4 \bar{F}_3, \text{ } C^{(003)} \bar{d}_3 \bar{D}_4 \bar{F}_1, \text{ } \bar{b}_2 \bar{b}_3^2 C^{(003)} \bar{H}_1, \text{ } \bar{b}_2 \bar{b}_3^2 \bar{D}_4 \bar{G}_1, \text{ } \bar{b}_2 \bar{b}_3 \left(C^{(003)}\right)^2 \bar{G}_1, \text{ } \\
   & \, \text{ } \bar{b}_2 \left(C^{(003)}\right)^3 \bar{F}_1, \text{ }  \bar{b}_3^2 C^{(202)} \bar{I}_1, \text{ } \bar{b}_3^2 C^{(003)} \bar{C}_2 \bar{G}_2, \text{ } \bar{b}_1 \left(C^{(003)}\right)^2 \bar{I}_1, \text{ } \bar{b}_2 C^{(003)} \bar{D}_4 \bar{H}_1, \text{ }  \bar{b}_2 \bar{b}_3 C^{(003)} \bar{D}_4 \bar{F}_1, \text{ }  \\ 
   & \, \text{ }  \left.\left(C^{(003)}\right)^2 \bar{C}_2 \bar{H}_2,\text{ } \bar{b}_3^2 C^{(003)} \bar{d}_2 \bar{F}_3, \text{ } \bar{b}_3^2 C^{(003)} \bar{d}_3 \bar{F}_1, \text{ } \bar{b}_3 \left(C^{(003)}\right)^2 \bar{C}_2 \bar{F}_3, \text{ } \bar{b}_2 \bar{b}_3^3 C^{(003)} \bar{F}_1  \right\} ; 
   \end{align*}
   \begin{align*} 
    (3,   2,   12)   = & \, \text{ }  \left\{\bar{D}_4 \bar{E}_1 \bar{H}_1, \text{ } \bar{D}_4 \bar{F}_2 \bar{G}_2, \text{ } \bar{b}_1 \bar{D}_4^2 \bar{G}_2, \text{ } \bar{b}_2 \bar{b}_3 \bar{D}_4 \bar{I}_2, \text{ } \bar{b}_3 \bar{C}_2 \bar{D}_4 \bar{H}_1, \text{ } \bar{b}_3 \bar{d}_2 \bar{D}_4 \bar{G}_1, \text{ } \bar{b}_3 \bar{D}_4 \bar{E}_1 \bar{F}_1, \text{ } \bar{b}_3^2 \bar{d}_3 \bar{I}_2, \text{ } \bar{b}_3^2 \bar{E}_1 \bar{H}_1, \text{ } \bar{b}_3^2 \bar{F}_2 \bar{G}_2, \text{ } \bar{b}_1 \bar{b}_3^4 \bar{G}_2\right. \\
    & \, \text{ } \bar{b}_3 \bar{F}_4 \bar{I}_2, \text{ } \bar{d}_3 \bar{D}_4 \bar{I}_2, \text{ } \bar{b}_3^3 \bar{d}_2 \bar{G}_1, \text{ } \bar{b}_3^3 \bar{E}_1 \bar{F}_1, \text{ }  \bar{b}_3^2 \bar{C}_2 \bar{D}_4 \bar{F}_1, \text{ } \bar{b}_3^4 \bar{C}_2 \bar{F}_1, \text{ } \bar{b}_2 \bar{b}_3^3 \bar{I}_2, \text{ } \bar{b}_1 \bar{b}_3^2 \bar{D}_4 \bar{G}_2, \text{ } \bar{b}_3^3 \bar{C}_2 \bar{H}_1, \text{ } \bar{b}_3 C^{(003)} \bar{F}_2 \bar{F}_3, \text{ } \bar{b}_3 C^{(202)} \bar{D}_4 \bar{G}_2, \text{ }   \\ 
    & \, \text{ }   \bar{b}_3 C^{(003)} \bar{E}_1, \text{ } \bar{b}_1 C^{(003)} \bar{D}_4 \bar{H}_2, \text{ } \bar{b}_3 C^{(003)} C^{(202)} \bar{H}_2, \text{ } \bar{b}_3 C^{(003)} \bar{d}_2 \bar{H}_1, \text{ } \left(C^{(003)}\right)^2 \bar{C}_2 \bar{H}_1, \text{ }  \left(C^{(003)}\right)^2 C^{(202)} \bar{G}_2, \text{ } \\
    & \, \text{ } \bar{b}_1 \bar{b}_3 C^{(003)} \bar{D}_4 \bar{F}_3, \text{ } \bar{b}_1 \left(C^{(003)}\right)^3 \bar{F}_3, \text{ } \left(C^{(003)}\right)^2 \bar{d}_2 \bar{G}_1, \text{ } \left(C^{(003)}\right)^2 \bar{E}_1 \bar{F}_1, \text{ } C^{(003)} \bar{C}_2 \bar{D}_4 \bar{G}_1, \text{ } C^{(003)} C^{(202)} \bar{D}_4 \bar{F}_3, \text{ } \\ 
    & \, \text{ } C^{(003)} \bar{d}_2 \bar{D}_4 \bar{F}_1, \text{ }   \bar{C}_2 \bar{D}_4^2 \bar{F}_1, \text{ } \bar{b}_1 \bar{b}_3^2 C^{(003)} \bar{H}_2, \text{ }  C^{(003)} \bar{F}_2 \bar{H}_2, \text{ } \bar{b}_3^3 C^{(202)} \bar{G}_2, \text{ } \bar{b}_3^2 C^{(003)} \bar{C}_2 \bar{G}_1, \text{ } \bar{b}_3^2 C^{(003)} C^{(202)} \bar{F}_3, \text{ } \\
    & \, \text{ }  \left. \bar{b}_1 \bar{b}_3 \left(C^{(003)}\right)^2 \bar{G}_2, \text{ } \bar{b}_3^2 C^{(003)} \bar{d}_2 \bar{F}_1, \text{ } \bar{b}_3 \left(C^{(003)}\right)^2 \bar{C}_2 \bar{F}_1, \text{ } \bar{b}_1 \bar{b}_3^3 C^{(003)} \bar{F}_3, \text{ } \bar{b}_2 \left(C^{(003)}\right)^2 \bar{I}_2 \right\} ; 
    \end{align*}
    \begin{align*}
    (4,  3,  10)  = & \, \text{ }  \left\{\bar{b}_1 \bar{F}_2 \bar{I}_1, \text{ } \bar{C}_2 \bar{F}_2 \bar{H}_2, \text{ } \bar{d}_2 \bar{d}_3 \bar{I}_2, \text{ } \bar{d}_2 \bar{E}_1 \bar{H}_1, \text{ } \bar{d}_2 \bar{F}_2 \bar{G}_2, \text{ } \bar{d}_3 \bar{F}_2 \bar{G}_1, \text{ } \bar{E}_1^2 \bar{G}_1, \text{ } \bar{E}_1 \bar{F}_2 \bar{F}_3, \text{ } \bar{b}_1^2 \bar{D}_4 \bar{I}_1, \text{ } \bar{b}_1 \bar{b}_3 \bar{E}_1 \bar{H}_2, \text{ } \bar{b}_1 \bar{b}_3 \bar{F}_4 \bar{G}_1, \text{ } \right.\\
    & \, \text{ } \bar{b}_1 \bar{b}_2 \bar{b}_3 \bar{D}_4 \bar{G}_1, \text{ } \bar{b}_3 \bar{C}_2 \bar{E}_1 \bar{G}_1, \text{ }  \bar{b}_1 \bar{b}_2 \bar{b}_3^3 \bar{G}_1, \text{ } \bar{b}_1 \bar{b}_3 \bar{C}_2 \bar{D}_4 \bar{F}_3, \text{ }  \bar{b}_3 \bar{C}_2 \bar{d}_2 \bar{H}_1, \text{ } \bar{b}_1 \bar{b}_3^2 \bar{C}_2 \bar{H}_2, \text{ } \bar{b}_1 \bar{b}_3^2 \bar{d}_2 \bar{G}_2, \text{ }  \text{ } \bar{b}_1 \bar{b}_3^3 \bar{C}_2 \bar{F}_3,  \\ 
    & \,  \bar{b}_3 \bar{d}_2^2 \bar{G}_1, \text{ } \bar{b}_3 \bar{d}_2 \bar{E}_1 \bar{F}_1, \text{ } \bar{C}_2 \bar{d}_2 \bar{D}_4 \bar{F}_1, \text{ } \bar{b}_1^2 \bar{b}_3^2 \bar{I}_1, \text{ } \bar{b}_1 \bar{b}_3^2 \bar{d}_3 \bar{G}_1, \text{ }  \bar{b}_1 \bar{b}_3^2 \bar{E}_1 \bar{F}_3, \text{ }  \bar{b}_1 \bar{d}_3 \bar{D}_4 \bar{G}_1, \text{ }   \bar{b}_3 \bar{C}_2 \bar{F}_2 \bar{F}_3, \text{ } \bar{C}_2^2 \bar{D}_4 \bar{G}_1,   \\ 
    & \, \text{ }  \bar{b}_1 \bar{C}_2 \bar{D}_4 \bar{H}_2, \text{ } \bar{b}_1 \bar{d}_2 \bar{D}_4 \bar{G}_2, \text{ } \bar{b}_1 \bar{D}_4 \bar{E}_1 \bar{F}_3, \text{ } \bar{b}_2 \bar{b}_3 \bar{d}_2 \bar{I}_2, \text{ } \bar{b}_2 \bar{b}_3 \bar{F}_2 \bar{G}_1, \text{ } \bar{b}_3^2 \bar{C}_2 \bar{d}_2 \bar{F}_1, \text{ } \bar{b}_3^2 \bar{C}_2^2 \bar{G}_1, \text{ } \bar{b}_1 C^{(003)} \bar{d}_2 \bar{H}_2, \text{ } \\ 
    & \, \text{ } \bar{b}_1 C^{(003)} \bar{d}_3 \bar{H}_1, \text{ } \bar{b}_1 C^{(003)} \bar{E}_1 \bar{G}_2, \text{ } \bar{b}_1 C^{(003)} \bar{F}_1 \bar{F}_4, \text{ } \bar{b}_2 C^{(003)} \bar{C}_2 \bar{I}_2, \text{ } \bar{b}_2 C^{(003)} C^{(202)} \bar{H}_1, \text{ } \bar{b}_2 C^{(003)} \bar{F}_1 \bar{F}_2,   \\
    & \, \text{ }  \left(C^{(202)}\right)^2 \bar{I}_1, \text{ } C^{(202)} \bar{E}_1 \bar{H}_2, \text{ } C^{(202)} \bar{F}_4 \bar{G}_1, \text{ } \bar{b}_3 \bar{C}_2 C^{(202)} \bar{H}_2, \text{ } C^{(003)} \bar{C}_2 \bar{d}_2 \bar{G}_1, \text{ } C^{(003)} \bar{C}_2 \bar{E}_1 \bar{F}_1, \text{ } \\ 
    &\, \text{ } C^{(003)} C^{(202)} \bar{d}_2 \bar{F}_3, \text{ } C^{(003)} C^{(202)} \bar{d}_3 \bar{F}_1, \text{ } C^{(003)} \bar{d}_2^2 \bar{F}_1, \text{ } C^{(003)} \bar{C}_2^2 \bar{H}_1, \bar{C}_2 C^{(202)} \bar{D}_4 \bar{F}_3, \text{ } \text{ }   \bar{b}_3 C^{(202)} \bar{d}_2 \bar{G}_2,\\ 
    & \,  \text{ } \bar{b}_3 C^{(202)} \bar{d}_3 \bar{G}_1, \text{ } \bar{b}_3 C^{(202)} \bar{E}_1 \bar{F}_3, \text{ }  \bar{b}_1 \bar{b}_2 \bar{b}_3 C^{(003)} \bar{H}_1, \text{ } \bar{b}_1 \bar{b}_2 \left(C^{(003)}\right)^2 \bar{G}_1, \text{ } \bar{b}_1 \bar{b}_2 C^{(003)} \bar{D}_4 \bar{F}_1, \text{ } \bar{b}_1 \bar{b}_3 C^{(202)} \bar{I}_1, \\
    & \, \text{ } \bar{b}_1 \bar{b}_3 C^{(003)} \bar{C}_2 \bar{G}_2, \text{ } \bar{b}_1 \bar{b}_3 C^{(003)} \bar{d}_2 \bar{F}_3, \text{ } \bar{b}_1 \bar{b}_3 C^{(003)} \bar{d}_3 \bar{F}_1, \text{ } \bar{b}_1 \left(C^{(003)}\right)^2 \bar{C}_2 \bar{F}_3, \text{ } \bar{b}_2 \bar{b}_3^2 C^{(202)} \bar{G}_1, \text{ }  \bar{b}_2 C^{(202)} \bar{D}_4 \bar{G}_1,   \\
    & \, \text{ }  \left.  C^{(003)} \bar{C}_2 C^{(202)} \bar{G}_2, \text{ }  \bar{b}_2 \bar{b}_3 C^{(003)} C^{(202)} \bar{F}_1, \text{ } \bar{b}_3^2 \bar{C}_2 C^{(202)} \bar{F}_3, \text{ } \bar{b}_3 C^{(003)} \bar{C}_2^2 \bar{F}_1, \text{ } \bar{b}_1 \bar{b}_2 \bar{b}_3^2 C^{(003)} \bar{F}_1\right\} \nonumber
    \end{align*} and 
    \begin{align*} 
    (3, 4, 10) = & \, \text{ }  \left\{\bar{b}_2 \bar{F}_4 \bar{I}_2, \text{ } \bar{C}_2 \bar{F}_4 \bar{H}_1, \text{ } \bar{d}_2 \bar{E}_1 \bar{H}_2, \text{ } \bar{d}_2 \bar{F}_4 \bar{G}_1, \text{ } \bar{d}_3^2 \bar{I}_2, \text{ } \bar{d}_3 \bar{E}_1 \bar{H}_1, \text{ } \bar{d}_3 \bar{F}_2 \bar{G}_2, \text{ } \bar{E}_1^2 \bar{G}_2, \text{ } \bar{E}_1 \bar{F}_1 \bar{F}_4, \text{ } \bar{b}_1 \bar{b}_3 \bar{d}_2 \bar{I}_1, \text{ } \bar{b}_1 \bar{b}_3 \bar{F}_4 \bar{G}_2, \text{ } \bar{b}_1 \bar{d}_3 \bar{D}_4 \bar{G}_2, \right.\\
    & \, \text{ } \bar{b}_2 \bar{b}_3^2 \bar{d}_2 \bar{G}_1, \text{ } \bar{b}_2 \bar{b}_3^2 \bar{E}_1 \bar{F}_1, \text{ } \bar{b}_3 \bar{d}_2 \bar{d}_3 \bar{G}_1, \text{ } \bar{b}_3 \bar{d}_2 \bar{E}_1 \bar{F}_3, \text{ } \bar{b}_3 \bar{d}_3 \bar{E}_1 \bar{F}_1, \text{ }   \bar{b}_2 \bar{D}_4 \bar{E}_1 \bar{F}_1, \text{ } \bar{b}_3 \bar{C}_2 \bar{d}_2 \bar{H}_2, \text{ } \bar{b}_3 \bar{C}_2 \bar{d}_3 \bar{H}_1, \text{ } \bar{b}_1 \bar{b}_3^2 \bar{d}_3 \bar{G}_2, \text{ } \bar{C}_2 \bar{d}_3 \bar{D}_4 \bar{F}_1, \text{ } \\ 
    & \, \text{ } \bar{b}_2 \bar{b}_3 \bar{C}_2 \bar{D}_4 \bar{F}_1, \text{ } \bar{b}_2^2 \bar{b}_3^2 \bar{I}_2, \text{ } \bar{b}_2 \bar{b}_3^2 \bar{C}_2 \bar{H}_1, \text{ } \bar{b}_3 \bar{C}_2 \bar{E}_1 \bar{G}_2, \text{ } \bar{b}_3 \bar{C}_2 \bar{F}_1 \bar{F}_4, \text{ }  \bar{b}_2 \bar{b}_3 \bar{d}_3 \bar{I}_2, \text{ } \bar{b}_2 \bar{b}_3 \bar{E}_1 \bar{H}_1, \text{ } \bar{b}_2 \bar{b}_3 \bar{F}_2 \bar{G}_2, \text{ }   \bar{b}_2^2 \bar{D}_4 \bar{I}_2, \text{ } \bar{b}_2 \bar{C}_2 \bar{D}_4 \bar{H}_1, \\ 
    & \,   \bar{b}_2 \bar{d}_2 \bar{D}_4 \bar{G}_1, \text{ } \bar{b}_3^2 \bar{C}_2^2 \bar{G}_2, \text{ } \bar{b}_1 \bar{b}_2 \bar{b}_3^3 \bar{G}_2, \text{ } \bar{b}_3^2 \bar{C}_2 \bar{d}_2 \bar{F}_3, \text{ } \bar{b}_3^2 \bar{C}_2 \bar{d}_3 \bar{F}_1, \text{ } \bar{b}_2 \bar{b}_3^3 \bar{C}_2 \bar{F}_1, \text{ } \bar{b}_3 \bar{d}_2^2 \bar{G}_2, \text{ } \bar{b}_1 \bar{b}_2 \bar{b}_3 \bar{D}_4 \bar{G}_2, \text{ }    \bar{C}_2^2 \bar{D}_4 \bar{G}_2,  \text{ } \bar{C}_2 \bar{d}_2 \bar{D}_4 \bar{F}_3,  \\
    & \,  \bar{b}_2 C^{(003)} C^{(202)} \bar{H}_2,\bar{b}_2 C^{(003)} \bar{d}_2 \bar{H}_1, \text{ } \bar{b}_2 C^{(003)} \bar{E}_1 \bar{G}_1, \text{ } \bar{b}_2 C^{(003)} \bar{F}_2 \bar{F}_3, \text{ } \bar{b}_2 C^{(202)} \bar{D}_4 \bar{G}_2,\text{ }  C^{(003)} \bar{C}_2^2 \bar{H}_2, \text{ }C^{(003)} \bar{C}_2 \bar{d}_2 \bar{G}_2, \\
    & \,    C^{(003)} \bar{C}_2 \bar{d}_3 \bar{G}_1, \text{ } C^{(003)} \bar{C}_2 \bar{E}_1 \bar{F}_3, \text{ } C^{(003)} C^{(202)} \bar{d}_3 \bar{F}_3, \text{ } \text{ }  C^{(003)} \bar{d}_2 \bar{d}_3 \bar{F}_1, \text{ }\bar{b}_3 C^{(202)} \bar{d}_3 \bar{G}_2, \text{ }    \bar{b}_1 C^{(003)} \bar{d}_3 \bar{H}_2, \text{ } \bar{b}_1 \bar{b}_2 \bar{b}_3 C^{(003)} \bar{H}_2, \\
    & \,     C^{(202)} \bar{d}_2 \bar{I}_1, \text{ } C^{(202)} \bar{F}_4 \bar{G}_2, \text{ } \bar{b}_1 C^{(003)} \bar{C}_2 \bar{I}_1, \text{ }   \bar{b}_1 \bar{b}_2 \left(C^{(003)}\right)^2 \bar{G}_2, \text{ } \bar{b}_1 \bar{b}_2 C^{(003)} \bar{D}_4 \bar{F}_3, \text{ } \bar{b}_1 \bar{b}_3 C^{(003)} \bar{d}_3 \bar{F}_3, \text{ } \bar{b}_2 \bar{b}_3^2 C^{(202)} \bar{G}_2,  \\
    & \,     \bar{b}_2 \bar{b}_3 C^{(003)} \bar{C}_2 \bar{G}_1, \text{ } \bar{b}_2 \bar{b}_3 C^{(003)} C^{(202)} \bar{F}_3, \text{ } \bar{b}_2 \bar{b}_3 C^{(003)} \bar{d}_2 \bar{F}_1, \text{ }   \bar{b}_1 C^{(003)} \bar{F}_3 \bar{F}_4, \text{ } C^{(003)} \bar{d}_2^2 \bar{F}_3,  \\
    & \,  \left.    \bar{b}_2 \left(C^{(003)}\right)^2 \bar{C}_2 \bar{F}_1, \text{ } \bar{b}_3 C^{(003)} \bar{C}_2^2 \bar{F}_3, \text{ } \bar{b}_1 \bar{b}_2 \bar{b}_3^2 C^{(003)} \bar{F}_3\right\}.
\end{align*} 
Subsequent to a modification of the generators, an extensive amount of analysis reveals that there exist $378$ permissible terms which constitute the components within the expansion of the expression $\{\bar{I}_1, \bar{I}_2\}$. That is, \begin{align*}
    \{\bar{I}_1, \bar{I}_2\} = \Gamma_{99}^1 \bar{b}_3 \bar{F}_2 \bar{I}_1 + \ldots +\Gamma_{99}^{377} \bar{b}_3 \bar{c}_1\bar{C}_2^2 \bar{F}_3 + \Gamma_{99}^{378} \bar{b}_1 \bar{b}_2 \bar{b}_3^2 \bar{c}_1 \bar{F}_3.
\end{align*} Following extensive calculations, the detailed and explicit form of the expansion for the brackets can be expressed as follows
\begin{align}
\{\bar{I}_1, \bar{I}_2\}&=-\frac{{\rm i}}{128} \bigl(20 \bar{b}_3 \bar{F}_1 \bar{C}_2^3-28 \bar{b}_3 \bar{c}_1 \bar{F}_1 \bar{C}_2^2+36 \bar{b}_3 \bar{c}_1 \bar{F}_3 \bar{C}_2^2-48 \bar{D}_4 \bar{G}_1 \bar{C}_2^2+8 \bar{D}_2^2 \bar{F}_1 \bar{C}_2+60 \bar{D}_3^2 \bar{F}_1 \bar{C}_2+32 \bar{b}_3^2 \bar{D}_2 \bar{F}_1 \bar{C}_2\nonumber \\
& \hskip 0.7cm-8 \bar{d}_1 \bar{D}_3 \bar{F}_1 \bar{C}_2-16 \bar{D}_2 \bar{D}_3 \bar{F}_1 \bar{C}_2+16 \bar{c}_1 \bar{E}_1 \bar{F}_1 \bar{C}_2-9 \bar{D}_2^2 \bar{F}_3 \bar{C}_2-52 \bar{D}_3^2 \bar{F}_3 \bar{C}_2+52 \bar{d}_1 \bar{D}_3 \bar{F}_3 \bar{C}_2+26 \bar{C}_2\bar{D}_2 \bar{D}_3 \bar{F}_3\nonumber  \\
& \hskip 0.7cm-72 \bar{c}_1 \bar{E}_1 \bar{F}_3 \bar{C}_2-16 \bar{c}_1 \bar{D}_3 \bar{G}_1 \bar{C}_2-6 \bar{c}_1 \bar{D}_2^2 \bar{F}_1-52 \bar{c}_1 \bar{D}_3^2 \bar{F}_1-88 \bar{b}_3 \bar{D}_2 \bar{E}_1 \bar{F}_1+320 \bar{b}_3 \bar{D}_3 \bar{E}_1 \bar{F}_1+26 \bar{c}_1 \bar{D}_2 \bar{D}_3 \bar{F}_3 \nonumber \\
& \hskip 0.7cm+72 \bar{b}_3 \bar{D}_2 \bar{E}_1 \bar{F}_3+208 \bar{b}_3 \bar{D}_3 \bar{E}_1 \bar{F}_3+32 \bar{E}_1 \bar{F}_2 \bar{F}_3-192 \bar{E}_1 \bar{F}_1 \bar{F}_4+28 \bar{b}_3 \bar{D}_2^2 \bar{G}_1-104 \bar{b}_3 \bar{D}_3^2 \bar{G}_1+96 \bar{E}_1^2 \bar{G}_1 \nonumber \\
& \hskip 0.7cm+40 \bar{b}_3 \bar{D}_2 \bar{D}_3 \bar{G}_1-80 \bar{D}_3 \bar{F}_2 \bar{G}_1-168 \bar{D}_2 \bar{F}_4 \bar{G}_1+208 \bar{D}_3 \bar{F}_4 \bar{G}_1+288 \bar{E}_1^2 \bar{G}_2-52 \bar{b}_3 \bar{D}_2 \bar{D}_3 \bar{G}_2+104 \bar{D}_2 \bar{F}_2 \bar{G}_2 \nonumber \\
& \hskip 0.7cm+160 \bar{D}_3 \bar{F}_2 \bar{G}_2+48 \bar{d}_1 \bar{F}_4 \bar{G}_2+48 \bar{D}_2 \bar{F}_4 \bar{G}_2-48 \bar{D}_3 \bar{F}_4 \bar{G}_2+96 \bar{D}_4 \bar{F}_4 \bar{G}_2-32 \bar{D}_2 \bar{E}_1 \bar{H}_1-640 \bar{D}_3 \bar{E}_1 \bar{H}_1 \nonumber \\
& \hskip 0.7cm-208 \bar{D}_3 \bar{E}_1 \bar{H}_2+\bar{b}_1 \bigl(\bar{b}_2 \bigl(2 \bigl(4 \bar{c}_1 \bar{F}_1-20 \bar{C}_2 \bar{F}_1-6 \bar{c}_1 \bar{F}_3+7 \bar{C}_2 \bar{F}_3\bigl) \bar{b}_3^2+28 \bar{C}_2 \bar{H}_1 \bar{b}_3+8  \bar{b}_3\bigl(\bar{c}_1+\bar{C}_2\bigl) \bar{H}_2 \nonumber \\
& \hskip 0.7cm-28 \bar{c}_1 \bar{D}_4 \bar{F}_1+32 \bar{C}_2 \bar{D}_4 \bar{F}_1+16 \bar{c}_1 \bar{D}_4 \bar{F}_3-34 \bar{C}_2 \bar{D}_4 \bar{F}_3-7 \bar{c}_1^2 \bar{G}_1-7 \bar{C}_2^2 \bar{G}_1+14 \bar{c}_1 \bar{C}_2 \bar{G}_1+9 \bigl(\bar{c}_1-\bar{C}_2\bigl)^2 \bar{G}_2\bigl)\nonumber \\
& \hskip 0.7cm+2 \bigl(\bigl(12 \bar{D}_3 \bar{G}_1-8 \bar{E}_1 \bar{F}_3\bigl) \bar{b}_3^2-\bar{c}_1 \bigl(4 \bar{D}_3 \bigl(\bar{F}_1+5 \bar{F}_3\bigl)+7 \bigl(\bar{D}_2 \bar{F}_3+2 \bar{C}_2 \bar{G}_2\bigl)\bigl) \bar{b}_3+2 \bigl(9 \bar{C}_2 \bar{D}_3 \bar{F}_1-3 \bar{C}_2 \bar{D}_3 \bar{F}_3\nonumber \\
& \hskip 0.7cm-20 \bar{F}_4 \bar{G}_1+8 \bar{E}_1 \bar{H}_2+26 \bar{D}_3 \bar{I}_1\bigl) \bar{b}_3+\bar{c}_1^2 \bar{C}_2 \bar{F}_3+\bar{D}_4 \bigl(28 \bar{D}_3 \bar{G}_1-8 \bar{E}_1 \bar{F}_3\bigl)-2 \bar{C}_2 \bigl(7 \bar{F}_1 \bar{F}_4-12 \bar{F}_3 \bar{F}_4+11 \bar{E}_1 \bar{G}_2\nonumber \\
& \hskip 0.7cm-2 \bar{D}_3 \bar{H}_1+6 \bar{D}_3 \bar{H}_2\bigl)+\bar{c}_1 \bigl(-\bar{C}_2^2 \bar{F}_3+14 \bar{F}_1 \bar{F}_4-6 \bar{F}_3 \bar{F}_4+18 \bar{E}_1 \bar{G}_2+20 \bar{D}_3 \bar{H}_2\bigl)\bigl)\bigl)+16 \bar{b}_2^2 \bar{b}_3^2 \bar{I}_2+96 \bar{D}_3^2 \bar{I}_2\nonumber \\
& \hskip 0.7cm-2 \bar{b}_2 \bigl(4 \bigl(-2 \bar{E}_1 \bar{F}_1+\bar{D}_2 \bar{G}_1+7 \bigl(\bar{d}_1+\bar{D}_2-\bar{D}_3+2 \bar{D}_4\bigl) \bar{G}_2\bigl) \bar{b}_3^2+\bigl(\bar{C}_2 \bigl(17 \bar{D}_3-4 \bar{d}_1\bigl) \bar{F}_1+24 \bar{C}_2 \bigl(\bar{d}_1+\bar{D}_2\nonumber \\
& \hskip 0.7cm-\bar{D}_3+2 \bar{D}_4\bigl) \bar{F}_3+16 \bigl(\bar{F}_2 \bar{G}_2-7 \bar{E}_1 \bar{H}_1\bigl)+28 \bigl(\bar{D}_2+2 \bar{D}_3\bigl) \bar{I}_2\bigl) \bar{b}_3+3 \bar{c}_1^2 \bar{C}_2 \bar{F}_1-\bar{c}_1 \bigl(3 \bar{F}_1 \bar{C}_2^2-8 \bar{b}_3 \bar{C}_2 \bar{G}_1\nonumber  \\
& \hskip 0.7cm+13 \bar{b}_3 \bigl(2 \bigl(\bar{d}_1+\bar{D}_2+2 \bar{D}_4\bigl) \bar{F}_3+\bar{D}_3 \bigl(\bar{F}_1-2 \bar{F}_3\bigl)\bigl)+2 \bar{F}_2 \bigl(5 \bar{F}_1-9 \bar{F}_3\bigl)-42 \bar{E}_1 \bar{G}_1\bigl)-8 \bar{D}_4^2 \bar{G}_2+4 \bar{D}_4 \bigl(18 \bar{E}_1 \bar{F}_1\nonumber \\
& \hskip 0.7cm-\bigl(\bar{d}_1+\bar{D}_2-\bar{D}_3\bigl) \bar{G}_2+6 \bar{C}_2 \bar{H}_2\bigl)+2 \bar{C}_2 \bigl(7 \bar{F}_1 \bar{F}_2-12 \bar{F}_3 \bar{F}_2-27 \bar{E}_1 \bar{G}_1+6 \bigl(\bar{d}_1+\bar{D}_2-\bar{D}_3\bigl) \bar{H}_2\bigl)\bigl)\bigl) \, .
\end{align}

\noindent  These relations show that the $20$ generators in $\mathcal{P}$ form a polynomial algebra of degree $4$, i.e., $\textbf{Alg} \left\langle \mathcal{P} \right \rangle: = \mathcal{Q}_{\mathfrak{su}(4)}(4)$ is closed under the Poisson bracket $\{\cdot,\cdot\}.$   It merely remains to determine the additional dependence relations satisfied by the generators.

\subsection{Extra polynomials relations} 
\label{4.12}

As mentioned, there appear algebraic relations of the type \eqref{algrel}. 
 In the following, we indicate these relations for increasing degree. The first nontrivial relation appears in degree eight, and it is given by
\begin{align}
		\bar{b}_3\bar{C}_2^2-\frac{1}{4}(\bar{D}_2-2 \bar{D}_3)^2+(\bar{d}_1-\bar{b}_1 \bar{b}_3+2 \bar{D}_4)\bar{D}_3+\frac{1}{2}\bar{b}_1 \bar{b}_2(\bar{b}_3^2-\bar{D}_4)&-\bar{b}_2 \bar{b}_3(\bar{d}_1+\bar{D}_2-\bar{D}_3+2 \bar{D}_4) \nonumber\\
		&-2 \bar{C}_2 \bar{E}_1+\bar{b}_2 \bar{F}_2+\bar{b}_1 \bar{F}_4=0 \, . \label{alge8}
\end{align}
 This relation was already obtained in \cite{nuclear}. At degree nine no algebraic relations appear, a result also remarked in the same reference. At degree ten we get the three algebraic relations:
	\begin{align}
		&(\bar{b}_2 \bar{b}_3-\bar{D}_3)\bar{F}_1+\frac{1}{2}\bar{D}_2 \bar{F}_3+\bar{C}_2 \bar{G}_2-\bar{b}_2 \bar{H}_1=0 \, ,\nonumber\\ 
		&\frac{1}{2}\bar{D}_2 \bar{F}_1+(\bar{b}_1 \bar{b}_3-\bar{d}_1-\bar{D}_2+\bar{D}_3-2 \bar{D}_4)\bar{F}_3+\bar{C}_2 \bar{G}_1-\bar{b}_1 \bar{H}_2=0 \, ,\\
		&\bar{E}_1^2-\frac{1}{16}\bigl(\bar{b}_1 \bar{b}_2(\bar{c}_1-\bar{C}_2)^2+4\bigl(2 \bar{b}_3^2 \bar{C}_2^2-4 \bar{b}_3 \bar{D}_3(\bar{d}_1+\bar{D}_2-\bar{D}_3+2\bar{D}_4)+\bar{C}_2\bigl(\bar{c}_1 \bar{D}_2-\bar{C}_2(\bar{D}_2+2\bar{D}_4)\bigl) \nonumber\\
		&\hskip 1.4cm+4 \bar{D}_3 \bar{F}_2\bigl)\bigl)+(\bar{d}_1+\bar{D}_2-\bar{D}_3+2\bar{D}_4)\bar{F}_4=0 \, .\nonumber
\end{align}
For degree-eleven, two algebraic relations appear
\begin{equation}
\bar{E}_1 \bar{F}_1-\frac{1}{2}\bar{D}_2 \bar{G}_1-\bar{C}_2 \bar{H}_1+\bar{b}_2 \bar{I}_2=0,\qquad 
\bar{E}_1 \bar{F}_3-\frac{1}{2}\bar{D}_2 \bar{G}_2-\bar{C}_2 \bar{H}_2+\bar{b}_1 \bar{I}_1=0,
\end{equation}
 while for degree-twelve only one appears:
\begin{align}
\bar{b}_1 \bigl(\bar{b}_2^2 \bigl(\bar{c}_1-\bar{C}_2\bigl)^2+16 \bar{D}_3 \bigl(\bar{b}_3 \bar{D}_3-\bar{F}_4\bigl)+8 \bar{b}_2 \bar{D}_3 \bigl(\bar{D}_4-\bar{b}_3^2\bigl)\bigl)\nonumber\\+4 \bigl(\bar{b}_2 \bar{C}_2 \bigl(2 \bar{b}_3^2 \bar{C}_2+\bar{c}_1 \bar{D}_2-\bar{C}_2 \bigl(\bar{D}_2+2 \bar{D}_4\bigl)\bigl)-4 \bar{b}_3 \bar{C}_2^2 \bar{D}_3+4 \bar{C}_2^2 \bar{F}_4+\bar{D}_2^2 \bar{D}_3+4 \bar{F}_3^2\bigl)=0 \, .
\end{align}
 A degree-thirteen relation is given by:
\begin{align}
\bar{c}_1 \bigl(\bar{b}_1 \bar{b}_2 \bar{F}_4+\bar{C}_2^2 \bar{D}_3-\bar{b}_2\bar{C}_2 \bar{E}_1 -\bar{b}_1 \bar{D}_3^2\bigl)+\bar{b}_1 \bar{C}_2 (\bar{D}_3^2-\bar{b}_2 \bar{F}_4)+ \bar{b}_2 \bar{C}_2^2\bar{E}_1+2 \bar{C}_2 \bar{D}_2 \bar{F}_4-(\bar{C}_2^3 +2 \bar{E}_1 \bar{D}_2) \bar{D}_3-4 \bar{F}_3 \bar{G}_2=0 \, .
\end{align}
An example of degree-fourteen algebraic relation is:
\begin{align}
&\bar{b}_2 \bigl(-\bar{b}_1 \bar{c}_1 \bar{G}_2+\bar{b}_1 \bar{C}_2 \bar{G}_2+2 \bar{b}_3 \bar{D}_2 \bar{F}_1+\bar{c}_1 \bar{C}_2 \bigl(\bar{F}_1-\bar{F}_3\bigl)+\bar{c}_1^2 \bar{F}_3-\bar{C}_2^2 \bar{F}_1-2 \bar{D}_2 \bar{H}_1+4 \bar{F}_1 \bar{F}_4-4 \bar{E}_1 \bar{G}_2\bigl)+\bar{D}_3 \bigl(4 \bar{b}_1 \bar{b}_3 \bar{F}_3+4 \bar{c}_1 \bar{G}_2 \nonumber \\
&-2 \bar{F}_3 \bigl(4 \bar{b}_3^2+2 \bar{d}_1+\bar{D}_2\bigl)\bigl) +4(4 \bar{b}_3- \bar{b}_1)  \bar{F}_3 \bar{F}_4+(8 \bar{E}_1 \bar{C}_2 -4 \bar{b}_3 \bar{C}_2^2+\bar{D}_2^2) \bar{F}_3+4 \bar{D}_3^2 \bigl(\bar{F}_3-\bar{F}_1\bigl)-16 \bar{F}_4 \bar{H}_2+16 \bar{E}_1 \bar{I}_1=0 \,\nonumber \, , \\ .
\end{align}
whereas an example of degree-fifteen one is:
\begin{align}
&\bar{b}_2 \bigl(2 \bar{c}_1 \bigl(\bar{b}_1 \bigl(\bar{b}_3 \bar{F}_3-2 \bar{H}_2\bigl)+2 \bar{C}_2 \bigl(\bar{G}_1+\bar{G}_2\bigl)+\bar{D}_2 \bar{F}_1-\bar{D}_3 \bar{F}_1\bigl)+\bar{G}_1 \bigl(4 \bar{b}_3 \bar{D}_3-2 \bar{c}_1^2\bigl)+\bar{F}_1 \bigl(8 \bar{E}_1 \bar{b}_3-\bar{C}_2 \bar{D}_2-2 \bar{C}_2 \bar{D}_3\bigl)+2 \bar{b}_1 \bar{C}_2 \bar{H}_2 \nonumber \\
&+\bar{G}_2 \bigl(2 \bar{b}_3 \bar{D}_2-4 \bar{b}_1 \bar{D}_4+4 \bar{b}_1 \bar{b}_3^2-8 \bar{F}_2\bigl)-4 \bar{b}_1 \bar{b}_3 \bar{I}_1+\bar{F}_3 \bigl(2 \bar{C}_2 \bar{D}_3-2 \bar{C}_2 \bar{d}_1\bigl)+8 \bar{I}_1 \bigl(\bar{d}_1+\bar{D}_2-\bar{D}_3+2 \bar{D}_4\bigl)-8 \bar{E}_1 \bar{H}_2\bigl) \nonumber\\
&-2 \bigl(2 \bar{b}_3 \bigl(2 \bar{b}_1 \bar{D}_3 \bar{G}_2+\bar{F}_3 \bigl(\bar{c}_1 \bar{D}_2-4 \bar{C}_2 \bar{D}_4\bigl)+2 \bar{C}_2\bar{D}_3 \bar{F}_1+8 \bar{F}_4 \bar{G}_1\bigl)+\bar{D}_3 \bigl(\bar{b}_1 \bar{F}_3 \bigl(\bar{C}_2-\bar{c}_1\bigl)+4 \bigl(\bar{D}_2+2 \bar{D}_4\bigl) \bar{G}_1+4 \bar{E}_1 \bar{F}_1\bigl) \nonumber\\
&+\bar{G}_2 \bigl(\bar{D}_2^2-4 \bigl(\bar{b}_1 \bar{F}_4+\bar{E}_1 \bar{c}_1\bigl)\bigl)+8 \bar{b}_3^3 \bar{C}_2 \bar{F}_3-8 \bar{b}_3^2 \bar{D}_3 \bar{G}_1-\bar{C}_2 \bar{F}_3 \bigl(\bar{C}_2-2 \bar{c}_1\bigl) \bigl(\bar{C}_2-\bar{c}_1\bigl)-4 \bar{c}_1 \bar{F}_1 \bar{F}_4-8 \bar{E}_1 \bigl(\bar{D}_2+2 \bar{D}_4\bigl) \bar{F}_3 \nonumber \\
&-16 \bar{F}_2 \bar{I}_1\bigl)+\bar{b}_2^2 \bigl(\bar{b}_3 \bar{F}_1 \bigl(\bar{c}_1+3 \bar{C}_2\bigl)-2 \bar{H}_1 \bigl(\bar{c}_1+\bar{C}_2\bigl)\bigl)=0 \, . \nonumber \\
\end{align}
In conclusion, we provide an illustration of algebraic relations of degrees sixteen and seventeen, respectively. To begin, the relation of degree sixteen is presented as follows:
\begin{align}
&-\bar{b}_2 \bigl(-4 \bar{b}_3 \bigl(\bar{b}_1 \bar{G}_2 \bigl(\bar{c}_1-\bar{C}_2\bigl)-\bar{c}_1 \bar{C}_2 \bar{F}_1+2 \bar{c}_1 \bar{C}_2 \bar{F}_3+2 \bigl(\bar{D}_2+2 \bar{D}_3\bigl) \bar{H}_1+4 \bar{F}_4 \bar{F}_1\bigl)+5 \bar{b}_1 \bar{c}_1 \bar{C}_2 \bar{F}_3+4 \bar{b}_1 \bar{I}_1 \bigl(\bar{c}_1+\bar{C}_2\bigl)-3 \bar{b}_1 \bar{c}_1^2 \bar{F}_3 \nonumber \\
&-2 \bar{b}_1 \bar{C}_2^2 \bar{F}_3+4 \bar{b}_3^2 \bigl(2 \bar{D}_2 \bar{F}_1+4 \bar{D}_3 \bar{F}_1-\bar{D}_2 \bar{F}_3\bigl)-4 \bar{D}_3 \bigl(\bar{G}_1 \bigl(\bar{c}_1+\bar{C}_2\bigl)+\bar{C}_2 \bar{G}_2-2 \bar{D}_4 \bar{F}_1\bigl)+8 \bar{E}_1 \bar{c}_1 \bar{F}_1+12 \bar{E}_1 \bar{c}_1 \bar{F}_3+4 \bar{C}_2 \bar{d}_1 \bar{G}_2 \nonumber \\
&+2 \bar{C}_2 \bar{D}_2 \bar{G}_1-16 \bar{E}_1 \bar{C}_2 \bar{F}_1+4 \bar{E}_1 \bar{C}_2 \bar{F}_3+16 \bar{F}_4 \bar{H}_1\bigl)+4 \bigl(2 \bar{b}_1 \bar{c}_1 \bar{D}_3 \bar{G}_2+\bar{F}_3 \bigl(\bar{b}_3^2 \bigl(6 \bar{C}_2^2-8 \bar{b}_1 \bar{D}_3\bigl)+4 \bar{b}_1 \bar{D}_3 \bar{D}_4-2 \bar{C}_2^2 \bar{D}_4+4 \bar{D}_3 \bar{F}_2 \nonumber \\
&-20 \bar{E}_1^2\bigl)-2 \bar{b}_1 \bar{C}_2 \bar{D}_3 \bar{G}_2+\bar{F}_4 \bar{F}_3 \bigl(12 \bar{b}_1 \bar{b}_3+4 \bar{d}_1+4 \bar{D}_2-4 \bar{D}_3+8 \bar{D}_4\bigl)+\bar{b}_3 \bigl(4 \bar{D}_2 \bar{D}_3 \bar{F}_1-\bar{D}_2^2 \bar{F}_3\bigl)-8 \bar{b}_1 \bar{F}_4 \bar{H}_2-4 \bar{D}_2 \bar{F}_1 \bar{F}_4\nonumber \\
&+8 \bar{E}_1 \bar{D}_3 \bar{G}_1+8 \bar{E}_1 \bar{D}_2 \bar{G}_2-4 \bar{D}_2 \bar{D}_3 \bar{H}_1\bigl)+2 \bar{b}_2^2 \bigl(\bar{F}_1 \bigl(4 \bar{b}_3^3+\bar{c}_1 \bar{C}_2-\bar{c}_1^2\bigl)-4 \bar{b}_3^2 \bar{H}_1+2 \bar{C}_2 \bar{I}_2\bigl)=0\nonumber \, , \\ 
\end{align}
whereas a degree-seventeen algebraic relation reads:
\begin{align}
&8 \bar{b}_2^2  \bar{b}_3^2  \bar{I}_2 -2 \bigl(16 \bar{b}_3^4 \bar{C}_2  \bar{F}_3  +4 \bigl( \bar{b}_1  \bar{c}_1  \bar{F}_3-4  \bar{D}_3  \bar{G}_1\bigl)  \bar{b}_3^3+2 \bigl(-2  \bar{C}_2  \bar{D}_2  \bar{F}_1+16  \bar{F}_4  \bar{G}_1-4  \bar{b}_1 \bigl( \bar{E}_1  \bar{F}_3+ \bar{D}_3 \bigl( \bar{G}_2- \bar{G}_1\bigl)\bigl)\nonumber \\
&+ \bar{c}_1 \bigl( \bar{D}_2  \bar{F}_1-4  \bar{D}_3  \bar{F}_1-4  \bar{d}_1  \bar{F}_3+4  \bar{D}_3  \bar{F}_3-8  \bar{D}_4  \bar{F}_3+2  \bar{C}_2  \bar{G}_1-2  \bar{b}_1  \bar{H}_2\bigl)\bigl)  \bar{b}_3^2+\bigl( \bar{C}_2 \bigl( \bar{F}_1-2  \bar{F}_3\bigl)  \bar{c}_1^2+\bigl(\bigl(6  \bar{F}_3-5  \bar{F}_1\bigl)  \bar{C}_2^2 +2  \bar{b}_1  \bar{G}_2  \bar{C}_2\nonumber \\
&+ \bar{b}_1  \bar{D}_2  \bar{F}_3+4  \bar{F}_2  \bar{F}_3+2  \bar{b}_1  \bar{D}_3 \bigl( \bar{F}_3-3  \bar{F}_1\bigl)-4  \bar{E}_1 \bigl( \bar{G}_1+2  \bar{G}_2\bigl)+4 \bigl( \bar{d}_1- \bar{D}_3+2  \bar{D}_4\bigl)  \bar{H}_2\bigl)  \bar{c}_1-2 \bigl(-2  \bar{F}_1  \bar{C}_2^3+ \bar{b}_1  \bar{D}_3 \bigl(2  \bar{F}_1+3  \bar{F}_3\bigl)\bar{C}_2 \nonumber\\
& -16  \bar{d}_1  \bar{E}_1  \bar{F}_3-10  \bar{D}_2  \bar{E}_1  \bar{F}_3+ \bar{D}_2^2  \bar{G}_1-4  \bar{D}_3^2  \bar{G}_1-4  \bar{b}_1  \bar{F}_4  \bar{G}_1+ \bar{D}_3 \bigl(-20  \bar{E}_1  \bar{F}_1+8  \bar{E}_1  \bar{F}_3+8  \bar{d}_1  \bar{G}_1+6  \bar{D}_2  \bar{G}_1\bigl)+ \bar{D}_2^2  \bar{G}_2+2  \bar{d}_1  \bar{D}_2  \bar{G}_2\nonumber\\
&-4  \bar{b}_1  \bar{F}_4  \bar{G}_2+8  \bar{b}_1  \bar{E}_1  \bar{H}_2-4  \bar{b}_1  \bar{d}_1  \bar{I}_1+16  \bar{F}_2  \bar{I}_1\bigl)\bigl)  \bar{b}_3+4  \bar{C}_2  \bar{D}_3^2  \bar{F}_1- \bar{C}_2  \bar{D}_2^2  \bar{F}_3-4  \bar{C}_2  \bar{D}_3^2  \bar{F}_3+2  \bar{b}_1  \bar{c}_1^2  \bar{C}_2  \bar{F}_3+4  \bar{C}_2  \bar{d}_1  \bar{D}_3  \bar{F}_3+2  \bar{C}_2  \bar{D}_2  \bar{D}_3  \bar{F}_3\nonumber \\
&+16  \bar{b}_1  \bar{D}_4  \bar{E}_1  \bar{F}_3-32  \bar{E}_1  \bar{F}_2  \bar{F}_3+4  \bar{b}_1  \bar{C}_2  \bar{F}_1  \bar{F}_4-16  \bar{E}_1  \bar{F}_1  \bar{F}_4+2  \bar{b}_1  \bar{C}_2  \bar{F}_3  \bar{F}_4-8  \bar{b}_1  \bar{D}_3  \bar{D}_4  \bar{G}_1+16  \bar{D}_3  \bar{F}_2  \bar{G}_1-4  \bar{D}_2  \bar{F}_4  \bar{G}_1+16  \bar{D}_3  \bar{F}_4  \bar{G}_1\nonumber \\
&+32  \bar{E}_1^2  \bar{G}_2-12  \bar{b}_1  \bar{C}_2  \bar{E}_1  \bar{G}_2+8  \bar{D}_2  \bar{F}_2  \bar{G}_2-8  \bar{d}_1  \bar{F}_4  \bar{G}_2-8  \bar{D}_2  \bar{F}_4  \bar{G}_2+8  \bar{D}_3  \bar{F}_4  \bar{G}_2-16  \bar{D}_4  \bar{F}_4  \bar{G}_2+2  \bar{c}_1 \bigl(-2  \bar{F}_1  \bar{D}_3^2+ \bar{D}_2  \bar{D}_3\bar{F}_3\nonumber \\
&  -4  \bar{C}_2  \bar{E}_1  \bar{F}_3+ \bar{b}_1 \bigl(- \bar{C}_2^2  \bar{F}_3+2  \bar{F}_1  \bar{F}_4+ \bar{F}_3  \bar{F}_4+2  \bar{E}_1  \bar{G}_2\bigl)\bigl)+8  \bar{b}_1  \bar{C}_2  \bar{D}_3  \bar{H}_1-32  \bar{D}_3  \bar{E}_1  \bar{H}_1-16  \bar{D}_3  \bar{E}_1  \bar{H}_2\bigl)- \bar{b}_2 \bigl(8 \bigl( \bar{c}_1  \bar{F}_1+ \bar{b}_1  \bar{G}_2\bigl)  \bar{b}_3^3 \nonumber \\
&+2 \bigl(8  \bar{E}_1  \bar{F}_1+ \bar{b}_1 \bigl(2  \bar{C}_2  \bar{F}_1-5  \bar{c}_1  \bar{F}_3+7  \bar{C}_2  \bar{F}_3\bigl)+2  \bar{D}_2  \bar{G}_1\bigl)  \bar{b}_3^2+4 \bigl( \bar{c}_1 \bigl( \bar{D}_2  \bar{F}_3+2  \bar{D}_4  \bar{F}_3- \bar{D}_3 \bigl( \bar{F}_1+ \bar{F}_3\bigl)+ \bar{d}_1 \bigl(2  \bar{F}_1+ \bar{F}_3\bigl)+ \bar{b}_1  \bar{H}_2\bigl)\nonumber \\
&- \bar{C}_2 \bigl(\bigl( \bar{D}_2+2  \bar{D}_4\bigl)  \bar{F}_3- \bar{D}_3 \bigl( \bar{F}_1+ \bar{F}_3\bigl)+ \bar{d}_1 \bigl(2  \bar{F}_1+ \bar{F}_3\bigl)+ \bar{b}_1 \bigl( \bar{H}_1+ \bar{H}_2\bigl)\bigl)+2 \bigl(- \bar{b}_1  \bar{D}_4  \bar{G}_2+4  \bar{F}_2  \bar{G}_2-6  \bar{E}_1  \bar{H}_1+ \bar{D}_2  \bar{I}_2+6  \bar{D}_3  \bar{I}_2\bigl)\bigl)  \bar{b}_3\nonumber \\
&+2  \bar{c}_1  \bar{C}_2^2  \bar{F}_1-2  \bar{c}_1^2  \bar{C}_2  \bar{F}_1-16  \bar{D}_4  \bar{E}_1  \bar{F}_1-8  \bar{c}_1  \bar{F}_1  \bar{F}_2+8  \bar{C}_2  \bar{F}_1  \bar{F}_2-4  \bar{b}_1  \bar{C}_2  \bar{D}_4  \bar{F}_3-4  \bar{c}_1  \bar{F}_2  \bar{F}_3+4  \bar{C}_2  \bar{F}_2  \bar{F}_3-2  \bar{b}_1  \bar{c}_1^2  \bar{G}_1-2  \bar{b}_1  \bar{C}_2^2  \bar{G}_1\nonumber \\
&+4  \bar{b}_1  \bar{c}_1  \bar{C}_2  \bar{G}_1-4  \bar{c}_1  \bar{E}_1  \bar{G}_1+12  \bar{C}_2  \bar{E}_1  \bar{G}_1+ \bar{b}_1  \bar{c}_1^2  \bar{G}_2+ \bar{b}_1  \bar{C}_2^2  \bar{G}_2+16  \bar{D}_4^2  \bar{G}_2-2  \bar{b}_1  \bar{c}_1  \bar{C}_2  \bar{G}_2+8  \bar{d}_1 \bar{D}_4  \bar{G}_2+8  \bar{D}_2  \bar{D}_4  \bar{G}_2-8  \bar{D}_3  \bar{D}_4  \bar{G}_2 \nonumber \\
&-8  \bar{C}_2 \bigl( \bar{d}_1+ \bar{D}_2- \bar{D}_3+2  \bar{D}_4\bigl)  \bar{H}_2\bigl)=0\nonumber \, .\\ 
\end{align}


\section{Conclusions}
\label{sec5}

In this paper, we have examined the polynomial algebra arising from the reduction chain $\mathfrak{su}(4) \supset \mathfrak{su}(2) \times \mathfrak{su}(2)$ in detail. In our analysis of the commutant associated with the $\mathfrak{su}(2) \times \mathfrak{su}(2)$ subalgebra, we have been able to identify $20$ linearly independent and indecomposable polynomials, up to degree nine in the underlying Lie algebra generators, which ultimately close in a polynomial algebra. This process required us to deal with polynomial expansions up to degree $17$ in the Lie algebra generators. This polynomial structure is characterized by the presence of five central elements of various degrees, i.e. three of degree two, one of degree three, and another of degree four. Appropriate combinations of these central elements have been shown to give rise to the three (unsymmetrized) Casimir elements of $\mathfrak{su}(4)$. Although the existence of these $20$ linearly independent and indecomposable polynomials is already documented in the literature (see \cite{nuclear}), no prior information has been available on the specific polynomial structure defined by these elements. With this paper, we have filled this gap.
 We emphasize that the computations required to obtain such a closed polynomial structure proved to be extremely demanding from a computational perspective. Initially, when trying to solve the problem by considering all possible terms in expansions based solely on their fixed degree, we were unable to proceed beyond degree $11$. For this reason, in the paper we have highlighted several key computational aspects encountered in the study of polynomial expansions of such high degrees.
Additionally, we discussed a potential efficient technique that could be employed, leading us to define new polynomials using Poisson brackets to ensure algebra closure. As mentioned previously, the results we obtained in this way match others available in the literature, which were derived using alternative techniques.  A novel aspect of our approach is the implementation of a grading process for the generators, following the method introduced in \cite{campoamor2025On}. This approach streamlines computations and identifies permissible terms within specific degree brackets, demonstrating how gradings can effectively clarify complex algebraic structures. Thanks to this technique, we have been able to reveal that the hidden symmetry algebra emerging from the embedding chain $\mathfrak{su}(4) \supset \mathfrak{su}(2) \times \mathfrak{su}(2)$ gives rise to a quartic polynomial Poisson algebra. As a byproduct, this paper has also demonstrated that the grading method not only simplifies complex expansions in the Poisson bracket relations of a given degree but is, in fact, indispensable for deriving these relations when the degree becomes so high that standard ``brute-force" methods are no longer applicable.

\medskip

 This ansatz can be generalized and applied to any other reduction chain $\mathfrak{g}\supset \mathfrak{g}^{\prime}$ involving reductive Lie algebras where the subalgebra embedding is not regular, and for which the usual argumentation using the root system is no longer applicable \cite{MR4660510}. However, as shown in \cite{Hav3}, there are several alternatives to define non-canonical gradings in Lie algebras, that can be conveniently adapted to various non-regular reduction chains (see \cite{Hav1,Hav2} and references therein). Following this argumentation, the reduction chains of interest in physical applications (see e.g. \cite[Chapter 12]{gtp},\cite{MR2270799}) examined in the context of the missing label problem can be reevaluated from the perspective of polynomial algebras, allowing us to describe the complete algebraic structure, and potentially allowing us to make more effective of labeling operators to separate degenerate states. For applications in superintegrable systems \cite{MR2105429,MR2143019}, the precise knowledge of the associated polynomial algebra can be useful for the systematic construction of Hamiltonians admitting constants of the motion of degrees higher than two, providing new hierarchies of systems for which the classical criteria for the separation of variables no longer necessarily apply \cite{MR1939624,MR2385271}. It is worthy to be mentioned in this context that the observed relations between polynomial algebras and quasi-exactly solvable problems (see e.g. \cite{tur1988}) could be systematized by means of a detailed analysis of the (hidden) symmetry algebras and their associated polynomial structures. Work in these directions is currently in progress.

\section*{Acknowledgements}
This work was partially supported by the Future Fellowship FT180100099 and the Discovery Project DP190101529 from the Australian Research Council.  RCS  acknowledges financial support by the Agencia Estatal de Investigaci\'on (Spain) under  the grant PID2023-148373NB-I00 funded by MCIN/AEI/10.13039/501100011033/FEDER, UE.  The research of DL is partially funded by MUR - Dipartimento di Eccellenza 2023-2027, codice CUP \textsf{G43C22004580005} - codice progetto \textsf{DECC23\_012\_DIP} and partially supported by INFN-CSN4 (Commissione Scientifica Nazionale 4 - Fisica Teorica), MMNLP project. DL is a member of GNFM, INdAM.

\newpage

\appendix

\section{Explicit expressions of polynomials of degree six}
\label{appendixA}

The four linearly independent and indecomposable polynomials of degree six, satisfying the commutant constraint \eqref{pdes}, explicitly read

{\small \begin{equation}
\begin{split}
F_1=&-q_{12} q_{22} q_{31} s_{1}^2 t_{1} - q_{13} q_{23} q_{31} s_{1}^2 t_{1} + q_{12} q_{21} q_{32} s_{1}^2 t_{1} + 
q_{13} q_{21} q_{33} s_{1}^2 t_{1} + q_{12}^2 q_{31} s_{1} s_{2} t_{1} \\
&+ q_{13}^2 q_{31} s_{1} s_{2} t_{1} - 
q_{22}^2 q_{31} s_{1} s_{2} t_{1} - q_{23}^2 q_{31} s_{1} s_{2} t_{1} - q_{11} q_{12} q_{32} s_{1} s_{2} t_{1} + 
q_{21} q_{22} q_{32} s_{1} s_{2} t_{1}\\
& - q_{11} q_{13} q_{33} s_{1} s_{2} t_{1} + q_{21} q_{23} q_{33} s_{1} s_{2} t_{1} + 
q_{12} q_{22} q_{31} s_{2}^2 t_{1} + q_{13} q_{23} q_{31} s_{2}^2 t_{1} - q_{11} q_{22} q_{32} s_{2}^2 t_{1} \\
&- 
q_{11} q_{23} q_{33} s_{2}^2 t_{1} - q_{12}^2 q_{21} s_{1} s_{3} t_{1} - q_{13}^2 q_{21} s_{1} s_{3} t_{1} + 
q_{11} q_{12} q_{22} s_{1} s_{3} t_{1} + q_{11} q_{13} q_{23} s_{1} s_{3} t_{1} \\
&- q_{22} q_{31} q_{32} s_{1} s_{3} t_{1} + 
q_{21} q_{32}^2 s_{1} s_{3} t_{1} - q_{23} q_{31} q_{33} s_{1} s_{3} t_{1} + q_{21} q_{33}^2 s_{1} s_{3} t_{1} - 
q_{12} q_{21} q_{22} s_{2} s_{3} t_{1} \\
&+ q_{11} q_{22}^2 s_{2} s_{3} t_{1} - q_{13} q_{21} q_{23} s_{2} s_{3} t_{1} + 
q_{11} q_{23}^2 s_{2} s_{3} t_{1} + q_{12} q_{31} q_{32} s_{2} s_{3} t_{1} - q_{11} q_{32}^2 s_{2} s_{3} t_{1} \\
&+ 
q_{13} q_{31} q_{33} s_{2} s_{3} t_{1} - q_{11} q_{33}^2 s_{2} s_{3} t_{1} - q_{12} q_{21} q_{32} s_{3}^2 t_{1} + 
q_{11} q_{22} q_{32} s_{3}^2 t_{1} - q_{13} q_{21} q_{33} s_{3}^2 t_{1}\\
& + q_{11} q_{23} q_{33} s_{3}^2 t_{1} + 
q_{11} q_{22} q_{31} s_{1}^2 t_{2} - q_{11} q_{21} q_{32} s_{1}^2 t_{2} - q_{13} q_{23} q_{32} s_{1}^2 t_{2} + 
q_{13} q_{22} q_{33} s_{1}^2 t_{2}\\
& - q_{11} q_{12} q_{31} s_{1} s_{2} t_{2} + q_{21} q_{22} q_{31} s_{1} s_{2} t_{2} + 
q_{11}^2 q_{32} s_{1} s_{2} t_{2} + q_{13}^2 q_{32} s_{1} s_{2} t_{2} - q_{21}^2 q_{32} s_{1} s_{2} t_{2} \\
&- 
q_{23}^2 q_{32} s_{1} s_{2} t_{2} - q_{12} q_{13} q_{33} s_{1} s_{2} t_{2} + q_{22} q_{23} q_{33} s_{1} s_{2} t_{2} - 
q_{12} q_{21} q_{31} s_{2}^2 t_{2} + q_{11} q_{21} q_{32} s_{2}^2 t_{2} \\
&+ q_{13} q_{23} q_{32} s_{2}^2 t_{2} - 
q_{12} q_{23} q_{33} s_{2}^2 t_{2} + q_{11} q_{12} q_{21} s_{1} s_{3} t_{2} - q_{11}^2 q_{22} s_{1} s_{3} t_{2} - 
q_{13}^2 q_{22} s_{1} s_{3} t_{2}\\
& + q_{12} q_{13} q_{23} s_{1} s_{3} t_{2} + q_{22} q_{31}^2 s_{1} s_{3} t_{2} - 
q_{21} q_{31} q_{32} s_{1} s_{3} t_{2} - q_{23} q_{32} q_{33} s_{1} s_{3} t_{2} + q_{22} q_{33}^2 s_{1} s_{3} t_{2}\\
& + 
q_{12} q_{21}^2 s_{2} s_{3} t_{2} - q_{11} q_{21} q_{22} s_{2} s_{3} t_{2} - q_{13} q_{22} q_{23} s_{2} s_{3} t_{2} + 
q_{12} q_{23}^2 s_{2} s_{3} t_{2} - q_{12} q_{31}^2 s_{2} s_{3} t_{2} \\
&+ q_{11} q_{31} q_{32} s_{2} s_{3} t_{2} + 
q_{13} q_{32} q_{33} s_{2} s_{3} t_{2} - q_{12} q_{33}^2 s_{2} s_{3} t_{2} + q_{12} q_{21} q_{31} s_{3}^2 t_{2} - 
q_{11} q_{22} q_{31} s_{3}^2 t_{2} \\
&- q_{13} q_{22} q_{33} s_{3}^2 t_{2} + q_{12} q_{23} q_{33} s_{3}^2 t_{2} + 
q_{11} q_{23} q_{31} s_{1}^2 t_{3} + q_{12} q_{23} q_{32} s_{1}^2 t_{3} - q_{11} q_{21} q_{33} s_{1}^2 t_{3} \\
&- 
q_{12} q_{22} q_{33} s_{1}^2 t_{3} - q_{11} q_{13} q_{31} s_{1} s_{2} t_{3} + q_{21} q_{23} q_{31} s_{1} s_{2} t_{3} - 
q_{12} q_{13} q_{32} s_{1} s_{2} t_{3} + q_{22} q_{23} q_{32} s_{1} s_{2} t_{3} \\
&+ q_{11}^2 q_{33} s_{1} s_{2} t_{3} + 
q_{12}^2 q_{33} s_{1} s_{2} t_{3} - q_{21}^2 q_{33} s_{1} s_{2} t_{3} - q_{22}^2 q_{33} s_{1} s_{2} t_{3} - 
q_{13} q_{21} q_{31} s_{2}^2 t_{3} \\
&- q_{13} q_{22} q_{32} s_{2}^2 t_{3} + q_{11} q_{21} q_{33} s_{2}^2 t_{3} + 
q_{12} q_{22} q_{33} s_{2}^2 t_{3} + q_{11} q_{13} q_{21} s_{1} s_{3} t_{3} + q_{12} q_{13} q_{22} s_{1} s_{3} t_{3} \\
&- 
q_{11}^2 q_{23} s_{1} s_{3} t_{3} - q_{12}^2 q_{23} s_{1} s_{3} t_{3} + q_{23} q_{31}^2 s_{1} s_{3} t_{3} + 
q_{23} q_{32}^2 s_{1} s_{3} t_{3} - q_{21} q_{31} q_{33} s_{1} s_{3} t_{3} \\
&- q_{22} q_{32} q_{33} s_{1} s_{3} t_{3} + 
q_{13} q_{21}^2 s_{2} s_{3} t_{3} + q_{13} q_{22}^2 s_{2} s_{3} t_{3} - q_{11} q_{21} q_{23} s_{2} s_{3} t_{3} - 
q_{12} q_{22} q_{23} s_{2} s_{3} t_{3} \\
&- q_{13} q_{31}^2 s_{2} s_{3} t_{3} - q_{13} q_{32}^2 s_{2} s_{3} t_{3} + 
q_{11} q_{31} q_{33} s_{2} s_{3} t_{3} + q_{12} q_{32} q_{33} s_{2} s_{3} t_{3} + q_{13} q_{21} q_{31} s_{3}^2 t_{3} \\
&- 
q_{11} q_{23} q_{31} s_{3}^2 t_{3} + q_{13} q_{22} q_{32} s_{3}^2 t_{3} - q_{12} q_{23} q_{32} s_{3}^2 t_{3} \, ,
\end{split}
\end{equation}}

{\small \begin{equation}
\begin{split}
F_2&=q_{11}^4 s_{1}^2 + 2 q_{11}^2 q_{12}^2 s_{1}^2 + q_{12}^4 s_{1}^2 + 2 q_{11}^2 q_{13}^2 s_{1}^2 + 
2 q_{12}^2 q_{13}^2 s_{1}^2 + q_{13}^4 s_{1}^2 - q_{21}^4 s_{1}^2 - 2 q_{21}^2 q_{22}^2 s_{1}^2 
- q_{22}^4 s_{1}^2 \\
&- 2 q_{21}^2 q_{23}^2 s_{1}^2 - 2 q_{22}^2 q_{23}^2 s_{1}^2 - q_{23}^4 s_{1}^2 - 
2 q_{21}^2 q_{31}^2 s_{1}^2 - q_{31}^4 s_{1}^2 - 4 q_{21} q_{22} q_{31} q_{32} s_{1}^2 - 
2 q_{22}^2 q_{32}^2 s_{1}^2 - 2 q_{31}^2 q_{32}^2 s_{1}^2 \\
&- q_{32}^4 s_{1}^2 - 4 q_{21} q_{23} q_{31} q_{33} s_{1}^2 - 4 q_{22} q_{23} q_{32} q_{33} s_{1}^2 - 
2 q_{23}^2 q_{33}^2 s_{1}^2 - 2 q_{31}^2 q_{33}^2 s_{1}^2 - 2 q_{32}^2 q_{33}^2 s_{1}^2 - 
q_{33}^4 s_{1}^2 \\
&+ 4 q_{11}^3 q_{21} s_{1} s_{2} + 4 q_{11} q_{12}^2 q_{21} s_{1} s_{2} + 
4 q_{11} q_{13}^2 q_{21} s_{1} s_{2} + 4 q_{11} q_{21}^3 s_{1} s_{2} + 4 q_{11}^2 q_{12} q_{22} s_{1} s_{2} + 
4 q_{12}^3 q_{22} s_{1} s_{2} \\
&+ 4 q_{12} q_{13}^2 q_{22} s_{1} s_{2} + 4 q_{12} q_{21}^2 q_{22} s_{1} s_{2} + 
4 q_{11} q_{21} q_{22}^2 s_{1} s_{2} + 4 q_{12} q_{22}^3 s_{1} s_{2} + 4 q_{11}^2 q_{13} q_{23} s_{1} s_{2} + 
4 q_{12}^2 q_{13} q_{23} s_{1} s_{2} \\
&+ 4 q_{13}^3 q_{23} s_{1} s_{2} + 4 q_{13} q_{21}^2 q_{23} s_{1} s_{2} + 
4 q_{13} q_{22}^2 q_{23} s_{1} s_{2} + 4 q_{11} q_{21} q_{23}^2 s_{1} s_{2} + 
4 q_{12} q_{22} q_{23}^2 s_{1} s_{2} + 4 q_{13} q_{23}^3 s_{1} s_{2} \\
&+ 4 q_{11} q_{21} q_{31}^2 s_{1} s_{2} + 
4 q_{12} q_{21} q_{31} q_{32} s_{1} s_{2} + 4 q_{11} q_{22} q_{31} q_{32} s_{1} s_{2} + 
4 q_{12} q_{22} q_{32}^2 s_{1} s_{2} + 4 q_{13} q_{21} q_{31} q_{33} s_{1} s_{2} \\
&+ 
4 q_{11} q_{23} q_{31} q_{33} s_{1} s_{2} + 4 q_{13} q_{22} q_{32} q_{33} s_{1} s_{2} + 
4 q_{12} q_{23} q_{32} q_{33} s_{1} s_{2} + 4 q_{13} q_{23} q_{33}^2 s_{1} s_{2} - q_{11}^4 s_{2}^2 - 
2 q_{11}^2 q_{12}^2 s_{2}^2 - q_{12}^4 s_{2}^2\\ 
&- 2 q_{11}^2 q_{13}^2 s_{2}^2
- 2 q_{12}^2 q_{13}^2 s_{2}^2 - q_{13}^4 s_{2}^2 + q_{21}^4 s_{2}^2 + 2 q_{21}^2 q_{22}^2 s_{2}^2 + 
q_{22}^4 s_{2}^2 + 2 q_{21}^2 q_{23}^2 s_{2}^2 + 2 q_{22}^2 q_{23}^2 s_{2}^2 + q_{23}^4 s_{2}^2 - 
2 q_{11}^2 q_{31}^2 s_{2}^2\\
& - q_{31}^4 s_{2}^2 - 4 q_{11} q_{12} q_{31} q_{32} s_{2}^2 - 
2 q_{12}^2 q_{32}^2 s_{2}^2 - 2 q_{31}^2 q_{32}^2 s_{2}^2 - q_{32}^4 s_{2}^2 - 
4 q_{11} q_{13} q_{31} q_{33} s_{2}^2 - 4 q_{12} q_{13} q_{32} q_{33} s_{2}^2 - 
2 q_{13}^2 q_{33}^2 s_{2}^2 \\
&- 2 q_{31}^2 q_{33}^2 s_{2}^2 - 2 q_{32}^2 q_{33}^2 s_{2}^2 - 
q_{33}^4 s_{2}^2 + 4 q_{11}^3 q_{31} s_{1} s_{3} + 4 q_{11} q_{12}^2 q_{31} s_{1} s_{3} + 
4 q_{11} q_{13}^2 q_{31} s_{1} s_{3} + 4 q_{11} q_{21}^2 q_{31} s_{1} s_{3} \\
&+ 
4 q_{12} q_{21} q_{22} q_{31} s_{1} s_{3} + 4 q_{13} q_{21} q_{23} q_{31} s_{1} s_{3} + 
4 q_{11} q_{31}^3 s_{1} s_{3} + 4 q_{11}^2 q_{12} q_{32} s_{1} s_{3} + 4 q_{12}^3 q_{32} s_{1} s_{3} + 
4 q_{12} q_{13}^2 q_{32} s_{1} s_{3} \\
&+ 4 q_{11} q_{21} q_{22} q_{32} s_{1} s_{3} + 
4 q_{12} q_{22}^2 q_{32} s_{1} s_{3} + 4 q_{13} q_{22} q_{23} q_{32} s_{1} s_{3} + 
4 q_{12} q_{31}^2 q_{32} s_{1} s_{3} + 4 q_{11} q_{31} q_{32}^2 s_{1} s_{3} + 4 q_{12} q_{32}^3 s_{1} s_{3} \\
&+ 
4 q_{11}^2 q_{13} q_{33} s_{1} s_{3} + 4 q_{12}^2 q_{13} q_{33} s_{1} s_{3} + 4 q_{13}^3 q_{33} s_{1} s_{3} + 
4 q_{11} q_{21} q_{23} q_{33} s_{1} s_{3} + 4 q_{12} q_{22} q_{23} q_{33} s_{1} s_{3} + 
4 q_{13} q_{23}^2 q_{33} s_{1} s_{3} \\
&+ 4 q_{13} q_{31}^2 q_{33} s_{1} s_{3} + 
4 q_{13} q_{32}^2 q_{33} s_{1} s_{3} + 4 q_{11} q_{31} q_{33}^2 s_{1} s_{3} + 
4 q_{12} q_{32} q_{33}^2 s_{1} s_{3} + 4 q_{13} q_{33}^3 s_{1} s_{3} + 4 q_{11}^2 q_{21} q_{31} s_{2} s_{3} \\
&+ 
4 q_{21}^3 q_{31} s_{2} s_{3} + 4 q_{11} q_{12} q_{22} q_{31} s_{2} s_{3} + 
4 q_{21} q_{22}^2 q_{31} s_{2} s_{3} + 4 q_{11} q_{13} q_{23} q_{31} s_{2} s_{3} + 
4 q_{21} q_{23}^2 q_{31} s_{2} s_{3} + 4 q_{21} q_{31}^3 s_{2} s_{3} \\
&+ 
4 q_{11} q_{12} q_{21} q_{32} s_{2} s_{3} + 4 q_{12}^2 q_{22} q_{32} s_{2} s_{3} + 
4 q_{21}^2 q_{22} q_{32} s_{2} s_{3} + 4 q_{22}^3 q_{32} s_{2} s_{3} + 
4 q_{12} q_{13} q_{23} q_{32} s_{2} s_{3} + 4 q_{22} q_{23}^2 q_{32} s_{2} s_{3} \\
&+ 
4 q_{22} q_{31}^2 q_{32} s_{2} s_{3} + 4 q_{21} q_{31} q_{32}^2 s_{2} s_{3} + 4 q_{22} q_{32}^3 s_{2} s_{3} + 
4 q_{11} q_{13} q_{21} q_{33} s_{2} s_{3} + 4 q_{12} q_{13} q_{22} q_{33} s_{2} s_{3} + 
4 q_{13}^2 q_{23} q_{33} s_{2} s_{3} \\
&+ 4 q_{21}^2 q_{23} q_{33} s_{2} s_{3} + 
4 q_{22}^2 q_{23} q_{33} s_{2} s_{3} + 4 q_{23}^3 q_{33} s_{2} s_{3} + 4 q_{23} q_{31}^2 q_{33} s_{2} s_{3} + 
4 q_{23} q_{32}^2 q_{33} s_{2} s_{3} + 4 q_{21} q_{31} q_{33}^2 s_{2} s_{3} \\
&+ 4 q_{22} q_{32} q_{33}^2 s_{2} s_{3} + 4 q_{23} q_{33}^3 s_{2} s_{3} - q_{11}^4 s_{3}^2 - 
2 q_{11}^2 q_{12}^2 s_{3}^2 - q_{12}^4 s_{3}^2 - 2 q_{11}^2 q_{13}^2 s_{3}^2 - 
2 q_{12}^2 q_{13}^2 s_{3}^2 - q_{13}^4 s_{3}^2 - 2 q_{11}^2 q_{21}^2 s_{3}^2\\
& - q_{21}^4 s_{3}^2 - 
4 q_{11} q_{12} q_{21} q_{22} s_{3}^2 - 2 q_{12}^2 q_{22}^2 s_{3}^2 - 2 q_{21}^2 q_{22}^2 s_{3}^2 - 
q_{22}^4 s_{3}^2 - 4 q_{11} q_{13} q_{21} q_{23} s_{3}^2 - 4 q_{12} q_{13} q_{22} q_{23} s_{3}^2 - 
2 q_{13}^2 q_{23}^2 s_{3}^2\\
& - 2 q_{21}^2 q_{23}^2 s_{3}^2 - 2 q_{22}^2 q_{23}^2 s_{3}^2 - 
q_{23}^4 s_{3}^2 + q_{31}^4 s_{3}^2 + 2 q_{31}^2 q_{32}^2 s_{3}^2 + q_{32}^4 s_{3}^2 + 
2 q_{31}^2 q_{33}^2 s_{3}^2 + 2 q_{32}^2 q_{33}^2 s_{3}^2 + q_{33}^4 s_{3}^2 \, ,
\end{split}
\end{equation}}

{\small \begin{equation}
\begin{split}
F_3&=-q_{13} q_{21} q_{22} s_{1} t_{1}^2 + q_{12} q_{21} q_{23} s_{1} t_{1}^2 - q_{13} q_{31} q_{32} s_{1} t_{1}^2 + 
q_{12} q_{31} q_{33} s_{1} t_{1}^2 + q_{11} q_{13} q_{22} s_{2} t_{1}^2 - q_{11} q_{12} q_{23} s_{2} t_{1}^2\\
& - q_{23} q_{31} q_{32} s_{2} t_{1}^2 + q_{22} q_{31} q_{33} s_{2} t_{1}^2 + q_{11} q_{13} q_{32} s_{3} t_{1}^2 + 
q_{21} q_{23} q_{32} s_{3} t_{1}^2 - q_{11} q_{12} q_{33} s_{3} t_{1}^2 - q_{21} q_{22} q_{33} s_{3} t_{1}^2 \\
&+ 
q_{13} q_{21}^2 s_{1} t_{1} t_{2} - q_{13} q_{22}^2 s_{1} t_{1} t_{2} - q_{11} q_{21} q_{23} s_{1} t_{1} t_{2} + 
q_{12} q_{22} q_{23} s_{1} t_{1} t_{2} + q_{13} q_{31}^2 s_{1} t_{1} t_{2} - q_{13} q_{32}^2 s_{1} t_{1} t_{2} \\
&- 
q_{11} q_{31} q_{33} s_{1} t_{1} t_{2} + q_{12} q_{32} q_{33} s_{1} t_{1} t_{2} - q_{11} q_{13} q_{21} s_{2} t_{1} t_{2} + 
q_{12} q_{13} q_{22} s_{2} t_{1} t_{2} + q_{11}^2 q_{23} s_{2} t_{1} t_{2} - q_{12}^2 q_{23} s_{2} t_{1} t_{2} \\
&+ 
q_{23} q_{31}^2 s_{2} t_{1} t_{2} - q_{23} q_{32}^2 s_{2} t_{1} t_{2} - q_{21} q_{31} q_{33} s_{2} t_{1} t_{2} + 
q_{22} q_{32} q_{33} s_{2} t_{1} t_{2} - q_{11} q_{13} q_{31} s_{3} t_{1} t_{2} - q_{21} q_{23} q_{31} s_{3} t_{1} t_{2} \\
&+ 
q_{12} q_{13} q_{32} s_{3} t_{1} t_{2} + q_{22} q_{23} q_{32} s_{3} t_{1} t_{2} + q_{11}^2 q_{33} s_{3} t_{1} t_{2}- 
q_{12}^2 q_{33} s_{3} t_{1} t_{2} + q_{21}^2 q_{33} s_{3} t_{1} t_{2} - q_{22}^2 q_{33} s_{3} t_{1} t_{2} \\
&+ 
q_{13} q_{21} q_{22} s_{1} t_{2}^2 - q_{11} q_{22} q_{23} s_{1} t_{2}^2 + q_{13} q_{31} q_{32} s_{1} t_{2}^2 - 
q_{11} q_{32} q_{33} s_{1} t_{2}^2 - q_{12} q_{13} q_{21} s_{2} t_{2}^2 + q_{11} q_{12} q_{23} s_{2} t_{2}^2 \\
&+ 
q_{23} q_{31} q_{32} s_{2} t_{2}^2 - q_{21} q_{32} q_{33} s_{2} t_{2}^2 - q_{12} q_{13} q_{31} s_{3} t_{2}^2 - 
q_{22} q_{23} q_{31} s_{3} t_{2}^2 + q_{11} q_{12} q_{33} s_{3} t_{2}^2 + q_{21} q_{22} q_{33} s_{3} t_{2}^2 \\
&- 
q_{12} q_{21}^2 s_{1} t_{1} t_{3} + q_{11} q_{21} q_{22} s_{1} t_{1} t_{3} - q_{13} q_{22} q_{23} s_{1} t_{1} t_{3} + 
q_{12} q_{23}^2 s_{1} t_{1} t_{3} - q_{12} q_{31}^2 s_{1} t_{1} t_{3} + q_{11} q_{31} q_{32} s_{1} t_{1} t_{3} \\
&- 
q_{13} q_{32} q_{33} s_{1} t_{1} t_{3} + q_{12} q_{33}^2 s_{1} t_{1} t_{3} + q_{11} q_{12} q_{21} s_{2} t_{1} t_{3} - 
q_{11}^2 q_{22} s_{2} t_{1} t_{3} + q_{13}^2 q_{22} s_{2} t_{1} t_{3} - q_{12} q_{13} q_{23} s_{2} t_{1} t_{3} \\
&- 
q_{22} q_{31}^2 s_{2} t_{1} t_{3} + q_{21} q_{31} q_{32} s_{2} t_{1} t_{3} - q_{23} q_{32} q_{33} s_{2} t_{1} t_{3} + 
q_{22} q_{33}^2 s_{2} t_{1} t_{3} + q_{11} q_{12} q_{31} s_{3} t_{1} t_{3} + q_{21} q_{22} q_{31} s_{3} t_{1} t_{3}\\
& - q_{11}^2 q_{32} s_{3} t_{1} t_{3} + q_{13}^2 q_{32} s_{3} t_{1} t_{3} - q_{21}^2 q_{32} s_{3} t_{1} t_{3} + 
q_{23}^2 q_{32} s_{3} t_{1} t_{3} - q_{12} q_{13} q_{33} s_{3} t_{1} t_{3} - q_{22} q_{23} q_{33} s_{3} t_{1} t_{3} \\
&- 
q_{12} q_{21} q_{22} s_{1} t_{2} t_{3} + q_{11} q_{22}^2 s_{1} t_{2} t_{3} + q_{13} q_{21} q_{23} s_{1} t_{2} t_{3} - 
q_{11} q_{23}^2 s_{1} t_{2} t_{3} - q_{12} q_{31} q_{32} s_{1} t_{2} t_{3} + q_{11} q_{32}^2 s_{1} t_{2} t_{3} \\
&+ 
q_{13} q_{31} q_{33} s_{1} t_{2} t_{3} - q_{11} q_{33}^2 s_{1} t_{2} t_{3} + q_{12}^2 q_{21} s_{2} t_{2} t_{3} - 
q_{13}^2 q_{21} s_{2} t_{2} t_{3} - q_{11} q_{12} q_{22} s_{2} t_{2} t_{3} + q_{11} q_{13} q_{23} s_{2} t_{2} t_{3}\\
& - 
q_{22} q_{31} q_{32} s_{2} t_{2} t_{3} + q_{21} q_{32}^2 s_{2} t_{2} t_{3} + q_{23} q_{31} q_{33} s_{2} t_{2} t_{3} - 
q_{21} q_{33}^2 s_{2} t_{2} t_{3} + q_{12}^2 q_{31} s_{3} t_{2} t_{3} - q_{13}^2 q_{31} s_{3} t_{2} t_{3} \\
&+ 
q_{22}^2 q_{31} s_{3} t_{2} t_{3} - q_{23}^2 q_{31} s_{3} t_{2} t_{3} - q_{11} q_{12} q_{32} s_{3} t_{2} t_{3} - 
q_{21} q_{22} q_{32} s_{3} t_{2} t_{3} + q_{11} q_{13} q_{33} s_{3} t_{2} t_{3} + q_{21} q_{23} q_{33} s_{3} t_{2} t_{3}\\
& - 
q_{12} q_{21} q_{23} s_{1} t_{3}^2 + q_{11} q_{22} q_{23} s_{1} t_{3}^2 - q_{12} q_{31} q_{33} s_{1} t_{3}^2 + 
q_{11} q_{32} q_{33} s_{1} t_{3}^2 + q_{12} q_{13} q_{21} s_{2} t_{3}^2 - q_{11} q_{13} q_{22} s_{2} t_{3}^2 \\
&- 
q_{22} q_{31} q_{33} s_{2} t_{3}^2 + q_{21} q_{32} q_{33} s_{2} t_{3}^2 + q_{12} q_{13} q_{31} s_{3} t_{3}^2 + 
q_{22} q_{23} q_{31} s_{3} t_{3}^2 - q_{11} q_{13} q_{32} s_{3} t_{3}^2 - q_{21} q_{23} q_{32} s_{3} t_{3}^2 \, ,
\end{split}
\end{equation}}

{\small \begin{equation}
\begin{split}
F_4&=q_{11}^4 t_{1}^2 - q_{12}^4 t_{1}^2 - 2 q_{12}^2 q_{13}^2 t_{1}^2 - q_{13}^4 t_{1}^2 + 
2 q_{11}^2 q_{21}^2 t_{1}^2 + q_{21}^4 t_{1}^2 - 2 q_{12}^2 q_{22}^2 t_{1}^2 - q_{22}^4 t_{1}^2 - 
4 q_{12} q_{13} q_{22} q_{23} t_{1}^2 - 2 q_{13}^2 q_{23}^2 t_{1}^2 \\
&- 2 q_{22}^2 q_{23}^2 t_{1}^2 - 
q_{23}^4 t_{1}^2 + 2 q_{11}^2 q_{31}^2 t_{1}^2 + 2 q_{21}^2 q_{31}^2 t_{1}^2 + q_{31}^4 t_{1}^2 - 
2 q_{12}^2 q_{32}^2 t_{1}^2 - 2 q_{22}^2 q_{32}^2 t_{1}^2 - q_{32}^4 t_{1}^2 - 
4 q_{12} q_{13} q_{32} q_{33} t_{1}^2 \\
&- 4 q_{22} q_{23} q_{32} q_{33} t_{1}^2 - 
2 q_{13}^2 q_{33}^2 t_{1}^2 - 2 q_{23}^2 q_{33}^2 t_{1}^2 - 2 q_{32}^2 q_{33}^2 t_{1}^2 - 
q_{33}^4 t_{1}^2 + 4 q_{11}^3 q_{12} t_{1} t_{2} + 4 q_{11} q_{12}^3 t_{1} t_{2} + 
4 q_{11} q_{12} q_{13}^2 t_{1} t_{2} \\
&+ 4 q_{11} q_{12} q_{21}^2 t_{1} t_{2} + 
4 q_{11}^2 q_{21} q_{22} t_{1} t_{2} + 4 q_{12}^2 q_{21} q_{22} t_{1} t_{2} + 4 q_{21}^3 q_{22} t_{1} t_{2} + 
4 q_{11} q_{12} q_{22}^2 t_{1} t_{2} + 4 q_{21} q_{22}^3 t_{1} t_{2} \\
&+ 
4 q_{12} q_{13} q_{21} q_{23} t_{1} t_{2} + 4 q_{11} q_{13} q_{22} q_{23} t_{1} t_{2} + 
4 q_{21} q_{22} q_{23}^2 t_{1} t_{2} + 4 q_{11} q_{12} q_{31}^2 t_{1} t_{2} + 
4 q_{21} q_{22} q_{31}^2 t_{1} t_{2} + 4 q_{11}^2 q_{31} q_{32} t_{1} t_{2} \\
&+ 
4 q_{12}^2 q_{31} q_{32} t_{1} t_{2} + 4 q_{21}^2 q_{31} q_{32} t_{1} t_{2} + 
4 q_{22}^2 q_{31} q_{32} t_{1} t_{2} + 4 q_{31}^3 q_{32} t_{1} t_{2} + 4 q_{11} q_{12} q_{32}^2 t_{1} t_{2} + 
4 q_{21} q_{22} q_{32}^2 t_{1} t_{2} \\
&+ 4 q_{31} q_{32}^3 t_{1} t_{2} + 4 q_{12} q_{13} q_{31} q_{33} t_{1} t_{2} + 4 q_{22} q_{23} q_{31} q_{33} t_{1} t_{2} + 
4 q_{11} q_{13} q_{32} q_{33} t_{1} t_{2} + 4 q_{21} q_{23} q_{32} q_{33} t_{1} t_{2} + 
4 q_{31} q_{32} q_{33}^2 t_{1} t_{2} \\
&- q_{11}^4 t_{2}^2 + q_{12}^4 t_{2}^2 - 
2 q_{11}^2 q_{13}^2 t_{2}^2 - q_{13}^4 t_{2}^2 - 2 q_{11}^2 q_{21}^2 t_{2}^2 - q_{21}^4 t_{2}^2 + 
2 q_{12}^2 q_{22}^2 t_{2}^2 + q_{22}^4 t_{2}^2 - 4 q_{11} q_{13} q_{21} q_{23} t_{2}^2 - 
2 q_{13}^2 q_{23}^2 t_{2}^2 \\
&- 2 q_{21}^2 q_{23}^2 t_{2}^2 - q_{23}^4 t_{2}^2 - 
2 q_{11}^2 q_{31}^2 t_{2}^2 - 2 q_{21}^2 q_{31}^2 t_{2}^2 - q_{31}^4 t_{2}^2 + 
2 q_{12}^2 q_{32}^2 t_{2}^2 + 2 q_{22}^2 q_{32}^2 t_{2}^2 + q_{32}^4 t_{2}^2 - 
4 q_{11} q_{13} q_{31} q_{33} t_{2}^2\\
& - 4 q_{21} q_{23} q_{31} q_{33} t_{2}^2 - 
2 q_{13}^2 q_{33}^2 t_{2}^2 - 2 q_{23}^2 q_{33}^2 t_{2}^2 - 2 q_{31}^2 q_{33}^2 t_{2}^2 - 
q_{33}^4 t_{2}^2 + 4 q_{11}^3 q_{13} t_{1} t_{3} + 4 q_{11} q_{12}^2 q_{13} t_{1} t_{3} + 
4 q_{11} q_{13}^3 t_{1} t_{3}\\
& + 4 q_{11} q_{13} q_{21}^2 t_{1} t_{3} + 
4 q_{12} q_{13} q_{21} q_{22} t_{1} t_{3} + 4 q_{11}^2 q_{21} q_{23} t_{1} t_{3} + 
4 q_{13}^2 q_{21} q_{23} t_{1} t_{3} + 4 q_{21}^3 q_{23} t_{1} t_{3} + 
4 q_{11} q_{12} q_{22} q_{23} t_{1} t_{3}\\
& + 4 q_{21} q_{22}^2 q_{23} t_{1} t_{3} + 
4 q_{11} q_{13} q_{23}^2 t_{1} t_{3} + 4 q_{21} q_{23}^3 t_{1} t_{3} + 4 q_{11} q_{13} q_{31}^2 t_{1} t_{3} + 
4 q_{21} q_{23} q_{31}^2 t_{1} t_{3} + 4 q_{12} q_{13} q_{31} q_{32} t_{1} t_{3} \\
&+ 
4 q_{22} q_{23} q_{31} q_{32} t_{1} t_{3} + 4 q_{11}^2 q_{31} q_{33} t_{1} t_{3} + 
4 q_{13}^2 q_{31} q_{33} t_{1} t_{3} + 4 q_{21}^2 q_{31} q_{33} t_{1} t_{3} + 
4 q_{23}^2 q_{31} q_{33} t_{1} t_{3} + 4 q_{31}^3 q_{33} t_{1} t_{3} \\
&+ 
4 q_{11} q_{12} q_{32} q_{33} t_{1} t_{3} + 4 q_{21} q_{22} q_{32} q_{33} t_{1} t_{3} + 
4 q_{31} q_{32}^2 q_{33} t_{1} t_{3} + 4 q_{11} q_{13} q_{33}^2 t_{1} t_{3} + 
4 q_{21} q_{23} q_{33}^2 t_{1} t_{3} + 4 q_{31} q_{33}^3 t_{1} t_{3} \\
&+ 4 q_{11}^2 q_{12} q_{13} t_{2} t_{3} + 
4 q_{12}^3 q_{13} t_{2} t_{3} + 4 q_{12} q_{13}^3 t_{2} t_{3} + 4 q_{11} q_{13} q_{21} q_{22} t_{2} t_{3} + 
4 q_{12} q_{13} q_{22}^2 t_{2} t_{3} + 4 q_{11} q_{12} q_{21} q_{23} t_{2} t_{3} \\
&+ 
4 q_{12}^2 q_{22} q_{23} t_{2} t_{3} + 4 q_{13}^2 q_{22} q_{23} t_{2} t_{3} + 
4 q_{21}^2 q_{22} q_{23} t_{2} t_{3} + 4 q_{22}^3 q_{23} t_{2} t_{3} + 4 q_{12} q_{13} q_{23}^2 t_{2} t_{3} + 
4 q_{22} q_{23}^3 t_{2} t_{3} \\
&+ 4 q_{11} q_{13} q_{31} q_{32} t_{2} t_{3} + 
4 q_{21} q_{23} q_{31} q_{32} t_{2} t_{3} + 4 q_{12} q_{13} q_{32}^2 t_{2} t_{3} + 
4 q_{22} q_{23} q_{32}^2 t_{2} t_{3} + 4 q_{11} q_{12} q_{31} q_{33} t_{2} t_{3} \\
&+ 
4 q_{21} q_{22} q_{31} q_{33} t_{2} t_{3} + 4 q_{12}^2 q_{32} q_{33} t_{2} t_{3} + 
4 q_{13}^2 q_{32} q_{33} t_{2} t_{3} + 4 q_{22}^2 q_{32} q_{33} t_{2} t_{3} + 
4 q_{23}^2 q_{32} q_{33} t_{2} t_{3} + 4 q_{31}^2 q_{32} q_{33} t_{2} t_{3} \\
&+ 4 q_{32}^3 q_{33} t_{2} t_{3} + 
4 q_{12} q_{13} q_{33}^2 t_{2} t_{3} + 4 q_{22} q_{23} q_{33}^2 t_{2} t_{3} + 4 q_{32} q_{33}^3 t_{2} t_{3} - 
q_{11}^4 t_{3}^2 - 2 q_{11}^2 q_{12}^2 t_{3}^2 - q_{12}^4 t_{3}^2 + q_{13}^4 t_{3}^2 - 
2 q_{11}^2 q_{21}^2 t_{3}^2\\
& - q_{21}^4 t_{3}^2 - 4 q_{11} q_{12} q_{21} q_{22} t_{3}^2 - 
2 q_{12}^2 q_{22}^2 t_{3}^2 - 2 q_{21}^2 q_{22}^2 t_{3}^2 - q_{22}^4 t_{3}^2 + 
2 q_{13}^2 q_{23}^2 t_{3}^2 + q_{23}^4 t_{3}^2 - 2 q_{11}^2 q_{31}^2 t_{3}^2 - 
2 q_{21}^2 q_{31}^2 t_{3}^2 - q_{31}^4 t_{3}^2 \\
&- 4 q_{11} q_{12} q_{31} q_{32} t_{3}^2 - 
4 q_{21} q_{22} q_{31} q_{32} t_{3}^2 - 2 q_{12}^2 q_{32}^2 t_{3}^2 - 2 q_{22}^2 q_{32}^2 t_{3}^2 - 
2 q_{31}^2 q_{32}^2 t_{3}^2 - q_{32}^4 t_{3}^2 + 2 q_{13}^2 q_{33}^2 t_{3}^2 + 
2 q_{23}^2 q_{33}^2 t_{3}^2 + q_{33}^4 t_{3}^2
\end{split}
\end{equation}}

\bibliographystyle{unsrt}
\bibliography{bibliography.bib}

\end{document}